\documentclass[11pt,letterpaper]{article}

\usepackage[bookmarks,colorlinks,breaklinks]{hyperref}
\hypersetup{urlcolor=blue, colorlinks=true, citecolor=green!50!black, linkcolor=blue}
\usepackage[letterpaper, left=1in, right=1in, top=0.9in, bottom=0.9in]{geometry}

\usepackage[T1]{fontenc}
\usepackage[utf8]{inputenc}
\usepackage[american]{babel}
\usepackage[normalem]{ulem}
\usepackage{amsmath, amssymb, cases, amsthm}
\usepackage{thmtools}
\usepackage{thm-restate}
\usepackage[shortlabels]{enumitem}

\usepackage{mdframed}
\usepackage{bbm}
\usepackage{bm}
\usepackage{microtype}
\usepackage{tabularx}
\usepackage{subcaption}
\usepackage{xcolor}
\usepackage{makecell}
\usepackage{mathtools}
\usepackage{float}
\usepackage{multicol}

\usepackage{comment}
\usepackage{multirow}
\usepackage{natbib}
\usepackage[capitalize,noabbrev]{cleveref}
\usepackage{graphics}

\usepackage{tikz}
\usetikzlibrary{arrows.meta, positioning, shapes.geometric}

\usepackage[nottoc]{tocbibind}

\declaretheorem[numberwithin=section,refname={Theorem,Theorems},Refname={Theorem,Theorems}]{theorem}

\declaretheorem[numberlike=theorem]{lemma}

\declaretheorem[numberlike=theorem]{corollary}
\declaretheorem[numberlike=theorem,style=definition]{definition}
\declaretheorem[numberlike=theorem]{claim}

\declaretheorem[numberlike=theorem,style=remark]{remark}

\declaretheorem[numberlike=theorem, refname={Observation,Observations},Refname={Observation,Observations},name={Observation}]{observation}



\def\final{0}  
\def\iflong{\iffalse}
\ifnum\final=0  
\newcommand{\danupon}[1]{{\bf \color{green!70!black} Danupon: #1}}
\newcommand{\martin}[1]{{\bf \color{blue!50!black} Martin: #1}}
\newcommand{\simon}[1]{{\bf \color{red!75!black} Simon: #1}}
\newcommand{\daniel}[1]{{\bf \color{yellow!60!black} daniel: #1}}
\newcommand{\dani}[1]{{\bf \color{orange!80!black} dani: #1}}
\newcommand{\jo}[1]{{\bf \color{brown!80!black} Joachim: #1}}
\newcommand{\zihang}[1]{{\bf \color{blue!80!black} Zihang: #1}}
\else 
\newcommand{\danupon}[1]{}
\newcommand{\martin}[1]{}
\newcommand{\simon}[1]{}
\newcommand{\daniel}[1]{}
\newcommand{\dani}[1]{}
\newcommand{\jo}[1]{}
\newcommand{\zihang}[1]{}
\fi  

\usepackage[ruled,vlined,linesnumbered,noresetcount]{algorithm2e}

\newcommand{\eps}{\varepsilon}

\newcommand{\set}[2][ ]{\{#2 \ifthenelse{\equal{#1}{ }}{ }{~|~#1}\}}

\newcommand{\GI}{\mathsf{GI}}
\newcommand{\isored}{\le_p^{\cong}}
\newcommand{\cP}{\mathsf{P}}
\newcommand{\cNP}{\mathsf{NP}}
\newcommand{\coma}{\mathsf{coMA}}

\newcommand{\attribute}{attribute\xspace}
\newcommand{\attributes}{attributes\xspace}
\newcommand{\attributed}{attributed\xspace}

\newcommand{\Attributed}{Attributed\xspace}
\newcommand{\att}{\alpha}
\newcommand{\advice}{advice\xspace}

\newcommand{\poly}{\mathrm{poly}}

\newcommand{\NP}{\cNP}
\newcommand{\coNP}{\mathsf{coNP}}

\newcommand{\UP}{\mathsf{UP}}
\newcommand{\coUP}{\mathsf{coUP}}

\Crefname{algocf}{Algorithm}{Algorithms}

\DeclareMathOperator{\trace}{\mathrm{Tr}}

\newcommand{\opt}{\mathit{OPT}}

\newcommand{\WLequiv}[1]{\equiv_{#1\text{-}\mathrm{WL}}}
\newcommand{\cfiFunc}{\mathrm{CFI}}
\newcommand{\cfiGraph}[2]{\cfiFunc(#1, #2)}
\newcommand{\Graphs}{\mathcal{G}} 
\newcommand{\homs}[2]{\#\mathrm{Hom}(#1 \to #2)}
\newcommand{\WLcost}{\mathrm{cost}}
\newcommand{\WLcheck}{\mathrm{check}}

\newcommand{\odd}{\mathrm{odd}}
\newcommand{\even}{\mathrm{even}}

\newcommand{\lovG}[2]{\mathcal{L}(#1, #2)}

\newcommand{\cspName}{\mathsf{Constrained}\text{ }\mathsf{Shortest}\text{ }\mathsf{Path}}
\newcommand{\cstName}{\mathsf{Constrained}\text{ }\mathsf{Spanning}\text{ }\mathsf{Tree}}
\newcommand{\negName}{\mathsf{Negative}\text{ }\mathsf{Triangle}}
\newcommand{\repPathName}{\mathsf{Replacement}\text{ }\mathsf{Path}}

\newcommand{\giName}{\mathsf{Graph}\text{ }\mathsf{Isomorphism}}
\newcommand{\vcName}{\mathsf{Vertex}\text{ }\mathsf{Cover}}
\newcommand{\isName}{\mathsf{Independent}\text{ }\mathsf{Set}}

\newcommand{\parityName}{\mathsf{Parity}\text{ }\mathsf{Game}}
\newcommand{\meanName}{\mathsf{Mean}\text{ }\mathsf{Payoff}\text{ }\mathsf{Game}}
\newcommand{\energyName}{\mathsf{Energy}\text{ }\mathsf{Game}}
\newcommand{\stochasticName}{\mathsf{Simple}\text{ }\mathsf{Stochastic}\text{ }\mathsf{Game}}

\newcommand{\tspName}{\mathsf{Traveling}\text{ }\mathsf{Salesman}\text{ }\mathsf{Problem}}

\newcommand{\metrictspName}{\mathsf{Metric }\text{ }\mathsf{Traveling }\text{ }\mathsf{Salesman}\text{ }\mathsf{Problem}}

\newcommand{\lspName}[1]{\ell_{#1}\text{-}\mathsf{Shortest}\text{ }\mathsf{Path}}

\newcommand{\lstName}[1]{\ell_{#1}\text{-}\mathsf{Spanning}\text{ }\mathsf{Tree}}

\newcommand{\maxFlowName}{\mathsf{Maximum}\text{ }\mathsf{Flow}}

\newcommand{\maxISName}{\mathsf{Maximum}\text{ }\mathsf{Independent}\text{ }\mathsf{Set}}

\newcommand{\maxCliqueName}{\mathsf{Maximum}\text{ }\mathsf{Clique}}

\newcommand{\APSP}{\mathsf{APSP}}
\newcommand{\csp}{\mathsf{CSP}}
\newcommand{\cst}{\mathsf{CST}}
\newcommand{\ProbCollection}{\mathcal{H}}

\newcommand{\SAT}{\mathsf{SAT}}
\newcommand{\kSAT}[1]{#1\text{-}\SAT}
\newcommand{\MaxkSAT}[1]{\mathsf{Max}\text{ }\kSAT{#1}}

\newcommand{\COL}{\mathsf{Coloring}}
\newcommand{\kCOL}[1]{#1\text{-}\COL}

\newcommand{\IS}{\mathsf{IS}}
\newcommand{\MaxCut}{\mathsf{MaxCut}}
\newcommand{\VC}{\mathsf{VC}}
\newcommand{\HIT}{\mathsf{Hitting\text{ }Set}}
\newcommand{\kMedian}{k\text{-}\mathsf{Median}}
\newcommand{\SetCover}{\mathsf{Set\text{ }Cover}}
\newcommand{\EXA}{\mathsf{Exact\text{ }Cover}}

\newcommand{\LPZEROONE}{0/1\text{-}\mathsf{Integer\text{ }Linear\text{ }Programming}}
\newcommand{\Clique}{\mathsf{Clique}}
\newcommand{\kClique}[1]{#1\text{-}\mathsf{Clique}}
\newcommand{\SetPacking}{\mathsf{Set}\text{ }\mathsf{Packing}}
\newcommand{\CliqueCover}{\mathsf{Clique}\text{ }\mathsf{Cover}}
\newcommand{\DirHam}{\mathsf{Directed}\text{ }\mathsf{Hamiltonian}\text{ }\mathsf{Cycle}}

\newcommand{\UndirHam}{\mathsf{Undirected}\text{ }\mathsf{Hamiltonian}\text{ }\mathsf{Cycle}}

\newcommand{\Ham}{\mathsf{Hamiltonian}\text{ }\mathsf{Cycle}}

\newcommand{\STEINER}{\mathsf{Steiner}\text{ }\mathsf{Tree}}
\newcommand{\FeedbackArcSet}{\mathsf{Feedback}\text{ }\mathsf{Arc}\text{ }\mathsf{Set}}
\newcommand{\FeedbackVertexSet}{\mathsf{Feedback}\text{ }\mathsf{Vertex}\text{ }\mathsf{Set}}
\newcommand{\DMatching}{\mathsf{3D}\text{-}\mathsf{Matching}}
\newcommand{\MetricTSP}{\mathsf{Metric}\text{ }\mathsf{TSP}}
\newcommand{\TSP}{\mathsf{TSP}}
\newcommand{\kCenter}{k\text{-}\mathsf{Center}}
\newcommand{\GAP}[1]{\mathsf{Gap}\text{-}#1}

\makeatletter
\def\blfootnote{\gdef\@thefnmark{}\@footnotetext}
\makeatother

\newcommand{\orcid}[1]{\href{https://orcid.org/#1}{\includegraphics[height=1.8ex]{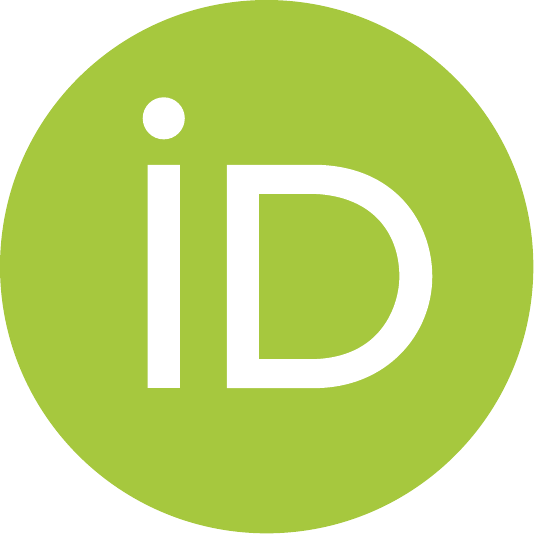}}}

\title{Can We Break Fine-Grained and NP-Hardness Barriers if We've Seen the Graph Before? The Isomorphic-Priors Model   }

\author{
Dani Dorfman \orcid{https://orcid.org/0009-0000-5134-4645} \\
Max Planck Institute for Informatics \\ 
Saarland Informatics Campus
\and
Simon D\"oring \orcid{https://orcid.org/0009-0002-6667-5257} \\
Max Planck Institute for Informatics \\ 
Saarland Informatics Campus
\and
Martin G. Herold\footnote{Martin G. Herold was initially funded by the Deutsche Forschungsgemeinschaft (DFG, German Research Foundation) – Project number 399223600.} \orcid{https://orcid.org/0009-0002-1804-2842} \\
Max Planck Institute for Informatics \\ 
Saarland Informatics Campus
\and
Daniel Neuen \orcid{https://orcid.org/0000-0002-4940-0318} \\
TU Dresden
\and
Joachim Spoerhase \orcid{https://orcid.org/0000-0002-2601-6452} \\
University of Liverpool
\and
Danupon Nanongkai \orcid{https://orcid.org/0000-0003-4468-2675} \\
Max Planck Institute for Informatics \\ 
Saarland Informatics Campus
\and
Zihang Wu \orcid{https://orcid.org/0009-0002-5396-5036} \\
Max Planck Institute for Informatics \\ 
Saarland Informatics Campus
}

\date{}

\begin{document}
	
	\begin{titlepage}
		\maketitle \pagenumbering{roman}

\begin{abstract}
If we run a heavy-duty computation on prior data, can we avoid repeated computation for similar future inputs?
Inspired by this question, we introduce a new computational model for graph problems called \emph{algorithms with isomorphic priors}.  
Solving a graph problem $\Pi$ in this model involves two phases: (i) The \emph{preprocessing} phase quickly analyzes {\em prior graphs} $G_1, \ldots, G_k$ along with the (previously computed) exact optimal values $\mathit{OPT}(G_i)$. (ii) 
Subsequently, given a new graph $H$, a fast \emph{query} phase must either (a) output the exact solution $\mathit{OPT}(H)$, or (b) correctly report that $H$ is not isomorphic to any $G_i$.\footnote{Our algorithms sometimes return $\mathit{OPT}(H)$ even though no graph $G_i$ is isomorphic to $H$. We focus on problems $\Pi$ for which isomorphic instances yield identical optimal values, a property that holds for most natural graph problems.} {\em Can we avoid computing $\mathit{OPT}(H)$ from scratch when $H$ is isomorphic to some $G_i$?} We show that this is the case for a number of problems; for many others, we establish conditional lower bounds.

\textbf{(1)} Some \emph{NP-hard} problems, including
$\cspName$ and $\lspName{p}$ and $\cstName$,
admit polynomial preprocessing and query times in our model.
In contrast, almost all of Karp's 21 NP-complete problems and
$(2-\varepsilon)$-approximate $k$-$\mathsf{Center}$, for every fixed
$\varepsilon>0$, admit no such algorithms unless Graph Isomorphism
($\mathsf{GI}$) is in $\mathrm{P}$, even with $O(1)$ priors.
Unless $\mathsf{GI}\in\mathrm{coMA}$, these lower bounds persist
with {\em unbounded preprocessing} provided the data structure size
and query time remain polynomial.

\textbf{(2)} In contrast to conditional $n^{3-o(1)}$ fine-grained lower bounds, our framework achieves an $O(n^\omega)$ query time for $\negName$ and a near-linear query time for $\repPathName$. It also achieves near-linear query time for $\maxFlowName$.

\textbf{(3)} While it remains a major open problem whether \emph{infinite-duration games} (such as $\mathsf{Parity}\text{ }\mathsf{Game}$, $\mathsf{Mean}\text{ }\mathsf{Payoff}\text{ }\mathsf{Game}$, $\mathsf{Energy}\text{ }\mathsf{Game}$, and $\mathsf{Stochastic}\text{ }\mathsf{Game}$) admit polynomial-time algorithms, they can be easily solved in near-linear time within our model.

Our proofs rely on a simple combination of existing tools and are accessible to readers without specialized background in these methods. The model leads to open questions about, e.g., approximation guarantees and
fine-grained speedups, including whether isomorphic priors enable
faster algorithms for $k\text{-}\mathsf{Clique}$. We formulate a $k\text{-}\mathsf{Clique}$
hypothesis which implies $\mathsf{GI}\notin\mathrm{P}$.
\end{abstract}

		\setcounter{tocdepth}{3}
		\newpage
		\tableofcontents
		\newpage
	\end{titlepage}
	
	\newpage
	\pagenumbering{arabic}

\section{Introduction} \label{sec:intro}

If we invest in heavy-duty computation on prior data, can we avoid repeating this effort for similar future inputs? In an era defined by data abundance, this has become a central algorithmic challenge. There is a pressing need for data-driven algorithms that exploit knowledge of existing instances, fundamentally departing from the traditional worst-case design paradigm where inputs are analyzed without any prior context.

In this paper, we explore this question theoretically by proposing a new computational model for graphs and related objects.
While this motivation is shared by existing paradigms, such as average-case analysis over input distributions or learning-augmented algorithms with oracles, our model uniquely grants direct access to raw prior data without relying on learned oracles or distributions.

\paragraph{Model: Algorithms with Isomorphic Priors.}
Solving a graph problem $\Pi$ in our model proceeds in two phases. In the \emph{preprocessing} phase, the algorithm is given a set of \emph{prior graphs} $G_1, \ldots, G_k$ along with their exact optimal values $\opt(G_i)$, which may have been computationally expensive to obtain. Subsequently, in the \emph{query} phase, the algorithm receives a new \emph{query graph}~$H$. The algorithm must either output $\opt(H)$ if $H$ is isomorphic to some $G_i$, or correctly report that $H$ is not isomorphic to any prior graph.\footnote{\label{foot:isomorphic}Recall that graphs $G$ and $H$ are isomorphic (denoted $G \cong H$) if one graph can be obtained simply by renaming the vertices of the other; i.e., there exists a bijection $\pi\colon V(G) \to V(H)$ such that $(u,v) \in E(G)$ if and only if $(\pi(u), \pi(v)) \in E(H)$. Furthermore, if the edges carry additional \attributes (such as weights, lengths, or costs), the bijection must preserve these values; i.e., every \attribute $\alpha$ satisfies $\alpha_G(u,v) = \alpha_H(\pi(u), \pi(v))$ for all $(u,v) \in E(G)$.} Throughout, we let $n$ and $m$ denote the number of vertices and edges in $H$, respectively. We assume without loss of generality that all prior graphs $G_1, \ldots, G_k$ also contain exactly $n$ vertices and $m$ edges (because graphs of different sizes cannot be isomorphic).

For concreteness, let us consider two examples of problems $\Pi$:

\begin{enumerate}
    \item {\bf $\negName$}: Given a weighted undirected graph $G$, we want to know if $G$ contains a negative triangle, i.e., three vertices $a, b, c$ such that the sum of the weights of edges $(a,b)$, $(b,c)$, and $(c,a)$ is less than zero. This problem can be easily solved in $O(n^3)$ time. Breaking this cubic barrier would be a major breakthrough in fine-grained complexity \cite{DBLP:conf/focs/WilliamsW10,Williams2018ICM}.
    
    In our model, the algorithm is given weighted undirected graphs $G_1, \ldots, G_k$ alongside binary indicators $\opt(G_i)$ denoting whether each $G_i$ contains a negative triangle. After preprocessing this information, the algorithm is given another graph $H$ and must either return whether $H$ contains a negative triangle, or report that $H$ is not isomorphic to any $G_i$.

\item \textbf{$\cspName$ ($\csp$):} The input consists of a directed graph $G$ where each edge $e$ has a non-negative length $\ell(e)$ and a non-negative cost $c(e)$, two designated vertices $s$ and $t$, and a budget $B$. We want to find a path from $s$ to $t$ whose total cost is at most $B$ and whose total length is minimized. Let $\opt(G)$ denote the length of the optimal path.

$\csp$ is $\NP$-hard \cite{GareyJ79,hansen1980bicriterion} and has been extensively studied under various names, such as $\mathsf{Restricted}$ $\mathsf{Shortest}$ $\mathsf{Path}$ or $\mathsf{Resource}$ $\cspName$, from both the theoretical \cite{AshvinkumarBK25,Bernstein12,DBLP:conf/focs/PapadimitriouY00,DBLP:journals/orl/LorenzR01,DBLP:conf/esa/MehlhornZ00,DBLP:journals/mor/Hassin92} and applied \cite{Yinetal2024,DBLP:journals/pacmmod/WangW23,ahmadi2021fast,DBLP:journals/pvldb/0001XYL16,Storandt12,DBLP:conf/alenex/MehlhornZ01,xue2000primal,LozanoToran1992NonUniformGI,beasley1989algorithm,DBLP:journals/networks/AnejaAN83,handler1980dual} perspectives.

 In our model, the algorithm is given a budget\footnote{We assume for simplicity that the budget is the same for all input instances. Algorithms for this case can be trivially extended to handle varying budgets since, naturally, two instances are isomorphic if and only if their input graphs are isomorphic and their budgets are identical.} $B$ and preprocesses prior input graphs $G_1, \ldots, G_k$ with their respective length and cost functions, along with their precomputed solutions $\opt(G_1), \ldots, \opt(G_k)$. After preprocessing this information, the algorithm receives a query graph $H$ with its own length and cost functions, and must either return the constrained shortest path distance $\opt(H)$ or report that $H$ is not isomorphic to any $G_i$.\footnote{Following the definition in \Cref{foot:isomorphic}, an isomorphism between $G_i$ and $H$ must strictly preserve the source and target nodes, mapping $s$ to $s$ and $t$ to $t$.}

\end{enumerate}

Note that an algorithm may still correctly return the optimal value $\opt(H)$ even if no $G_i$ is isomorphic to $H$; indeed, the algorithms in this paper exhibit this behavior.

\paragraph{Can we Avoid Computing $\opt(H)$ from Scratch when $H$ is Isomorphic to some $G_i$?} 
Ideally, if $H$ is isomorphic to one of the prior graphs $G_1, \ldots, G_k$, the algorithm should compute $\opt(H)$ significantly faster than traditional worst-case bounds guarantee. Conversely, if $H$ is not isomorphic to any prior graph, the algorithm should quickly report this so that a heavy-duty fallback computation can be invoked from scratch. For instance, if a weighted query graph $H$ is isomorphic to a prior graph $G_i$ (i.e., identical up to a renaming of vertices), can we determine whether $H$ contains a negative triangle in truly subcubic time $O(n^{3-\epsilon})$? Similarly, for $\csp$, can we achieve polynomial preprocessing and query times in our model?

Trivially, every problem in our model can be solved via Graph Isomorphism ($\GI$); however, the best known algorithm for $\GI$ requires quasipolynomial time~\cite{Babai16} and is efficient only on restricted graph classes~\cite{hopcroft1974linear, luks1982isomorphism, bodlaender1990polynomial}. This is prohibitively slow for a query phase even for $\NP$-hard problems like $\csp$, and is inferior to the classical worst-case time for problems in $\cP$ like $\negName$. 
\textit{Is this new model capable of bypassing worst-case barriers without invoking computationally expensive graph isomorphism algorithms?}

\subsection{Our Results}

\paragraph{Algorithmic Results.} 
We show that our model has the power to bypass
worst-case time complexity barriers for various classes of computational problems.

The first class of problems we consider encompasses \textbf{$\NP$-hard graph optimization}. Besides the aforementioned $\cspName$ ($\csp$) problem, \emph{$\lspName{p}$} ($\ell_p$-SP) stands as another classic multi-objective challenge~\cite{DBLP:conf/focs/PapadimitriouY00,yu1998robust, DBLP:journals/siamdm/BersteinLMORWW08, DBLP:conf/soda/CarlsonMM26, DBLP:conf/icalp/MakarychevOT24, DBLP:conf/icalp/00010024}. Started by Papadimitriou and Yannakakis~\cite{DBLP:conf/focs/PapadimitriouY00} and the subject of recent progress characterizing its exact approximation landscape~\cite{DBLP:conf/soda/CarlsonMM26, DBLP:conf/icalp/MakarychevOT24, DBLP:conf/icalp/00010024}, this problem requires finding an $s$-$t$ path in a graph with $d$-dimensional edge weights such that the $\ell_p$-norm of the path's total aggregated weight vector is minimized. Closely related are \emph{$\cstName$ ($\cst$) and $\lstName{p}$ ($\ell_p$-ST)}~\cite{DBLP:journals/orl/HongCP04,xue2000primal,DBLP:conf/swat/RaviG96, marathe1998bicriteria,DBLP:journals/networks/Shogan83,DBLP:journals/cor/AggarwalAN82,GareyJ79}, where the optimization goal shifts from finding an $s$-$t$ path to computing a spanning tree.

All these problems are $\NP$-hard. In contrast, in our model, we only require near-linear preprocessing and query times to solve the problems for paths and subcubic time for trees.

\begin{theorem}[Algorithms for $\NP$-hard Problems]\label{thm:main:NP-hard Upper Bound}
There exist deterministic isomorphic-priors algorithms that solve $\cspName$ and $\lspName{p}$ using $\widetilde{O}(km)$ preprocessing time and $\widetilde{O}(m)$ query time.\footnote{$\tilde O$ hides polylogarithmic factors.} For $\cstName$ and $\lstName{p}$, there exist randomized\footnote{\label{foot:whp} The algorithms are correct with high probability. We say an event occurs \emph{with high probability} (w.h.p.) if it holds with probability at least $1 - 1/N^c$ for any arbitrarily large constant $c \ge 1$, where $N$ denotes the input size. Furthermore, our algorithms are analyzed under the standard \emph{oblivious adversary} model; that is, we assume the sequence of queries is generated independently of the algorithm's internal random choices (see, e.g., \cite{DBLP:conf/crypto/NaorY15} and \cite[Definition 3.4]{DBLP:books/crc/KatzLindell2014}).} algorithms requiring $\widetilde{O}(k n^\omega)$ preprocessing time and $\widetilde{O}(n^\omega)$ query time, where $\omega<2.37134$ is the matrix multiplication exponent. 
\end{theorem}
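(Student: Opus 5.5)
The plan is to avoid isomorphism testing altogether. Instead, we compute an isomorphism-invariant quantity that is cheap to evaluate and is guaranteed to coincide with $\opt$ on any graph isomorphic to a prior. Concretely, in preprocessing we store, for each prior $G_i$, the value $\opt(G_i)$ together with a cheap invariant that certifies optimality. The natural choice is \emph{the set of edge-attribute multisets of optimal solutions}, or better, a certificate checkable in the query graph.

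For paths, the key observation is that $\opt(H)$ equals $\opt(G_i)$ whenever $H\cong G_i$. So the query only needs to (i) decide which prior value to output and (ii) verify it cheaply. For (i) we use a vertex-labeling invariant such as color refinement (1-WL). It runs in $\widetilde O(m)$ time and yields a canonical multiset of colors. We store a hash or dictionary from the prior invariants to the list of indices. Graphs with different invariants are non-isomorphic, so if there is no match we report ``not isomorphic''.

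When several priors share an invariant, their $\opt$ values could a priori differ. To handle this, we lift attributes into the refinement: edges carry (length, cost), and $s,t$ are individualized. I would then argue that the optimal value is determined by the stable coloring, which is the step I expect to fail in general, since 1-WL does not determine $\opt$.

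The fallback is a verification approach. From each prior we store an optimal path as a sequence of attribute vectors $(\ell(e),c(e))$. On query we must find a walk from $s$ to $t$ realizing exactly that sequence, which is a layered reachability problem. Checking whether $H$ has an $s$-$t$ path whose $j$-th edge has attribute vector equal to the stored $j$-th vector is done by BFS in the product of $H$ with a path automaton. Naively this costs $O(m\cdot L)$, too slow. Instead I would use only the optimal value together with Dijkstra-type lower bounds. The unconstrained shortest path $d_\ell$ and the min-cost path $d_c$ are computable in $\widetilde O(m)$. However, equal certificates alone do not determine $\opt$.

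Honestly, the cleanest route I would pursue is this: for each prior, store the refinement invariant and $\opt$; at query time compute the invariant and output the matched $\opt$. Correctness on isomorphic inputs is immediate, because isomorphic graphs receive equal invariants, and if several priors collide we can output any of their values \emph{provided} all colliding priors have equal $\opt$. To force that, during preprocessing we split colliding priors with a finer invariant. Since the $G_i$ are given with $\opt$ values, any prior pair with differing $\opt$ is non-isomorphic, but separating them cheaply is exactly the main obstacle. There I would try a randomized algebraic invariant: a random evaluation of a determinant or permanent-like polynomial over $\mathbb F_p$ encoding weighted walks, as in the $n^\omega$ bound for trees. For spanning trees this would be the Kirchhoff matrix-tree determinant with random variables substituted per attribute value, which is isomorphism-invariant and by Schwartz--Zippel generating-function identical only if the spanning-tree attribute distributions agree. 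Those distributions determine $\opt$ for CST and $\ell_p$-ST. For paths, the analogous object is the attribute generating function of $s$-$t$ paths, but walks replace paths there, and that is the point I'm least sure handles the determinism claim.
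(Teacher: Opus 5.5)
Your spanning-tree half is sound and is essentially the paper's argument. You substitute one random field element per distinct attribute value into the reduced Laplacian. By Kirchhoff's theorem the determinant is then a random evaluation of the spanning-tree attribute polynomial, whose coefficients determine $\opt$ for $\cst$ and $\ell_p$-ST. Schwartz--Zippel plus a union bound over the $k$ priors gives the w.h.p.\ guarantee.

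\textbf{The gap is the path half.} You leave it unresolved at exactly the step the paper proves. For $\csp$ and $\ell_p$-SP, $G \WLequiv{1} H$ \emph{does} imply $\opt(G)=\opt(H)$, so your suspicion that 1-WL does not determine $\opt$ is unfounded here. The missing argument is walk lifting:
\begin{itemize}
\item Encode $s$ and $t$ as vertex attributes, and encode (length, cost), respectively the weight vector, as edge attributes.
\item By Refinement, the unique vertex of $H$ with the stable color of $s$ is $s'$, and likewise for $t$.
\item By Neighborhood Preservation, every $s$--$t$ walk in $G$ is mirrored step by step in $H$ by a walk from $s'$ with the identical edge-attribute sequence and stable colors. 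That walk therefore ends at $t'$.
\end{itemize}

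Your worry that walks replace paths then dissolves by nonnegativity. Deleting closed subwalks weakly decreases every coordinate of the accumulated attribute vector. Hence the budget stays satisfied, and the cost does not increase; for $\ell_p$, the norm of a nonnegative vector is coordinatewise monotone, so it does not increase either. So the optimum over $s$--$t$ walks equals the optimum over $s$--$t$ paths, and by symmetry the two instances have equal optima. This yields a \emph{deterministic} signature, computable in $O((n+m)\log n)$ time by partition refinement and stored as a complete-invariant string in a trie. Your algebraic fallback for paths would have been randomized, so it could not meet the determinism claim in any case.

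Your proposed patch of separating colliding priors with a finer invariant would not suffice even if it were cheap. The model requires a correct answer, or a correct non-isomorphism report, for a query $H$ that is isomorphic to no prior yet shares some prior's invariant. So the invariant must be solution-preserving over all pairs of instances, not merely among the priors. The walk-lifting lemma supplies exactly that. The attribute-sequence verification and the Dijkstra bounds are unnecessary.
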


Next, we consider {\bf problems in $\cP$}. Beyond $\negName$, $\repPathName$ is another fundamental problem in $\cP$ that requires $n^{3-o(1)}$ time in the worst case under standard fine-grained hypotheses~\cite{DBLP:conf/focs/WilliamsW10,Williams2018ICM}. In this problem, given a shortest $s$-$t$ path $P$ in a graph $G$, the goal is to compute, for every edge $e \in P$, the length of the shortest $s$-$t$ path in $G - e$. Our Isomorphic-Priors model allows us to bypass this classical cubic barrier, achieving near-linear query time for $\repPathName$ and truly subcubic query time for $\negName$:

\begin{theorem}[Algorithms for Problems in $\cP$]\label{thm:main:Triangle Upper Bound}
There exists a deterministic isomorphic-priors algorithm that solves $\repPathName$ using $\widetilde{O}(km)$ preprocessing time and $\widetilde{O}(m)$ query time.  For $\negName$, there exists a randomized\textsuperscript{\ref{foot:whp}} isomorphic-priors algorithm requiring $\widetilde{O}(k n^\omega)$ preprocessing time and $\widetilde{O}(n^\omega)$ query time. 

\end{theorem}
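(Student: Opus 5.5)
The common strategy is to store, for each prior graph $G_i$, a canonical \emph{witness} or \emph{certificate} of its optimal value. At query time we check whether a candidate witness transfers to $H$. Whenever $H\cong G_i$, some certificate must transfer, and any certificate that does transfer is correct for $H$. This requires no isomorphism test: a certificate that works for $H$ settles $\opt(H)$ on its own; if none works, $H$ is not isomorphic to any $G_i$.

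\textbf{Replacement Path.} The query graph comes with a shortest $s$-$t$ path $P$, and the replacement distances $d(G-e)$ for $e\in P$ are determined by distance information. In preprocessing I run Dijkstra from $s$ and from $t$ in each $G_i$ ($\widetilde O(km)$). For each $i$ I record an isomorphism-invariant description of the answer: the sorted list of replacement lengths, indexed by the position of $e$ along the path, together with a cheap invariant used to locate the candidate $G_i$, for instance a hash of sorted degree or weight sequences. The difficulty is that $H$ may not be isomorphic to any $G_i$, so a stored answer cannot simply be returned. Instead I would use a \emph{verifiable} characterization. Computing one replacement value is cheap, since it is one Dijkstra run on $H-e$ (a single-pair path computation), and $\min$ over edges not on $P$ of $d_s(u)+w(u,v)+d_t(v)$ gives an upper bound for each $e$ in $\widetilde O(m)$ time overall. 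The priors are used to certify optimality. For each $G_i$ with a matching invariant, the stored answer $r_1,\dots,r_{|P|}$ is a candidate. I verify it in $\widetilde O(m)$ by checking that the upper bounds computed in $H$ equal the candidate values (the tightness direction), and the lower-bound direction by checking that, after removing the prefix and suffix structure, no shorter detour exists. Concretely, I will check that each $r_j$ is realized by a path avoiding $e_j$, and I will use the known fact that replacement-path values form a valid answer precisely when a suitable potential (dual) certificate exists. I would store such dual potentials and transfer them to $H$ through the labels the edges carry. This transfer is the main obstacle: potentials are indexed by vertices, and without an isomorphism they cannot be mapped onto $H$. My first attempt would be to make the certificate edge-local and sorted, that is, a multiset of edge slacks, so that it can be matched by sorting.

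\textbf{Negative Triangle.} Here I would use randomized algebra. If $\opt(G_i)=$ yes, I store nothing beyond the answer, and a query must find a witness triangle. If $\opt(G_i)=$ no, $H$ must be certified triangle-free among its negative-sum triples. The plan is to compute isomorphism-invariant algebraic fingerprints in $\widetilde O(n^\omega)$, for example $\trace(A_H^j)$ for weighted adjacency matrices with random substitutions of the form $x^{w}$ evaluated modulo a random prime. From these, the weight polynomial of all closed triangle walks, $\sum_{\text{triangles}} x^{w(\text{triangle})}$, can be extracted; each coefficient is the number of triangles of that weight and is an invariant. In preprocessing I compute the fingerprints of every $G_i$ ($\widetilde O(kn^\omega)$) and store them in a dictionary. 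At query time I compute the fingerprint of $H$. If it matches some $G_i$, then with high probability (by Schwartz--Zippel) the triangle weight distributions agree, so $H$'s answer equals $\opt(G_i)$, and that value is returned. If there is no match, $H\not\cong G_i$ for every $i$, and I report that. Weights are large, so rather than using $x^w$ I would evaluate $\trace\bigl((M\circ R)^3\bigr)$, where $M_{uv}=z^{w(u,v)}$ is computed at a random point over a large field. One field evaluation gives a hashed fingerprint of the multiset of triangle weights, and the step to verify is that this hash is injective w.h.p. Because matching fingerprints certify only the sign structure, returning $\opt(G_i)$ is correct even when $H$ is not isomorphic to $G_i$.
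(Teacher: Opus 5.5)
Your $\repPathName$ part has a genuine gap, and you essentially concede it: the lower-bound direction relies on dual potentials whose transfer to $H$ you call ``the main obstacle'' and leave unresolved. Several other steps also fail.
\begin{itemize}
\item The quantity $\min_{(u,v)\notin P}\bigl(d_s(u)+w(u,v)+d_t(v)\bigr)$ does not depend on $e$. It is also not an upper bound on $d(H-e)$, because the shortest $s$--$u$ and $v$--$t$ paths in $H$ may use $e$ itself. It is only a lower bound that holds for all $e$ at once.
\item Checking $d(H-e_j)\ge r_j$ for every $e_j\in P$ is essentially the whole problem, which is conditionally $n^{3-o(1)}$-hard.
\item A potential certificate needs a separate vertex potential for each of the $|P|$ deleted edges. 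That is $\Theta(n|P|)$ numbers, too many to store or check in $\widetilde O(m)$.
\item A degree or weight-sequence hash is not solution-preserving. Many priors may match it, each needing its own verification, so a query can cost $\Theta(km)$.
\end{itemize}

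The missing idea is to avoid verification altogether by using a signature whose equality already forces equal answers. Write $P=v_1\cdots v_q$. Encode the instance as an attributed graph with edge attribute $c(e)$, vertex attribute $i$ on $v_i$, and $0$ elsewhere. For each prior, store its 1-WL (color refinement) complete invariant together with its answer vector in a trie; the invariant is computable in $O((n+m)\log n)$.

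Suppose $G\WLequiv{1}G'$ and let $Q$ be a shortest $v_1$--$v_q$ path in $G-(v_i,v_{i+1})$. By Neighborhood Preservation, $Q$ lifts edge by edge to a walk $Q'$ in $G'$ with the same costs and stable colors. Because the path vertices are individualized, $Q'$ runs from $v_1'$ to $v_q'$. It also cannot use $(v_i',v_{i+1}')$: the corresponding step of $Q$ would then join the unique vertices of initial colors $i$ and $i+1$, i.e.\ it would be $(v_i,v_{i+1})$. So each replacement distance in $G'$ is at most the corresponding one in $G$, and symmetry gives equality.

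Your $\negName$ part is essentially the paper's argument. You fix one random point at preprocessing and compute the isomorphism-invariant fingerprint $\tfrac16\trace(A^3)$ of a weight-substituted adjacency matrix in $\widetilde O(n^\omega)$ time. You look it up in a dictionary and apply Schwartz--Zippel with a union bound over the $k$ priors. Equal triangle statistics give equal answers even when $H\not\cong G_i$.

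Two cautions apply.
\begin{itemize}
\item Drop the Hadamard product with $R$ if $R$ is a random matrix indexed by vertex pairs. It makes the fingerprint depend on the vertex labeling and breaks the guarantee for isomorphic queries; randomness may only be attached to weight values.
\item The substitution $M_{uv}=z^{w(u,v)}$ records only triangle sums, which is enough for the answer. However, it requires integer weights, and its degree grows with the weight range $W$, giving error $O(W/p)$ per comparison. The paper instead gives each distinct weight its own independent random field element, $A_{uv}=r_{w(u,v)}$. The triangle polynomial then has degree $3$, the error per prior is at most $3/p$, and arbitrary real weights are handled.
\end{itemize}
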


Interestingly, while $\negName$ and $\repPathName$ are subcubic equivalent in the classical setting~\cite{DBLP:conf/focs/WilliamsW10, DBLP:conf/soda/AbboudGW15}, their current complexities diverge in the Isomorphic-Priors model.

Additionally, we can solve \textbf{$\maxFlowName$} with near-linear preprocessing and query times, which is slightly faster than the current worst-case complexity~\cite{DBLP:conf/focs/ChenKLPGS22,DBLP:conf/focs/Brand0PKLGSS23,DBLP:conf/stoc/BrandLLSS0W21,DBLP:conf/focs/BrandLNPSS0W20}.\\

Our third class of problems concerns the \textbf{frontier between $\cP$ and $\NP$}, namely \textbf{infinite-duration games} played on graphs, including \textit{$\parityName$}, \textit{$\meanName$}, \textit{$\energyName$}, and \textit{$\stochasticName$}. In these games, two players take turns moving a token along the edges of a directed graph to form an infinite path. For example, in an \textit{$\energyName$} (played on a bipartite graph with integer edge weights), the token starts at vertex $s$ with initial energy $e_0$. Traversing an edge adds its weight to the running total. Player A wins if they can keep the energy non-negative indefinitely; otherwise, Player B wins. We defer formal definitions to \Cref{sec:1WL}.

Originating from automata theory, reactive systems synthesis, and model checking \cite{EmersonJutla91, Thomas1997}, these games lie in $\NP \cap \coNP$ (and often $\UP \cap \coUP$ \cite{Jurdzinski98, Condon92}), making $\NP$-hardness highly unlikely. While a major breakthrough (STOC 2017 best paper) showed $\parityName$ are solvable in quasipolynomial time \cite{CaludeJKL17STOC}, harder variants ($\meanName$, $\energyName$, and $\stochasticName$) still rely on randomized subexponential simplex-like methods \cite{Halman07, BjorklundSV06}. Determining whether any of these games admit polynomial-time algorithms remains a prominent open question.

In our setting, however, all these problems can be easily solved in near-linear time.

\begin{theorem}[Algorithms for Infinite-Duration Games]\label{thm:main:Infinite Games Upper Bounds}
There exist deterministic Isomorphic-Priors algorithms that solve $\parityName$, $\meanName$, $\energyName$, and $\stochasticName$ using $\widetilde{O}(km)$ preprocessing time and $\widetilde{O}(m)$ query time. 

\end{theorem}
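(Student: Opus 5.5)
The plan rests on color refinement (1-WL), which \Cref{sec:1WL} suggests is the intended tool. It runs in $\widetilde{O}(m)$ time with attributes (edge weights, vertex ownership, priorities, the start vertex $s$) included in the initial coloring. The key fact to establish is that for all four games the value of the game from a vertex $v$ (winner, mean-payoff value, minimal initial energy, or winning probability) depends only on the stable 1-WL color of $v$. Equivalently, it is determined by the iterated quotient of the arena, so the solution itself does not need the full isomorphism type.

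In the preprocessing phase, for each prior $G_i$ we run color refinement with canonical color names. These are shared across all graphs by refining the disjoint union, or by a canonical hashing of color signatures such as (old color, multiset of (edge attribute, neighbor color)) kept in a global dictionary. We then store the stable color of $s$ in $G_i$ together with $\opt(G_i)$ in a dictionary. At query time we refine $H$ in $\widetilde{O}(m)$ using the same canonical naming and look up the color of $s$. If it matches a stored entry, we output that value; otherwise we report non-isomorphism, which is correct because isomorphic graphs receive identical canonical colorings. The remaining work is the soundness claim: if the color of $s$ in $H$ equals the color of $s$ in some $G_i$, then $\opt(H)=\opt(G_i)$.

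To prove soundness, I would pass to the quotient game. The stable coloring is an equitable partition: every vertex of color $c$ has the same multiset of (weight, neighbor color) pairs. Build the quotient arena $Q$ on colors, in which a vertex of color $c$ has moves to the colors $d$ with the corresponding weights. Its owner and priority are inherited; for stochastic vertices, the transition probabilities are inherited, since these are uniform over neighbors or attached as edge attributes. Then show that the value of $v$ in $G$ equals the value of its color in $Q$. The argument goes through \emph{positional determinacy}, which holds for parity, mean-payoff, energy, and simple stochastic games. An optimal positional strategy in $Q$ lifts to $G$ by picking, at $v$, any successor of the prescribed color and weight. Each play in $G$ under the lifted strategy then projects onto a play of $Q$ with the same weight and priority sequence, so the lifted strategy guarantees at least the $Q$-value. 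For the stochastic case, the projection of the Markov chain is a lumping, and lumpability holds by equitability. The same reasoning for the opponent gives equality. Since the quotient $Q$ is determined canonically by the stable coloring of the color class of $s$ and the colors reachable from it, equal colors of $s$ imply isomorphic quotients reachable from $s$, and hence equal values.

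The main obstacle is the lifting argument for stochastic games and the handling of multiplicities. In the quotient we must record edge multiplicities, or else note that they are irrelevant for the deterministic games, since there only the existence of a move of a given color and weight matters. For simple stochastic games, average vertices need the probability distribution over colors, which equitability provides exactly. I would first prove a general lemma, that equitable-partition projection preserves the values of games with positional optimal strategies, and then instantiate it for each game. That lemma also covers the $\widetilde{O}(km)$ preprocessing bound.
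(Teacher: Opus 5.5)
Your core idea is right: values are constant on stable colour classes, which you prove by transferring strategies along the equitable partition. Routing it through a quotient game $Q$ with positional strategies is a legitimate alternative to the paper's direct shadow-play coupling of $G$ and $G'$ (\Cref{thm:energy-game-1-wl,thm:value-determined-game-1wl}). That coupling uses local colour- and probability-preserving bijections between successor sets and does not rely on positional strategies. However, two steps of your proposal do not go through as written.

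\textbf{The lookup key.} Looking up only the colour of $s$ requires colour names that are comparable across different graphs. Neither mechanism you suggest delivers this within the stated bounds. Refining the disjoint union at query time means refining $H$ together with all priors, i.e.\ $\widetilde{O}(km)$ per query. A global dictionary of round-by-round signatures means recomputing every signature in each of up to $n-1$ rounds, i.e.\ $\Theta(nm)$. The near-linear algorithms (\Cref{thm:1-wl-implementation}) avoid per-round recomputation precisely by giving up such names. Their compressed colours are consistent only between 1-WL-equivalent graphs, and equal compressed colours in non-equivalent graphs say nothing; the paper points this out right after \Cref{thm:1-wl-implementation}.

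The fix is to key the dictionary by a complete invariant for 1-WL equivalence (\Cref{thm:1-wl-canon}), as the paper does, and then prove that 1-WL-equivalent instances have equal values. Since $s$ is uniquely attributed, equivalent instances give $s$ the same colour in the coarsest equitable partition of the disjoint union. Your quotient argument, applied to that common partition, then yields equal values.

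\textbf{The stochastic case.} After you lift a positional optimal $\sigma_Q$ to $G$, the opponent in $G$ may play with history, or positionally but differently at two vertices of the same colour. The process on $V(G)$ is then either not a Markov chain, or a chain that is not lumpable with respect to the colouring. So ``lumpability holds by equitability'' fails at the opponent's vertices.

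What equitability does give is this: conditioned on the projected colour history, the next colour is $\sigma_Q$'s choice at Player~1 colours and follows the quotient distribution at random colours, because every vertex of a fixed colour has the same next-colour distribution. Define the opponent's behavioural strategy in $Q$ as the conditional next-colour distribution at opponent colours. The projected play then has exactly the law of a $Q$-play under $\sigma_Q$ and this strategy. Optimality of $\sigma_Q$ against all behavioural strategies gives the bound, and the other player is handled symmetrically.

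Once you argue via projected histories, positional determinacy becomes unnecessary, so you need not verify it for both players in each of the four games. Any strategy in $Q$, including a history-dependent one, lifts to $G$ by reading it off the projected history. This is essentially the paper's shadow play, and it extends to arbitrary Borel objectives on label sequences.
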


More broadly, a similar result holds for every infinite-duration game with a well-defined value (value-determined game; see \cref{def:value-determined-game})

In all the results above, we compute only the optimal \emph{values}. If the corresponding optimal \emph{solutions} are provided with the priors, then our algorithms can also compute an optimal solution for the query graph.

\paragraph{Conditional Lower Bounds.} We observe that many known reductions can be naturally adapted to our framework to establish conditional lower bounds, which hold even when the database contains only a single prior graph ($k=1$). Specifically, we show that for many $\NP$-hard problems, achieving both polynomial preprocessing and query times would imply the major algorithmic breakthrough of $\GI \in \cP$. Moreover, barring a breakthrough proving $\GI \in \coma$, these lower bounds hold even in a stronger model allowing unbounded preprocessing time.

\begin{theorem}[Informal]\label{thm:intro:hardness}
Unless $\GI \in  \cP$, none of the following problems can be solved in the Isomorphic-Priors model using simultaneously polynomial preprocessing time and polynomial query time, even when restricted to a single prior graph ($k=1$): 
$\kSAT{3}$\footnote{For $\SAT$, we say that two CNF formulas are isomorphic if one can obtained from the other by changing the names of variables and clauses.}, $\MaxkSAT{2}$, $\Clique$, $\COL$, $\kCOL{3}$, $\CliqueCover$, $\EXA$, $\SetCover$, $\SetPacking$, $\HIT$, $\VC$, $\MaxCut$, $\STEINER$, $\FeedbackArcSet$, $\FeedbackVertexSet$,
$\DirHam$, $\UndirHam$, $\DMatching$, $\LPZEROONE$, $\kMedian$, $\MetricTSP$, $\TSP$ within any approximation factor, and $(2-\epsilon)$-approximate $\kCenter$.

Unless $\GI \in \coma$, these lower bounds hold even against algorithms with unbounded preprocessing time, provided they are restricted to a polynomial-size data structure for answering queries.
\end{theorem}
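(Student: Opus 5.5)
The plan is to reduce Graph Isomorphism to each listed problem $\Pi$, using a single prior ($k=1$). Given a $\GI$ instance $(G,G')$, we build two instances $I_1,I_2$ of $\Pi$ with two properties. First, isomorphism is preserved in both directions: $I_1\cong I_2$ if and only if $G\cong G'$. Second, the optimal value of $I_1$ is known or computable in polynomial time.

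The mechanism is a \emph{parity/gadget trick}. We build a third instance $I_2'$ with $\opt(I_2')\neq\opt(I_1)$ that is "almost isomorphic" to $I_1$. We then run the isomorphic-priors algorithm with prior $I_1$, annotated with its known $\opt(I_1)$. A cleaner formulation is the following. We construct an instance $J$ from $G$ by attaching a gadget so that $\opt(J)=a$ (say, a satisfiable formula, a graph with a Hamiltonian cycle). We construct $J'$ from $G'$ with the same gadget. Finally we construct $J''$ from $G'$ with a "perturbed" gadget, so that $\opt(J'')=b\neq a$, while $J''\cong J$ iff $G'\cong G$ still holds for a suitable variant.

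The practical approach is to use the standard observation that any problem $\Pi$ is $\GI$-hard in this model as soon as there is a polynomial-time \emph{isomorphism-faithful} map $f$ from graphs to $\Pi$-instances. Here $f(G)\cong f(G')$ iff $G\cong G'$. We also need a polynomial-time map $g$ to instances with a \emph{different} known optimum, such that $g(G')\not\cong f(G)$ always.

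The algorithm then preprocesses the single prior $f(G)$, annotated with the computable value $\opt(f(G))$. On the query $f(G')$ it either reports non-isomorphism, in which case we output "no", or it returns a value. In the latter case we still must rule out a spurious answer, since the model allows correct answers for non-isomorphic inputs. For this reason the key step is to make $f$ produce instances whose optimum encodes a bit we control. We replace the query instead by a \emph{combined} instance $f(G')\oplus h$, where $h$ is a gadget that is cheap to evaluate, and the prior by $f(G)\oplus \bar h$ with $\opt$ declared to be a \emph{deliberately false} value. I expect this step to be the main obstacle, because the model only promises correctness when a true prior is isomorphic, so I would rather use a self-reduction.

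Concretely, I would follow the search-to-decision style used for $\GI$: the algorithm's answer on the query is checked against a polynomial-time-computable ground truth. For $\Pi$ with a trivially known optimum on gadget instances (e.g.\ $f(G)$ is always satisfiable / always has clique number exactly $c$), we build a query that is isomorphic to the prior iff $G\cong G'$ and whose true optimum is different from the prior's. We achieve this with \emph{individualization}: the vertex-labelled variants $G_{(v)}$ are compared with $G'_{(w)}$, and the answer's inconsistency with the known optimum certifies non-isomorphism. Whenever the algorithm returns a value equal to $\opt$ of the prior but the true optimum of the query differs, we know the prior is not isomorphic. When the values agree with the true optimum, we use the returned solution to extract an isomorphism and verify it. For each listed problem, the required faithful encoding comes from the classical Karp reductions applied to a rigidified encoding of $G$ (for example, attach distinct-length paths to pin vertices). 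For $\TSP$ at any approximation factor and $(2-\epsilon)$-$\kCenter$, we use the gap versions of the Hamiltonian-cycle and dominating-set reductions so that approximate answers still separate the two cases.

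For the $\coma$ part, polynomial advice in the form of the data structure combined with the above reduction puts $\GI$ in a nonuniform class. The Arthur–Merlin protocol for graph non-isomorphism, together with the known $\GI\in\coma$ style arguments, then gives the stated consequence. Merlin supplies the data structure, which is checked by randomized self-consistency on random isomorphic copies of the prior.
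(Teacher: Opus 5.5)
Your proposal has a genuine gap: it never reaches the mechanism that makes the reduction work, and the fixes you sketch either contradict the model or destroy isomorphism preservation. The obstacle you correctly identify is that a correct answer is permitted on a non-isomorphic query. This is fatal for any scheme in which the query's optimum is known or cheap to compute. An algorithm that simply computes $\opt(H)$ itself and never outputs $\bot$ is then a legal algorithm in the model, and its output says nothing about whether $G\cong G'$. The isomorphism question must therefore be encoded in the \emph{answer} of the query, not merely in its isomorphism type.

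The paper does exactly this. Take a polynomial-time map $R$ from pairs of graphs to $\Pi$-instances with $R(G,H)\in\Pi\iff G\cong H$ and $(G,H)\cong(G',H')\Rightarrow R(G,H)\cong R(G',H')$. Only this forward direction is needed, not your two-sided faithfulness. Preprocess the single prior $R(G,G)$, which is always a YES-instance, and query $R(G,H)$. If $G\cong H$, then $(G,H)\cong(G,G)$, so the query is isomorphic to the prior and the algorithm must answer YES. If $G\not\cong H$, the query is a NO-instance, so the algorithm must answer NO or $\bot$. Declaring ``isomorphic'' iff the algorithm answers YES thus decides $\GI$ in polynomial time; the paper packages this as the promise problem $\mathrm{Iso}\text{-}\Pi$. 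What remains is to build such an $R$ for every listed problem by composing isomorphism-preserving reductions, starting from the conflict-graph reduction $\GI\isored\IS$ and the analogous $\GI\isored\SAT$.

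Your concrete steps do not survive scrutiny.
\begin{itemize}
\item A query that is isomorphic to the prior iff $G\cong G'$, yet whose ``trivially known'' optimum differs from the prior's, cannot exist, since isomorphic instances share their optimum.
\item Declaring a deliberately false $\opt$ for the prior voids every guarantee of the model.
\item The model returns values, not solutions, so there is nothing from which to extract an isomorphism.
\item Pinning vertices by attaching distinct-length paths depends on the vertex names, so isomorphic inputs can yield non-isomorphic outputs. Doing it canonically is graph canonization.
\item Karp's reductions cannot be reused wholesale. The standard $\SAT\to\kSAT{3}$ clause splitting depends on the order of literals inside a clause and is not isomorphism-preserving. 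This is why the paper goes $\SAT\to\COL\to\kCOL{3}\to\kSAT{3}$ via Lovász's construction.
\end{itemize}

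For the $\coma$ part, ``nonuniform advice'' is the wrong frame, since the data structure depends on the input graphs. ``Self-consistency checks'' also provide no soundness argument. In the paper's protocol:
\begin{itemize}
\item Merlin sends a data structure for the prior $(G_0,G_0)$.
\item Arthur secretly picks a bit $b$ and a random permutation $\pi$, queries $(G_0,\pi(G_b))$, and accepts iff the answer reveals $b$ (YES meaning $b=0$).
\item If $G_0\not\cong G_1$, the honest structure reveals $b$.
\item If $G_0\cong G_1$, the query is distributed independently of $b$, so no structure succeeds with probability above $1/2$.
\end{itemize}
This puts $\overline{\GI}$ in $\mathsf{MA}$. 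The bound then transfers to each $\Pi$ through $\GI\isored\Pi$ by applying $R$ to the prior and the queries.
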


The list above includes almost all of Karp's 21 $\NP$-complete problems \cite{DBLP:conf/coco/Karp72}, except a few problems on numbers that are trivial in our model such as Knapsack and Partition problems.

\Cref{thm:intro:hardness} already includes hardness of approximation results for the $\tspName$ ($\TSP$) and $\kCenter$. If the PCP theorem can be  transferred to argue that $\MaxkSAT{3}$ cannot be approximated within an arbitrary constant factor in our model, we can additionally argue that neither a $(1-1/e+\epsilon)$-approximation for $\mathsf{Max\text{-}}k\mathsf{\text{-}Coverage}$ nor a $(1+2/e-\epsilon)$-approximation for $\kMedian$ can be achieved using simultaneously polynomial preprocessing time and polynomial query time (see \Cref{sec:Hardness}). We view the deeper exploration of the PCP theorem, inapproximability bounds, and improved approximation algorithms within our model as an exciting future research direction. 

Proving hardness in our model requires \emph{isomorphism-preserving reductions} (\Cref{sec:isoreduction}): a reduction $f$ must map an instance $G$ to a target instance $f(G)$ while guaranteeing that if $G \cong H$, then $f(G) \cong f(H)$. This is related to the existing notion of \emph{strong isomorphism reductions}~\cite{DBLP:journals/jsyml/BussCFFM11}, which demands the biconditional $G \cong H \iff f(G) \cong f(H)$.
Some standard reductions  preserve isomorphism, such as the reduction from Graph Isomorphism to finding a maximum clique in a modular product graph~\cite{DBLP:journals/sigact/Kozen78} (which immediately implies the hardness of $\maxCliqueName$/$\maxISName$ in our model).
However, many classical reductions are not isomorphism-preserving, such as the standard transformation from $\SAT$ to $\kSAT{3}$ or the PCP theorem (e.g., the assignment tester in \cite{Dinur2007PCP}). Determining whether the rich landscape of classical reductions can be systematically adapted into isomorphism-preserving variants remains a formidable research program.

\paragraph{Recent Developments.}
Since we completed the initial draft of this paper on April 1, 2026,
the following related developments have emerged.

First, in the $\mathsf{All}\text{-}\mathsf{Pairs}\text{ }\mathsf{Shortest}\text{ }\mathsf{Paths}$ problem ($\APSP$), the input is an
edge-weighted graph, and the task is to compute all pairwise
distances. Fox~\cite{Fox26} showed that $\APSP$ can be solved in the
Isomorphic-Priors model by a randomized algorithm with
$\widetilde{O}(kn^\omega)$ preprocessing time and
$\widetilde{O}(n^\omega)$ query time.

Neuen independently observed that $\APSP$, $\negName$, and $\cstName$ admit a unified randomized approach with
$\widetilde{O}(kn^\omega)$ query time and no nontrivial preprocessing
beyond storing the prior instances and their answers. Although the
query bound retains a linear dependence on $k$, the approach handles
all three problems using the same comparison procedure, based on
only $O(\log n)$ rounds of $2$-WL which can be performed in time $\widetilde{O}(n^\omega)$ using Skresanov's randomized implementation of $2$-WL~\cite{Skresanov21}.\footnote{
More precisely, Skresanov's randomized implementation of
$2$-WL~\cite{Skresanov21} allows $r$ rounds of refinement to be
computed in $\widetilde{O}(rn^\omega)$ time, with high probability.
The key observation is that $O(\log n)$ rounds suffice to transfer
the required answers between instances that are not distinguished by the refinement.
For Negative Triangle and Constrained Minimum
Spanning Tree, such instances have the same decision answer and
optimal value, respectively. For $\APSP$, ordered pairs receiving
the same color have the same distance, allowing the stored distances
to be transferred to the query graph.
}

\subsection{Organization}

We provide an overview of our arguments in \Cref{sec:overview}, which we believe already gives readers the key ideas and enough intuition to reconstruct the full proofs. We discuss open problems, related work and formally define the model and necessary notations in \Cref{sec:open,sec:related,sec:model}. In \Cref{sec:PIT}, we present algorithms based on randomized polynomial evaluation for $\negName$, $\cstName$ and $\lstName{p}$. Then, we switch to Weisfeiler-Leman techniques: \Cref{sec:1WL} presents an elementary argument based on WL coloring for $\cspName$ and $\energyName$ and \Cref{sec:WL-for-Experts} present the same results based on connections to graph homomorphism counting, as well as a deterministic algorithm for $\cstName$ and $\lstName{p}$. Our lower bounds are presented in \Cref{sec:Hardness}. We conclude in \Cref{sec:conclusion} by presenting the proofs of our main theorems, referring to the relevant results established throughout the paper, and by discussing some hypotheses within our model and their consequences. \Cref{app:1wl,app:hard} provide additional algorithmic results based on WL coloring and additional lower bound results.

\section{Overview}\label{sec:overview}

In this section, we outline the core technical ideas underlying our main results. We emphasize again that our primary contribution lies not in the invention of these individual techniques, but in their synthesis across different fields to demonstrate how worst-case computational barriers can be bypassed within the Isomorphic-Priors framework, alongside the establishment of tight conditional lower bounds.

We present three main ideas below: (1) algorithms via randomized polynomial evaluation, (2) algorithms via Weisfeiler-Leman (WL) color refinement, and (3) conditional lower bounds via isomorphism-preserving reductions. We strive to present these concepts without assuming a specialized background, with the goal of enabling a general theoretical computer science audience to reconstruct our proofs after reading this section.

\subsection{Algorithms via Randomized Polynomial Evaluation (Details in \Cref{sec:algorithms via polynomials})}

Let us consider $\negName$ as a guiding example. We define the \emph{triangle polynomial} of a weighted undirected graph $G$ as
\[
P_{\Delta}(G) \ \coloneqq \sum_{w_1 \le w_2 \le w_3} c_{w_1, w_2, w_3}(G) \, x_{w_1} x_{w_2} x_{w_3}
\]
where $c_{w_1, w_2, w_3}(G)$ is the number of triangles in $G$ whose edges possess the weight multiset $\{w_1, w_2, w_3\}$. For example, the first term of $P_{\Delta}(G)$ in \Cref{fig:intro:NTD} is $2x_{-2}^2 x_1$, because there are two triangles in $G$ with edge weights $\{-2, -2, 1\}$. 

\begin{figure}
    \centering
    \includegraphics[width=1.0\linewidth]{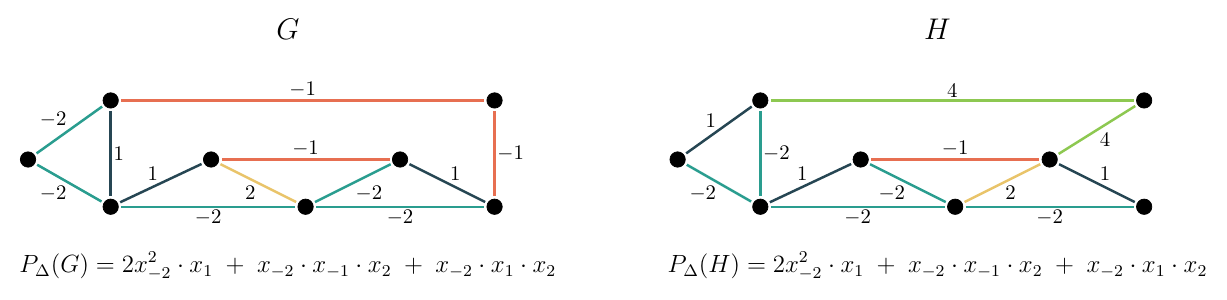}
    \caption{Examples of triangle polynomials in weighted graphs. Edges with the same color have the same weight. Note that $G$ and $H$ are not isomorphic (and in fact not even WL-equivalent).}
    \label{fig:intro:NTD}
\end{figure}

Obviously, if $P_{\Delta}(G) \equiv P_{\Delta}(H)$ (i.e., they are identically equal as polynomials), then $G$ contains a negative triangle if and only if $H$ does. (Note that this does not imply $G \cong H$, as exemplified in \Cref{fig:intro:NTD}). Our goal is therefore to efficiently identify a prior graph $G_i$ such that $P_{\Delta}(G_i) \equiv P_{\Delta}(H)$ or rule out the existence of such $G_i$.

To achieve this without explicitly constructing the polynomials, we rely on randomized Polynomial Identity Testing. By the Schwartz-Zippel lemma, since $P_{\Delta}$ is a multivariate polynomial of degree $3$, evaluating it at a random point drawn uniformly from a sufficiently large finite field $\mathbb{F}$ distinguishes non-identical polynomials with high probability. Specifically, we assign each possible variable $x_w$ a random value $r_w \in \mathbb{F}$. 

Given this random assignment, we need to evaluate $P_{\Delta}(G)$ at $\mathbf{r} = (r_{w_1}, \dots, r_{w_m})$ in $\widetilde{O}(n^\omega)$ time. To this end, we construct an $n \times n$ matrix $A$ where $A_{u,v} = r_{w(u,v)}$ if $(u,v) \in E(G)$, and $0$ otherwise. The value of the polynomial evaluated at $\mathbf{r}$ is simply $\frac{1}{6} \text{Tr}(A^3)$ which is equal to the sum of weighted triangles in the graph with adjacency matrix $A$ and can be computed via fast matrix multiplication in $\widetilde{O}(n^\omega)$ query time.

To summarize, during the preprocessing phase, we evaluate $P_{\Delta}(G_i)(\mathbf{r})$ for each prior $G_i$ and store this field element as its invariant representation. This requires $\widetilde{O}(k n^\omega)$ total preprocessing time. When we receive a query graph $H$, we evaluate $P_{\Delta}(H)(\mathbf{r})$ using the exact same random assignment $\mathbf{r}$ in $\widetilde{O}(n^\omega)$ time. We then simply find a prior graph $G_i$ that evaluates to the identical scalar value and return its precomputed solution $\opt(G_i)$. This takes  $\tilde O(n^\omega)$ query time.\footnote{There are some polylogarithmic factors needed to guarantee high success rate and for searching for the desired graph $G_i$.}
We can guarantee that, with high probability\textsuperscript{\ref{foot:whp}}, the random point $\mathbf{r}$ chosen during the preprocessing phase is such that, for every $i$, $P_{\Delta}(G_i)(\mathbf{r})=P_{\Delta}(H)(\mathbf{r})$ if and only if $P_\Delta(G_i) \equiv P_\Delta(H)$. Thus, the answer given in the query phase is correct with high probability. 

Additionally, if the actual negative triangle $T_i$ in $G_i$ is provided, we can find a negative triangle in $H$ by reducing the problem to unweighted triangle detection, which takes $O(n^\omega)$ time~\cite{DBLP:conf/stoc/Itai77}. (Roughly speaking, this is done by detecting an unweighted triangle in a tripartite graph consisting only of edges whose weights match those of $T_i$.)

To solve the Constrained Spanning Tree ($\cst$) problem, we similarly define the \emph{spanning tree polynomial} of a graph $G$ as 
\[
P_{ST}(G) \coloneqq \sum_{w_1 \le \dots \le w_{n-1}} c_{w_1, \dots, w_{n-1}}(G) \prod_{i=1}^{n-1} x_{w_i}
\]
where $c_{w_1, \dots, w_{n-1}}(G)$ is the number of spanning trees with the weight multiset $\{w_1, \dots, w_{n-1}\}$.

We leverage the fact that this multivariate polynomial can be efficiently evaluated at a random point $r$ by invoking Kirchhoff's Matrix Tree Theorem. Specifically, this evaluation reduces to computing the determinant of a randomized Laplacian matrix, which takes $\widetilde{O}(n^\omega)$ time via fast matrix multiplication. If the optimal tree $T_i$ in $G_i$ is provided, we can reconstruct a corresponding optimal tree $T$ in $H$ via matroid intersection in polynomial time~\cite{schrijver2003}, and specifically in $\widetilde{O}(m\sqrt{n})$ time via recent advancements~\cite{DBLP:conf/stoc/BlikstadMNT23,DBLP:conf/focs/ChakrabartyLS0W19,DBLP:conf/icalp/GabowS85}.
Full details and the formal algebraic constructions are deferred to \Cref{sec:CST}.

\subsection{Algorithms via Weisfeiler-Leman (WL) Coloring (Details in \Cref{sec:algorihtms via WL,sec:WL-for-Experts})}

We sketch an elementary argument here, designed to be accessible to readers unfamiliar with WL coloring, with full details deferred to \Cref{sec:algorihtms via WL}. An alternative, more concise proof based on a more general perspective
is provided in \Cref{sec:WL-for-Experts}.

We use $\cspName$ ($\csp$) as our guiding example. In the preprocessing phase, we color the vertices of each prior graph $G_i$ using the Weisfeiler-Leman (WL) color refinement process, which runs in time $O((n+m)\log n)$~\cite{CardonC82,paige1987three, berkholz2017tight}. 
An essential property of the resulting coloring is that \emph{every two vertices $u$ and $v$ that receive the same color will have identically colored neighborhoods.}

For example, in \Cref{fig:intro:WL coloring}, every pink vertex (labeled $p$) is adjacent to the target $t$ via an edge of \attributes (length and cost) $(5,2)$, to two cyan vertices (labeled $c$) via edges of \attributes $(2,4)$, and to one green vertex (labeled $g$) via an edge of \attributes $(3,3)$. Note that we initialize the process such that the source $s$ and target $t$ are assigned uniquely identifiable colors (red and blue in the example).
The edge colors shown in the figure represent the edge attributes (and are not a result of the WL coloring).

\begin{figure}
    \centering
    \includegraphics[scale=1.4]{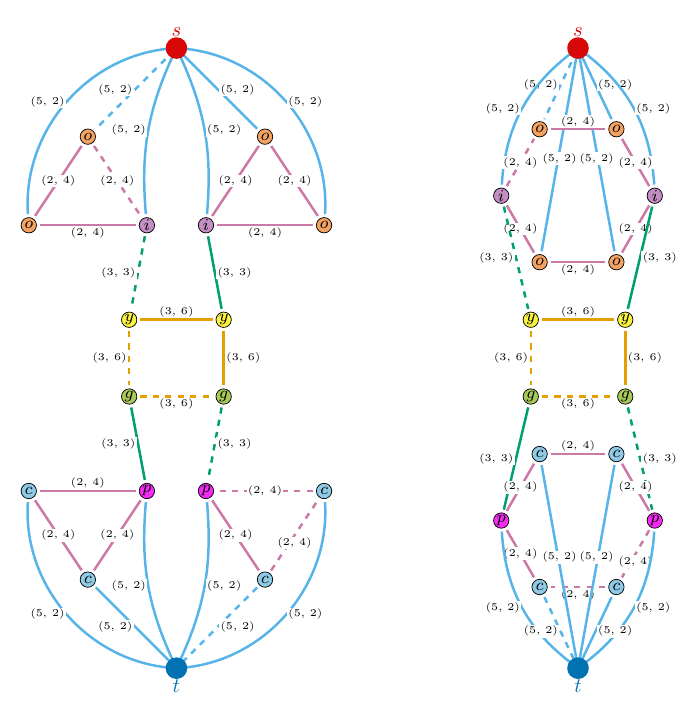}
    \caption{Two WL-equivalent but non-isomorphic graphs. Excluding $s$ and $t$, each graph contains the exact same color distribution: 4 orange ($o$), 2 indigo ($i$), 2 yellow ($y$), 2 green ($g$), 2 pink ($p$), and 4 cyan ($c$) vertices. The dashed edges in the left graph illustrate an $s$-$t$ walk with the edge-\attribute sequence $((5,2), (2,4), (3,3), (3,6), (3,6), (3,3), (2,4), (2,4), (5,2))$. As guaranteed by \Cref{claim:wl:equiv:color}, a corresponding walk with the exact same \attribute sequence must exist in the right graph, which is realized by the dashed edges shown there.}
    \label{fig:intro:WL coloring}
\end{figure}

Building on this property, we can easily establish the following claim (see the caption of \Cref{fig:intro:WL coloring} for a concrete example).

\begin{claim}[Informal]\label{claim:wl:equiv:color}
Let $G$ and $H$ be two \emph{WL-equivalent} graphs, meaning that for every color $c$, both graphs contain exactly the same number of vertices assigned color $c$. If there exists an $s$-$t$ walk (i.e., an $s$-$t$ path that is not necessarily simple) $W = (s=v_0, v_1, \dots, v_p=t)$ in $G$, then there exists a corresponding walk $W' = (s=v'_0, v'_1, \dots, v'_p=t)$ in $H$ such that, for every step $i \in \{0,\dots,p-1\}$, the \attributes (length and cost) of the edges $(v_i, v_{i+1})$ and $(v'_i, v'_{i+1})$ are identical.
\end{claim}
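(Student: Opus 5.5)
The plan is to run the color refinement jointly on the disjoint union $G \,\dot\cup\, H$. Then colors are directly comparable across the two graphs, and WL-equivalence means every color class has the same number of vertices in $G$ as in $H$. The initial coloring gives $s$ and $t$ their own unique colors in each graph, so $s_G$ and $s_H$ share a color that no other vertex has, and likewise for $t$. I will use the stable-coloring property stated above, and I assume it is phrased with the edge attributes built in. Concretely, if $u$ and $u'$ receive the same final color, then for every color $c$ and every attribute value $a$, the number of out-neighbors $w$ of $u$ with color $c$ and $\alpha(u,w)=a$ equals the corresponding number for $u'$. Since the graphs are directed, the refinement should count out-neighbors and in-neighbors separately, but only the out-neighbor part is needed here.

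The key step is a single-step lifting lemma. Suppose $v$ in $G$ and $v'$ in $H$ have the same color, and $(v,w)$ is an edge of $G$ with attribute $a$. Then there is an edge $(v',w')$ of $H$ with attribute $a$ such that $w'$ has the same color as $w$. This follows directly from the stable-coloring property, because the count of $(\mathrm{color}(w),a)$-out-neighbors of $v$ is positive, so the same count for $v'$ is positive.

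With the lemma in hand, I build $W'$ inductively and keep the invariant that $v'_i$ has the same color as $v_i$. The base case is $v'_0=s_H$, which has the same color as $s_G=v_0$. For the inductive step, apply the lemma to the edge $(v_i,v_{i+1})$ to obtain $v'_{i+1}$; this preserves both the attribute of the edge and the color invariant. At the end, $v'_p$ has the same color as $v_p=t_G$. That color is carried only by the target, so $v'_p$ must be $t_H$.

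I expect the main obstacle to be this last bookkeeping point. I need that the unique initial color of $t$ is never merged with another color. This holds because refinement only splits color classes. It also needs WL-equivalence to guarantee that $H$ has exactly one vertex of $t$'s color, namely $t_H$. Walks suffice here; simple paths are not needed, since repeated vertices are allowed.
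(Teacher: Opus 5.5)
Your proposal is correct and follows essentially the same route as the paper: there, the token-tracking idea becomes the Walk Lifting lemma (\Cref{lem:walk_lifting}), an induction on walk length using Neighborhood Preservation (\Cref{thm:1wl_properties}). That property is itself justified through the stable coloring of the disjoint union, and the endpoint is forced to be $t$ by the Refinement property, exactly as in your bookkeeping. The one step to state explicitly is the base case $v'_0=s_H$: it needs the same counting argument you give for $t$, namely that WL-equivalence forces the stable class of $s_G$, which lies inside $\{s_G,s_H\}$, to contain $s_H$.
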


\begin{proof}[Proof Idea of \cref{claim:wl:equiv:color} ]
Imagine placing a token at the source vertex $s$ in both graphs $G$ and $H$.
At each step, we move the token in $G$ along the walk $W$ from a vertex of color $c$ to a neighbor of color $c'$ via an edge with \attribute $\att$.
The definition of WL-equivalence guarantees that the token in $H$ (which currently resides on a vertex of the exact same color $c$) has an identically \attributed edge to a neighbor of color $c'$.
By matching these moves step-by-step, we successfully build the corresponding walk $W'$ in $H$.
\end{proof}

As an immediate consequence of the claim above, every two WL-equivalent graphs $G$ and $H$ yield the exact optimal value, i.e., $\opt(G) = \opt(H)$.\footnote{To establish this, we rely on the standard fact that any $s$-$t$ walk can be transformed into a simple $s$-$t$ path by iteratively removing cycles, which does not increase the total cost or length of the walk since edge lengths and costs are non-negative.}
Naturally, during the query phase, our algorithm simply computes the WL coloring of the query graph $H$, searches for a prior graph $G_i$ that is WL-equivalent to $H$, and returns $\opt(G_i)$. This process achieves the desired near-linear query time. Additionally, if each prior graph $G_i$ is equipped with its optimal path $P_i$, we can efficiently reconstruct the corresponding optimal path in $H$ using the token-tracking procedure described in the proof sketch. 

Other path-centric optimization problems, such as $\lspName{p}$, $\repPathName$, and Infinite-Duration Games, can be solved using a similar approach.
Furthermore, we note that the Constrained Spanning Tree ($\cst$) problem can be solved deterministically in $O(n^3 \log n)$ query time via 2-WL, a higher-dimensional variant of the WL coloring; see \Cref{sec:WL-for-Experts}.

\subsection{Lower Bounds (Details in \Cref{sec:Hardness})}

\paragraph{Graph Isomorphism with Promise.}
Our starting point is the $\giName$ ($\GI$) problem itself, formulated within our model under a single prior instance ($k=1$). In the preprocessing phase, the algorithm receives a prior pair of graphs $(G, H)$ along with $\opt(G, H)$, which indicates whether $G \cong H$. During the query phase, the algorithm receives a new pair $(G', H')$ and must either (i) decide whether $G' \cong H'$, or (ii) correctly declare that $(G, H) \not\cong (G', H')$ (i.e., $G' \not\cong G$ or $H' \not\cong H$).

We argue that this problem cannot be solved with polynomial preprocessing and query times unless $\GI \in \cP$, even under the promise that $G' \cong G\cong H$. 
Let $(X, Y)$ be an arbitrary worst-case instance of $\GI$. We define the prior instance simply as $G=H=X$, meaning $\opt(G,H)$ trivially indicates that $G \cong H$. We then present the query instance as $G'=X$ and $H'=Y$. Observe that every algorithm capable of answering this query in polynomial preprocessing and query time would immediately decide whether $X \cong Y$ from scratch without prior, implying $\GI \in \cP$.

\paragraph{From $\GI$ with Promise to $\maxISName$.}
Now we reduce $\GI$ with the $G' \cong G\cong H$ promise from above to $\maxISName$ (equivalently, $\maxCliqueName$) with a single prior.
(This reduction already exists in \cite{DBLP:journals/sigact/Kozen78}.) 
Assume for contradiction that there exists an algorithm $A$ that takes a prior graph $Z$ alongside the size of its maximum independent set\footnote{Recall that an independent set of a graph is a set of mutually non-adjacent vertices, and $\alpha(Z)$ denotes the size of the maximum independent set in $Z$.}, denoted $\alpha(Z)$, preprocesses this information in polynomial time, and subsequently, for every query graph $Z'$, either computes $\alpha(Z')$ or declares that $Z' \not\cong Z$ in polynomial time. We use $A$ to solve $\GI$ under our single-prior setting as follows.

{\em Constructing Prior Graph $Z$:}
Upon receiving the prior pair $(G,H)$ for $\GI$, our reduction constructs a prior graph $Z$ as follows.
For every $u \in V(G)$ and $v \in V(H)$, we create a vertex $z_{u,v}$ in $Z$. We add an edge between distinct vertices $z_{u,v}$ and $z_{u',v'}$ if the assignments $u \mapsto v$ and $u' \mapsto v'$ violate the isomorphism condition; i.e., we add an edge if either: (i) $(u=u')$ $\oplus$\footnote{We use the logical xor to avoid self loops (when both $u=u'$ and $v=v'$).} $(v=v')$,
or (ii) the adjacency is not preserved, meaning $(u,u') \in E(G)$ but $(v,v') \notin E(H)$, or vice versa.
Observe that $\alpha(Z) = |V(G)|$ if and only if $G \cong H$.\footnote{To see this, note that condition (i) ensures every independent set selects at most one vertex for each $u \in V(G)$ and $v \in V(H)$, bounding its size by $|V(G)|$. If $G \cong H$ via an isomorphism $f$, the set $\{z_{u, f(u)} \mid u \in V(G)\}$ forms an independent set of size exactly $|V(G)|$. Conversely, every independent set of size $|V(G)|$ defines a bijection, and condition (ii) guarantees this bijection preserves all edges and non-edges, forming a valid isomorphism.} 
By the  $ G\cong H$ promise, we know that $\alpha(Z) = |V(G)|$. 
Thus, we pass $Z$ and its solution $\alpha(Z) = |V(G)|$ to algorithm $A$ for preprocessing.

{\em Constructing Query Graph $Z'$:}
Upon receiving a query pair $(G', H')$ for $\GI$, we construct a graph $Z'$ using the exact same procedure; hence, $\alpha(Z') = |V(G')|$ if and only if $G' \cong H'$. We then query algorithm $A$ with $Z'$. If $A$ returns $\alpha(Z') = |V(G')|$, we output $G' \cong H'$; otherwise, we declare that $G' \not\cong H'$.

We claim that this is always correct. If $(G,H) \cong (G', H')$, our construction guarantees that $Z \cong Z'$ (meaning our reduction is \emph{isomorphism-preserving}). Consequently, $A$ must output $\alpha(Z') = |V(G')|$. Conversely, if $(G,H) \not\cong (G', H')$, the promise that $G' \cong G \cong H$ forces the conclusion that $G' \not\cong H'$.\footnote{To see this, assume for contradiction that $G' \cong H'$. Due to the $G' \cong G \cong H$ promise, we have $H' \cong G' \cong G\cong H$. We would therefore have $(G',H') \cong (G,H)$, contradicting our initial assumption.}
Thus, $A$ must either return $\alpha(Z') < |V(G')|$ or declare $Z \not\cong Z'$, both of which trigger our reduction to correctly declare $G' \not\cong H'$.

Finally, observe the crucial role of the isomorphism-preserving property here. If our reduction was not isomorphism-preserving, the condition $Z \not\cong Z'$ could arise regardless of whether $(G,H) \cong (G',H')$ or $(G,H) \not\cong (G',H')$. In such a scenario, if algorithm $A$ outputs $Z \not\cong Z'$, we would be unable to distinguish whether $(G,H) \cong (G',H')$ (which dictates we must return $G' \cong H'$ due to the $G'\cong G\cong H$ promise) or if $(G,H) \not\cong (G',H')$.

Many NP-hardness reductions are naturally isomorphism-preserving, meaning many standard lower bounds carry over to our framework. A prime exception is Karp's standard reduction from $\SAT$ to $\kSAT{3}$ \cite{DBLP:conf/coco/Karp72}, which fails to preserve isomorphism. We circumvent this by routing our reduction through graph coloring. We begin with Karp's $\SAT$ to $\COL$ reduction, which is isomorphism-preserving. In the second step, we employ Lovász's reduction from $\COL$ to $\kCOL{3}$ \cite{lovasz_graph}.\footnote{We refer the interested reader to \url{https://users.encs.concordia.ca/~chvatal/notes/color.html}} Finally, a standard mapping from $\kCOL{3}$ back to $\kSAT{3}$ completes the chain, yielding the desired isomorphism-preserving reduction from $\SAT$ to $\kSAT{3}$. \Cref{fig:reductions} provides an overview of our reduction chains.

\section{Open Problems}\label{sec:open}

Our results raise several questions about the power and limitations of the Isomorphic-Priors model.
We begin with four central directions: which fine-grained complexity barriers persist in our model,
an Isomorphic-Priors $\kClique{k}$ hypothesis with consequences for
$\giName$, the approximability of hard optimization
problems, and extensions beyond exact isomorphism. We then discuss
structural questions about the sources of tractability and the
broader scope of the framework.

\paragraph{Fine-Grained Complexity.}
Many questions remain for problems in
$\cP$. For example, our algorithm for $\negName$ uses randomness
to achieve $\widetilde{O}(kn^\omega)$ preprocessing time and
$\widetilde{O}(n^\omega)$ query time. Can the same bounds be achieved
deterministically? Can we improve them to near-linear time? 

More broadly, the standard weighted versions of $\negName$,
$\repPathName$, $\mathsf{All}$-$\mathsf{Pairs}$ $\mathsf{Shortest}$ $\mathsf{Paths}$ ($\APSP$), $\mathsf{Minimum}$-$\mathsf{Weight}$
$\mathsf{Cycle}$, $\mathsf{Second}$ $\mathsf{Shortest}$ $\mathsf{Simple}$ $\mathsf{Path}$, and $\mathsf{Radius}$ are equivalent under
subcubic reductions in the classical setting without
priors~\cite{DBLP:conf/focs/WilliamsW10,DBLP:conf/soda/AbboudGW15}.
Yet our algorithms achieve different query bounds for $\negName$ and $\repPathName$, and it remains unclear whether the
classical equivalences extend to our model. Doing so requires
isomorphism-preserving reductions. Can we solve all the problems in this
equivalence class in near-linear query time?

Beyond the $\APSP$ equivalence class, it is natural to ask whether
fine-grained barriers based on OV, Hitting Set, and combinatorial
Triangle Detection persist in the Isomorphic-Priors model, and
whether such lower bounds can be transferred via
isomorphism-preserving fine-grained reductions.

\paragraph{Isomorphic-Priors $\kClique{k}$ Hypothesis.}
For every fixed integer $k\geq 3$ divisible by three, fast matrix
multiplication solves $\kClique{k}$ in $n^{\omega k/3+o(1)}$ time.
The standard $\kClique{k}$ Hypothesis rules out any constant
improvement in the exponent~\cite{Williams2018ICM}. We hypothesize
that this barrier persists with isomorphic priors.

\begin{restatable}{hypothesis}{kclique}
(Isomorphic-Priors $\kClique{k}$ Hypothesis)\label{conj:iso-k-clique}
For every fixed integer $k\geq 3$ and constant $\epsilon>0$, no
isomorphic-priors algorithm can detect a $k$-clique in an
$n$-vertex graph with both preprocessing and query times
$O(n^{\omega k/3-\epsilon})$, even with $O(1)$ prior graphs.
\end{restatable}

\textbf{This hypothesis implies $\GI\notin\cP$:}
A polynomial-time algorithm for $\giName$ would
allow us to compare a query graph against $O(1)$ priors and retrieve
a matching prior's answer, or report that none exists, in
$n^{O(1)}$ time. 
This would contradict the hypothesis for sufficiently large fixed $k$.
We provide further details and discuss weaker variants in
\Cref{sec:hypothesis}. An important open direction is to refute
these hypotheses or establish them under more standard complexity
assumptions.

\paragraph{Approximability.} Although we establish
conditional lower bounds for exact computation and for some
approximation problems, much of the approximation landscape remains
open. For example, can we achieve a $0.876$-approximation for $\MaxkSAT{3}$,
a $0.879$-approximation for $\MaxCut$, a $1.99$-approximation for $\vcName$, or an $n^{0.99}$-approximation for $\mathsf{Max}$ $\mathsf{Clique}$, with polynomial
preprocessing and query times? Can isomorphic priors also enable
a $1.385$-approximation for $\STEINER$ or a $4/3$-approximation for
metric $\TSP$? These questions ask whether prior computations can help
us surpass classical hardness thresholds or achieve approximation
guarantees that remain elusive without priors.
Conversely, which
classical inapproximability bounds persist in our setting? In
particular, can PCP constructions and gap-preserving reductions be
adapted to establish such bounds while preserving the isomorphisms
required by our framework?

\paragraph{Beyond Exact Isomorphism.}
A natural direction is to relax the strict isomorphism requirement.
Our invariant-representation approach already allows exact answers
to be transferred between some non-isomorphic graphs. Can this
principle be extended to graphs whose representations are similar,
rather than identical? For example, what happen if $H$ is isomorphic to $G'_i$ obtained $G_i$ by small perturbations of its edge weights or adding/deleting a few edges?
Is there a natural, efficiently
computable distance between Weisfeiler--Leman (WL)
histograms~\cite{DBLP:journals/jmlr/ShervashidzeSLMB11} that yields
useful bounds on the difference between $\mathit{OPT}(H)$ and
$\mathit{OPT}(G_i)$?  
Under what assumptions could such bounds lead
to approximation guarantees for $\cspName$ with
polynomial query time? Since small perturbations can change
feasibility or cause large changes in the optimal value, meaningful
guarantees may require additional assumptions on the instances or
controlled relaxations of the constraints. Establishing such results
would provide a principled way to reuse prior answers on new
instances, with approximation guarantees tied to their similarity
to the priors.

A complementary approach is to replace isomorphism with other
structural relations, such as attribute-preserving homomorphisms
or subgraph embeddings, and ask what information these relations
allow us to transfer. For $\negName$, a weight-preserving
homomorphism of a prior graph $G_i$ into a query graph $H$ transfers
a positive answer, whereas a weight-preserving homomorphism from
$H$ to $G_i$ transfers a negative answer. Can we efficiently
identify and exploit such relations without being given the
corresponding maps? What additional restrictions would allow
these one-sided implications to yield fast query algorithms?
More generally, which problem-specific relations support exact
answers, approximation guarantees, or useful one-sided conclusions
beyond the isomorphism classes of the priors?

\paragraph{Isomorphism-Preserving Algorithms.}
Our hardness results rely on isomorphism-preserving reductions.
This raises a broader question: how much computational overhead is
required to make basic algorithmic constructions consistent across
isomorphic inputs? For example, a spanning tree of a connected graph
can be computed in $O(n+m)$ time. Can we efficiently select such a
tree while preserving isomorphism? Specifically, we seek an algorithm
$A$ that returns a spanning tree $T_G=A(G)$ of each connected graph
$G$, such that whenever $G\cong H$, there exists an isomorphism
$\varphi\colon G\to H$ satisfying $\varphi(T_G)=T_H$.
Equivalently, we require
\[
    G\cong H
    \quad\Longrightarrow\quad
    (G,T_G)\cong(H,T_H),
\]
where an isomorphism of pairs must preserve both the graph edges
and the distinguished spanning-tree edges. Can such a selection
be computed in polynomial, or even near-linear, time?
More generally, which efficiently solvable search problems admit
efficient solution-selection procedures that are consistent in
this sense, and how does their complexity relate to graph
canonization? We can ask similar questions for other problems such as shortest paths, minimum cut,  maximal and maximum matching, and $\Delta+1$ coloring. 

\newpage

\paragraph{Beyond $\NP$.}
Which natural problems complete for classes such as
$\mathrm{PSPACE}$, $\mathrm{EXPTIME}$, or $\mathrm{NEXPTIME}$
admit polynomial preprocessing and query times in the
Isomorphic-Priors model?\footnote{NFA Universality and Existential Length Universality
for NFAs are $\mathrm{PSPACE}$-complete and
$\mathrm{NEXPTIME}$-complete, respectively
\cite{DBLP:conf/stacs/GawrychowskiLRS20}.
We believe that both admit polynomial preprocessing and query
times in our setting by adapting our 1-WL techniques, under
isomorphisms that preserve transition labels and initial and
accepting states. However, a formal treatment of these extensions
is beyond the scope of this paper.}
How does this question extend to exact counting problems complete
for $\#\cP$, when the exact counts are supplied with the
priors? More generally, which structural properties allow answers
to classically hard problems to be efficiently reused, and which
problems admit conditional lower bounds through
isomorphism-preserving reductions? Understanding this landscape
would clarify how the complexity of reusing prior answers differs
from that of computing them from scratch.

\paragraph{The Power of Invariant Representations.}
Our algorithmic results rely on efficiently computable invariant representations (e.g., WL coloring or spanning tree polynomial) whose equality guarantees equality of
optimal values. Does this approach characterize polynomial-time
solvability in the Isomorphic-Priors model? Specifically, if a
problem admits polynomial preprocessing and query
times, must there exist a polynomial-time computable function $f$
such that
\[
    G\cong H
    \ \Longrightarrow\
    f(G)=f(H)
    \ \Longrightarrow\
    \mathit{OPT}(G)=\mathit{OPT}(H)?
\]
Here, $f$ must be computable from the input graph alone, without
access to its optimal value or to a database of priors.
Alternatively, can jointly processing a query and the supplied
prior answers yield algorithms that cannot be captured by such
representations?

\paragraph{Characterizing Tractability.}
Our algorithmic results for $\NP$-hard problems (consider for example $\cspName$ or $\cstName$) exploit
numerical edge attributes, whereas problems on unweighted
graphs such as $\vcName$
and $\Ham$ remain conditionally hard. To what extent do these attributes account for the
difference? Can we identify natural $\NP$-hard problems on
unweighted undirected graphs, with no additional vertex
or edge attributes, that admit polynomial preprocessing and query
times? Here, we seek positive
results beyond those that follow immediately from efficient
isomorphism testing on restricted graph classes.

$\cspName$ and $\cstName$ provide
examples of weakly $\NP$-hard problems that become tractable
in our model. 
Do all problems admitting
pseudopolynomial algorithms admit polynomial preprocessing and
query times with isomorphic priors? 
Conversely, are there natural weakly $\NP$-hard problems that
remain conditionally intractable in our model?
Understanding these questions would help distinguish the roles
of numerical information and graph structure in enabling the
reuse of prior computations.

\paragraph{From Values to Solutions.}\label{par:value-vs-solution}
Our algorithms can reconstruct optimal solutions when suitable
optimal witnesses are supplied with the priors, but these
reconstruction arguments are problem-specific. Is there a general
principle that converts efficient value-retrieval algorithms into
efficient solution-reconstruction algorithms in our model?
Standard decision-to-search reductions do not immediately suffice:
the modified instances they generate need not be isomorphic to
any prior, so the algorithm may abstain on them. Which properties
of the witnesses or invariant representations permit efficient
reconstruction? Can we identify a natural separation between
retrieving optimal values and reconstructing optimal solutions,
even when optimal prior solutions are supplied?

\paragraph{More Concrete Problems.}
Can makespan scheduling of unit-length jobs with precedence
constraints on a fixed number of identical machines~\cite{DBLP:conf/soda/NederlofSW25}, or Rabin games~\cite{DBLP:conf/tacas/MajumdarST24}, be solved with polynomial preprocessing
and query times in the Isomorphic-Priors model?
Can we obtain near-linear preprocessing time per prior instance
and near-linear query time for global minimum vertex
cuts in weighted directed graphs~\cite{chuzhoy2026faster,
DBLP:journals/jacm/LiNPSY25}, or for generalized B\"uchi games~\cite{DBLP:conf/mfcs/ChatterjeeDHL16}?\footnote{For strict definitions and discussions of Rabin games and generalized B\"uchi games in the Isomorphic-Priors model, see \cref{app:hardness-graphs}.}

\section{Related Works}\label{sec:related}

\paragraph{A Possible Connection to Graph Machine Learning.}
While machine learning (ML) is not a focus of this work, we briefly
discuss a conceptual connection between our model and graph learning.
This discussion is exploratory: our results have formal implications
for abstract isomorphism-invariant predictors, but whether these
implications can inform the design or training of practical ML models
remains to be established.

Consider an ML model $M$ that takes a graph $G$ as input and outputs some value $M(G)$.
We call such a model \emph{isomorphism-invariant} if, for every pair of isomorphic
input graphs $G\cong H$, we have $M(G)=M(H)$. Isomorphism-invariance is a natural
requirement for graph-level prediction tasks in which vertex identifiers carry no semantic
meaning, and is a standard design principle in graph representation learning
\cite{DBLP:conf/iclr/MaronBSL19,DBLP:conf/nips/KerivenP19}.
Examples of isomorphism-invariant prediction targets include structural graph quantities
such as triangle counts, diameter, and chromatic number, as well as the optimal values of
combinatorial optimization problems such as Maximum Cut or $\tspName$.
Many commonly used graph-learning architectures satisfy this symmetry by design:
message-passing Graph Neural Networks (GNNs), when equipped with a permutation-invariant
graph-level readout, yield isomorphism-invariant predictors
\cite{xu2018how,morris2019weisfeiler};
higher-order invariant GNN architectures have been developed as well
\cite{DBLP:conf/iclr/MaronBSL19,DBLP:conf/nips/KerivenP19}.
Several Graph Transformer architectures similarly incorporate isomorphism-respecting
structural or positional encodings
\cite{DBLP:conf/nips/KreuzerBHLT21}.
The following observation gives a concrete connection between our model
and the construction of isomorphism-invariant graph predictors.

\begin{observation}
Let $P$ be an isomorphism-invariant function on graph instances that
does not admit an algorithm with preprocessing and query times
polynomial in $n$ and $k$ in the Isomorphic-Priors model.
Then there is no training procedure that,
for every correctly labeled training set
\[
    \mathcal{S}=\{(G_1,P(G_1)),\ldots,(G_k,P(G_k))\},
\]
constructs an isomorphism-invariant predictor $M_{\mathcal{S}}$ satisfying all
of the following:
\begin{itemize}[noitemsep]
    \item \emph{Efficiency:} The training time, inference time, and description length of $M_{\mathcal{S}}$ are each polynomial in $n$ and $k$.

    \item \emph{Training completeness:} The predictor answers
    correctly on every training example:
    $M_{\mathcal{S}}(G_i)=P(G_i)$ for every $i\in[k]$.

    \item \emph{Soundness:} The predictor never returns an incorrect answer: For every query graph $H$,
    $M_{\mathcal{S}}(H)\in\{P(H),\bot\}$, where $\bot$ denotes abstention.
\end{itemize}
\end{observation}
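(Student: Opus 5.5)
We prove the contrapositive. Suppose a training procedure $\mathcal{T}$ exists that, for every correctly labeled training set, outputs an isomorphism-invariant predictor $M_{\mathcal{S}}$ satisfying efficiency, training completeness and soundness. From $\mathcal{T}$ we build an Isomorphic-Priors algorithm for $P$ with preprocessing and query times polynomial in $n$ and $k$. This contradicts the hypothesis on $P$.

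\emph{Preprocessing.} Given priors $G_1,\dots,G_k$ with values $P(G_1),\dots,P(G_k)$, we form $\mathcal{S}=\{(G_i,P(G_i))\}_{i\in[k]}$. The priors are correctly labeled by the definition of the model, so $\mathcal{S}$ is a valid training set. We run $\mathcal{T}$ on $\mathcal{S}$ and store the description of $M_{\mathcal{S}}$ as our data structure. By efficiency, both the running time and the stored size are polynomial in $n$ and $k$.

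\emph{Query.} Given $H$, we evaluate $M_{\mathcal{S}}(H)$, which takes polynomial time by the inference-time bound. If the result is some value $v\neq\bot$, we output $v$. If the result is $\bot$, we report that $H$ is not isomorphic to any prior.

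\emph{Correctness.} If $M_{\mathcal{S}}(H)=v\neq\bot$, soundness gives $v=P(H)$, so the output is correct; this holds even when $H$ is isomorphic to no prior, which the model allows. Now suppose $M_{\mathcal{S}}(H)=\bot$ but $H\cong G_i$ for some $i$. Isomorphism-invariance gives $M_{\mathcal{S}}(H)=M_{\mathcal{S}}(G_i)$. Training completeness gives $M_{\mathcal{S}}(G_i)=P(G_i)$. Hence $M_{\mathcal{S}}(H)=P(G_i)$, which is not $\bot$ because $P(G_i)$ is a genuine value; this is a contradiction. So whenever we report ``not isomorphic'', the report is correct.

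The only point needing care is that $\bot$ must lie outside the range of $P$, so that an abstention can never be confused with a value; we fix this convention explicitly. Beyond that the argument is routine, and the only substantive ingredient is the chain invariance $\Rightarrow$ completeness used in the correctness step.
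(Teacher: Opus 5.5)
Your proposal is correct and follows the paper's proof essentially verbatim. You run the training procedure during preprocessing and return $M_{\mathcal{S}}(H)$ unless it is $\bot$; soundness covers the non-abstaining answers, and isomorphism-invariance of $M_{\mathcal{S}}$ plus training completeness rule out abstention when $H\cong G_i$. The only cosmetic difference is that you derive a contradiction from $P(G_i)\neq\bot$, whereas the paper also invokes invariance of $P$ to conclude $M_{\mathcal{S}}(H)=P(H)$ directly.
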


\begin{proof}
Suppose such a training procedure exists. We use it to compute $P$
in the Isomorphic-Priors model.

During preprocessing, given the prior graphs $G_1,\ldots,G_k$
together with $P(G_1),\ldots,P(G_k)$, we run the training
procedure on the corresponding training set $\mathcal{S}$ and store the
resulting predictor $M_{\mathcal{S}}$.

Given a query graph $H$, we evaluate $M_{\mathcal{S}}(H)$. If the output is not
$\bot$, we return it; otherwise, we report that $H$ is not isomorphic
to any prior graph. Soundness guarantees that every non-abstaining
answer is correct. To justify the report upon abstention, suppose
that $H\cong G_i$ for some $i\in[k]$. Then
\[
    M_{\mathcal{S}}(H)=M_{\mathcal{S}}(G_i)=P(G_i)=P(H),
\]
where the equalities follow, respectively, from the isomorphism-invariance of $M_{\mathcal{S}}$, the training completeness, and the isomorphism-invariance of $P$.

The resulting  algorithm has polynomial preprocessing and query times, contradicting the assumed hardness of $P$ in the Isomorphic-Priors model.
\end{proof}

For example, assuming $\GI \notin \cP$, our lower bound
for $\TSP$ rules out an isomorphism-invariant learner that
simultaneously guarantees efficiency, training completeness, and soundness for every correctly labeled training set.
Such a learner must therefore relax at least one of these
requirements. For example, one may allow incorrect predictions on
some query graphs, or require the guarantees to hold only with high
probability when the training set is sampled from a specified
distribution, rather than for every possible training set.
The observation alone does not rule out useful learning-based
methods under such relaxations.

In the other direction, our  invariant-representation
algorithms for, e.g., $\csp$ and $\energyName$ construct predictors satisfying
all three conditions for every correctly labeled training set.
These predictors answer correctly on every graph isomorphic to a
training example and can also transfer answers to non-isomorphic
graphs with matching representations. Thus, these constructions
establish the feasibility of the abstract predictor requirements,
but do not by themselves show that a particular neural architecture
can represent or learn the resulting predictors. Natural next
questions are whether concrete graph-learning architectures can
learn or exploit the identified representations, and under what
structural or distributional assumptions they can provide
non-abstaining answers on graphs not isomorphic to any training
example.

\paragraph{Beyond Worst-Case Analysis.}
Our work contributes to the long-standing effort in theoretical computer science to define models that bypass overly pessimistic worst-case lower bounds. Classical approaches in this vein include average-case analysis and smoothed analysis, which assume inputs are drawn from specific distributions or subjected to random perturbations to model typical real-world data rather than adversarial instances. Other paradigms, such as instance optimality and universal optimality, abandon uniform worst-case bounds altogether; instead, they require an algorithm's performance to be competitive with the best possible algorithm tailored specifically to the given input instance or network topology. In the emerging field of learning-augmented algorithms (or algorithms with predictions), an algorithm receives a prediction (often learned from prior data) to guide its execution, with its guarantees degrading gracefully depending on the noise or error in the prediction.

Our model differs from these frameworks in that we explicitly allow the algorithm to preprocess prior data without assuming that this data comes from a certain distribution or that there is an oracle that provides predictions. We believe our model offers an orthogonal angle to study beyond-worst-case analysis, directly answering whether we can bypass existing lower bounds for graphs that are structurally equivalent to those we have seen before.

\paragraph{Graph Isomorphism ($\GI$).}
Clearly, if we can test whether two $n$-vertex, $m$-edge graphs are isomorphic in $T(n,m)$ time, then every problem in our model can be solved with $O(k \cdot T(n,m))$ query time and zero preprocessing time. Furthermore, if a more demanding problem called \emph{graph canonization} can be solved in $T(n,m)$ time, we can eliminate the dependence on $k$ in the query phase entirely, reducing the query time to $O(T(n,m))$ at the expense of an $O(k \cdot T(n,m))$ preprocessing time.\footnote{In graph canonization, an algorithm maps every graph $G$ to a canonical string $f(G)$ such that $f(G)=f(H)$ if and only if $G$ and $H$ are isomorphic. Since the length of $f(G)$ is bounded by $T(n,m)$, we can spend $O(k \cdot T(n,m))$ preprocessing time to insert the $k$ canonical forms into a deterministic trie. For a query graph $H$, computing $f(H)$ and traversing the trie takes exactly $O(T(n,m))$ time, matching the overall query bound.}

This implies that all problems in our setting can be solved in quasipolynomial time using Babai's landmark $\GI$ algorithm \cite{Babai16,Babai19}, and even in polynomial time for restricted graph classes, such as planar \cite{hopcroft1974linear}, bounded-degree \cite{luks1982isomorphism,DBLP:journals/siamcomp/GroheNS23}, or bounded-treewidth graphs \cite{bodlaender1990polynomial,LokshtanovPPS17,GroheNSW20}, and more generally, all classes that exclude a fixed graph as a topological minor \cite{GroheM15, Neuen24}. However, it remains a major open question whether $\giName$ is solvable in polynomial time for general graphs. Moreover, to beat the $O(n^3)$ bound for $\negName$, we would need a canonization algorithm running in $O(n^{3-\epsilon})$ time, which appears even more elusive.\footnote{A simple isomorphism test running in $O(n^{3-\epsilon})$ time would not suffice, as it yields an $O(k\cdot n^{3-\epsilon})$ query time in our setting. This trivially exceeds the naive $O(n^3)$ baseline whenever $k=\omega(n^\epsilon)$.} 

The core technical challenge of our model is therefore: \emph{which specific problem structures allow us to bypass the general $\giName$ barrier?} This relationship goes both ways: Any failure to achieve fast algorithms in our model can be viewed as a barrier to fast graph isomorphism testing. Ultimately, we hope that future studies within the Isomorphic-Priors framework will reveal the fundamental combinatorial structures that make $\giName$ inherently difficult.

\paragraph{Weisfeiler-Leman (WL) Color Refinement.}

The Weisfeiler-Leman (WL) algorithm is a fundamental heuristic for isomorphism testing (see, .e.g, \cite{GroheN21,Kiefer20,Kiefer20b}). On the positive side, Atserias and Maneva~\cite{DBLP:journals/siamcomp/AtseriasM13} established relationship between Sherali-Adams (SA) hierarchy of linear programming relaxations and the levels of the WL algorithm. Consequently, higher-dimensional WL colorings can serve as invariant representations for polynomial-time solvable problems formulated via basic LP relaxations, such as {\em maximum bipartite matching and $\maxFlowName$}. However, since these problems already admit efficient algorithms, this connection does not yield new speedups in our model; in fact, we show in this paper that the use of WL coloring can be refined to achieve near-linear time in our setting.

Another line of work, which is largely orthogonal to our focus, aims to demonstrate that WL conceptually subsumes other polynomial-time graph isomorphism heuristics (a motivation explicitly discussed by Arvind et al.~\cite{ArvindFKV22}).

Strikingly, beyond these LP-based equivalences, positive algorithmic results using WL invariants to accelerate exactly or approximately hard problems appear virtually nonexistent in the literature. In this light, a core conceptual contribution of our work is demonstrating that this well-known technique can be easily leveraged to bypass traditional worst-case computational barriers.

In contrast to the scarcity of positive results, there is a rich lineage of negative and inapproximability bounds associated with the WL hierarchy. As opposed to relying on standard complexity assumptions (e.g., $\cP \neq \cNP$ or the Unique Games Conjecture), this line of work leverages WL, often formulated via its equivalence to infinitary first-order logic with counting, as a mathematical proxy to unconditionally rule out algorithms expressible in fixed-point logic with counting (FPC). Through this lens of unconditional inexpressibility, Atserias and Dawar~\cite{AtseriasD19} proved that $k$-WL cannot approximate $\vcName$ to a factor better than $7/6$ (alongside other inapproximability results), while Tucker-Foltz~\cite{TuckerFoltz24} recently established inapproximability thresholds for Unique Games under WL refinement. For exact computation, Göbel et al.~\cite{GobelGR24} showed that finding a dominating set of size $k$ requires a WL dimension of at least $k$, fitting into a broader understanding of WL's limitations on conjunctive queries.

Another related line of work explores the relationship between WL coloring and homomorphism counts~\cite{Dvorak10,Dvorak10, DBLP:conf/icalp/DellGR18,GroheRS22}, as well as (induced) subgraph counts~\cite{Neuen24a,LanzingerB24,CurticapeanN25}. This homomorphism framework has recently become a central tool in the machine learning community for characterizing the expressiveness of Graph Neural Networks (GNNs)~\cite{bouritsas2022improving, DBLP:conf/iclr/ZhangGDY0024}. We purposefully present both elementary combinatorial arguments (\Cref{sec:1WL}) and the more succinct homomorphism formulation (\Cref{sec:WL-for-Experts}) to ensure accessibility for a broad algorithmic audience. In a similar spirit, it is known that the $2$-dimensional Weisfeiler-Leman (2-WL) algorithm subsumes the spectral information of the adjacency matrix (as well as other associated graph matrices)~\cite{Furer10,RattanS23,ArvindFKV25}, which we exploit for our deterministic algorithms for $\cstName$ and $\lstName{p}$.

\section{Our Model of Computation}\label{sec:model}

\subsection{Preliminary: \Attributed Graphs and Isomorphism}

Throughout this work, we consider graphs whose vertices and edges may carry \emph{\attributes}. For instance, a vertex \attribute may be used to designate sources or sinks, while an edge may carry a vector \attribute representing multiple edge weights. The specific \attributes depend on the problem under consideration. 

For example, in $\negName$, the \attribute of each edge is simply its weight, and there are no vertex attributes (i.e., every vertex receives the same default attribute, denoted by $\bot$). For $\cspName$, the \attribute of each edge $e$ with length $\ell(e)$ and cost $c(e)$ is the vector $\att(e) = (\ell(e), c(e))$. The vertex attributes designate the terminals: $\att(v) = \mathsf{s}$ if $v$ is the source, $\att(v) = \mathsf{t}$ if $v$ is the sink, and $\att(v) = \bot$ otherwise.

We adopt the natural notion of \emph{isomorphism} for \attributed graphs: an isomorphism between two graphs $G$ and $H$ is a bijection between their vertex sets that preserves adjacency as well as all vertex and edge \attributes.

\begin{definition}[Isomorphism of \attributed graphs]\label{def:att:iso}
Let 
\[
G = (V_G, E_G, \att_G^V, \att_G^E)
\quad\text{and}\quad
H = (V_H, E_H, \att_H^V, \att_H^E)
\]
be two graphs, where $V_G, V_H$ are vertex sets, $E_G, E_H$ are edge sets, and $\att^V, \att^E$ assign \attributes from alphabets $\Sigma_V, \Sigma_E$ to the vertices and edges, respectively.

We say that $G$ and $H$ are \emph{isomorphic} (denoted $G \cong H$) if there exists a bijection $\varphi : V_G \to V_H$ such that the following conditions hold.
\begin{enumerate}[nolistsep]
    \item \textbf{Adjacency is preserved:} $(u,v) \in E_G \iff (\varphi(u), \varphi(v)) \in E_H$ for all $u,v \in V_G$.
    \item \textbf{Vertex \attributes are preserved:} $\att_G^V(u) = \att_H^V(\varphi(u))$ for all $u \in V_G$.
    \item \textbf{Edge \attributes are preserved:} $\att_G^E(u,v) = \att_H^E(\varphi(u), \varphi(v))$ for all $(u,v) \in E_G$.
\end{enumerate}

\end{definition}

\begin{remark}[Isomorphism-closed problems]\label{remark:iso:closed}
This paper focuses on \emph{isomorphism-closed problems}, a class encompassing virtually all natural computational tasks on graphs and related objects. Let $\Pi$ be a decision or optimization problem (e.g., $\cspName$, $\negName$, $\SAT$, or $\energyName$). Each problem $\Pi$ we consider is equipped with an equivalence relation $\cong$ defining when two problem instances are isomorphic.\footnote{For graph problems (e.g., $\GI$, $\csp$, $\cst$, $\IS$, $\MaxCut$), this corresponds to standard or attributed graph isomorphism (see \Cref{def:att:iso}). For problems like $\SAT$, two CNF formulas $\Phi$ and $\Phi'$ are isomorphic if $\Phi'$ can be obtained from $\Phi$ via a bijection of variables (preserving negations) alongside arbitrary permutations of clauses and literals.} We assume that $\Pi$ is inherently \emph{closed under isomorphism}. For a decision problem, this means $I \cong J$ implies that $I$ is a YES-instance if and only if $J$ is a YES-instance. For an optimization problem, $I \cong J$ implies that both instances share the identical optimal value, meaning $\opt_\Pi(I) = \opt_\Pi(J)$.
\end{remark}

\subsection{The Isomorphic-Priors Model}

We study every problem $\Pi$ within a model consisting of two phases:

\paragraph{Preprocessing Phase.}
We are given a database $\mathcal{S} = \{G_1, G_2, \ldots, G_k\}$ of $k$ \attributed graphs, referred to as \emph{prior graphs}.\footnote{For simplicity, we focus on defining the model for graphs, though this framework naturally extends to every combinatorial object equipped with a well-defined notion of isomorphism.} We assume every graph in $\mathcal{S}$ has exactly $n$ vertices and $m$ edges, and that all global problem parameters are fixed across all instances.
Each $G_i \in \mathcal{S}$ is provided alongside its exact solution, denoted by $\opt(G_i)$. For decision problems (e.g., $\negName$), $\opt(G_i) \in \{\text{YES}, \text{NO}\}$. For optimization problems (e.g., $\cspName$), $\opt(G_i)$ defaults to the optimal objective \emph{value}. Note, however, that if the database equips $G_i$ with an explicit \emph{optimal solution} (e.g., the actual optimal path or vertex set), all our algorithms are capable of reconstructing and outputting the exact optimal solution for the query graph as well.

The goal of the preprocessing phase is to compute a data structure over $\mathcal{S}$ to accelerate future queries. Generally, we consider polynomial preprocessing time as acceptable.

\paragraph{Query Phase.}
Upon receiving a query graph $H$, the algorithm must satisfy the following correctness guarantees. If $H$ is isomorphic to some prior graph $G_i \in \mathcal{S}$, the algorithm must output the correct solution $\opt(H)=\opt(G_i)$. 
Otherwise, if $H$ is not isomorphic to any graph in $\mathcal{S}$, the algorithm must either (i) output the correct solution $\opt(H)$, or (ii) correctly declare that $H \not\cong G_i$ for all $G_i \in \mathcal{S}$.

We say an algorithm is correct \emph{with high probability} (w.h.p.) if it answers the query correctly with probability at least $1 - 1/N^c$ for any arbitrarily large constant $c \ge 1$, where $N$ denotes the input size (of a single query). This probability is taken over the internal randomness of the algorithm. Throughout, our algorithms are analyzed under the standard \emph{oblivious adversary} model; that is, we assume the sequence of queries is generated independently of the algorithm's internal random choices (see, e.g., \cite{DBLP:conf/crypto/NaorY15} and \cite[Definition 3.4]{DBLP:books/crc/KatzLindell2014}).

\begin{definition}\label{def:iso:model:poly}
We say that a problem $\Pi$ is \emph{polynomial-time solvable inside the Isomorphic-Priors model} if it can be solved in polynomial preprocessing time and polynomial query time. 
\end{definition}

\subsection{Our Focused Approach: Invariant Representations}\label{sec:Inv:Rep}

All our algorithms employ essentially the same unified design framework called \emph{invariant representation}.
In the following, we give an abstract description in its most basic form.
Let $\mathcal{G}$ be the space of all valid input graphs for a computational problem $\Pi$.
Algorithms in this framework compute a function $f\colon \mathcal{G} \to \mathcal{X}$, for some $\mathcal{X}$ called the \emph{feature space}, that satisfies the following.
\begin{enumerate}
    \item \textbf{Permutation Invariance:} For every $G, H \in \mathcal{G}$, if $G \cong H$, then $f(G) = f(H)$.
    \item \textbf{Solution-Preserving:} For every $G, H \in \mathcal{G}$, if $f(G) =f(H)$, then $\opt_\Pi(G) = \opt_\Pi(H)$.
\end{enumerate}

We say that $G$ and $H$ are \emph{$f$-equivalent} if $f(G) = f(H)$.
We call such an $f$ an \emph{Invariant and Solution-Preserving Signature for $\Pi$}.
It is easy to see that Invariant and Solution-Preserving Signatures that are efficiently computable suffice to solve problems in our setting.
For the next theorem, we assume that $\mathcal{X} = \{0,1\}^*$ (which can always be achieved by encoding objects from $\mathcal{X}$ using binary strings).

\begin{theorem}\label{theo:IPM:via:sig}
    Let $\Pi$ be a graph optimization or decision problem and let $f\colon \mathcal{G}\to \{0,1\}^*$ be an Invariant and Solution-Preserving Signature for $\Pi$.
    Also let $T(n)$ denote the time required to evaluate $f$ on inputs of size at most $n$.
    
    Then there is an algorithm that solves $\Pi$ with $k$ prior instances each of size at most $n$ in $O(k \cdot T(n))$ preprocessing time and $O(T(n))$ query time.
\end{theorem}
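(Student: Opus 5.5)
The plan is a dictionary lookup on signatures. In the preprocessing phase, we evaluate $f(G_i)$ for each prior $G_i \in \mathcal{S}$, which costs $O(T(n))$ per prior and $O(k\cdot T(n))$ in total. Each resulting string has length at most $T(n)$, since it is written within that time. We insert every string $f(G_i)$ into a deterministic trie over $\{0,1\}$ and store $\opt(G_i)$ at the leaf that ends the string. If several priors share a signature, we keep just one of their answers. This is harmless: the Solution-Preserving property guarantees that all such priors have the same optimal value. Building the trie costs time linear in the total length of the strings, so it stays within $O(k\cdot T(n))$. This mirrors the canonization argument sketched in the related-work discussion of $\GI$.

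In the query phase, given $H$, we compute $f(H)$ in $T(n)$ time and walk down the trie along this string, which takes $O(|f(H)|)=O(T(n))$ time. If the walk reaches a stored leaf, we output the value stored there. Otherwise, we report that $H$ is not isomorphic to any prior graph.

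Correctness follows from the two defining properties. Suppose the lookup succeeds, so $f(H)=f(G_i)$ for some $i$. Then Solution-Preserving gives $\opt(H)=\opt(G_i)$, and the returned value is correct whether or not $H$ is actually isomorphic to $G_i$. Suppose instead the lookup fails. If $H\cong G_i$ for some $i$, Permutation Invariance would give $f(H)=f(G_i)$, and that string is in the trie, a contradiction. So declaring non-isomorphism is correct.

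The only delicate points are bookkeeping. First, the size bound on the strings: $|f(G)|\le T(n)$ because the output has to be written. Second, trie traversal must be deterministic and free of hashing, so that the query cost is linear in the string length and independent of $k$. Third, a detail the model allows us to assume: priors of differing sizes are at most $n$, so the common bound $T(n)$ applies. I expect none of these to be a real obstacle.
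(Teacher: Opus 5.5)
Your proposal is correct and follows the paper's proof: you compute all signatures, store them with their optimal values in a trie in $O(k\cdot T(n))$ time, and answer a query by looking up $f(H)$ in $O(T(n))$ time, using Solution-Preserving to justify returned values and Permutation Invariance to justify the non-isomorphism report (the latter step is left implicit in the paper). One cosmetic fix: one signature may be a prefix of another, so store $\opt(G_i)$ at a marked terminal node rather than at a ``leaf.''
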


\begin{proof}
    Let $G_1,\dots,G_k$ denote the prior instances, which all have size at most $n$.
    In the preprocessing phase, we compute all signatures $f(G_i) \in \{0,1\}^*$ and store them together with the corresponding $\opt_i$ in a trie.
    Note that storing a single signature string in a trie takes time proportional to the length of the string, which is bounded by $O(T(n))$.
    Hence, the preprocessing phase runs in time $O(k \cdot T(n))$.

    In the query phase, we are getting an instance $H$.
    We first compute the signature $f(H)$ and search this string (together with the corresponding optimal value) in the trie.
    If there is some $i \in [k]$ such that $f(G) =f(H)$, then $\opt_\Pi(G_i) = \opt_\Pi(H)$, and we return the stored optimal value $\opt_\Pi(G_i)$.    
    If no matching signature is found, then we return that $H \not \cong G_i$ for ever $i \in [k]$.
\end{proof}

We stress that, although all our algorithms follow the basic approach outlined above, we usually do not rely on \Cref{theo:IPM:via:sig} as a black-box.
First, in several cases, our algorithms are randomized, i.e., the output of the signature function $f$ depends on random bits.
In this case, the preservation of solutions is usually achieved only with high probability.

Also, in all cases discussed in this paper, the Invariant and Solution-Preserving Signatures we build can even be used to reconstruct optimal solutions (and not only the value).
However, as already discussed above, the reconstruction is problem-specific, and does not follow from the general argument above (see also \nameref{par:value-vs-solution}).

\section{Algorithms via Randomized Polynomial Evaluation}\label{sec:algorithms via polynomials}\label{sec:PIT}

In this section, we show how to utilize basic ideas from the field of randomized algorithm to obtain fast algorithms in the Isomorphic-Priors model.
Our approach maps the multiset of edge weights of specific subgraphs (e.g. triangles, spanning trees) to the monomials of a formal multivariate polynomial. To efficiently compare these polynomials without explicitly expanding them, we rely on Polynomial Identity Testing (PIT) over finite fields.
To this end, we use the Schwartz-Zippel Lemma.

\begin{lemma}[Schwartz-Zippel Lemma]\label{lem:schartz:zippel}
Let $P(x_1, \dots, x_m)$ be a non-zero multivariate polynomial of total degree $d$ over a field $\mathbb{F}$. If $r_1, \dots, r_m$ are chosen independently and uniformly at random from a finite subset $S \subseteq \mathbb{F}$, then $\Pr[P(r_1, \dots, r_m) = 0] \le \frac{d}{|S|}$.
\end{lemma}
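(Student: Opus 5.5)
We will prove the Schwartz--Zippel Lemma by induction on the number of variables $m$. For the base case $m=1$, $P$ is a non-zero univariate polynomial of degree at most $d$ over a field. It therefore has at most $d$ roots in $\mathbb{F}$, and hence at most $d$ roots in $S$. A uniformly random $r_1\in S$ is then a root with probability at most $d/|S|$. (If $d\ge |S|$ the bound is trivial, so we may assume $d<|S|$ throughout.)

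For the inductive step, we single out the variable $x_1$ and write
\[
P(x_1,\dots,x_m)=\sum_{i=0}^{t} x_1^{i}\, Q_i(x_2,\dots,x_m),
\]
where $t$ is the largest exponent of $x_1$ for which $Q_t\not\equiv 0$. Such a $t$ exists because $P$ is non-zero, and by the choice of $t$ the polynomial $Q_t$ is non-zero. Since each monomial of $x_1^tQ_t$ has total degree at most $d$, we get $\deg Q_t\le d-t$.

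We then consider two events, $A=\{Q_t(r_2,\dots,r_m)=0\}$ and $B=\{P(r_1,\dots,r_m)=0\}$. By the induction hypothesis applied to $Q_t$, which has $m-1$ variables, $\Pr[A]\le (d-t)/|S|$. Next we condition on any fixed values of $r_2,\dots,r_m$ lying outside $A$. Under this conditioning, $P$ becomes a univariate polynomial in $x_1$ whose leading coefficient $Q_t(r_2,\dots,r_m)$ is non-zero, so its degree is exactly $t$. Because $r_1$ is independent of $r_2,\dots,r_m$ and uniform on $S$, the base case gives $\Pr[B\mid \neg A]\le t/|S|$. Combining the two bounds,
\[
\Pr[B]\le \Pr[A]+\Pr[B\mid\neg A]\le \frac{d-t}{|S|}+\frac{t}{|S|}=\frac{d}{|S|}.
\]

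The only delicate step is the conditioning. We need the independence of $r_1$ from the remaining coordinates so that the univariate bound holds pointwise for every admissible fixing of $(r_2,\dots,r_m)$. We also need the degree bound $\deg Q_t\le d-t$, which is what makes the two error terms sum to exactly $d/|S|$. Everything else in the argument is routine.
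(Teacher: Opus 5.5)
Your proof is correct. It is the standard induction on the number of variables: you isolate the leading coefficient $Q_t$ in $x_1$ and add the bound $\Pr[A]\le (d-t)/|S|$ to the pointwise univariate bound $t/|S|$. One small nit is that writing $\Pr[B]\le \Pr[A]+\Pr[B\cap\neg A]$ avoids conditioning on a possibly null event. The paper gives no proof of this lemma and only invokes Schwartz--Zippel as a classical black box, so your argument supplies the textbook proof it omits.
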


We start with $\negName$ where we are presented with a weighted graph $G$ and have to decide if $G$ contains a negative triangles (see \cref{def:neg:tri}). The assumption that no algorithm solves this problem in time $O(n^{3 - \varepsilon})$ for any $\varepsilon > 0$ is a widely used hardness assumption in fine-grained complexity theory (see \cite{DBLP:conf/focs/WilliamsW10}). However, we show that inside the Isomorphic-Priors model, we can solve the problem in time $O(n^\omega)$, where $\omega < 2.373$ denote the matrix multiplication exponent.

\begin{theorem}\label{cor:neg_triangle_preprocessing}
There is an algorithm that solves $\negName$ with $k$ prior instances each having at most $n$ vertices  in  $\widetilde{O}(k \cdot n^\omega)$ preprocesses time and $\widetilde{O}(n^\omega)$ query time with high probability.

Additionally, if each prior $G_i$ that contains a negative triangle is equipped with a weight multiset $\{w^{(i)}_1, w^{(i)}_2, w^{(i)}_3\}$, then we can output the specific vertices of a negative triangle in $H$ in $\widetilde{O}(n^\omega)$ time with high probability. 
\end{theorem}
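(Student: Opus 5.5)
We follow the overview. Fix a prime field $\mathbb{F}=\mathbb{F}_p$ with $p$ polynomial in the total input size, say $p \ge N^{c+3}$ with $N$ the size of a single instance; when $k$ is large we take $p \ge k\cdot N^{c+3}$ so that field elements still have $O(\log N + \log k)$ bits. In preprocessing, we collect the set $W$ of distinct edge weights occurring in the priors. We draw an independent uniform value $r_w\in\mathbb{F}$ for each $w\in W$. Weights appearing in a query but not in $W$ get fresh random values, sampled once and then memoized. Because the adversary is oblivious, this is equivalent to having sampled one random vector $\mathbf r$ indexed by all possible weights in advance.

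For a graph $G$ we build the symmetric matrix $A_G$ with $(A_G)_{u,v}=r_{w(u,v)}$ on edges and $0$ elsewhere. Then $\trace(A_G^3)=6\,P_\Delta(G)(\mathbf r)$: every closed walk of length $3$ in a simple graph traverses a triangle, and each triangle is counted six times, once per starting vertex and direction. Since $p>6$, we may divide by $6$. Computing $A_G^3$ over $\mathbb{F}_p$ takes $\widetilde{O}(n^\omega)$ time. In preprocessing we store the pairs $(P_\Delta(G_i)(\mathbf r),\opt(G_i))$ in a balanced search tree or a trie keyed by the field element. A query evaluates $P_\Delta(H)(\mathbf r)$ and looks it up, which costs $O(\log k)$ extra. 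If there is a match we return the stored answer; otherwise we report non-isomorphism.

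Correctness: if $H\cong G_i$, the two polynomials are identical, so the values agree and the answer is correct deterministically. The only failure mode is a false match, i.e.\ $P_\Delta(G_i)\not\equiv P_\Delta(H)$ but the evaluations coincide. Note that if $P_\Delta(G_i)\equiv P_\Delta(H)$, the answers agree: the coefficients determine which weight multisets are realized by triangles, and hence whether some triangle has negative weight sum. The difference polynomial has degree $3$. We need its coefficients to be nonzero in $\mathbb{F}_p$, and they are, since they are differences of triangle counts, bounded by $n^3<p$. By \Cref{lem:schartz:zippel}, each false match occurs with probability at most $3/p$. A union bound over the $k$ priors gives failure probability at most $3k/p\le N^{-c}$ for each query. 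Handling $k$ in the union bound is the one delicate point, and it is settled by choosing $p$ as above; arithmetic in this field still costs $O(1)$ word operations in the word-RAM model, or a polylogarithmic factor otherwise.

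For the witness part, suppose the matched prior $G_i$ has a negative triangle with weight multiset $\{w_1,w_2,w_3\}$. With high probability $P_\Delta(H)\equiv P_\Delta(G_i)$, so $H$ has a triangle with exactly this multiset. We build a tripartite graph on three copies $V^1,V^2,V^3$ of $V(H)$. For each ordering $(a,b,c)$ of the multiset (at most $6$ orderings), we join $u^1v^2$ iff $w(u,v)=a$, $v^2x^3$ iff $w(v,x)=b$, and $x^3u^1$ iff $w(x,u)=c$. A triangle in this graph corresponds to a triangle of $H$ with the required weights, and conversely. We find one witness triangle with the standard $O(n^\omega)$ method: compute the Boolean product $B_{12}B_{23}$ and find an entry $(u,x)$ that is set and whose pair also satisfies the third relation. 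Then recover the middle vertex $v$ by scanning the $n$ candidates in $O(n)$ time, or use witness-finding for Boolean matrix products in $\widetilde{O}(n^\omega)$. This stays within the stated bound.
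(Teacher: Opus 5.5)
Your proposal is correct and follows the paper's proof. You evaluate the triangle polynomial at a random point as $\frac{1}{6}\trace(A^3)$ over a large prime field, bound false matches by Schwartz--Zippel plus a union bound over the $k$ priors, and recover a witness by triangle detection in the weight-filtered tripartite graph.

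The differences are cosmetic: you use fully independent values with memoized fresh ones for unseen query weights, where the paper uses a $(kn^2)$-wise independent hash and rejects unseen weights, and you try six orderings where one suffices because both orientations of each edge are inserted. One small slip: when $H\cong G_i$, only the existence of a match is deterministic, not the correctness of the stored answer, but your union bound already covers this.
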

    Next, we study the \emph{$\cstName$} ($\cst$) problem. Here, the input is graph with cost and length \attributes attached to each edge and a budget $B$. The problem is to find a spanning tree that minimizes the cost and respect a budget constraint (see \cref{def:cst}). It is well-known that this problem is $\cNP$-complete.  However, we show that inside the Isomorphic-Priors model, we can solve the problem in time $O(n^\omega)$.
\begin{theorem}
\label{cor:constrained_spanning_tree_preprocessing}
There is an algorithm that solves the $\cst$ problem with $k$ prior instances each having at most $n$ vertices in $\widetilde{O}(k \cdot n^\omega)$ preprocesses time and $\widetilde{O}(n^\omega)$ query time with high probability.

Additionally, if each prior instance $G_i$ is equipped with a multiset of cost-distance pairs $W_i \coloneqq \{w^{(i)}_1, \dots w^{(i)}_{n-1}\}$, we can explicitly output the corresponding spanning tree in $H$ in $\widetilde{O}(n^\omega)$ time  with high probability. 
\end{theorem}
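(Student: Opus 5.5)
The plan is to instantiate the invariant-representation framework with the \emph{spanning tree polynomial}
\[
P_{ST}(G) \coloneqq \sum_{T} \prod_{e\in T} x_{\att(e)},
\]
where the sum ranges over all spanning trees $T$ of $G$. There is one formal variable $x_a$ for each cost-length pair $a=(c,\ell)$ occurring as an edge attribute among the priors. Equivalently, the coefficient of $\prod_i x_{a_i}$ counts the spanning trees whose attribute multiset is $\{a_1,\dots,a_{n-1}\}$.

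Two properties of $P_{ST}$ make it a valid signature.
\begin{enumerate}
\item \emph{Permutation invariance.} If $G\cong H$, then $P_{ST}(G)\equiv P_{ST}(H)$, because an attribute-preserving isomorphism maps spanning trees bijectively onto spanning trees with the same attribute multiset.
\item \emph{Solution preservation.} If $P_{ST}(G)\equiv P_{ST}(H)$, then $G$ and $H$ have exactly the same set of achievable multisets of (cost, length) pairs. Both the budget feasibility of a tree and its objective depend only on that multiset, so $\opt(G)=\opt(H)$. Moreover, the optimal multiset $W_i$ of $G_i$ is realized by some spanning tree of $H$.
\end{enumerate}

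To evaluate the polynomial at a point, fix a prime field $\mathbb{F}_p$ with $p=\mathrm{poly}(N)$ large. In preprocessing, draw $r_a\in\mathbb{F}_p$ uniformly at random for every attribute $a$; a query attribute not seen during preprocessing simply means no prior can match, which is detected directly. Form the weighted Laplacian $L$ with $L_{uv}=-r_{\att(u,v)}$ and diagonal entries equal to the row sums. By Kirchhoff's weighted Matrix Tree Theorem, $P_{ST}(G)(\mathbf r)=\det L^{(v)}$ for any principal minor $L^{(v)}$. This determinant is computed in $O(n^\omega)$ field operations via fast matrix multiplication, i.e., $\widetilde O(n^\omega)$ time.

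For each prior $G_i$, store the value $P_{ST}(G_i)(\mathbf r)$ in a dictionary. At query time, compute $P_{ST}(H)(\mathbf r)$ with the same $\mathbf r$, look it up, and return the matching $\opt(G_i)$, or report non-isomorphism if there is no match. Correctness follows from Lemma~\ref{lem:schartz:zippel} applied to $P_{ST}(G_i)-P_{ST}(H)$, which has degree at most $n-1$:
\begin{itemize}
\item A false match has probability at most $(n-1)/p$.
\item A union bound over $k\le\mathrm{poly}(N)$ priors keeps the total error at most $1/N^c$ for $p\ge N^{c+3}$.
\item Under the oblivious adversary, $H$ is independent of $\mathbf r$, so the bound applies to each query.
\item True matches, i.e., when $H\cong G_i$, always succeed.
\end{itemize}
The running times are therefore $\widetilde O(kn^\omega)$ for preprocessing and $\widetilde O(n^\omega)$ per query.

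For reconstruction, given the optimal multiset $W_i$ of the matched prior, I would find a spanning tree of $H$ whose attribute multiset equals $W_i$. This is the main obstacle, since it must be done within $\widetilde O(n^\omega)$. The direct formulation is a matroid intersection: the graphic matroid intersected with the partition matroid that allows at most $\mathrm{mult}_{W_i}(a)$ edges of each attribute class $a$, where a common independent set of size $n-1$ is exactly such a tree. Fast matroid intersection, however, costs about $\widetilde O(m\sqrt n)$, which can exceed $n^\omega$ for dense graphs.

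To stay within $\widetilde O(n^\omega)$, I would try the following self-reduction instead. Introduce a fresh variable $y$, substitute $x_a\mapsto r_a y^{[a\text{ is }\ldots]}$, and extract the correct coefficient by interpolating over the degree of $y$. With this, test edges in batches and delete them while the polynomial coefficient of the target monomial stays nonzero. If achieving $n^\omega$ this way proves too delicate, I would fall back to the matroid-intersection bound, noting that $m\sqrt n\le n^{2.5}$.
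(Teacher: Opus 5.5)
\textbf{Value computation.} This part follows the paper's proof: the spanning-tree polynomial serves as the signature, it is evaluated as a principal Laplacian minor via Kirchhoff's theorem, and correctness comes from Schwartz--Zippel with degree $n-1$ plus a union bound over the priors. Drawing an independent $r_a$ per attribute instead of the paper's $kn^2$-wise independent hash makes no difference.

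The Schwartz--Zippel step has a hole, which the paper's proof shares because it also fixes $p$ deterministically. You know $P_{ST}(G_j)-P_{ST}(H)\neq 0$ over $\mathbb{Z}$, but the lemma needs it to be nonzero over $\mathbb{F}_p$. Its coefficients are spanning-tree counts as large as $n^{n-2}\gg p$, so they may all be divisible by $p$, and then the two evaluations agree for \emph{every} $\mathbf r$.

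A concrete failure:
\begin{itemize}
\item Give every edge the same attribute $a=(c,d)$ with $(n-1)d\le B$.
\item Let $G$ be connected with $p\mid\tau(G)$. Such simple graphs exist on $\mathrm{polylog}(p)$ vertices, e.g.\ series--parallel networks whose tree count is a continuant divisible by $p$, padded to $n$ vertices by a pendant path.
\item Let $H$ be disconnected with the same $n$ and $m$.
\end{itemize}
Then $P_{ST}(G)=\tau(G)\,x_a^{n-1}$ and $P_{ST}(H)=0$ both vanish over $\mathbb{F}_p$. The query on $H$ therefore returns $(n-1)c$, although $\opt(H)=\infty$.

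The fix is to draw $p$ uniformly among the primes in $[Q,2Q]$, for a suitable $Q=\mathrm{poly}(k,N)$, independently of the inputs. A fixed nonzero integer coefficient of absolute value at most $n^{n-2}$ has only $O(n\log n/\log Q)$ prime divisors in that range. It therefore survives reduction modulo $p$ with probability $1-O(n\log n/Q)$, and your Schwartz--Zippel and union-bound argument then goes through unchanged.

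\textbf{Reconstruction.} Your fallback is exactly the paper's argument: the graphic matroid intersected with the partition matroid with capacities $\mathrm{mult}_{W_i}(a)$. You are right that this gives $\widetilde{O}(m\sqrt{n})$, which exceeds $\widetilde{O}(n^\omega)$ on dense graphs. The paper's proof has the same mismatch with the bound stated in the theorem.

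Your self-reduction does not close this gap as sketched. The substitution is left unspecified. With polynomially bounded exponents, a single auxiliary variable $y$ only sorts monomials by one linear statistic of the attribute multiset. The extracted coefficient therefore generally mixes the target monomial $\prod_a x_a^{\mathrm{mult}_{W_i}(a)}$ with others, and its nonvanishing does not certify a tree whose multiset is exactly $W_i$. You also give no bound on how many determinant evaluations the batched deletions need. As written, the proposal establishes reconstruction only in $\widetilde{O}(m\sqrt{n})\le\widetilde{O}(n^{2.5})$ time, not the stated $\widetilde{O}(n^\omega)$.
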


\subsection{$\negName$}\label{sec:NTD}

\begin{definition}[$\negName$]\label{def:neg:tri}
An instance is defined by an undirected, edge-weighted graph $G=(V, E, w)$, where $w: E \to \mathbb{R}$ assigns a real weight to each edge. The objective is to determine whether there exist three vertices $u, v, x \in V$ forming a triangle such that the sum of their edge weights is strictly negative: $w(u,v) + w(v,x) + w(x,u) < 0$.

Two problem instances $G$ and $H$ of $\negName$ are isomorphic if their underlying graphs are isomorphic. Note that an isomorphism has to preserve  edge weights. Therefore, $\negName$ is closed under isomorphism.
\end{definition}

In order to use \cref{lem:schartz:zippel}, we have to find a way how to detect (or count) negative triangles via polynomial. To this end,
we associate a formal variable $x_{w}$ with every distinct edge weight $w$. The \emph{triangle polynomial} of $G$ is defined as:
\[
P_{\Delta}(G) \ \coloneqq \sum_{w_1 \le w_2 \le w_3} c_{w_1, w_2, w_3}(G) \, x_{w_1} x_{w_2} x_{w_3}
\]
where $c_{w_1, w_2, w_3}(G)$ is the exact number of triangles in $G$ whose edges possess the weight multiset $\{w_1, w_2, w_3\}$. See \cref{fig:intro:NTD}. We now, obtain that $P_{\Delta}(G)$ encodes all different kinds of triangles inside of $G$ and thus especially the negative triangles.

\begin{lemma}
\label{lem:neg_triangle_equiv}
If $G$ and $H$ are two graphs such that $P_\Delta(G) \equiv P_\Delta(H)$, then $G$ contains a negative triangle if and only if $H$ contains a negative triangle.
\end{lemma}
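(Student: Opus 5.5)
The plan is to prove that a negative triangle is witnessed by a monomial of $P_\Delta$, so that identical polynomials give identical answers. For a graph $G$, let $\mathcal{M}(G)$ be the set of weight multisets $\{w_1,w_2,w_3\}$ (with $w_1\le w_2\le w_3$) whose coefficient $c_{w_1,w_2,w_3}(G)$ is positive. By definition, $\mathcal{M}(G)$ is exactly the set of multisets realized by some triangle of $G$.

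First, I would note that distinct sorted triples give distinct monomials. The monomial $x_{w_1}x_{w_2}x_{w_3}$ determines the multiset $\{w_1,w_2,w_3\}$, since it records each variable $x_w$ with exponent equal to the multiplicity of $w$. Hence the monomials indexed by distinct multisets are distinct monomials in the polynomial ring. It follows that $P_\Delta(G)\equiv P_\Delta(H)$ forces $c_{w_1,w_2,w_3}(G)=c_{w_1,w_2,w_3}(H)$ for every triple.

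One subtlety is that the coefficients are nonnegative integers. I would therefore read the identity over $\mathbb{Z}$ or $\mathbb{Q}$, as the definition intends. If it is read over a finite field, equal coefficients still give equal supports up to characteristic issues. Since the lemma is stated for formal polynomials, I would stay over $\mathbb{Z}$. In that case $\mathcal{M}(G)=\mathcal{M}(H)$.

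Second, I would observe that whether a triangle is negative depends only on its weight multiset, namely on whether $w_1+w_2+w_3<0$. So $G$ contains a negative triangle if and only if $\mathcal{M}(G)$ contains a multiset with negative sum. The same holds for $H$, and since $\mathcal{M}(G)=\mathcal{M}(H)$, the two conditions are equivalent.

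I expect no real obstacle here. The only point needing care is the injectivity of the map from multisets to monomials: I must make sure that the variables $x_w$ are indexed by distinct weight values, so that monomials for different multisets cannot coincide.
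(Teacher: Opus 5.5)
Your proposal is correct and follows the paper's own proof. Identity of the polynomials forces equality of every coefficient $c_{w_1,w_2,w_3}$, and whether a triangle is negative depends only on its weight multiset. Your extra care is sound but is taken for granted in the paper: the map from multisets to monomials is injective, and the identity is read over $\mathbb{Z}$ rather than a finite field.
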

\begin{proof}
If $P_{\Delta}(G) \equiv P_{\Delta}(H)$ then the coefficients of every corresponding monomial match exactly, i.e. $c_{w_1, w_2, w_3}(G) = c_{w_1, w_2, w_3}(H)$ for all combinations of weigths. Consequently, $G$ contains a weight combination with $c_{w_1, w_2, w_3}(G) > 0$ and $w_1 + w_2 + w_3 < 0$ if and only if $H$ contains the exact same number of such negative triangles.
\end{proof}

\begin{proof}[Proof of \Cref{cor:neg_triangle_preprocessing}]
Let $\mathcal{S} \coloneq \{G_i = (V_i, E_i, w_i)\}_{i=1\dots k}$ be a database of prior instances and assume that for each $i \in [k]$ we are given $\opt(G_i)$ that indicates whether $G_i$ contains a negative triangle.

\textbf{Preprocessing:} Given $\mathcal{S}$, we first compute a prime number $p \geq 3 k^{2} n^{\log(n)}$ and construct a finite field $\mathbb{F}_p$. Note that all basic operations in $\mathbb{F}_p$ can be performed in time $O(\log p) = \widetilde{O}(1)$. Next, we assign arbitrarily a unique non-zero element in $\mathbb{F}_p$ to each edge-weight of the graphs $G_i$ and store the edge-weights together with the corresponding field elements in a binary search tree. Formally, we construct a function $\mathrm{enc}_\mathcal{S} \colon \mathbb{R} \to \mathbb{F}_p \cup \{\bot\}$ that, on input $x$, either outputs an unique element in $ \mathbb{F}_p$ if $x$ is the weight of some graph in $\mathcal{S}$, or returns $\bot$ otherwise. Note that $\mathrm{enc}_\mathcal{S}$ can be computed in time $O( k \cdot n^2 \cdot \log p)$, and evaluating $\mathrm{enc}_\mathcal{S}$ on some $x$ takes time $O(\log(k n^2 ) \cdot \log p )=\widetilde{O}(1)$.

Given a graph $G_i$, we write $\mathrm{enc}_\mathcal{S}(G_i)$ for graph where each edge weight $w$ was replaced with $\mathrm{enc}_\mathcal{S}(w)$.  This can be done in time $\widetilde{O}(k  n^2 )$.

We construct a $(k \cdot n^2)$-wise independent family of hash functions
$h \colon \mathbb{F}_p \to \mathbb{F}_p$ using \cite[Construction~3.32]{DBLP:journals/fttcs/Vadhan12}. Concretely, we sample a random polynomial over $\mathbb{F}_p$ of degree at most $ k \cdot n^2  - 1$. 
Note that such a hash function can be constructed in time $O(k \cdot n^2)$. Give an edge-weight $w$, we define $r(w) \coloneq h(\mathrm{enc}_{\mathcal{S}}(w))$.
For a graph $G_i$, we write $r(G_i)$ for the graph obtained by replacing each edge weight $w$ with $r(w)$.

Let $x_1, \dots, x_m \in \mathbb{F}_p$, where $m \leq k \cdot n^2$, be a sequence of values. Using the Fast Fourier Transformation, we can evaluate $h(x_i)$ for all $i$ in time $\widetilde{O}(k \cdot n^2)$. Thus, we can compute $\tilde{G}_1 \coloneqq r(G_1), \dots, \tilde{G}_k \coloneqq r(G_k)$ in time $\widetilde{O}(k n^2)$. To be more specific, let $w_1, \dots w_m \in \mathbb{R}$ be an enumeration of all edge-weights of graphs in $\mathcal{S}$.  Note that $m \leq k \cdot n^2$. We compute the vector $\mathbf{r} = ( r_1 \coloneqq r(w_1), \dots, r_m \coloneqq r(w_m))$ in time $\widetilde{O}(k n^2)$. Further, we store each $r_i$ with the corresponding $w_i$ into a binary data structure $R_\mathcal{S} \colon \mathbb{R} \to \mathbb{F}_p$. Note that we can evaluate $R_\mathcal{S}(w)$ in time $O(\log(k n^2)) = \widetilde{O}(1)$. We write $A_{\tilde{G}_i}$ for the adjacency matrix of $\tilde{G}_i$.  
Note that this is a matrix over $\mathbb{F}_p$. 

Next, we define $f_\Delta(G) \coloneqq \frac{1}{6} \trace(A_{\tilde{G}}^3)$.\footnote{Without loss of generality, we assume $p > 6$.} Computing $A_{\tilde{G}}^3$ takes time $\widetilde{O}(n^\omega)$. Therefore, calculating all $f_\Delta(G_i)$ is in time $\widetilde{O}(k n^\omega)$. For each $i \in [k]$, we compute $f_\Delta(G_i)$ and store the value together with $\opt(G_i)$ in a linear data structure. Formally, we define a data structure $D_\mathcal{S}$ that on input $x$ either returns $D_\mathcal{S}(x) = \opt(G_i)$ for a graph $G_i$ with $f_\Delta(G_i) = x$ or returns that no such graph exist. $D_\mathcal{S}$ can be constructed in time $\widetilde{O}(k)$ and can be queried in time $\widetilde{O}(1)$. Hence, the preporcessing phase takes time $\widetilde{O}(k n^\omega)$.
 
\textbf{Query:}  Given an input graph $H$, we use $\mathrm{enc}_\mathcal{S}$ to check that all weights of $H$ correspond to weights of matrices in $\mathcal{S}$. If this is not the case then our algorithm claims that $H$ is not isomorphic to any graph in $\mathcal{S}$. Next, we use the data structure $R_\mathcal{S}$ to compute $\tilde{H} \coloneqq r(H)$ in time $\widetilde{O}(n^2)$. We then again apply fast matrix multiplication to compute $f_\Delta(H)$ in time $\widetilde{O}(n^\omega)$. Lastly, we check if $x \coloneqq f_\Delta(H)$ is in $D_\mathcal{S}$. If this is the case then we return $D_\mathcal{S}(x)$, otherwise we report that $H$ is not isomorphic to any graph in $\mathcal{S}$. This can be done in time $\widetilde{O}(1)$. Hence, the whole query time is in $\widetilde{O}(n^\omega)$.

\textbf{Correctness:}  As a first step, we show 
\begin{align}\label{eq:triangel:poly:matrix}
    f_\Delta(G) \coloneqq \frac{1}{6} \trace(A_{\tilde{G}}^3) = \sum_{w_1 \leq w_2 \leq w_3} c_{w_1, w_2, w_3}   r(w_1) r(w_2) r(w_3) = P_\Delta(G)(\mathbf{r}),
\end{align}
where $P_\Delta(G)(\mathbf{r})$ denotes the evaluation of $P_\Delta(G)$ where we assign the value $r(w_i)$ to each variable $x_{w_i}$.
A standard result (e.g. \cite[Fact 1]{DBLP:journals/jintseq/FelsnerH15}) states that
\[\trace(A_{\tilde{G}}^3) = \sum_{x_1 \in V(G)} \sum_{x_2 \in V(G)}\sum_{x_3 \in V(G)} r(w(x_1, x_2)) \cdot r(w(x_2, x_3))  \cdot r(w(x_3, x_1)),\]
where $w(x_i, x_j)$ is equal to the weight of the edge from $x_i$ to $x_j$ and $r(w(x_i, x_j)) = 0$ if there is no edge from $x_i$ to $x_j$.
Next, observe that the vertices $x_1, x_2, x_3 \in V(G)$ form a triangle in $G$ if and only if $r(w(x_1, x_2)) \cdot r(w(x_2, x_3)) \cdot r(w(x_3, x_1))$ appears exactly six times in the sum above.\footnote{The factor six is a result of the six possible symmetries that each orientated triangle has.} This yields \cref{eq:triangel:poly:matrix}.

Let $H$ be a graph on which we are queried. The graph $H$ can be isomorphic to some $G_i \in \mathcal{S}$ only if all edge-weights of $H$ appear as edge-weights in $\mathcal{S}$. If this is not the case the our query algorithm rejects. Thus, we assume that all edge-weights in $H$ appear in the list $w_1, \dots, w_m$.

We now consider two cases. First, suppose that $H$ is isomorphic to some $G_i$. Then, $f_\Delta(G_i) = f_\Delta(H)$ due to \cref{eq:triangel:poly:matrix} and the fact that $P_\Delta(G_i) \equiv P_\Delta(H)$ since $H \cong G_i$. Thus, $D_\mathcal{S}(f_\Delta(H))$ returns $\opt(G_i)$ as long as there aren't any different graphs $G_j$ with $f_\Delta(G_i) = f_\Delta(G_j)$ and $\opt(G_i) \neq \opt(G_j)$. By \cref{lem:neg_triangle_equiv}, this only occurs if $P_\Delta(G_i) \not \equiv P_\Delta(G_j)$ but $P_\Delta(G_i)(\mathbf{r}) = P_\Delta(G_j)(\mathbf{r})$. Since all $r_i$ are chosen independently and uniformly at random over $\mathbb{F}_p$, \cref{lem:schartz:zippel} yields that the event $P_\Delta(G_i) \not \equiv P_\Delta(G_j)$ and $P_\Delta(G_i)(\mathbf{r}) = P_\Delta(G_j)(\mathbf{r})$ has probability at most $\frac{3}{p}$ for every fixed index $j \neq i$. Using union bound, we obtain that the failure probability is at most $\frac{3k}{p} \leq k^{-1} n^{-\log(n)}$. Thus, our success probability is at least $1 - k^{-1} n^{-\log(n)}$ and there w.h.p with respect to the query size $n$.

Second, suppose $H$ is non-isomorphic to every $G_i \in \mathcal{S}$ and let $\opt(H)$ denote the value indicating whether $H$ contains a negative triangle. The only way that the algorithm fails is if there exist a $G_i$ with 
$f_\Delta(G_i) = f_\Delta(H)$ and $\opt(G_i) \neq \opt(H)$. By using the same argument as above, we obtain that the probability of this happening is at most $\frac{3k}{p} \leq k^{-1} n^{-\log(n)}$. Thus, our success probability is at least $1 - k^{-1} n^{-\log(n)}$ and there w.h.p with respect to the query size $n$.

\textbf{Reconstruction:}
Assume that for each graph $G_i \in \mathcal{S}$ containing a negative triangle, we are provided with a multiset of edge-weights $W_i \coloneqq \{w^{(i)}_1, w^{(i)}_2, w^{(i)}_3\}$ that form such a triangle. 
Suppose the algorithm is queried on a graph $H$ such that $P_{\Delta}(H) \equiv P_{\Delta}(G_i)$ for some $G_i \in \mathcal{S}$. By \cref{lem:neg_triangle_equiv}, $H$ must also contain a triangle with the identical multiset of edge-weights $W_i$; i.e., such a triangle has total weight $w^{(i)}_1 + w^{(i)}_2 + w^{(i)}_3 < 0$.

To find this specific negative triangle in $H$, we construct an auxiliary tripartite graph $H'$ with vertex sets $V_1, V_2, V_3$, where each $V_j$ is a copy of $V(H)$. For every vertex $v \in V(H)$, let $v_j$ denote its copy in $V_j$.
For every edge $\{u,v\}$ in $H$ with weight $w^{(i)}_1$, we add an undirected unweighted edge $\{u_1,v_2\}$ to $H'$. Similarly, for every edge $\{u,v\}$ in $H$ with weight $w^{(i)}_2$, we add an undirected unweighted edge $\{u_2,v_3\}$ to $H'$, and for every edge with weight $w^{(i)}_3$, we add an undirected unweighted edge $\{u_3,v_1\}$ to $H'$.
Observe that any triangle in $H'$ must consist of one vertex from each partition. Thus, an unweighted triangle in $H'$ corresponds to a triangle in $H$ with the desired edge-weights. Since triangle detection in unweighted graphs can be performed in $O(n^\omega)$ time~\cite{DBLP:conf/stoc/Itai77}, we can find the negative triangle in $H$ in $O(n^\omega)$ total time.
\end{proof}

\subsection{$\cstName$}\label{sec:CST}

\begin{definition}[$\cstName$ ($\cst$) Problem 
{\cite{DBLP:journals/cor/AggarwalAN82,DBLP:conf/swat/RaviG96}}]\label{def:cst}
Let $G=(V,E, c, d, B)$ be an undirected graph with two non-negative edge-weight
functions
\[
c : E \to \mathbb{R}_{\ge 0}
\qquad\text{and}\qquad
d : E \to \mathbb{R}_{\ge 0},
\]
and let $B \in \mathbb{R}_{\ge 0}$ be a budget.
The \emph{$\cstName$ Problem} asks to find a spanning tree
$T \subseteq E$ that
\[
\begin{aligned}
&\text{minimizes} && c(T) \coloneqq \sum_{e \in T} c(e) \\
&\text{subject to} && \sum_{e \in T} d(e) \le B .
\end{aligned}
\]
A spanning tree satisfying the constraint is called \emph{budget-feasible}. The optimal value is $\opt(G)$ for the optimal value of a problem instance $G$. If no spanning-tree satisfies the budget constraint then $\opt(G) = \infty$.

Two problem instances $G = (V, E, c, d, B)$ and $G' = (V', E', c', d', B')$ of $\cst$ are isomorphic if $B = B'$ their underlying graphs are isomorphic. Note that an isomorphism has to preserve vertex attributes and edge attributes. Therefore, $\cst$ is closed under isomorphism.
\end{definition}

Note that each edge $e \in E$ is associated with two different \attributes $c(e)$ and $d(e)$. We combine these two \attributes and say that $w(e) \coloneqq \att^E_G(e) = (c(e), d(e))$ is the weight of the edge $e$. Note that isomorphism respect weights.

Like in the $\csp$ case, we need to develop a method to find a spanning with certain weight constraints via polynomials. To this end,
we associate a formal variable $x_{w}$ with every distinct edge weight $w$. The \emph{spanning tree polynomial} of $G$ is defined as
\[
P_{ST}(G) \coloneqq \sum_{w_1 \le \dots \le w_{n-1}} c_{w_1, \dots, w_{n-1}}(G) \prod_{i=1}^{n-1} x_{w_i}
\]
where $c_{w_1, \dots, w_{n-1}}(G)$ is the number of spanning trees with the weight multiset $\{w_1, \dots, w_{n-1}\}$. Note that this polynomial is closely related to the \emph{Kirchhoff polynomial} (See \cite{Chung2000}).

\begin{lemma}
\label{lem:spanning_tree_equiv}
Let $G$ and $H$ be two graphs. If $P_{ST}(G) \equiv P_{ST}(H)$ then for any target $B$, $G$ contains a spanning tree of total weight $B$ if and only if $H$ contains a spanning tree of total weight $B$. 
\end{lemma}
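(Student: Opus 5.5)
The argument mirrors \cref{lem:neg_triangle_equiv}: identical polynomials have identical coefficients, and the coefficients record how many spanning trees realize each weight multiset. The only care needed is that monomials are indexed by sorted weight tuples, so distinct multisets give distinct monomials.

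\textbf{Step 1 (distinct multisets give distinct monomials).} I will fix the total order on weights used to sort the indices $w_1 \le \dots \le w_{n-1}$. For the vector weights $(c(e),d(e))$ of $\cst$, take any fixed total order, e.g.\ lexicographic; any fixed order works. The monomial $\prod_{i=1}^{n-1} x_{w_i}$ determines the multiset $\{w_1,\dots,w_{n-1}\}$, since the exponent of $x_w$ equals the multiplicity of $w$. Conversely, each multiset corresponds to exactly one sorted tuple. Hence the map from multisets of size $n-1$ to monomials is injective, and $P_{ST}(G)$ is written in the monomial basis with coefficient $c_{w_1,\dots,w_{n-1}}(G)$ on the monomial of $\{w_1,\dots,w_{n-1}\}$.

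\textbf{Step 2 (compare coefficients).} If $P_{ST}(G) \equiv P_{ST}(H)$, then by Step 1 and uniqueness of the monomial expansion, $c_{W}(G) = c_{W}(H)$ for every multiset $W$. In words, for every weight multiset, $G$ and $H$ have the same number of spanning trees realizing it.

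\textbf{Step 3 (transfer the property).} Whether a spanning tree has total weight $B$, or more generally satisfies $\sum_{e\in T} d(e)\le B$ with objective $\sum_{e \in T} c(e)$, depends only on the multiset $W$ of its edge weights, because all these quantities are sums over $W$. Therefore $G$ has a spanning tree of total weight $B$ if and only if some $W$ with $\sum W = B$ has $c_W(G)>0$. By Step 2 this holds if and only if the same $W$ has $c_W(H)>0$, i.e., if and only if $H$ has such a tree.

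The same reasoning shows that $G$ and $H$ have the same set of achievable (cost, distance) pairs, so $\opt(G)=\opt(H)$ for $\cst$. This is the form the subsequent algorithm will need.

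The only real obstacle is Step 1: the multiset-to-monomial correspondence requires a fixed total order on the vector weights, which I will make explicit. Everything after that is bookkeeping.
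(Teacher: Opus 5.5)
Your proof is correct and follows the paper's argument. Identical polynomials have identical coefficients, so $G$ and $H$ have the same number of spanning trees for every weight multiset, and whether a tree has total weight $B$ depends only on that multiset. One small point: the multiset-to-monomial injectivity in your Step 1 really comes from the exponents of the distinct variables $x_w$, and the fixed order on vector weights only ensures each multiset is indexed once, so that step is not really an obstacle.
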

\begin{proof}
If $P_{ST}(G) \equiv P_{ST}(H)$ then the coefficients of every corresponding monomial match exactly, i.e. $c_{w_1, \dots, w_{n-1}}(G) = c_{w_1, \dots, w_{n-1}}(H)$ for all combinations of weights. Consequently, $G$ contains a spanning tree that uses weights  $w_1, \dots, w_{n-1}$ if and only if $H$ contains the same number spanning tree.   
\end{proof}

\begin{proof}[Proof of \cref{cor:constrained_spanning_tree_preprocessing}]

Let $\mathcal{S} \coloneq \{G_i = (V_i, E_i, c_i, d_i, B_i)\}_{i=1\dots k}$ be a database of prior instances and assume that for each $i \in [k]$ we are given $\opt_i$ that indicates the optimal value of $\cst$.

\textbf{Preprocessing:} Given $\mathcal{S}$, we first compute a prime number $p \geq k^{2} n^{\log(n) + 1}$ and construct a finite field $\mathbb{F}_p$. Note that all basic operations in $\mathbb{F}_p$ can be performed in time $O(\log p) = \widetilde{O}(1)$. Next, we assign arbitrarily a unique non-zero element in $\mathbb{F}_p$ to each edge-weight of the graphs $G_i$ and store the edge-weights together with the corresponding field elements in a binary search tree. Remember that the edge weight $w(e)$ of an edge $e \in E_i$ is defined as the tuple $(c_i(e), d_i(e))$. Formally, we construct a function $\mathrm{enc}_\mathcal{S} \colon \mathbb{R}^2 \to \mathbb{F}_p \cup \{\bot\}$ that, on input $x$, either outputs an unique element in $ \mathbb{F}_p$ if $x$ is the weight of some graph in $\mathcal{S}$, or returns $\bot$ otherwise. Note that $\mathrm{enc}_\mathcal{S}$ can be computed in time $O( k \cdot n^2 \cdot \log p)$, and evaluating $\mathrm{enc}_\mathcal{S}$ on some $x$ takes time $O(\log(k n^2 ) \cdot \log p )=\widetilde{O}(1)$.

Given a graph $G_i$, we write $\mathrm{enc}_\mathcal{S}(G_i)$ for graph where each edge weight $w$ was replaced with $\mathrm{enc}_\mathcal{S}(w)$.  This can be done in time $\widetilde{O}(k  n^2 )$.

We construct a $(k \cdot n^2)$-wise independent family of hash functions $h \colon \mathbb{F}_p \to \mathbb{F}_p$ using \cite[Construction~3.32]{DBLP:journals/fttcs/Vadhan12}. Concretely, we sample a random polynomial over $\mathbb{F}_p$ of degree at most $ k \cdot n^2  - 1$. 
Note that such a hash function can be constructed in time $O(k \cdot n^2)$. Give an edge-weight $w$, we define $r(w) \coloneq h(\mathrm{enc}_{\mathcal{S}}(w))$.
For a graph $G_i$, we write $r(G_i)$ for the graph obtained by replacing each edge weight $w$ with $r(w)$.

Let $x_1, \dots, x_m \in \mathbb{F}_p$, where $m \leq k \cdot n^2$, be a sequence of values. Using the Fast Fourier Transformation, we can evaluate $h(x_i)$ for all $i$ in time $\widetilde{O}(k \cdot n^2)$. Thus, we can compute $\tilde{G}_1 \coloneqq r(G_1), \dots, \tilde{G}_k \coloneqq r(G_k)$ in time $\widetilde{O}(k n^2)$. To be more specific, let $w_1, \dots w_m \in \mathbb{R}^2$ be an enumeration of all edge-weights of graphs in $\mathcal{S}$.  Note that $m \leq k \cdot n^2$. We compute the vector $\mathbf{r} = ( r_1 \coloneqq r(w_1), \dots, r_m \coloneqq r(w_m))$ in time $\widetilde{O}(k n^2)$. Further, we store each $r_i$ with the corresponding $w_i$ into a binary data structure $R_\mathcal{S} \colon \mathbb{R}^2 \to \mathbb{F}_p$. Note that we can evaluate $R_\mathcal{S}(w)$ in time $O(\log(k n^2)) = \widetilde{O}(1)$. We write $L^{\tilde{G}_i}$ for the Laplacian matrix of $\tilde{G}_i$. Further, for each $j \in [n]$,\footnote{For simplicity, we assume that $G_i$ has $n$ vertices which is why $L^{G_i}$ is a $n \times n$-matrix.} we write $\tilde{L}_j^{\tilde{G}_i}$ for the matrix that is obtained by removing the $j$-th row and the $j$-th column from $L^{\tilde{G}_i}$. Note that $\tilde{L}_j^{\tilde{G}_i}$ is a matrix over $\mathbb{F}_p$. 

Next, we define $f_{ST}(G) \coloneqq  \det(\tilde{L}_1^{\tilde{G}_i})$ which can be computed in time $\widetilde{O}(n^\omega)$. For each $i \in [k]$, we compute $f_\Delta(G_i)$ and store the value together with $\opt_i$ in a linear data structure. Formally, we define a data structure $D_\mathcal{S}$ that on input $x$ either returns $D_\mathcal{S}(x) = \opt_i$ for a graph $G_i$ with $f_\Delta(G_i) = x$ or returns that no such graph exist. $D_\mathcal{S}$ can be constructed in time $\widetilde{O}(k)$ and can be queried in time $\widetilde{O}(1)$. Hence, the preporcessing phase takes time $\widetilde{O}(k n^\omega)$.
 
\textbf{Query:}  Given an input graph $H$, we use $\mathrm{enc}_\mathcal{S}$ to check that all weights of $H$ correspond to weights of matrices in $\mathcal{S}$. If this is not the case then our algorithm claims that $H$ is not isomorphic to any graph in $\mathcal{S}$. Next, we use the data structure $R_\mathcal{S}$ to compute $\tilde{H} \coloneqq r(H)$ in time $\widetilde{O}(n^2)$. We then compute $f_{ST}(H)$ in time $\widetilde{O}(n^\omega)$. Lastly, we check if $x \coloneqq f_\Delta(H)$ is in $D_\mathcal{S}$. If this is the case then we return $D_\mathcal{S}(x)$, otherwise we report that $H$ is not isomorphic to any graph in $\mathcal{S}$. This can be done in time $\widetilde{O}(1)$. Hence, the whole query time is in $\widetilde{O}(n^\omega)$.

\textbf{Correctness:}  As a first step, we show that
\begin{align}\label{eq:CST:poly}
    f_{ST}(G) =  \det(\tilde{L}_1^{\tilde{G}_i}) = \sum_{w_1 \le \dots \le w_{n-1}} c_{w_1, \dots, w_{n-1}}(G) \prod_{i=1}^{n-1} r(w_i) = P_{ST}(G)(\mathbf{r}).
\end{align}
Here $P_{ST}(G)(\mathbf{r})$ denotes the evaluation of $P_{ST}(G)$ obtained by  assigning the value $r(w_i)$ to each variable $x_{w_i}$.
A standard result (e.g. \cite[Page 3]{Chung2000}) states that for any $j \in [n]$
\[\det(\tilde{L}_j^{\tilde{G}_i}) = \sum_{T} \prod_{w \in T} r(w),\]
where the sum is over all spanning trees $T$ of $G$ and $\prod_{w \in T}$ is the product over all edge-weights in $T$. By combing spanning trees with the same edge-weights, we obtain \cref{eq:CST:poly}.

Let $H$ be a graph on which we are queried. The graph $H$ is isomorphic to some $G_i \in \mathcal{S}$ only if all edge-weights of $H$ appear as edge-weights in $\mathcal{S}$. If this is not the case, then our query algorithm rejects. Thus, we assume that all edge-weights in $H$ appear in the list $w_1, \dots, w_m$.

We consider two cases. First, suppose that $H$ is isomorphic to some $G_i$. Then, $f_{ST}(G_i) = f_{ST}(H)$ due to \cref{eq:CST:poly} and the fact that $P_{ST}(G_i) \equiv P_{ST}(H)$ since $H \cong G_i$. Thus, $D_\mathcal{S}(f_{ST}(H))$ returns $\opt_i$  provided there is no other graph $G_j$ such that $f_{ST}(G_i) = f_{ST}(G_j)$ and $\opt_i \neq \opt_j$. By \cref{lem:neg_triangle_equiv}, this only occurs if $P_{ST}(G_i) \not \equiv P_{ST}(G_j)$ but $P_{ST}(G_i)(\mathbf{r}) = P_{ST}(G_j)(\mathbf{r})$. Since all $r_i$ are chosen independently and uniformly at random over $\mathbb{F}_p$, \cref{lem:schartz:zippel} yields that the event $P_{ST}(G_i) \not \equiv P_{ST}(G_j)$ and $P_{ST}(G_i)(\mathbf{r}) = P_{ST}(G_j)(\mathbf{r})$ has probability at most $\frac{n-1}{p}$ for every fixed index $j \neq i$. Using union bound, we obtain that the failure probability is at most $\frac{(n-1)k}{p} \leq k^{-1} n^{-\log(n)}$.  Thus, our success probability is at least $1 - k^{-1} n^{-\log(n)}$ and there w.h.p with respect to the query size $n$.

Second, suppose $H$ is non-isomorphic to every $G_i \in \mathcal{S}$ and let $\opt_H$ denote the value of the optimal spanning tree that satisfies the budget constraint. The only way that the algorithm fails is if there exist a $G_i$ with 
$f_{ST}(G_i) = f_{ST}(H)$ and $\opt_i \neq \opt_H$. By using the same argument as above, we obtain that the probability of this happening is at most $\frac{(n-1)k}{p} \leq k^{-1} n^{-\log(n)}$.  Thus, our success probability is at least $1 - k^{-1} n^{-\log(n)}$ and there w.h.p with respect to the query size $n$.

\textbf{Reconstruction:} Assume that each graph $G_i$ is equipped with a multiset of cost-distance pairs $W_i \coloneqq \{w^{(i)}_1, \dots w^{(i)}_{n-1}\}$ corresponding to an optimal spanning tree in $G_i$. 
Suppose the algorithm is queried on a graph $H$ such that $P_{ST}(H)\equiv P_{ST}(G_i)$ for some $G_i \in \mathcal{S}$. By \cref{lem:spanning_tree_equiv}, $H$ must contain a spanning tree with the same multiset of cost-distance pairs $W_i$. Upon retrieving $W_i$, let $k_{c,d}$ denote the multiplicity of the pair $(c,d)$ in $W_i$. We formulate the reconstruction of an optimal tree in $H$ as a matroid intersection problem over the ground set $E(H)$ between: (1) a graphic matroid $\mathcal{M}_G$ (where independent sets are forests), and (2) a partition matroid $\mathcal{M}_P$ (where independent sets contain at most $k_{c,d}$ edges of type $(c,d)$).
Consequently, the desired spanning tree in $H$ can be computed in polynomial time using classical matroid intersection algorithms~\cite{schrijver2003}, and specifically in $\widetilde{O}(m\sqrt{n})$ time via recent advancements~\cite{DBLP:conf/stoc/BlikstadMNT23,DBLP:conf/focs/ChakrabartyLS0W19,DBLP:conf/icalp/GabowS85}.

\end{proof}

\section{Algorithms via 1-WL (Elementary Arguments)}\label{sec:algorihtms via WL}\label{sec:1WL}

In this section, we show that $\cspName$ and $\energyName$ can be solved in near-linear time in the Isomorphic-Priors model using the 1-dimensional Weisfeiler-Leman (1-WL) algorithm.
We start by formally defining both problems and stating the main algorithmic results for both problems.
After that, we give an introduction to the 1-WL algorithm and prove our algorithmic results.

\subsection{$\cspName$ and $\energyName$}

\begin{definition}[$\cspName$ ($\csp$)]
An instance of $\cspName$ is defined by a tuple $G=(V, E, c, \ell, s, t, B)$, where:
\begin{itemize}
    \item $(V, E)$ is a finite directed graph.
    \item $c: E \to \mathbb{R}_{\ge 0}$ is a cost function assigning a non-negative cost to each edge.
    \item $\ell: E \to \mathbb{R}_{\ge 0}$ is a length (or delay) function assigning a non-negative length to each edge.
    \item $s, t \in V$ are the distinguished source and target vertices.
    \item $B \in \mathbb{R}_{\ge 0}$ is the maximum allowable path cost constraint.
\end{itemize}
For a directed path $P = (e_1, \dots, e_r)$ from $s$ to $t$, the total cost is $c(P) = \sum_{i=1}^r c(e_i)$ and the total length is $\ell(P) = \sum_{i=1}^r \ell(e_i)$. The objective is to find a feasible path $P^*$ minimizing $c(P^*)$ subject to $\ell(P^*) \le B$. The optimal value is $\opt(G) = c(P^*)$. If no path satisfies the length constraint, $\opt(G) = \infty$.

Two problem instances $G = (V, E, c, \ell, s, t, B)$ and $G' = (V', E', c', \ell', s', t', B')$ of $\csp$ are isomorphic if $B = B'$ and their underlying graphs are isomorphic. Note that an isomorphism has to preserve vertex attributes and edge attributes (e.g., $s$ maps $s'$, $t$ maps $t'$, etc.). 
\end{definition}

Naturally, $\csp$ with isomorphic priors is formalized as follows. In the preprocessing phase, we are given a database $\mathcal{S} = \{G_1, \dots, G_k\}$ of $k$ prior instances. Each instance is given as a tuple $G_i = (V_i, E_i, c_i, \ell_i, s_i, t_i, B_i)$, provided together with its optimal value $\opt(G_i)$. In the query phase, we receive a query graph $G = (V, E, c, \ell, s, t, B)$.

\begin{theorem}[$\csp$ in the Isomorphic-Priors Model]
\label{cor:csp_with_preprocessing}
There is an algorithm that solves the $\csp$ problem with $k$ prior instances each having at most $n$ vertices and $m$ edges in  $O(k(n+m)\log n)$ preprocesses time and $O((n+m)\log n)$ query time. 
Additionally, if each prior $G_i$ is equipped with an optimal path, then the optimal path for the query graph can be found in $O((n+m)\log n)$ query time.
\end{theorem}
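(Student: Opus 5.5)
We instantiate the invariant-representation framework of \Cref{theo:IPM:via:sig} with the 1-WL (color refinement) coloring as the signature. For an instance $G=(V,E,c,\ell,s,t,B)$, the initial vertex colors are the vertex attributes ($\mathsf{s}$, $\mathsf{t}$, or $\bot$). In each round, a vertex's color is refined by the multiset of pairs $(\text{edge attribute }(c(e),\ell(e)),\ \text{color of the other endpoint})$. We treat out-edges and in-edges separately, since the graph is directed. We refine until the partition stabilizes. The signature $f(G)$ is the canonical encoding of the stable coloring: the histogram of final colors, together with $B$.

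Colors must be named canonically so that $f(G)$ can be compared across graphs. To this end we run the Cardon--Crochemore / Paige--Tarjan style refinement in $O((n+m)\log n)$ time. New colors are named by the sorted signature of (old color, multiset of neighbor data) rather than by arbitrary integers. Equivalently, all priors and the query share one global dictionary mapping signatures to color names, built in preprocessing and consulted during the query, with unseen signatures rejected immediately. Invariance under isomorphism is standard, because refinement is defined purely in terms of attributes and adjacency. The runtime bounds then follow from the trie argument in \Cref{theo:IPM:via:sig}. The preprocessing over $k$ priors costs $O(k(n+m)\log n)$ plus dictionary lookups, which can be charged logarithmically or handled by hashing.

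The core step is solution preservation: if $f(G)=f(H)$ then $\opt(G)=\opt(H)$. We prove the formal version of \Cref{claim:wl:equiv:color}. In the stable coloring, any two vertices of equal color have, for each edge attribute $\alpha$ and color $c'$, the same number of out-neighbors of color $c'$ via an $\alpha$-edge. Because the histograms agree and the colors are canonical, this count depends only on the color, not on the graph. The sole vertex colored $\mathsf{s}$ therefore has the same color in both graphs, and likewise for $t$. Induction along a walk $s=v_0,\dots,v_p=t$ in $G$ then builds a walk $v'_0,\dots,v'_p$ in $H$ with matching colors and identical edge attributes at every step. The walk ends at the unique $\mathsf{t}$-colored vertex, which is $t$. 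The walk has the same cost and length as the original. Shortcutting cycles yields a simple $s$-$t$ path whose cost and length are each no larger, since both are non-negative. Hence $\opt(H)\le \opt(G)$, and by symmetry equality holds, including the case $\opt=\infty$.

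For reconstruction, the prior $G_i$ stores its optimal path $P_i$ as a sequence of edges. For each, we record the color pair and attribute of every step. At query time we walk the token in $H$. At each step from the current vertex we pick any out-edge with the recorded attribute leading to the recorded color. Grouping adjacency lists by (attribute, color) during refinement makes this $O(1)$ amortized per step, or $O(\log n)$ with a dictionary. We then remove cycles in linear time. The main obstacle I expect is the canonical-naming issue: color-refinement output is only canonical up to renaming, so the near-linear implementation must produce names comparable across graphs. I would handle this with the shared dictionary above.
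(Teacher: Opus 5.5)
Your proof follows the paper's route. You use a complete 1-WL invariant, together with $B$, stored in a trie. Optimal values agree because an optimal $s$-$t$ path lifts along matching colors and edge attributes, and cycle removal then uses non-negativity. Reconstruction moves a token along the stored path. The solution-preservation argument and the reconstruction are correct.

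The one step that does not work as written is the canonical naming that is supposed to give the $O((n+m)\log n)$ bound.
\begin{itemize}
\item Cardon--Crochemore/Paige--Tarjan refinement is fast because it never recomputes the full neighbor multiset of every vertex. It splits classes with respect to one splitter class at a time, and only the smaller parts of each split become new splitters.
\item Naming every new color by (old color, full multiset of neighbor data) is naive round-by-round refinement. That costs $\Theta(n+m)$ per round and needs $\Theta(n)$ rounds already on a directed $s$-$t$ path.
\item A shared dictionary keyed by these signatures inherits the same cost. A global dictionary with $\Theta(kn)$ entries also makes each query lookup depend on $k$ (search tree) or become randomized (hashing).
\end{itemize}

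The paper avoids this by invoking canonical colour refinement (\Cref{thm:1-wl-implementation}, due to Berkholz, Bonsma and Grohe). Separately for each graph, it produces names in $[n]$ that agree across 1-WL-equivalent graphs, in $O((n+m)\log n)$ time. Such compressed names may coincide for non-equivalent graphs, so a bare histogram of final colors is then no longer a complete invariant. The paper's signature (\Cref{thm:1-wl-canon}) therefore also records, for each pair of color classes and each edge attribute, the number of edges between them. With that substitution, your argument yields exactly the stated bounds.
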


\begin{definition}[$\energyName$]
An $\energyName$ instance is defined by a tuple $G=(V, E, V_1, V_2, w, v_0)$, where:
\begin{itemize}
    \item $(V, E)$ is a finite directed graph with no terminal states.
    \item $V_1, V_2$ is a partition of $V$ ($V_1 \cup V_2 = V$, $V_1 \cap V_2 = \emptyset$), where Player~1 controls $V_1$ and Player~2 controls $V_2$.
    \item $w: E \to \mathbb{Z}$ is a weight function assigning an integer energy change to each edge.
    \item $v_0 \in V$ is the designated initial vertex.
\end{itemize}
A play $\pi = v_0 v_1 v_2 \dots$ is formed by moving a token: at step $k \ge 0$, if $v_k \in V_i$, Player~$i$ chooses an outgoing edge $(v_k, v_{k+1}) \in E$. Given an initial energy $E_0 \in \mathbb{N}$, the energy evolves via $E_{k+1} = E_k + w(v_k, v_{k+1})$. A strategy for Player~1 is winning from $v_0$ with initial energy $E_0$ if it ensures that $E_k \ge 0$ for all $k \ge 0$,
regardless of the choices of Player~2. The objective is to find the minimum initial energy $E_0^* \in \mathbb{N}$ for which Player~1 has a winning strategy from $v_0$. The optimal value is $\opt(G) = E_0^*$. If no finite initial energy allows Player~1 to win, we set $\opt(G) = \infty$.

Two $\energyName$ instances $G$ and $G'$ are isomorphic if their underlying graphs are isomorphic. Note that an isomorphism has to preserve vertex attributes and edge attributes. Therefore, $\energyName$ is closed under isomorphism.
\end{definition}

\begin{theorem}[$\energyName$ in the Isomorphic-Priors Model]
\label{cor:energy_preprocessing}
There is an algorithm that solves $\energyName$ with $k$ prior instances each having at most $n$ vertices and $m$ edges in  $O(k(n+m)\log n)$ preprocesses time and $O((n+m)\log n)$ query time. 
Additionally, if each prior $G_i$, that is a winning instance, is equipped with a memoryless winning strategy $\sigma_i$ for Player~1 from every starting vertex, then the algorithm can also compute a memoryless optimal winning strategy $\sigma_H$ with the same guarantees for $H$ in $O((n+m)\log n)$ time.
\end{theorem}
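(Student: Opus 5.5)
The plan is to instantiate the invariant-representation framework of \Cref{theo:IPM:via:sig}, with $f$ the stable 1-WL coloring. The refinement is initialized with vertex colors given by ownership ($V_1$ versus $V_2$) and by an indicator for the initial vertex $v_0$. Each round refines a vertex's color by the multiset of pairs (edge weight, color of the out-neighbor). The signature $f(G)$ is the canonical histogram of stable colors. To make colors comparable across graphs, color names are assigned through a shared canonical dictionary, e.g.\ by running the refinement on the disjoint union of all priors together with the query, or by using canonical hashing of signatures via a global trie. The Cardon--Crochemore / Paige--Tarjan implementation gives $O((n+m)\log n)$ per graph, which yields the stated preprocessing and query times. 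Permutation invariance is standard. The real content is the solution-preserving property: if $G$ and $H$ are 1-WL-equivalent, then $\opt(G)=\opt(H)$.

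The key step proves a stronger claim: for any two vertices $u$ (in $G$) and $u'$ (in $H$) with the same stable color, the minimum initial energy from $u$ equals that from $u'$. I would do this via the quotient game. Define the game $Q$ on stable colors as follows. A color class $C$ is owned by the player owning its vertices, which is well defined since ownership is part of the initial color. For each color $C'$ and weight $w$ such that vertices in $C$ have an out-edge of weight $w$ into $C'$, put an edge $C\to C'$ of weight $w$. This is well defined because all vertices of $C$ have identical multisets of (weight, color) out-neighborhoods; the multiplicities are irrelevant for the game.

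I then show that the game from $u$ in $G$ and the game from $[u]$ in $Q$ have the same value, using the token-tracking argument of \Cref{claim:wl:equiv:color}. Plays in $G$ project to plays in $Q$ with identical weight sequences. Conversely, any move in $Q$ from $[v]$ lifts to a move from $v$ in $G$ with the same weight. Hence a strategy of either player in $Q$ lifts to a strategy in $G$ (possibly with memory, but that does not matter for values), and vice versa a $G$-strategy induces a $Q$-strategy. Concretely, a $Q$-strategy for Player~1 is played in $G$ by choosing lifted edges. A Player~2 deviation in $G$ projects to a legal $Q$ move, so the energy sequences coincide and the winning condition transfers. Symmetrically, a Player~2 strategy in $Q$ lifts to show that Player~1 cannot do better in $G$. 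Since $Q$ depends only on the color histogram (the color refinement of the disjoint union gives both graphs the same quotient), $\opt(G)=\opt(Q,[v_0])=\opt(H)$. The $\infty$ case is covered by the same argument.

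For reconstruction, Player~1's memoryless strategy $\sigma_i$ in $G_i$ induces one on $Q$, namely $[v]\mapsto$ (weight, color) of $\sigma_i(v)$. To make this consistent, I would instead use the fact that energy games admit memoryless optimal strategies. Taking one in $Q$ from $\sigma_i$ needs care, since different vertices of a class may choose differently. The plan is to show that optimal values depend only on color, so any vertex's choice is value-optimal for the whole class. Then $\sigma_H(v)$ is defined as any out-edge of $v$ with the same (weight, target color) as $\sigma_i(x)$ for a representative $x$ of $[v]$ in $G_i$. Finding these edges is a linear scan per class, and the strategy is correct because moves along such edges preserve the value function.

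The main obstacle is this last point. One must check that a locally value-preserving memoryless choice is actually optimal in energy games, i.e.\ that it does not merely preserve the value while cycling forever with a negative drift. I would handle this by copying exactly the choices of the optimal $\sigma_i$ (choosing one representative per class). Every play in $H$ consistent with $\sigma_H$ then projects to a play in $Q$ consistent with the induced strategy. That play is in turn realized in $G_i$ consistent with $\sigma_i$ from the representative, provided the representative choice is itself made consistent, which requires showing that $\sigma_i$ restricted to representatives is still winning. I expect this to be the delicate part.
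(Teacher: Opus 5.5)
Your value argument is sound. The quotient game $Q$ and the two lifting directions are the paper's shadow-play proof of \Cref{thm:energy-game-1-wl}, factored through $Q$, and refining only by out-neighborhoods is harmless for games. One implementation remark: running the refinement on the disjoint union of all priors plus the query makes the query cost grow with $k$. Also, plain Paige--Tarjan does not produce color names that are comparable across separately processed graphs. You need the canonical refinement of \Cref{thm:1-wl-implementation} to get the stated query time.

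The genuine gap is in the reconstruction part of the statement, which you explicitly leave open. The missing idea is a one-step progress inequality for the value $h$ (minimum initial credit). Because $\sigma_i$ is memoryless and wins from every $u$ with credit $h(u)$, continuing $\sigma_i$ after one step gives $h(u)+w(u,v)\ge h(v)$. This holds for $v=\sigma_i(u)$ when $u$ is a Player~1 vertex, and for every successor $v$ when $u$ is a Player~2 vertex. Your quotient argument shows $h$ depends only on the stable color, across $G_i$ and $H$. At a Player~1 vertex $u'$ of $H$, the copied edge has the same weight and target color as the representative's $\sigma_i$-edge. At a Player~2 vertex, every edge of $H$ has a counterpart with the same weight and target color at any equally colored vertex of $G_i$, by Neighborhood Preservation. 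Hence $w(u',v')\ge h(v')-h(u')$ holds for every edge consistent with $\sigma_H$. Telescoping along any consistent play with $E_0\ge h(v_0')$ gives $E_k\ge E_0-h(v_0')+h(v_k')\ge h(v_k')\ge 0$. In particular every consistent cycle has non-negative total weight, so the negative-drift scenario you worry about cannot occur.

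This is exactly the content of \Cref{lem:color-consistent-strategy}. The paper uses it in preprocessing to replace $\sigma_i$ by a color-consistent optimal $\widehat{\sigma}_i$, after which the shadow walk in $G_i$ is automatically consistent with $\widehat{\sigma}_i$. Without it, your route does not close. A shadow play in $G_i$ visits arbitrary members of each class, not the representatives, so it is not consistent with the original $\sigma_i$. The claim that $\sigma_i$ restricted to representatives is still winning is precisely what remains unproven.
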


\subsection{Preliminaries: The Weisfeiler--Leman Algorithm}

To solve constrained reachability and game dynamics, we model our problem instances as \attributed directed graphs. Let $G = (V, E, \att_V, \att_E)$ be a directed graph where $\att_E\colon E \to \Sigma_E$ assigns an \emph{\attribute} to each edge (e.g., costs, lengths, or weights), and $\att_V\colon V \to \Sigma_V$ assigns an \emph{\attribute} to each vertex (e.g. source, sink, or what player a vertex belongs to).

When initializing the \emph{1-dimensional Weisfeiler--Leman (1-WL) algorithm}, the vertex \attributes are directly translated into an initial vertex \emph{coloring}, denoted by $\chi_0^G\colon V \to \mathcal{C}$, such that $\chi_0^G(u) = \chi_0^G(v)$ if and only if $\att_V(u) = \att_V(v)$. The 1-WL algorithm then iteratively \emph{refines} these colors (e.g, by recolor some vertices of color $c$ with a new color $d$), ultimately returning a \emph{stable color} for every vertex $u$, denoted by $\chi_\infty^G(u)$, which possesses the structural properties detailed in \Cref{thm:1wl_properties}.

\paragraph{The 1-WL Algorithm: A Primer.}
The 1-WL algorithm can be intuitively understood as an iterative message-passing process. Initially, every vertex only knows its own starting color. In each round, every vertex ``broadcasts'' its current color to its neighbors. Simultaneously, it listens to the incoming messages, tallying up exactly how many neighbors of each color it has, grouped by the \attributes and directions of the connecting edges. It then bundles its own current color with this exact tally to form a unique ``signature''. If two vertices that previously shared the same color generate different signatures (meaning their local neighborhoods differ), they are split into two different new colors. This process continues until no pair of vertices split into two different new colors. The final color is called the {\em stable color}. 

Formally, let $G = (V, E, \alpha_V, \alpha_E)$ be an \attributed directed graph.
For two colorings $\chi,\chi'\colon V \to \mathcal{C}$, we write $\chi \preceq \chi'$ if $\chi'(v) = \chi'(w)$ for all $v,w \in V$ such that $\chi(v) = \chi(w)$ (in other words, the partition into color classes of $\chi$ refines the corresponding partition for $\chi'$). We write $\chi \equiv \chi'$ if $\chi \preceq \chi'$ and $\chi' \preceq \chi$.
The initial coloring computed by $1$-WL is defined via $\chi_0^G(v) = \att_V(v)$ for all $v \in V(G)$.
For $i \geq 0$ the coloring is iteratively refined by setting
\[
\chi_{i+1}^G(v) \coloneqq \Bigg( \chi_i^G(v), \ \Big\{\!\! \Big\{ \big(\chi_i^G(u), \alpha_E(u,v)\big) \mid u \in N_{in}(v) \Big\}\!\! \Big\}, \ \Big\{\!\! \Big\{ \big(\chi_i^G(w), \alpha_E(v,w)\big) \mid w \in N_{out}(v) \Big\}\!\! \Big\} \Bigg),
\]
where $\{\!\{\dots \}\!\}$ denotes a multiset.
This refinement stabilizes when the partition of $V$ induced by the colors stops changing.
More concretely, since $\chi_{i+1}^G \preceq \chi_i^G$ for every $i \geq 0$, there is some minimal $i_\infty \geq 0$ such that $\chi_{i_\infty}^{G} \equiv \chi_{i_\infty+1}^{G}$, and we set $\chi_\infty^G \coloneqq \chi_{i_\infty+1}^{G}$.
Note that we perform one extra iteration after stabilization, which is necessary for some technical reasons.

The 1-WL algorithm can be used as a simple heuristic to test isomorphisms of graphs. If some color appears a different number of times in the stable coloring of the two graphs, then they cannot be isomorphic (the converse is not true in general).

\begin{definition}[1-WL Equivalence]
Two (\attributed) graphs $G$ and $H$ are \emph{1-WL equivalent} ($G \WLequiv{1} H$) if $\big\{\!\! \big\{ \chi_\infty^G(v) \mid v \in V(G) \big\}\!\! \big\} = \big\{\!\! \big\{ \chi_\infty^{H}(w) \mid w \in V(H) \big\}\!\! \big\}$.
\end{definition}

\paragraph{Properties of 1-WL.} Below are some basic facts about 1-WL. 
 
\begin{lemma}[1-WL Properties; Folklore]
\label{thm:1wl_properties}

For every two (\attributed) graphs $G = (V_G,E_G,\att_G^V,\att_G^E)$ and $H = (V_H,E_H,\att_H^V,\att_H^E)$ such that $G \WLequiv{1} H$, the following holds:
\begin{enumerate}
    \item \textbf{Refinement:} For every $v \in V_G$ and $v' \in V_H$, if $\chi_\infty^G(v) = \chi_\infty^H(v')$, then their initial colors match: $\alpha_G^V(v) = \alpha_H^V(v')$.
    \item \textbf{Neighborhood Preservation:} For every $v \in V_G$ and $v' \in V_H$, if $\chi_\infty^G(v) = \chi_\infty^{H}(v')$, then for every stable color $C$ and edge \attribute $\ell$, there exists an incoming edge $(u,v) \in E_G$ in $G$ such that $\alpha_G^E(u,v) = \ell$ and $\chi_\infty^G(u) = C$ if and only if there exists an incoming edge $(u', v') \in E_H$ in $H$ such that $\alpha_H^E(u',v') = \ell$ and $\chi_\infty^{H}(u') = C$. The analogous bi-conditional holds for outgoing edges.
\end{enumerate}
\end{lemma}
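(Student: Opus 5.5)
Both properties are read off from how $\chi_\infty$ is defined. The key point is that we take one extra refinement round after stabilization. We use that $G$ and $H$ are colored with a common color scheme: colors are the nested tuples/multisets from the recursive definition, so they have the same meaning in both graphs. This is also what makes the hypothesis $G \WLequiv{1} H$ a meaningful comparison. In fact, both items will hold for any $v,v'$ with equal stable colors; the equivalence hypothesis is only needed so that the comparison across graphs is set up consistently.

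\textbf{Refinement.} By induction on $i$, the color $\chi_i(v)$ has $\chi_{i-1}(v)$ as its first component. Unfolding this down to round $0$, $\chi_i(v)$ determines $\chi_0(v)=\att^V(v)$. So $\chi_\infty^G(v)=\chi_\infty^H(v')$ gives $\att_G^V(v)=\att_H^V(v')$.

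\textbf{Neighborhood preservation.} Let $i^\ast$ be the stabilization round used in the definition, so $\chi_\infty=\chi_{i^\ast+1}$. Stabilization is a statement about one graph, but we need the two graphs to stabilize at a common round. To arrange this, run the refinement on the disjoint union $G\sqcup H$; refinement is local, so the colors computed on the union agree with those computed on each graph separately. Then $\chi_{i^\ast+1}\equiv\chi_{i^\ast}$ on the whole union. Hence, for vertices $u\in G$ and $u'\in H$,
\[
\chi_{i^\ast}(u)=\chi_{i^\ast}(u') \iff \chi_\infty(u)=\chi_\infty(u').
\]
Now suppose $\chi_\infty^G(v)=\chi_\infty^H(v')$. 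The second component of $\chi_{i^\ast+1}$ gives
\[
\{\!\{(\chi_{i^\ast}(u),\att(u,v)) : u\in N_{in}(v)\}\!\} = \{\!\{(\chi_{i^\ast}(u'),\att(u',v')) : u'\in N_{in}(v')\}\!\}.
\]
Through the bijection between $\chi_{i^\ast}$-classes and $\chi_\infty$-classes, this becomes equality of the multisets of pairs $(\chi_\infty(u),\att(u,v))$. Equal multisets in particular have equal supports, which is exactly the claimed biconditional for incoming edges. The outgoing case is identical using the third component.

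\textbf{Main obstacle.} The delicate step is the common stabilization round. The fix above is to compute on the disjoint union, or equivalently to note that the refinement sequence stabilizes once the joint partition stops changing. This is exactly where the extra iteration after stabilization is used: it lets the neighborhood multisets be expressed in stable colors.
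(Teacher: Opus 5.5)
Your Refinement argument is fine. The gap is in Neighborhood Preservation, at the sentence asserting that $\chi_{i^\ast+1}\equiv\chi_{i^\ast}$ on all of $G\sqcup H$ because refinement is local. Locality only shows that the round-$i$ colors on the union agree with those computed separately. Stability of $G$ and of $H$ individually constrains only pairs of vertices in the same graph; it says nothing about whether a $\chi_{i^\ast}$-class containing vertices of both graphs splits in round $i^\ast+1$. Taking $i^\ast$ to be the union's stabilization round instead does not help, since then $\chi_\infty^G$ need not equal $\chi_{i^\ast+1}^G$. Cross-graph stability is exactly where $G\WLequiv{1}H$ is needed; the paper uses it to show that the combined coloring $\chi^\ast$ is stable on $G\uplus H$.

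Your remark that both items hold for arbitrary $v,v'$ with equal stable colors is therefore false. Let $G$ have vertices $v,u$ with attributes $a,b$ and the edge $(u,v)$. Let $H$ have vertices $v',u',w'$ with attributes $a,b,c$ and edges $(u',v'),(w',u')$. All edges carry attribute $\ell$. Each graph's round-$0$ partition is discrete, so $\chi_\infty=\chi_1$ in both, and $\chi_\infty^G(v)=\chi_\infty^H(v')=(a,\{\!\{(b,\ell)\}\!\},\emptyset)$. But the only in-neighbor $u$ of $v$ has stable color $(b,\emptyset,\{\!\{(a,\ell)\}\!\})$, whereas the only in-neighbor $u'$ of $v'$ has stable color $(b,\{\!\{(c,\ell)\}\!\},\{\!\{(a,\ell)\}\!\})$. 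So the biconditional fails for $C=\chi_\infty^G(u)$. On $G\sqcup H$, round $0$ groups $u$ with $u'$, round $1$ separates them, and the union stabilizes only at round $2$.

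To close the gap you must use the hypothesis. Equal stable colors force a common stabilization round $i^\ast$ (compare nesting depths, or argue as the paper does). Within each graph the $\chi_{i^\ast}$-classes are the $\chi_\infty$-classes. So there are maps $f_G,f_H$ sending each round-$i^\ast$ color $c$ occurring in that graph to the stable color of its class, and $f_G(c)$, $f_H(c)$ both have first component $c$. Suppose $c$ occurs in both graphs. Then $f_G(c)$ lies in the stable-color multiset of $G$, hence, by $G\WLequiv{1}H$, in that of $H$, at some vertex $y$. The round-$i^\ast$ color of $y$ is that first component $c$, so $f_G(c)=\chi_\infty^H(y)=f_H(c)$. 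This gives $\chi_{i^\ast}(x)=\chi_{i^\ast}(y)\iff\chi_\infty(x)=\chi_\infty(y)$ for all $x\in V_G$, $y\in V_H$, which is precisely what you asserted. From there your multiset-support argument goes through unchanged.
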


\begin{remark} The 1-WL algorithm actually guarantees  stronger properties than the above (see \Cref{thm:1wl_properties_extended}). However, the properties above suffice to get results in this section. 

\end{remark}

\begin{proof}[Proof Sketch of \Cref{thm:1wl_properties}]

Let $v \in V_G$ and $v' \in V_H$ such that $\chi_\infty^G(v) = \chi_\infty^H(v')$.
Also let $i_\infty \geq 0$ be minimal such that $\chi_{i_\infty}^{G} \equiv \chi_{i_\infty+1}^{G}$.
So $\chi_\infty^G \equiv \chi_{i_\infty+1}^{G}$ and, since $G \WLequiv{1} H$ and $\chi_\infty^G(v) = \chi_\infty^H(v')$, we also get that $\chi_\infty^H \equiv \chi_{i_\infty+1}^{H}$.
By definition, it holds for all $i \geq 0$ that, if $\chi_{i+1}^G(v) = \chi_{i+1}^H(v')$, then $\chi_{i}^G(v) = \chi_{i}^H(v')$.
So the first part follows by induction.

To prove Neighborhood Preservation, consider the coloring $\chi^* \colon V(G) \uplus V(H) \to \mathcal{C}$ defined on the disjoint union of $G$ and $H$ via $\chi^*(w) \coloneqq \chi_\infty^G(w)$ for all $w \in V(G)$, and $\chi^*(w') \coloneqq \chi_\infty^H(w')$ for all $w' \in V(H)$.
Since $G \WLequiv{1} H$, it can be verified that the coloring $\chi^*$ is stable for the disjoint union of $G$ and $H$, and in fact $\chi_{\infty}^{G \uplus H} \equiv \chi^*$ (since all refinements being made on $G$ or $H$ also happen on $G \uplus H$).
In particular, we get that $\chi_{\infty}^{G \uplus H}(v) = \chi_{\infty}^{G \uplus H}(v')$.
By definition, we conclude that
\[\Big\{\!\! \Big\{ \big(\chi_{\infty}^{G \uplus H}(u), \alpha_G^E(u,v)\big) \mid u \in N_{in}(v) \Big\}\!\! \Big\} = \Big\{\!\! \Big\{ \big(\chi_{\infty}^{G \uplus H}(u'), \alpha_H^E(u',v')\big) \mid u' \in N_{in}(v') \Big\}\!\! \Big\}.\]
In particular, a pair $(C,\ell)$ appears in the first set if and only if it appears in the second set.
\end{proof}

By repeatedly applying \Cref{thm:1wl_properties} (Neighborhood Preservation), every sequence of valid steps in $G$ can be identically mirrored step-by-step in $H$.
This sequential matching gives rise to the Walk Lifting property.

\begin{lemma}[Walk Lifting]
\label{lem:walk_lifting}
Let $G = (V_G,E_G,\att_G^V,\att_G^E)$ and $H = (V_H,E_H,\att_H^V,\att_H^E)$ be (\attributed) graphs such that $G \WLequiv{1} H$.
Also let $v_0 \in V_G$ and $v_0' \in V_H$ such that $\chi_\infty^G(v_0) = \chi_\infty^{G'}(v'_0)$.
For every directed walk $W = (v_0, e_1, v_1, \dots, e_r, v_r)$ in $G$, there exists a corresponding directed walk $W' = (v'_0, e'_1, v'_1, \dots, e'_r, v'_r)$ in $H$ such that $\chi_\infty^G(v_i) = \chi_\infty^{H}(v'_i)$ for all $0 \le i \le r$, and $\att_G^E(e_j) = \att_H^E(e'_j)$ for all $1 \le j \le r$.
\end{lemma}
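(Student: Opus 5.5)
We argue by induction on the length $r$ of the walk $W$, constructing $W'$ one edge at a time. For $r=0$ the walk consists only of $v_0$, and we take $W'=(v_0')$. The hypothesis $\chi_\infty^G(v_0)=\chi_\infty^H(v_0')$ is exactly the required color condition, and there are no edges to check.

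For the inductive step, suppose we have already built a prefix $(v_0',e_1',\dots,e_{j}',v_j')$ in $H$ with $\chi_\infty^G(v_i)=\chi_\infty^H(v_i')$ for all $i\le j$ and matching edge \attributes up to step $j$. Consider the next edge $e_{j+1}=(v_j,v_{j+1})$ of $W$. It is an outgoing edge of $v_j$ with \attribute $\ell\coloneqq\att_G^E(e_{j+1})$, and its head $v_{j+1}$ has stable color $C\coloneqq\chi_\infty^G(v_{j+1})$.

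We now apply the outgoing-edge version of Neighborhood Preservation from \Cref{thm:1wl_properties} to the pair $v_j,v_j'$. This requires $G\WLequiv{1}H$ and equality of their stable colors, and both hold. The lemma yields an edge $e_{j+1}'=(v_j',v_{j+1}')\in E_H$ with $\att_H^E(e_{j+1}')=\ell$ and $\chi_\infty^H(v_{j+1}')=C$. Appending this edge and vertex extends the prefix while preserving both invariants. After $r$ steps we obtain $W'$ with all the stated properties.

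Because walks may repeat vertices, no injectivity is needed, and the construction is a purely local greedy choice. The only point requiring care is to use the \emph{outgoing} direction of Neighborhood Preservation, which \Cref{thm:1wl_properties} states as the analogous bi-conditional, since the walk is directed. (The statement's $\chi_\infty^{G'}(v_0')$ should be read as $\chi_\infty^H(v_0')$.) I expect no real obstacle: all of the substance sits in \Cref{thm:1wl_properties}, whose stable coloring of the disjoint union $G\uplus H$ makes colors comparable across the two graphs, and the present lemma is a direct iteration of it.
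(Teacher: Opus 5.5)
Your proposal is correct and matches the paper's proof: both use induction on the walk length, extending the lifted prefix at each step by applying the outgoing-edge case of Neighborhood Preservation from \Cref{thm:1wl_properties} to the current pair $v_j, v_j'$. You are also right to read $\chi_\infty^{G'}$ as $\chi_\infty^{H}$, since the paper's own proof repeats the same notational slip.
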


\begin{proof}
We proceed by induction on the length of the walk $r$. 

In the case case $r = 0$, the walk $W$ consists of a single vertex $v_0$. We set $W'$ to be the walk consisting only of $v'_0$. By the lemma's hypothesis, $\chi_\infty^G(v_0) = \chi_\infty^{H}(v'_0)$, so the condition holds trivially.

So assume the claim holds for walks of length $k < r$. Let $W_k = (v_0, \dots, e_k, v_k)$ be the prefix of length $k$ in $G$, which by the inductive hypothesis lifts to a valid walk $W'_k = (v'_0, \dots, e'_k, v'_k)$ in $H$ where $\chi_\infty^G(v_k) = \chi_\infty^{H}(v'_k)$. 

The next step in $W$ traverses the edge $e_{k+1} = (v_k, v_{k+1})$ with some \attribute $\ell = \att_G^E(e_{k+1})$. Let $C = \chi_\infty^G(v_{k+1})$ be the stable color of the destination vertex. By \Cref{thm:1wl_properties} (Neighborhood Preservation), since $\chi_\infty^G(v_k) = \chi_\infty^{G'}(v'_k)$, the existence of an outgoing edge from $v_k$ to a vertex of color $C$ with \attribute $\ell$ guarantees the existence of an outgoing edge $e'_{k+1} = (v'_k, v'_{k+1})$ in $H$ with \attribute $\att_H^E(e'_{k+1}) = \ell$ such that $\chi_\infty^{H}(v'_{k+1}) = C$. 
Appending this edge $e'_{k+1}$ and vertex $v'_{k+1}$ to $W'_k$ yields a valid walk of length $k+1$ satisfying all conditions.
\end{proof}

\paragraph{Runtime.} The straightforward mathematical definition of 1-WL constructs nested multisets as colors, leading to exponentially growing color representations (as strings) that are highly inefficient to compute and store. However, by replacing these unwieldy nested tuples with canonical integer identifiers at each iterative step, one obtains the near-linear runtimes required for our analysis \cite{CardonC82,paige1987three, berkholz2017tight}. 
More precisely, we rely on the following result.

\begin{theorem}[\cite{berkholz2017tight}]
    \label{thm:1-wl-implementation}
    There is an algorithm, that given an edge-labeled graph $G$, outputs a vertex-coloring $\chi^G\colon V(G) \to [|V(G)|]$ in time $$O((|V(G)| + |E(G)|) \log |V(G)|)$$ such that the following properties hold:
    \begin{enumerate}
        \item For every graph $G$ it holds that $\chi^G \equiv \chi^G_\infty$.
        \item For every two graphs $G,H$ such that $G \WLequiv{1} H$ it holds that $(G,\chi^G) \WLequiv{1} (H,\chi^{H})$.
        \item For every two isomorphic graphs $G,H$ and every isomorphism $\varphi\colon G \cong H$ it holds that $\chi^G(v) = \chi^{H}(\varphi(v))$ for all $v \in V(G)$.
    \end{enumerate}
\end{theorem}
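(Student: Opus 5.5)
The plan is to take the classical partition-refinement implementation of color refinement (Cardon--Crochemore, Paige--Tarjan) and make every choice it makes depend only on color data. Property~1 will then be the usual correctness argument. Properties~2 and~3 will follow because the execution trace is identical on equivalent inputs. The running time comes from the ``process the smaller half'' strategy.

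\textbf{Step 1 (normalization).} Sort the edge attributes and vertex attributes of $G$ and replace them by their ranks. This takes $O((n+m)\log n)$ time and is canonical: isomorphic graphs, and more generally $1$-WL equivalent graphs, contain the same attribute values. To keep property~2 meaningful across graphs, I would make the integer names depend only on the sorted order of values. The initial coloring $\chi_0$ is then encoded by these ranks.

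\textbf{Step 2 (canonical refinement).} Maintain the color classes together with a queue of splitter classes, and initialize the queue in increasing order of color id. To process a splitter class $S$, do the following for each edge direction and each attribute $a$:
\begin{enumerate}
\item For every vertex $v$, count the $a$-labelled in- and out-edges between $v$ and $S$; only vertices adjacent to $S$ are touched.
\item Group the touched vertices of each class $C$ by their count vector.
\item Split $C$ into the resulting groups, sorting the groups lexicographically by (count vector, size) with bucket or radix sort.
\item Let the part that the order identifies as canonical largest (ties broken by the sorted position) keep the old id of $C$. Give the remaining parts fresh consecutive ids in sorted order.
\item Enqueue all parts except that largest one. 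If $C$ itself was still in the queue, enqueue all parts.
\end{enumerate}
Standard Hopcroft accounting shows that each vertex lies in a processed splitter at most $O(\log n)$ times, since each time its class at least halves. Each such processing scans its incident edges, which gives $O((n+m)\log n)$ in total. Sorting fits into the same bound via bucket sort over counts bounded by the degree.

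\textbf{Step 3 (correctness, property~1).} Two facts give $\chi^G\equiv\chi^G_\infty$. First, a split is only made when two vertices differ in their count toward some class, and $\chi_\infty$ separates such vertices; so the final partition is never finer than $\chi_\infty$. Second, on termination the partition is stable with respect to every class. This uses the Hopcroft argument: skipping the largest part is sound because stability with respect to $C$ and to the other parts implies stability with respect to the largest part. A stable partition refining $\chi_0$ is at least as fine as $\chi_\infty$, since $\chi_\infty$ is the coarsest such partition.

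\textbf{Step 4 (properties~2 and~3).} Define the \emph{trace} of a run as the sequence of processed splitter ids, and for each split, the list of (old id, count vector, part size, new id). For property~3, induct over the steps of the algorithm with an isomorphism $\varphi$: every decision depends only on ids, counts and sizes, all of which $\varphi$ preserves. Hence the runs on $G$ and $H$ coincide step by step and $\chi^H(\varphi(v))=\chi^G(v)$.

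For property~2, which I expect to be the main obstacle, the same induction must work with only $G\WLequiv{1}H$ available. The invariant I would maintain is: after each step there is a color-preserving correspondence such that each id class is a union of $\chi_\infty$ classes in the same way in both graphs. Given this, each class has the same size in both graphs, and vertices with equal stable colors have equal count vectors towards any id class. The last claim is where I would invoke Neighborhood Preservation in the multiset form, i.e., stability of $\chi^*$ on $G\uplus H$ from the proof of \Cref{thm:1wl_properties}. The split outcomes are therefore identical, and the traces agree. It follows that the id $i$ denotes the same stable color in both graphs, so the multisets of $(\chi_\infty,\chi)$ pairs agree and $(G,\chi^G)\WLequiv{1}(H,\chi^H)$.
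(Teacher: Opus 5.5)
The paper gives no proof of this statement; it is imported from Berkholz--Bonsma--Grohe, whose algorithm is the kind of Hopcroft-style refinement with canonically named colours that you reconstruct. Your Step~3 and the trace argument of Step~4 are sound \emph{provided} every ordering and naming decision is a function of ids, counts and sizes. The gap is that Step~2 does not achieve this within the claimed time.

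\textbf{Order of split classes.} When one splitter $S$ splits several classes, the fresh ids (and the queue order of the new parts) depend on the order in which these classes are handled, and Step~2 leaves this order unspecified. If it is the order in which touched vertices are met while scanning, property~3 already fails: two isomorphic copies with different vertex numberings hand out ids in different orders. Step~4 needs a canonical order, say increasing old id. However, sorting all $t_S\le e_S$ touched classes of every splitter costs $O(t_S\log t_S)$, and the Hopcroft accounting $\sum_S e_S=O(m\log n)$ then only yields $O(m\log^2 n)$. The missing observation is that only classes that \emph{actually split} need to be ordered. Whether a touched class splits can be decided in time linear in its touched vertices, independently of scan order. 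Every split increases the number of classes, which never exceeds $n$, so there are at most $n-1$ split events in the whole run. Comparison-sorting just these costs $O(n\log n)$ in total.

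\textbf{Attribute coordinates of count vectors.} Your remark that ``bucket sort over counts bounded by the degree'' suffices covers the count coordinates, but not the attribute coordinate. Your count vectors are indexed by (direction, attribute) pairs, and there can be up to $m$ distinct attribute ranks. Bucket sorting over that range costs $\Theta(m)$ per splitter. Comparison-sorting the attribute keys met at $S$ again only gives $O(m\log^2 n)$. Processing attributes one at a time just moves the problem, since you then need a canonical order of the attributes incident to $S$.

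The standard fix is to first subdivide every edge $(u,v)$ with attribute $a$ into $u\to x_{uv}\to v$, giving $x_{uv}$ the vertex colour $(\mathsf{edge},a)$ and marking original vertices. Then:
\begin{itemize}
\item a splitter induces only an (in-count, out-count) pair per vertex, which bucket sort handles;
\item the subdivided graph has $O(n+m)$ vertices and edges, and $\log(n+m)=O(\log n)$;
\item refining it and restricting to $V(G)$ gives a colouring equivalent to $\chi^G_\infty$, compatibly across $1$-WL equivalent graphs.
\end{itemize}
Finally, re-rank the ids occurring on $V(G)$ into $[|V(G)|]$ in increasing order. This is canonical and costs $O(n\log n)$. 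With these two changes, your Step~4 trace argument goes through as written.
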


Note that the output coloring can effectively be treated as the 1-WL coloring. One key difference is that, although $\chi_\infty^G(v) \neq \chi_\infty^H(w)$, it may be that $\chi^{G}(v) = \chi^{H}(w)$ since the exponentially growing color representations are compressed into a set of size $n$.
However, this usually does not cause any problems.

The coloring from \Cref{thm:1-wl-implementation} can be used to obtain a \emph{complete invariant} for 1-WL equivalence (see, e.g., \cite{Otto17}).
For the $\cspName$ and $\energyName$ (and other problems; see Appendix~\ref{app:1wl}), this gives a Invariant and Solution-Preserving Signature.

\begin{theorem}
    \label{thm:1-wl-canon}
    There is a deterministic algorithm that, given an \attributed graph $G$, outputs a string $\operatorname{WL}_1(G) \in \{0,1\}^*$ in time $O((|V(G)| + |E(G)|) \log |V(G)|)$, such that
    for every two input graphs $G,H$ it holds that $G \WLequiv{1} H$ if and only if $\operatorname{WL}_1(G) = \operatorname{WL}_1(H)$.
\end{theorem}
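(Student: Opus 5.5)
The plan is to build $\operatorname{WL}_1(G)$ from the coloring $\chi^G$ of \Cref{thm:1-wl-implementation}, canonically. The difficulty is that the integer names in $\chi^G$ are produced deterministically, but the theorem does not say that equal integers in $G$ and $H$ denote equal abstract stable colors. Directly outputting the color histogram $\{\!\{\chi^G(v)\}\!\}$ is therefore not obviously a complete invariant. Instead, I will output a canonical description of the \emph{quotient graph} of the stable partition, in which each color is described by its initial attribute and its colored neighborhood.

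Concretely, I first compute $\chi^G$ in time $O((n+m)\log n)$. I then sort the color classes by integer name $c \in [n]$. For each class $c$, I record three things: its size $|\chi^{G,-1}(c)|$, the vertex attribute $\att_V$ shared by its members, and, for a representative $v$ of the class, the sorted list of triples $(c', \ell, \text{mult})$ for outgoing edges and the analogous list for incoming edges. The output string is the concatenation of these records in increasing order of $c$. Building the string takes one pass over the edges followed by a radix or comparison sort, so it stays within $O((n+m)\log n)$. Attributes are compared as encoded values, which costs $O(\log n)$ per comparison after a global sort.

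\textbf{Completeness ($\Leftarrow$).} Suppose $\operatorname{WL}_1(G)=\operatorname{WL}_1(H)$. Define the coloring $\chi^*$ on $G \uplus H$ by integer names. The equal strings say that members of class $c$ in either graph have the same attribute and the same multiset of (neighbor color, edge attribute) pairs, since $\chi^G$ is stable and any representative is typical. Hence $\chi^*$ is a stable coloring of the disjoint union that refines the initial attributes. Stable colorings are coarsened by, that is refined into by, $\chi_\infty$: by induction on $i$, $\chi^*(x)=\chi^*(y)$ implies $\chi_i(x)=\chi_i(y)$. Therefore vertices with equal integer names receive equal stable colors in $G\uplus H$. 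Combined with the equal class sizes, the histograms of stable colors agree, which gives $G \WLequiv{1} H$. I will use the disjoint-union reasoning from the proof of \Cref{thm:1wl_properties} as the bridge between $\chi_\infty^{G\uplus H}$ and $\chi_\infty^G$ and $\chi_\infty^H$.

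\textbf{Soundness ($\Rightarrow$).} Suppose $G\WLequiv{1}H$. Part 2 of \Cref{thm:1-wl-implementation} gives $(G,\chi^G)\WLequiv{1}(H,\chi^H)$. Together with part 1, this means that vertices $v,w$ with $\chi_\infty^G(v)=\chi_\infty^H(w)$ receive the same integer name, and vice versa. The integer names thus induce the same bijection to abstract stable colors in both graphs. Since abstract stable colors determine class sizes, attributes, and neighbor-color multisets (as in \Cref{thm:1wl_properties}), every record coincides, and so do the strings.

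I expect the main obstacle to be this soundness step: extracting from part 2 that the integer names are \emph{consistently} assigned across $G$ and $H$. Specifically, I must show that $(G,\chi^G)\WLequiv{1}(H,\chi^H)$ forces $\chi^G(v)=\chi^H(w)$ whenever $v,w$ have the same stable color, and not merely equal name histograms. I would argue this by running 1-WL on $G\uplus H$ with $\chi^G\cup\chi^H$ as the initial colors. Since $\chi^G\equiv\chi_\infty^G$ and $\chi^H\equiv\chi_\infty^H$, this initial coloring is already stable within each graph. So if two vertices with equal stable color carried different names, the refined coloring on the union would be strictly finer than $\chi_\infty^{G\uplus H}$ restricted to each side. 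I then need to rule out that this finer coloring can still have matching histograms across $G$ and $H$; this is the point where I expect to need a careful counting argument over the classes of $\chi_\infty^{G\uplus H}$.
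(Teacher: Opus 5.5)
Your string is the paper's: class sizes, class attributes and colored edge counts between classes, listed in order of the integer names from \Cref{thm:1-wl-implementation}. Recording per-representative counts instead of total counts makes no difference, by stability. Your $(\Leftarrow)$ argument via the coarsest stable coloring is correct and more explicit than the paper's ``easily seen''.

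The gap is the step you leave open in $(\Rightarrow)$: that vertices of $G$ and $H$ with equal stable colors receive equal integer names. Without it nothing forces the two strings to agree, and the paper's sketch silently relies on the same fact. This step cannot be derived if, as you propose, you run 1-WL on $G\uplus H$ with the names $\chi^G\cup\chi^H$ alone as initial colors. The resulting coloring then need not refine $\chi_\infty^{G\uplus H}$, and part~2 can hold with inconsistent names. For example, take $s,t$ joined by $s\to t$ and $t\to s$ and otherwise symmetric apart from their attributes, named $1,2$ in $G$ and $2,1$ in a non-isomorphic but 1-WL-equivalent $H$; the strings then differ. You must read $(G,\chi^G)$ in part~2 as carrying the initial colors $v\mapsto(\att_V(v),\chi^G(v))$.

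With that reading the step is short, and no delicate counting is needed. Let $\psi$ be the stable coloring of $G\uplus H$ started from $(\att_V,\chi)$.
\begin{itemize}
\item[(i)] $\psi$ refines $\chi_\infty^{G\uplus H}$, by the same induction you already use.
\item[(ii)] On each side, $\psi$ induces the partition of $\chi^G$ (resp.\ $\chi^H$), hence that of $\chi_\infty^{G\uplus H}$ restricted to that side, because that partition is already stable and refines $\att_V$. So $\psi$ never splits a class within one side; only cross-side splits are possible.
\item[(iii)] By part~2, transferred to the union as in the proof of \Cref{thm:1wl_properties}, every $\psi$-color has the same multiplicity in $G$ and in $H$.
\end{itemize}
Now take a class $C$ of $\chi_\infty^{G\uplus H}$. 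Both $C\cap V(G)$ and $C\cap V(H)$ are nonempty because $G\WLequiv{1}H$. By (i) and (ii), the $\psi$-class of any vertex in $C\cap V(G)$ is either $C\cap V(G)$ or all of $C$. The former would be a $\psi$-color absent from $H$, contradicting (iii). So $C$ is a single $\psi$-class and therefore carries a single name.

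Finally, showing that the per-representative neighbor lists coincide needs the multiset statement \Cref{thm:1wl_properties_extended}. \Cref{thm:1wl_properties} only gives existence of a matching neighbor, not equal multiplicities.
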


\begin{proof}[Proof Sketch]
    On the input $G = (V,E,\alpha_V,\alpha_E)$, we can compute the coloring $\chi^G$ via \cref{thm:1-wl-implementation}.
    Now, we build the string $\operatorname{WL}_1(G) \in \{0,1\}^*$ to encode the following information:
    \begin{itemize}
        \item For each color $a$ in the image $\chi^G$, the size of $(\chi^G)^{-1}(a)$ and $\alpha_V(v)$ for some (and thus for all) $v \in (\chi^G)^{-1}(a)$ (in increasing order of colors $a \in [|V(G)|]$).
        \item For each pair of colors $a,a'$ in the image $\chi^G$, and each edge \attribute $\ell$ in the image of $\alpha_E$, the number of edges from $(\chi^G)^{-1}(a)$ to $(\chi^G)^{-1}(a')$ with edge \attribute $\ell$.
        If this number is zero, then we do not explicitly encode this information in the string.
    \end{itemize}
    Note that the resulting string has length at most $O((|V(G)| + |E(G)|) \log |V(G)|)$, and can be computed in the same time.
    It is easily seen that $G \WLequiv{1} H$ if and only if $\operatorname{WL}_1(G) = \operatorname{WL}_1(H)$.
\end{proof}

\subsection{1-WL Equivalence Implies Equal Optimality}

Now, let us return to $\cspName$ and $\energyName$.
In order to prove \Cref{cor:csp_with_preprocessing,cor:energy_preprocessing} we argue that 1-WL equivalent instances achieve the same optimal value.
As a result, it suffices to compute complete invariants for 1-WL for all priors, and store optimal solutions for them.

\begin{theorem}
\label{thm:csp-1-wl}
Let 
\begin{align*}
G &= (V,E,c:E\to \mathbb{R}_{\ge 0},\, \ell:E\to \mathbb{R}_{\ge 0},\, s,t, B)
\qquad\text{and}\qquad \\
G' &= (V',E',c':E'\to \mathbb{R}_{\ge 0},\, \ell':E'\to \mathbb{R}_{\ge 0},\, s',t', B)
\end{align*}
be two instances of the $\cspName$ ($\csp$) problem.
If $G \WLequiv{1} G'$, then they have the same optimal value for $\csp$.
\end{theorem}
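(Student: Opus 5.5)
By symmetry of $\WLequiv{1}$, it suffices to show $\opt(G') \le \opt(G)$. The argument lifts an optimal path of $G$ to a walk in $G'$ via \Cref{lem:walk_lifting} and then shortcuts that walk to a simple path.

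We model each instance as an \attributed directed graph. The vertex \attribute marks $s$, $t$, or $\bot$, and the edge \attribute is the pair $(c(e),\ell(e))$. If $\opt(G)=\infty$ there is nothing to prove. Otherwise, fix an optimal feasible $s$-$t$ path $P^*=(s=v_0,e_1,v_1,\dots,e_r,v_r=t)$ in $G$, so that $c(P^*)=\opt(G)$ and $\ell(P^*)\le B$.

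First we must find a vertex of $G'$ with the same stable color as $s$. Since $G \WLequiv{1} G'$, the multisets of stable colors agree, so some vertex $x'\in V'$ satisfies $\chi_\infty^{G'}(x')=\chi_\infty^G(s)$. By the Refinement part of \Cref{thm:1wl_properties}, $\att_{G'}^V(x')=\mathsf{s}$, and since $s'$ is the unique vertex carrying that \attribute, $x'=s'$.

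Next we apply \Cref{lem:walk_lifting} with $v_0=s$ and $v_0'=s'$. This yields a walk $W'=(s',e'_1,v'_1,\dots,e'_r,v'_r)$ in $G'$ with $\chi_\infty^{G'}(v'_i)=\chi_\infty^G(v_i)$ for all $i$, and with identical edge \attributes step by step. The latter gives $c'(W')=c(P^*)$ and $\ell'(W')=\ell(P^*)\le B$, where sums are taken over the walk with multiplicity. Applying Refinement again to $v'_r$, whose color equals that of $t$, shows $v'_r$ carries \attribute $\mathsf{t}$, hence $v'_r=t'$.

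Finally, we shortcut $W'$. As long as $W'$ repeats a vertex, we delete the closed subwalk between the two occurrences. Because all costs and lengths are non-negative, this never increases total cost or total length. We end with a simple $s'$-$t'$ path $P'$ satisfying $c'(P')\le \opt(G)$ and $\ell'(P')\le B$. Thus $P'$ is feasible and $\opt(G')\le\opt(G)$. Exchanging the roles of $G$ and $G'$ gives equality; in particular, $\opt(G)=\infty$ if and only if $\opt(G')=\infty$.

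The main point requiring care is the endpoint identification: the lifted walk must start at $s'$ and end at $t'$. This relies on the terminals being encoded as unique vertex \attributes together with the Refinement property. The remaining steps are direct consequences of walk lifting and non-negativity of $c$ and $\ell$.
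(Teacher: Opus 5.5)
Your proposal is correct and follows the paper's proof essentially step for step. You reduce to $\opt(G')\le\opt(G)$ by symmetry, lift an optimal path via \Cref{lem:walk_lifting}, identify the endpoints as $s'$ and $t'$ via Refinement and the unique terminal attributes, and remove cycles using non-negativity of costs and lengths; your explicit derivation that some vertex of $G'$ shares the stable color of $s$ (from equality of the color multisets) and hence must be $s'$ is a small detail the paper only asserts.
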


Before turning to the proof, let us recall how instances $G = (V,E,c:E\to \mathbb{R}_{\ge 0},\, \ell:E\to \mathbb{R}_{\ge 0},\,s,t, B)$ of $\csp$ are interpreted as \attributed graphs.
The \attribute of each edge $e \in E$ with length $\ell(e)$ and cost $c(e)$ is the vector $\att_E(e) = (\ell(e), c(e))$.
The vertex attributes designate the terminals: $\att_V(v) = \mathsf{s}$ if $v$ is the source, $\att_V(v) = \mathsf{t}$ if $v$ is the sink, and $\att_V(v) = \bot$ otherwise.

\begin{proof}
Let $\opt(G)$ and $\opt(G')$ denote the optimal values. By symmetry, it suffices to show $\opt(G') \le \opt(G)$. If no feasible path exists in $G$, then $\opt(G) = \infty$ and the inequality trivially holds.

Assume a feasible optimal path exists, and let $P = (v_0, e_1, v_1, \dots, e_r, v_r)$ be an optimal directed path in $G$ from $v_0 = s$ to $v_r = t$ with total cost $C \ge 0$ and length $L \le B$.
Recall that each edge $e_j$ carries the \attribute $\att_E(e_j) = (c(e_j), \ell(e_j))$.

Because $G \WLequiv{1} G'$ and $\att_V$ isolates the distinguished vertices into singleton color classes, we obtain that $\chi_\infty^G(s)$ and $\chi_\infty^{G'}(s')$.
By \Cref{lem:walk_lifting} (Walk Lifting), there exists a directed walk $W' = (v_0', e_1', v_1', \dots, e_r', v_r')$ in $G'$ starting at $s'$ with the exact same sequence of edge \attributes $(c, \ell)$ and stable vertex colors.
In particular, $\chi_\infty^G(t) = \chi_\infty^G(v_r) = \chi_\infty^{G'}(v_r')$.
Since $t$ and $t'$ are distinguished via $\att_V$ and $\att_V'$, we obtain that $v_r' = t'$ by \Cref{thm:1wl_properties}.

The total cost and length of walk $W'$ are identical to $P$, meaning $W'$ is a feasible solution in $G'$ with cost $C$. If $W'$ contains cycles, they can be removed to form a simple directed path $P'$. Given that the edge lengths and costs are non-negative, removing cycles cannot increase the total cost or violate the length constraint. Thus, $\opt(G') \le C = \opt(G)$.
\end{proof}

\begin{theorem}
    \label{thm:energy-game-1-wl}
    Let $G = (V, E, V_1, V_2, w, v_0)$ and $G' = (V', E', V_1', V_2', w', v_0')$ be two $\energyName$ instances. If $G \WLequiv{1} G'$, then Player~1 has a winning strategy in $G$ from $v_0$ with initial energy $E_0$ if and only if Player~1 has a winning strategy in $G'$ from $v'_0$ with initial energy $E_0$.
\end{theorem}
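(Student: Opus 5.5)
The plan is a strategy-transfer argument in the style of \Cref{lem:walk_lifting}. We only prove that a winning strategy for Player~1 in $G$ yields one in $G'$; the converse follows by symmetry. Encode the instance as an \attributed graph. The vertex \attribute of $v$ records its owner (Player~1 or Player~2) and whether $v=v_0$. The edge \attribute is the weight $w(e)$. By \Cref{thm:1wl_properties} (Refinement), vertices with equal stable colors have the same owner, and $\chi_\infty^G(v_0)=\chi_\infty^{G'}(v_0')$ because $v_0$ and $v_0'$ are singleton classes, just as $s$ was in \Cref{thm:csp-1-wl}.

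Fix a winning strategy $\sigma$ for Player~1 in $G$ from $v_0$ with initial energy $E_0$; it may depend on the whole history. In $G'$, Player~1 keeps a shadow play in $G$ and maintains the invariant that the current vertices $v_j$ and $v_j'$ have equal stable colors and that all edges taken so far carry the same weights in both plays. Initially the invariant holds at $v_0,v_0'$. There are two kinds of steps.
\begin{itemize}
\item If $v_j'$ belongs to Player~1, then $v_j$ does too. Let $\sigma$ choose $(v_j,u)$ in the shadow play. By Neighborhood Preservation there is an edge $(v_j',u')$ with the same weight and $\chi_\infty^{G'}(u')=\chi_\infty^G(u)$; Player~1 moves along it in $G'$.
\item If $v_j'$ belongs to Player~2, the opponent picks some $(v_j',u')$. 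By Neighborhood Preservation, applied in the direction from $G'$ to $G$, there is an edge $(v_j,u)$ with the same weight and color. Append it to the shadow play, which is a legal Player~2 move in $G$.
\end{itemize}
In both cases the invariant is preserved. The shadow play is therefore consistent with $\sigma$, and its energy sequence coincides with that of the real play, since the weights agree step by step. Hence every energy level $E_j$ in the play in $G'$ is nonnegative, so this shadow strategy is winning in $G'$.

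The main subtlety is that the transferred strategy depends on the shadow history, so it is not memoryless. That is fine for the statement, because the definition of winning strategy allows history-dependence. One further point needs a remark. Neighborhood Preservation in \Cref{thm:1wl_properties} is stated only as existence of a matching edge, not as equality of counts, but existence is all the argument uses. We also need that the color-matching edge exists at every step including the first. This holds because $G$ has no terminal states, so every vertex has an outgoing edge, and the second step is symmetric in $G$ and $G'$. These checks are the only places needing care; the rest is a routine induction on the play length, mirroring the proof of \Cref{lem:walk_lifting}.
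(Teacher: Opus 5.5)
Your proposal is correct and matches the paper's proof essentially step for step. Both maintain a shadow play in $G$ with equal stable colors and equal edge weights. Both apply Neighborhood Preservation from $G$ to $G'$ at Player~1 vertices and from $G'$ to $G$ at Player~2 vertices, so the two energy sequences coincide. One small precision, which the paper also leaves implicit: fix a deterministic rule for picking the matching edge, so that $\sigma'$ is a well-defined function of the history in $G'$.
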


Again, let us first recall how instances $G = (V, E, V_1, V_2, w, v_0)$ are interpreted as \attributed graphs.
The \attribute of each edge $e \in E$ is $\att_E(e) = w(e)$.
The vertex attributes indicate the partition according to players and the initial vertex: $\att_V(v) = (i,\mathsf{s})$ if $v$ is the initial vertex, and $\att_V(v) = (i,\bot)$ otherwise, where in both cases $i \in \{1,2\}$ is the unique index such that $v \in V_i$.

\begin{proof}
By symmetry, it suffices to show that if Player~1 has a winning strategy $\sigma$ in $G$ starting at~$v_0$ with energy $E_0$, they also have a winning strategy $\sigma'$ in $G'$ starting at $v_0'$ with energy $E_0$.

We define $\sigma'$ dynamically by maintaining a simulated ``shadow play'' in $G$. Let the current finite prefix of the play in $G'$ be $\pi' = (v'_0, \dots, v'_k)$ and the maintained shadow play in $G$ be $\pi = (v_0, \dots, v_k)$. We enforce the invariant that $\chi_\infty^G(v_i) = \chi_\infty^{G'}(v'_i)$ and $w(v_{i-1}, v_i) = w'(v'_{i-1}, v'_i)$ for all $i \le k$.

\begin{enumerate}
    \item \textbf{Base Case:} For $k=0$, the play begins at $v'_0$. We set the shadow play to $v_0$. Since $G \WLequiv{1} G'$, we conclude that $\chi_\infty(v'_0) = \chi_\infty(v_0)$ using \Cref{thm:1wl_properties} (Refinement).
    
    \item \textbf{Player 1's Turn:} Suppose the current play is at $v'_k \in V'_1$. Because colors strictly preserve initial partitions (by \Cref{thm:1wl_properties}, Refinement), the shadow vertex $v_k$ satisfies $v_k \in V_1$. 
    In $G$, strategy $\sigma$ dictates an outgoing edge $(v_k, v_{k+1})$ with weight $W$. By \Cref{thm:1wl_properties} (Neighborhood Preservation), since $\chi_\infty^{G'}(v'_k) = \chi_\infty^G(v_k)$, there exists an outgoing edge in $G'$ to some vertex $v'_{k+1}$ such that $\chi_\infty^{G'}(v'_{k+1}) = \chi_\infty^G(v_{k+1})$ and $w'(v'_k, v'_{k+1}) = W$. Player~1's strategy $\sigma'$ chooses this edge, and we append $v_{k+1}$ to the shadow play.
    
    \item \textbf{Player 2's Turn:} Suppose the current play is at $v'_k \in V'_2$, meaning the shadow vertex $v_k$ satisfies $v_k \in V_2$. 
    Player~2 chooses an arbitrary edge $(v'_k, v'_{k+1})$ with weight $W$. By \Cref{thm:1wl_properties} (Neighborhood Preservation), since $\chi_\infty^G(v_k) = \chi_\infty^{G'}(v'_k)$, there exists an outgoing edge from $v_k$ to some vertex $v_{k+1}$ in $G$ such that $\chi_\infty^G(v_{k+1}) = \chi_\infty^{G'}(v'_{k+1})$ and $w(v_k, v_{k+1}) = W$. We append this $v_{k+1}$ to the shadow play.
\end{enumerate}

Through this induction, every infinite play $\pi'$ in $G'$ conforming to $\sigma'$ maps to a valid shadow play $\pi$ in $G$ conforming to $\sigma$. Crucially, the sequence of edge weights traversed in $\pi'$ is identical to that in $\pi$. Because $\sigma$ is winning in $G$, the energy level along $\pi$ satisfies $E_k \ge 0$ for all $k \ge 0$. Since the step-by-step weights match exactly, the energy level in $G'$ also satisfies $E'_k = E_k \ge 0$ for all $k \ge 0$. Thus, $\sigma'$ is a winning strategy for Player~1 in $G'$. 
\end{proof}

\subsection{Proof of \Cref{cor:csp_with_preprocessing,cor:energy_preprocessing}}

\begin{proof}[Proof of \Cref{cor:csp_with_preprocessing}]
We describe the different phases of the algorithm.

\textbf{Preprocessing:} For each graph $G_i \in \mathcal{S}$, we run the 1-WL algorithm (Theorem~\ref{thm:1-wl-implementation}) and compute its complete invariant for 1-WL equivalence (Theorem~\ref{thm:1-wl-canon}).
Both subroutines run in time $O((n+m)\log n)$ time.
We insert each string $\operatorname{WL}_1(G_i) \in \{0,1\}^*$ into a binary search tree, where the $i$-th letter determines whether we take the left/right subtree on the $i$-th level. 
This takes time proportional to the length of the string $\operatorname{WL}_1(G_i)$, i.e., it can be done in time $O((n+m)\log n)$.
For the corresponding node where the key $\operatorname{WL}_1(G_i)$ is saved, we save the $\opt_i$ as well as an optimal solution path (if provided) and the stable coloring $\chi^{G_i}$.

Note that, if two graphs $G_i, G_j \in \mathcal{S}$ yield the same invariant, they are 1-WL equivalent (Theorem~\ref{thm:1-wl-canon}).
In particular, by \cref{thm:csp-1-wl}, this implies $\opt_i = \opt_j$, and no conflicting values can be assigned to a single node of the search tree.
Overall, processing all $k$ graphs takes $O(k(n+m)\log n)$ time.

\textbf{Query:} Given the query graph $H$, we run the 1-WL algorithm (Theorem~\ref{thm:1-wl-implementation}) and compute its complete invariant for 1-WL equivalence (Theorem~\ref{thm:1-wl-canon}) in $O((n+m)\log n)$ time. 
Looking up the computed invariant in the binary search tree takes $O((n+m)\log n)$ time, allowing us to return $\opt(H) = \opt_i$ (which is correct by \cref{thm:csp-1-wl}).
If the string $\operatorname{WL}_1(H)$ is not contained in the search tree, then we can correctly claim that $H$ is not isomorphic to any graph in $\mathcal{S}$ because graphs that are not 1-WL equivalent are not isomorphic.
The total query time is $O((n+m)\log n)$.

\textbf{Reconstruction:} If optimal paths are stored, the lookup yields the optimal path $P_i = (v_0, e_1, v_1, \dots, e_r, v_r)$ from the equivalent graph $G_i$.
Because $H \WLequiv{1} G_i$, analogous to the proof of \Cref{thm:csp-1-wl}, we can inductively construct a correspondind walk in $H$: if we are currently at $u' \in V(H)$ corresponding to $v_{j} \in P_i$, we scan the outgoing edges of $u'$ to find an edge $e'$ with the same \attributes as $e_{j+1}$ that leads to a vertex $v'$ with the same 1-WL stable color as $v_{j+1}$. 
The existence of such an edge is guaranteed by \Cref{thm:1wl_properties} (Neighborhood Preservation). Scanning for the next edge takes time proportional to the out-degree of the current vertex. Over the entire path, this takes at most $O(n+m)$ time. This procedure yields a valid walk $W'$ in $H$ with identical cost and length to $P_i$. Finally, we remove cycles from $W'$ in $O(n)$ time to produce a simple optimal path $P_H$. The total query time for the path remains $O((n+m)\log n)$.
\end{proof}

We now turn from the proof of $\csp$ to the proof of \Cref{cor:energy_preprocessing}. The shadow play construction in the proof of Theorem~\ref{thm:energy-game-1-wl} may use memory, even when the original strategy is memoryless. To reconstruct memoryless strategies, we therefore use color-consistent strategies, defined as follows.

\begin{definition}[Color-Consistent Strategy]
Let $G=(V,E,V_1,V_2,w,v_0)$ be an $\energyName$ instance with stable 1-WL coloring $\chi_\infty^G$. A memoryless strategy $\sigma$ of Player~1 is \emph{color-consistent} if, for all $u,v\in V_1$ with $\chi_\infty^G(u)=\chi_\infty^G(v)$,
\[
\chi_\infty^G(\sigma(u))=\chi_\infty^G(\sigma(v)) \quad\text{and}\quad w(u,\sigma(u))=w(v,\sigma(v)).
\]
\end{definition}

The following lemma, whose proof is deferred to Appendix~\ref{app:color-consistent-strategies}, shows that a memoryless strategy optimal from every vertex can be made color-consistent in linear time, given the stable 1-WL coloring.

\begin{lemma}\label{lem:color-consistent-strategy}
Let $G=(V,E,V_1,V_2,w,v_0)$ be an $\energyName$ instance with $n$ vertices and $m$ edges. Suppose we are given its stable 1-WL coloring $\chi_\infty^G$ and a memoryless strategy $\sigma$ of Player~1 that wins from every vertex with the minimum possible initial energy. Then a color-consistent memoryless strategy $\widehat{\sigma}$ with the same optimality property can be computed in $O(n+m)$ time.
\end{lemma}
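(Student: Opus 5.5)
The plan is to pick one representative per stable color class and copy its choice to the whole class. For each stable color $C$ restricted to $V_1$, fix a representative $u_C \in V_1$ with $\chi_\infty^G(u_C)=C$. Record the pair $(D_C,W_C) \coloneqq (\chi_\infty^G(\sigma(u_C)), w(u_C,\sigma(u_C)))$. For every $v\in V_1$ of color $C$, let $\widehat\sigma(v)$ be any out-neighbor $x$ of $v$ with $\chi_\infty^G(x)=D_C$ and $w(v,x)=W_C$. Such an $x$ exists by \Cref{thm:1wl_properties} (Neighborhood Preservation), applied with $G=H$, since $u_C$ and $v$ share a stable color. On Player~2 vertices we keep nothing, because $\widehat\sigma$ is only defined on $V_1$.

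For the runtime, bucket the vertices by color in $O(n)$ time. For each $v\in V_1$, scan its outgoing edges once, comparing each edge against the stored pair $(D_C,W_C)$ with $O(1)$ lookups (colors are integers in $[n]$). This totals $O(n+m)$. By construction $\widehat\sigma$ is memoryless and color-consistent.

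The main step is optimality. Let $E^*(v)$ denote the minimum initial energy for Player~1 from $v$. First, $E^*$ is constant on stable color classes. This follows by applying \Cref{thm:energy-game-1-wl} to $G$ with two different initial vertices of the same stable color. To justify that, I will note that changing the initial-vertex attribute to $u$ or to $v$ gives 1-WL equivalent graphs: the stable coloring of $G$ with $u$ and with $v$ individualized agree up to renaming, because refinement from the common partition in which $u$ and $v$ were indistinguishable proceeds identically. Alternatively, I can bypass this by redoing the shadow-play argument of \Cref{thm:energy-game-1-wl} inside a single graph from $u$ to $v$; this uses only Neighborhood Preservation and is the cleaner route.

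Next comes the energy characterization. A memoryless strategy $\tau$ is optimal from every vertex iff $E^*$ is a valid "energy potential" for $\tau$. This means that for $v\in V_1$, $E^*(v)\ge \max(0, E^*(\tau(v)) - w(v,\tau(v)))$; for $v\in V_2$ the same inequality holds for all successors; and there is additionally no cycle in the $\tau$-restricted graph that Player~2 can use to drain energy. This is where I expect the main difficulty, since local inequalities alone do not forbid bad cycles when $E^*=\infty$ or when cycles have weight zero. The local part transfers directly: $\widehat\sigma(v)$ has the same color (hence the same $E^*$) and the same weight as $\sigma(u_C)$. To avoid the global cycle issue, I would argue via the shadow-play simulation instead. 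A play from $v$ consistent with $\widehat\sigma$ is shadowed, color by color and weight by weight, by a play in $G$ from $u_{\chi(v)}$ consistent with $\sigma$. At Player~1 nodes of color $C$, the shadow at the representative-colored vertex $y$ moves by $\sigma(y)$. But $\sigma(y)$ need not match $(D_C,W_C)$ unless $y=u_C$, so the shadow must instead be taken along any strategy agreeing with $\sigma$'s color/weight choice. To fix this, I would first define $\widehat\sigma$ and then compare it against the strategy $\sigma_C$ from the shadow of representatives via \Cref{lem:walk_lifting}, using the fact that all optimal plays have the same energy profile by color. If that fails, the fallback is the standard potential-based proof of energy games: $E^*$ is a least progress measure, and any strategy choosing successors satisfying the tight local inequality wins when $E^*<\infty$. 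That result would need citing.
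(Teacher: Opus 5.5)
Your construction of $\widehat\sigma$ and the $O(n+m)$ bound match the paper. Your second route for showing that $E^*$ is constant on stable color classes is also the paper's: shadow play inside $G$ using Neighborhood Preservation. Drop the first route. Its justification is false in general: individualizing two vertices of the same stable color need not give 1-WL-equivalent graphs. For instance, take a vertex of a $C_6$ versus a vertex of a $C_3$ in $C_6\sqcup C_3\sqcup C_3$.

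The genuine gap is the optimality step, which you never complete. You correctly note that the local inequality transfers from $u_C$ to every vertex of color $C$. You then claim a global condition against energy-draining cycles is also needed. Next you try a shadow play that, as you observe, breaks because $\sigma$ at a non-representative shadow vertex need not copy the representative's choice. The proposed fix is vague, and you end by deferring to an uncited ``least progress measure'' result.

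The missing idea is that no global condition is needed. Since $\sigma$ wins from every vertex, $E^*$ is finite and nonnegative, and the local inequality telescopes. Every edge $(u,x)$ consistent with $\widehat\sigma$ satisfies $w(u,x)\ge E^*(x)-E^*(u)$:
\begin{itemize}
\item For $u\in V_2$: $\sigma$ wins from $u$ with energy $E^*(u)$, so it must win from every successor $x$ with energy $E^*(u)+w(u,x)$.
\item For $u\in V_1$ of color $C$: we have $w(u,x)=w(u_C,\sigma(u_C))$, $E^*(x)=E^*(\sigma(u_C))$ and $E^*(u)=E^*(u_C)$. Apply the same argument to the $\sigma$-edge at $u_C$.
\end{itemize}
Hence, for every $\widehat\sigma$-consistent play from $v_0$ with $E_0\ge E^*(v_0)$,
\[
E_k = E_0+\sum_{j<k} w(v_j,v_{j+1}) \ge E_0 + E^*(v_k)-E^*(v_0)\ge E^*(v_k)\ge 0,
\]
and this telescoping is exactly the paper's proof. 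Negative $\widehat\sigma$-consistent cycles are excluded automatically, since summing the inequalities around a cycle gives total weight at least $0$. Zero-weight cycles are harmless, and tightness of the inequality is not required. Even vertices with $E^*=\infty$ would cause no trouble, because consistent edges out of finite-$E^*$ vertices lead only to finite-$E^*$ vertices.
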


\begin{proof}[Proof of \Cref{cor:energy_preprocessing}]
We describe the different phases of the algorithm.

\textbf{Preprocessing:} For each graph $G_i \in \mathcal{S}$, we run the 1-WL algorithm (Theorem~\ref{thm:1-wl-implementation}) and compute its complete invariant for 1-WL equivalence (Theorem~\ref{thm:1-wl-canon}).
Both subroutines run in time $O((n+m)\log n)$ time.
We insert each string $\operatorname{WL}_1(G_i) \in \{0,1\}^*$ into a binary search tree, where the $i$-th letter determines whether we take the left/right subtree on the $i$-th level. 
This takes time proportional to the length of the string $\operatorname{WL}_1(G_i)$, i.e., it can be done in time $O((n+m)\log n)$.
For the corresponding node where the key $\operatorname{WL}_1(G_i)$ is saved, we save $WIN_i = \opt(G_i)$ and the stable coloring $\chi^{G_i}$. If an optimal memoryless strategy $\sigma_i$ is provided, we apply Lemma~\ref{lem:color-consistent-strategy} to compute and store a color-consistent optimal memoryless strategy $\widehat{\sigma}_i$. This additional step takes $O(m+n)$ time.

Note that, if two graphs $G_i, G_j \in \mathcal{S}$ yield the same invariant, they are 1-WL equivalent (Theorem~\ref{thm:1-wl-canon}).
In particular, by \cref{thm:energy-game-1-wl}, this guarantees $WIN_i = WIN_j$, so no conflicting values can be assigned to a single node of the search tree.
Overall, processing all $k$ graphs takes $O(k(n+m)\log n)$ time.

\textbf{Query:} Given the query graph $H$, we run the 1-WL algorithm (Theorem~\ref{thm:1-wl-implementation}) and compute its complete invariant for 1-WL equivalence (Theorem~\ref{thm:1-wl-canon}) in $O((n+m)\log n)$ time. 
Looking up the computed invariant in the binary search tree takes $O((n+m)\log n)$ time, allowing us to return $\opt(H) = WIN_i$ (which is correct by \cref{thm:energy-game-1-wl}).
If the string $\operatorname{WL}_1(H)$ is not contained in the search tree, then we can correctly claim that $H$ is not isomorphic to any graph in $\mathcal{S}$ because graphs that are not 1-WL equivalent are not isomorphic.
The total query time is $O((n+m)\log n)$.

\textbf{Reconstruction:} Suppose $\opt(G_i)<\infty$ and a color-consistent optimal memoryless strategy $\widehat{\sigma}_i$ is stored. For each Player~1 color class $C$ in $G_i$, choose a representative $u_C$ and set $v_C=\widehat{\sigma}_i(u_C)$. For every Player~1 vertex $u'\in V_{H}$ with $\chi_\infty^H(u')=C$, choose a successor $v'$ satisfying
\[
w_H(u',v')=w_i(u_C,v_C)
\quad\text{and}\quad
\chi_\infty^H(v')=\chi_\infty^{G_i}(v_C),
\]
and define $\sigma_H(u')=v'$. Such a successor exists by \Cref{thm:1wl_properties} (Neighborhood Preservation). Selecting representatives and scanning the outgoing edges of Player~1 vertices takes $O(n+m)$ time, using the color identifiers computed by Theorem~\ref{thm:1-wl-implementation}.

To prove correctness, consider any walk $u_0'u_1'\dots$ consistent with $\sigma_H$, and choose $u_0\in V(G_i)$ with $\chi_\infty^{G_i}(u_0)=\chi_\infty^H(u_0')$. As in the proof of \cref{thm:energy-game-1-wl}, we construct a shadow walk $u_0u_1\dots$ in $G_i$ with matching vertex colors and edge weights. At Player~1 vertices, color consistency ensures that $\widehat{\sigma}_i(u_k)$ has the same successor color and edge weight as the move chosen by $\sigma_H$ at $u_k'$. At Player~2 vertices, Neighborhood Preservation supplies a matching edge. Hence the shadow walk is consistent with~$\widehat{\sigma}_i$.

The shadow-play argument in both directions shows that equally colored vertices have the same minimum initial energy. Since $\widehat{\sigma}_i$ is optimal from every starting vertex and the two walks have identical edge weights, $\sigma_H$ is optimal from every starting vertex as well. In particular, it wins from the designated initial vertex with energy $\opt(H)=\opt(G_i)$. The total query time remains $O((n+m)\log n)$.
\end{proof}

\section{More on Weisfeiler-Leman (Advanced Arguments)}
\label{sec:WL-for-Experts}

\newcommand{\WL}[2]{\chi_{\infty}^{#2,#1}}
\newcommand{\WLit}[3]{\chi_{#2}^{#3,#1}}

The two algorithms from \Cref{sec:1WL} highlight a general approach for obtaining algorithms in the Isomorphic-Priors model: if optimal solution values are preserved under WL equivalence (see, e.g., \Cref{thm:csp-1-wl,thm:energy-game-1-wl}), then we obtain efficient algorithms in our model.
This strategy is not limited to 1-WL, but also holds for higher-dimensional Weisfeiler-Leman.
More precisely, for each $\ell \geq 1$, the $\ell$-dimensional Weisfeiler-Leman ($\ell$-WL) algorithm is a heuristic for graph isomorphism testing that accumulates information by coloring $\ell$-tuples of vertices. For increasing $\ell \geq 1$, the algorithm becomes more expressive, but it also becomes computationally more expensive.

In this section, we present a systematic way to use the $\ell$-WL algorithm within the Isomorphic-Priors model.
After describing the algorithm in \cref{sec:wl-intro}, we begin by showing in \cref{sec:ell:WL:iso} that every problem $\Pi$ that is invariant under $\ell$-WL is polynomial time solvable inside the Isomorphic-Priors model (see \Cref{theo:WL:iso,theo:1WL:iso}). We then develop a technique that helps us prove when a problem is in fact $\ell$-WL invariant.

In \cref{sec:WL:CSP} and \cref{sec:WL:CST}, we apply this technique to the $\csp$ and $\cst$ problems and show that they are 1-WL invariant and 2-WL invariant, respectively. As a result, we obtain fast algorithms for both problems in the Isomorphic-Priors model.
Furthermore, additional examples are discussed in Appendix~\ref{app:1wl}.
However, the Weisfeiler-Leman algorithm has its limits. In \cref{sec:WL:fails}, we present an example of a problem that lies in $\cP$ (and can therefore be efficiently solved in the Isomorphic-Priors model) but that is not $\ell$-WL invariant for any constant $\ell$. This means that Weisfeiler-Leman cannot be used to obtain a fast algorithm for this problem in the Isomorphic-Priors model.

\subsection{The Weisfeiler-Leman Algorithm}
\label{sec:wl-intro}

We start with a description of the $\ell$-WL algorithm (see, e.g., \cite{GroheN21,Kiefer20,Kiefer20b}).
Fix some $\ell \geq 2$ and let $G = (V,E,\att_V,\att_E)$ be an \attributed graph.
For two colorings $\chi,\chi'\colon V^\ell \rightarrow \mathcal{C}$, we say $\chi$ \emph{refines} $\chi'$, denoted $\chi \preceq \chi'$, if $\chi(\bar v) = \chi(\bar w)$ implies $\chi'(\bar v) = \chi'(\bar w)$ for all $\bar v,\bar w \in V^\ell$.
The colorings $\chi$ and $\chi'$ are \emph{equivalent}, denoted $\chi \equiv \chi'$,  if $\chi \preceq \chi'$ and $\chi' \preceq \chi$.

We define the initial coloring $\WLit{\ell}{0}{G}\colon V^\ell \rightarrow \mathcal{C}$ cumputed by $\ell$-WL as the coloring where each tuple is colored with the isomorphism type of its underlying ordered subgraph.
More precisely, for a second \attributed graph $G' = (V',E',\att_V',\att_E')$, and vertices $v_1,\dots,v_\ell \in V$ and $v_1',\dots,v_\ell' \in V'$ we have $\WLit{\ell}{0}{G}(v_1,\dots,v_\ell) = \WLit{\ell}{0}{G'}(v_1',\dots,v_\ell')$
if and only if, for all $i,j \in [\ell]$, we have
\begin{itemize}
    \item $v_i = v_j$ if and only if $v_i' = v_j'$,
    \item $\att_V(v_i) = \att_V'(v_i')$,
    \item $(v_i,v_j) \in E$ if and only if $(v_i',v_j') \in E'$, and
    \item if $(v_i,v_j) \in E$, then $\att_E(v_i,v_j) = \att_E'(v_i',v_j')$.
\end{itemize}

We then recursively define the coloring $\WLit{\ell}{i}{G}$ obtained after $i$ rounds of the algorithm.
For $\bar v = (v_1,\dots,v_k) \in V^\ell$, set
\[\WLit{\ell}{i+1}{G}(\bar v) \coloneqq \big(\WLit{\ell}{i}{G}(\bar v), \mathcal{M}_i(\bar v)\big),\]
where
\[\mathcal{M}_i(\bar v) \coloneqq \Big\{\!\!\Big\{ \big(\WLit{\ell}{i}{G}(\bar v[w/1]),\dots,\WLit{\ell}{i}{G}(\bar v[w/\ell])\big) \;\Big\vert\; w \in V \Big\}\!\!\Big\}\]
and $\bar v[w/i] \coloneqq (v_1,\dots,v_{i-1},w,v_{i+1},\dots,v_\ell)$ is the tuple obtained from substituting the $i$-th entry of $\bar v$ with $w$.
By definition, $\WLit{\ell}{i+1}{G} \preceq \WLit{\ell}{i}{G}$ holds for all $i \geq 0$.
So there is a minimal~$i_\infty \geq 0$ such that $\WLit{\ell}{i_{\infty}}{G} \equiv \WLit{\ell}{i_{\infty}+1}{G}$, and we set $\WL{\ell}{G} \coloneqq \WLit{\ell}{i_\infty+1}{G}$.

\begin{definition}[$\ell$-WL Equivalence]
Two (\attributed) graphs $G$ and $H$ are \emph{$\ell$-WL equivalent} ($G \WLequiv{\ell} H$) if $\big\{\!\! \big\{ \WL{\ell}{G}(\bar v) \mid \bar v \in (V(G))^\ell \big\}\!\! \big\} = \big\{\!\! \big\{ \WL{\ell}{H}(\bar w) \mid \bar w \in (V(H))^\ell \big\}\!\! \big\}$.
\end{definition}

Observe that isomorphic (\attributed) graphs $G,H$ are always $\ell$-WL equivalent.

Given an (\attributed) graph $G$, the coloring $\WL{\ell}{G}$ can be computed in time $O(n^{\ell+1}\log n)$ \cite{immerman1990describing} (more precisely, similar to \Cref{thm:1-wl-implementation}, we compute a coloring equivalent to $\WL{\ell}{G}$).
Moreover, similar to \Cref{thm:1-wl-canon}, we can compute a complete invariant for $\ell$-WL equivalence (see, e.g., \cite{Otto17}).
In turn, this can be used as an Invariant and Solution-Preserving Signature for problems such as $\cstName$ (see \cref{sec:WL:CST}).

\begin{theorem}
    \label{thm:k-wl-canon}
    Let $\ell \geq 2$.
    There is a deterministic algorithm that, given an \attributed graph $G$, outputs a string $\operatorname{WL}_\ell(G) \in \{0,1\}^*$ in time $O(|V(G)|^{\ell+1} \log |V(G)|)$, such that
    for every two input graphs $G,H$ it holds that $G \WLequiv{\ell} H$ if and only if $\operatorname{WL}_\ell(G) = \operatorname{WL}_\ell(H)$.
\end{theorem}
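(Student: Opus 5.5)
We mirror the proof sketch of \Cref{thm:1-wl-canon}, replacing vertices by $\ell$-tuples. First, we compute a coloring $\chi^G\colon V(G)^\ell \to [|V(G)|^\ell]$ that is equivalent to $\WL{\ell}{G}$, using the implementation of \cite{immerman1990describing} in time $O(n^{\ell+1}\log n)$. We require the analogues of the three properties of \Cref{thm:1-wl-implementation}. First, $\chi^G \equiv \WL{\ell}{G}$. Second, the integer names are assigned consistently across graphs, so that the same refined color receives the same integer in every input. Third, the coloring is isomorphism-invariant. To guarantee the second property, we do not name colors arbitrarily. In each round we sort the signatures $\big(\chi_i(\bar v),\mathcal{M}_i(\bar v)\big)$ lexicographically, where $\mathcal{M}_i(\bar v)$ is represented as a sorted list of $\ell$-tuples of current integer colors. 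We then rename the colors by their rank. The initial colors (ordered isomorphism types of $\ell$-tuples, including attributes) are likewise named by rank under a fixed canonical order on these finite descriptions. Because every name depends only on the multiset of signatures present, equal multisets produce equal names.

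The catch is that ranking is relative to the colors present in the single graph $G$. For this reason the string must also record how the names were derived. We let $\operatorname{WL}_\ell(G)$ be the transcript of the refinement. For each round $i \le i_\infty+1$, it lists, in rank order, each new color together with its signature (old color, and the multiset of $\ell$-tuples of old colors with multiplicities) and the number of tuples receiving it. We then show $G \WLequiv{\ell} H$ iff the transcripts agree. For the ``if'' direction, a simple induction on $i$ shows that equal transcripts reconstruct the same nested colors $\WLit{\ell}{i}{\cdot}$ with the same multiplicities, and in particular the same final color histogram. For the ``only if'' direction, assume the stable histograms agree. Since $\WLit{\ell}{i+1}{}$ determines $\WLit{\ell}{i}{}$, and colors persist beyond stabilization, the histograms agree in every round. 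Equal histograms of nested colors then yield equal sorted signature lists, hence equal rank names and equal transcripts, again by induction. The step where $i_\infty$ is reached is also determined by the histograms, so both transcripts stop at the same round.

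For the running time, each round costs $O(n^{\ell+1})$ to write the signatures, since there are $n^\ell$ tuples and each has $n$ substitutions $w$. Sorting adds a $\log$ factor via radix sort on integer keys. At first glance the number of rounds could be as large as $n^\ell$, giving an unacceptable $n^{2\ell+1}$ bound. This is the main obstacle: the transcript and its computation must be charged in the amortized way of Immerman--Lander and Paige--Tarjan, recording only the classes that split, rather than naively rewriting everything every round. I would take the $O(n^{\ell+1}\log n)$ algorithm of \cite{immerman1990describing} as a black box producing a canonical (rank-named) refinement sequence. The transcript is then its log of split operations, listing for each split class the sizes and canonical signatures of the pieces. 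Its total length is bounded by the work of that algorithm, so the string has length $O(n^{\ell+1}\log n)$ and is produced in the same time. The equivalence argument above still applies, because the split log is a deterministic function of the sequence of histograms of the $\WLit{\ell}{i}{}$.
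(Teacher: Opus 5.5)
\textbf{Gap: the $O(n^{\ell+1}\log n)$ bound.} Your per-round rank naming, the round-by-round transcript, and your induction for both directions together give a correct complete invariant. The gap is the $O(n^{\ell+1}\log n)$ bound, which is the nontrivial part of the statement.

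Your fix takes \cite{immerman1990describing} as a black box that outputs a canonical, rank-named \emph{round-by-round} refinement. You then assert that its split log is a deterministic function of the histograms of the $\chi_i^{G,\ell}$. That result supplies neither property.
\begin{itemize}
\item A Hopcroft-style implementation (process all but the largest piece of each split class) never passes through the round partitions $\chi_i^{G,\ell}$. It refines against one splitter at a time in queue order, so its intermediate partitions and its log depend on that order.
\item Nothing guarantees that it names new classes consistently across different input graphs.
\item It cannot be producing your first-paragraph scheme either. Re-ranking the colors of all $n^\ell$ tuples costs $\Theta(n^\ell)$ per round, and the number of rounds is only bounded by $n^\ell$.
\end{itemize}
So the round induction on which both directions of your equivalence rest says nothing about the string the fast version actually outputs. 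To rescue your route, you would have to design and analyse a round-synchronous, canonically naming $\ell$-WL refinement running in $O(n^{\ell+1}\log n)$. That is a new algorithmic claim, not a citation.

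\textbf{What you are missing: no history is needed.} The paper indicates this route by pointing to \Cref{thm:1-wl-canon} together with \cite{immerman1990describing,Otto17}. First obtain a coloring $\chi^G \equiv \chi_\infty^{G,\ell}$ whose integer names are consistent across $\ell$-WL-equivalent graphs; this is the $\ell$-WL analogue of the properties in \Cref{thm:1-wl-implementation}. Then encode only the quotient of this \emph{final} coloring. For each name $a$, record
\begin{itemize}
\item the size of $(\chi^G)^{-1}(a)$,
\item the initial color $\chi_0^{G,\ell}$ of its tuples, and
\item for one representative $\bar v$, the multiset $\{\!\{(\chi^G(\bar v[w/1]),\dots,\chi^G(\bar v[w/\ell])) \mid w \in V(G)\}\!\}$.
\end{itemize}
This string has length $O(n^{\ell+1}\log n)$. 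The ``only if'' direction is exactly name consistency. For ``if'', equal strings mean the shared naming is stable in both graphs and refines the initial coloring. Induction on $i$ then shows that equally named tuples of $G$ and $H$ receive equal colors $\chi_i$ in every round, hence equal stable histograms. The fast algorithm is thus needed only to produce a canonically named final coloring, not for any property of its internal refinement order.
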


\subsection{$\ell$-WL Invariant Functions are in the Isomorphic-Priors Model}\label{sec:ell:WL:iso}

A function $f$ on (\attributed) graphs is called \emph{$\ell$-WL invariant} if $f(G) = f(H)$ for all (\attributed) graphs $G,H$ with $G \WLequiv{\ell} H$.\footnote{In the literature (see, e.g., \cite{GobelGR24}), the minimal $\ell \geq 1$ (if it exists) such that $f$ is $\ell$-WL invariant is also called the \emph{WL dimension} of $f$.}
We say that an optimization problem $\Pi$ is \emph{$\ell$-WL invariant} if $\opt_\Pi$ is $\ell$-WL invariant (where $\opt_\Pi$ denotes the function that maps an (\attributed) input graph to the value of an optimum solution).
For such problems, we obtain polynomial-time algorithm in the Isomorphic-Priors model using similar arguments as in Section \ref{sec:1WL}.

\begin{theorem}\label{theo:WL:iso}
    Let $\ell \geq 2$ and let $\Pi$ be an optimization problem that is $\ell$-WL invariant. 
    
    Then there is an algorithm that solves $\Pi$ with $k$ prior instances each of size at most $n$ in  $O(k n^{\ell+1} \log n)$ preprocessing time and $O( n^{\ell+1} \log n)$ query time. 
\end{theorem}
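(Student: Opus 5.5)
The plan is to instantiate \Cref{theo:IPM:via:sig} with the signature $f \coloneqq \operatorname{WL}_\ell$ from \Cref{thm:k-wl-canon}. That theorem gives a deterministic algorithm which, on an \attributed graph $G$ with at most $n$ vertices, outputs the string $\operatorname{WL}_\ell(G)\in\{0,1\}^*$ in time $T(n) = O(n^{\ell+1}\log n)$. It also guarantees that $\operatorname{WL}_\ell(G)=\operatorname{WL}_\ell(H)$ holds exactly when $G \WLequiv{\ell} H$. So it remains to check the two defining properties of an Invariant and Solution-Preserving Signature for $\Pi$.

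\emph{Permutation invariance.} Suppose $G \cong H$ via an isomorphism $\varphi$. By induction on the rounds $i$, $\varphi$ maps the colors $\WLit{\ell}{i}{G}(\bar v)$ to $\WLit{\ell}{i}{H}(\varphi(\bar v))$. The base case holds because an isomorphism preserves equality of entries, vertex \attributes, adjacency and edge \attributes, so it preserves the isomorphism type of each ordered tuple. The inductive step holds because $\varphi$ is a bijection on substituted vertices $w$, so it preserves the multisets $\mathcal{M}_i$. Hence the stable color multisets agree and $G \WLequiv{\ell} H$, which the paper already records as an observation. By \Cref{thm:k-wl-canon}, $\operatorname{WL}_\ell(G)=\operatorname{WL}_\ell(H)$.

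\emph{Solution preservation.} If $\operatorname{WL}_\ell(G)=\operatorname{WL}_\ell(H)$, then $G \WLequiv{\ell} H$ by \Cref{thm:k-wl-canon}. Since $\Pi$ is $\ell$-WL invariant, $\opt_\Pi(G)=\opt_\Pi(H)$.

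With both properties established, \Cref{theo:IPM:via:sig} directly yields $O(k\cdot n^{\ell+1}\log n)$ preprocessing time and $O(n^{\ell+1}\log n)$ query time. The algorithm computes the signatures of all priors and stores them, together with $\opt(G_i)$, in a trie. Two priors with equal signatures are $\ell$-WL equivalent and therefore have equal optimal values, so no conflicting values are stored. At query time it computes $\operatorname{WL}_\ell(H)$ and looks it up. If there is a match it returns the stored value; otherwise it reports non-isomorphism, which is correct by permutation invariance.

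The only point needing care, and the nearest thing to an obstacle, is that trie insertion and lookup must fit within the budget. The string $\operatorname{WL}_\ell(G)$ is produced in time $O(n^{\ell+1}\log n)$, so its length is bounded by the same quantity, and the trie operations are linear in the string length. I also need global problem parameters (e.g.\ a budget) to be fixed across instances or encoded as \attributes, as the model assumes.
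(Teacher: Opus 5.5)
Your proposal is correct and essentially matches the paper's proof: compute the complete invariant $\operatorname{WL}_\ell$ from \Cref{thm:k-wl-canon} for each prior, store it with $\opt_\Pi(G_i)$ in a trie, and answer queries by lookup, using that isomorphic graphs are $\ell$-WL equivalent and that $\ell$-WL equivalent graphs have equal optima. The only differences are presentational. You route the argument through \Cref{theo:IPM:via:sig} and spell out the round-by-round induction showing that isomorphism implies $\ell$-WL equivalence. The paper instead writes the trie algorithm out directly and takes that implication as an observation.
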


\begin{proof}
    Write $\mathcal{S} \coloneqq \{G_i\}_{i = 1\dots k}$ for the prior instances.
    We describe the different phases of the algorithm.

    \textbf{Preprocessing:} For each graph $G_i \in \mathcal{S}$, we compute a complete invariant $\operatorname{WL}_\ell(G) \in \{0,1\}^*$ for $G_i$ using Theorem~\ref{thm:k-wl-canon} in time $O( n^{\ell+1} \log n)$ time.
    We insert each string $\operatorname{WL}_\ell(G_i) \in \{0,1\}^*$ into a binary search tree, where the $i$-th letter determines whether we take the left/right subtree on the $i$-th level. 
    This takes time proportional to the length of the string $\operatorname{WL}_\ell(G_i)$, i.e., it can be done in time $O( n^{\ell+1} \log n)$.
    For the corresponding node where the key $\operatorname{WL}_\ell(G_i)$ is saved, we save $\opt_\Pi(G_i)$.

    Note that, if two graphs $G_i, G_j \in \mathcal{S}$ have the same complete invariant, they are $\ell$-WL equivalent (Theorem~\ref{thm:k-wl-canon}).
    In particular, since $\Pi$ is $\ell$-WL invariant, we get that $\opt_\Pi(G_i) = \opt_\Pi(G_j)$, and no conflicting values can be assigned to a single node of the search tree.
    Overall, processing all $k$ graphs takes $O(k n^{\ell+1} \log n)$ time.

    \textbf{Query:} Given the query graph $H$, we compute a complete invariant $\operatorname{WL}_\ell(H) \in \{0,1\}^*$ for $G_i$ using Theorem~\ref{thm:k-wl-canon} in time $O( n^{\ell+1} \log n)$ time. 
    Looking up the computed invariant in the binary search tree takes $O( n^{\ell+1} \log n)$ time, allowing us to return $\opt_\Pi(H) = \opt_\Pi(G_i)$ (which is correct since $\Pi$ is $\ell$-WL invariant).
    If the string $\operatorname{WL}_\ell(H)$ is not contained in the search tree, then we can correctly claim that $H$ is not isomorphic to any graph in $\mathcal{S}$ because graphs that are not $\ell$-WL equivalent are not isomorphic.
    The total query time is $O( n^{\ell+1} \log n)$.
\end{proof}

For 1-WL invariant optimization problems, we obtain the following variant which is proved analogously using Theorem \ref{thm:1-wl-canon}.

\begin{theorem}
    \label{theo:1WL:iso}
    Let $\Pi$ be an optimization problem that is 1-WL invariant. Then there is an algorithm that solves $\Pi$ with $k$ prior instances each of size at most $n$ in  $O(k(n+m) \log n)$ preprocessing time and $O((n+m) \log n)$ query time. 
\end{theorem}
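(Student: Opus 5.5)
The plan is to mirror the proof of \Cref{theo:WL:iso} exactly, replacing \Cref{thm:k-wl-canon} by \Cref{thm:1-wl-canon}. That theorem computes a complete invariant $\operatorname{WL}_1(G)\in\{0,1\}^*$ for 1-WL equivalence in time $O((n+m)\log n)$. Once the invariant is in hand, 1-WL invariance of $\Pi$ means $f=\operatorname{WL}_1$ is an Invariant and Solution-Preserving Signature. Indeed, isomorphic graphs are 1-WL equivalent, so $f$ is permutation invariant. And $\operatorname{WL}_1(G)=\operatorname{WL}_1(H)$ gives $G\WLequiv{1}H$, hence $\opt_\Pi(G)=\opt_\Pi(H)$. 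One could then simply invoke \Cref{theo:IPM:via:sig} with $T(n,m)=O((n+m)\log n)$. I would nevertheless spell out the phases explicitly, as the paper does.

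\textbf{Preprocessing.} For each prior $G_i$, compute $\operatorname{WL}_1(G_i)$ via \Cref{thm:1-wl-canon} in $O((n+m)\log n)$ time. The string has length $O((n+m)\log n)$, so it can be inserted into a binary trie in the same time. At the leaf for that string, store $\opt_\Pi(G_i)$. If two priors produce the same string, they are 1-WL equivalent, so their optimal values coincide and no conflict arises. Summing over the $k$ priors gives $O(k(n+m)\log n)$ total preprocessing time.

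\textbf{Query.} Given $H$, compute $\operatorname{WL}_1(H)$ in $O((n+m)\log n)$ time and walk down the trie in time proportional to the string length.
\begin{itemize}
\item If the string is found, return the stored value. This is correct because the stored prior is 1-WL equivalent to $H$, and $\Pi$ is 1-WL invariant.
\item If the string is not found, report that $H$ is not isomorphic to any prior. This is correct because every $G_i\cong H$ would satisfy $G_i\WLequiv{1}H$ and therefore produce the same string.
\end{itemize}

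The only point needing care is that \Cref{thm:1-wl-canon} is a genuine complete invariant across different graphs. In particular, it must be equality of strings, not merely of the compressed colorings $\chi^G$, that characterizes $\WLequiv{1}$. This is exactly what the theorem states, so I do not expect a real obstacle; the argument is routine bookkeeping.
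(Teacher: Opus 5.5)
Your proposal is correct and follows the paper's route: the paper proves this theorem ``analogously'' to \Cref{theo:WL:iso}, replacing \Cref{thm:k-wl-canon} with the $O((n+m)\log n)$-time complete 1-WL invariant of \Cref{thm:1-wl-canon}, storing each prior's invariant with its optimal value in a trie, and answering a query by lookup. Your remark that this is also a direct instance of \Cref{theo:IPM:via:sig} with $f=\operatorname{WL}_1$ is likewise valid.
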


We note that, in \Cref{cor:csp_with_preprocessing,cor:energy_preprocessing}, we can also reconstruct an optimal solution of a query graphs if we are given optimal solutions for all priors.
While it does not seem possible to obtain such a statement in the general setting of \Cref{theo:WL:iso,theo:1WL:iso}, we note that all our arguments for specific problems do imply that optimal solutions can indeed be reconstructed (see Appendix~\ref{app:1wl} for further examples).

In the following, we use an alternative way to characterize when an optimization problem is $\ell$-WL invariant. We limit ourself to minimization problems. However, our approach can easily be extended to maximization problems. Remember that $\mathcal{G}$ is the set of valid attributed input graphs.

\begin{definition}
    Let $\Pi$
    be an minimization problem. We call a triple $(W, \WLcheck \colon \Graphs \times W  \to \mathbb{R}_{\geq 0}, \WLcost \colon W \to \mathbb{R})$ a \emph{witness function of $\opt_\Pi$} if:
    \begin{itemize}
        \item  For all $G$ and $X \in W$: if $\WLcheck(G, X) > 0$ then $\opt_\Pi(G) \leq \WLcost(X)$.
        \item If $\opt_\Pi(G) < \infty$ then there is a $X \in W$ with $\opt_\Pi(G) = \WLcost(X)$
        and $\WLcheck(G, X) > 0$.
    \end{itemize}
\end{definition}

The intuition behind $(W, \WLcheck \colon \Graphs \times W \to \mathbb{R}_{\geq 0}, \WLcost \colon W \to \mathbb{R})$ is that it characterizes the witnesses of the minimization problem $\Pi$. 
The function ``$\WLcheck$'' checks for an element $X$ if there exist a feasible solution (which is the case if $\WLcheck(G, X) > 0$) with cost at most $\WLcost(X)$.
Further, if $G$ has a feasible solution (i.e. $\opt_\Pi(G) < \infty$) then there is a \emph{witness} $X \in G$ that attains the optimal value (i.e. $\opt_\Pi(G) = \WLcost(X)$ and $\WLcheck(G, X) > 0$). 

In the following, we show that we can a witness function  to show that $\opt_\Pi$ is $\ell$-WL. To this end, we show that the witness function itself is $\ell$-WL invariant. 

\begin{definition}
    Let $\WLcheck \colon \Graphs \times W \to \mathbb{R}_{\geq 0} $ be a function. We say that $\WLcheck$ is \emph{$\ell$-WL invariant on $\Graphs$} if: for all graphs $G, H \in \Graphs$ with $G \WLequiv{\ell} H$ and all $X \in W$ we obtain that $\WLcheck(G, X) > 0$ if and only if $\WLcheck(H, X) > 0$.
\end{definition}

\begin{lemma}\label{lem:witness:inv:implies:inv}
    Let $\Pi$
    be an minimization problem and  $(W, \WLcheck \colon \Graphs \times W  \to \mathbb{R}_{\geq 0}, \WLcost \colon W \to \mathbb{R})$ be a witness function of $\opt_\Pi$. If $\WLcheck$ is $\ell$-WL invariant on $\Graphs$ then $\Pi$ is $\ell$-WL invariant.
\end{lemma}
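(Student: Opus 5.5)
The plan is a direct two-inequality argument using only the two defining properties of a witness function and the assumed $\ell$-WL invariance of $\WLcheck$. Fix $G,H\in\Graphs$ with $G \WLequiv{\ell} H$. We must show $\opt_\Pi(G)=\opt_\Pi(H)$. Since $\WLequiv{\ell}$ is symmetric, it suffices to prove $\opt_\Pi(H)\le \opt_\Pi(G)$; applying the same argument with the roles of $G$ and $H$ exchanged gives the reverse inequality.

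\textbf{Infinite case.} If $\opt_\Pi(G)=\infty$, the inequality $\opt_\Pi(H)\le\opt_\Pi(G)$ is trivial, as in the proof of \Cref{thm:csp-1-wl}.

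\textbf{Finite case.} Otherwise $\opt_\Pi(G)<\infty$. By the second property of a witness function, there is some $X\in W$ with $\WLcost(X)=\opt_\Pi(G)$ and $\WLcheck(G,X)>0$. Since $\WLcheck$ is $\ell$-WL invariant on $\Graphs$ and $G\WLequiv{\ell} H$, we get $\WLcheck(H,X)>0$. The first property of a witness function, applied to $H$ and $X$, then gives
\[
\opt_\Pi(H)\le \WLcost(X)=\opt_\Pi(G).
\]
Combining both directions yields $\opt_\Pi(G)=\opt_\Pi(H)$ for all $\ell$-WL equivalent $G,H$, which is exactly the statement that $\opt_\Pi$, and hence $\Pi$, is $\ell$-WL invariant.

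\textbf{Main obstacle.} There is no substantive obstacle; the argument only chains the two definitions. The one point needing care is bookkeeping of the value $\infty$. The second witness property is only available when $\opt_\Pi<\infty$, so the infinite case must be handled separately as above. When the roles are swapped, the case $\opt_\Pi(H)=\infty$ is again trivial for the inequality $\opt_\Pi(G)\le\opt_\Pi(H)$. So both directions are covered, and if one side is infinite the other is forced to be infinite too.
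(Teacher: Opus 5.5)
Your proof is correct and follows essentially the same route as the paper: dispose of the case $\opt_\Pi(G)=\infty$, otherwise take the optimal witness $X$ for $G$, transfer $\WLcheck(G,X)>0$ to $H$ via $\ell$-WL invariance, and apply the first witness property to get $\opt_\Pi(H)\le\WLcost(X)=\opt_\Pi(G)$. The paper sets this up with a ``without loss of generality $\opt_\Pi(G)\le\opt_\Pi(H)$'' step, while you prove one inequality and invoke symmetry; the two framings are interchangeable.
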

\begin{proof}
    Let $G, H \in \Graphs$ be two graphs with $G \WLequiv{\ell} H$. Without loss of generality, we assume $\opt_\Pi(G) \leq \opt_\Pi(H)$ (otherwise we change to role of $G$ and $H$). We have to show that $\opt_\Pi(H) \leq \opt_\Pi(G)$. If $\opt_\Pi(G) = \infty$ then we are already done. Otherwise, $\opt_\Pi(G) < \infty$. Thus, there exist a $X \in W$ with $\opt_\Pi(G) = \WLcost(X)$ and $\WLcheck(G, X) > 0$. Now, since $\WLcheck$ is $\ell$-WL invariant, we obtain that $\WLcheck(H, X) > 0$. Hence, $\opt_\Pi(H) \leq \WLcost(X) = \opt_\Pi(G)$.
\end{proof}

\subsection{$\csp$ is $1$-WL Invariant}\label{sec:WL:CSP}

We first demonstrate the above approach on the $\cspName$ ($\csp$) problem and present an alternative proof for \Cref{cor:csp_with_preprocessing} using a characterization of 1-WL via homomorphism counts \cite{Dvorak10, DBLP:conf/icalp/DellGR18}.
Write $\opt$ for the optimal value function of $\csp$ and $\Graphs$ for the set of valid input graphs of $\csp$. We show that $\opt$ is 1-WL invariant for all values.\footnote{There is the small technical detail that an input graph $G$ has a length constraint $B$ that is part of the input. Consistent with previous sections, we say that two input graphs $G$ and $H$ are 1-WL equivalent if $G$ and $H$ have the same length constraint $B$ and  $G \WLequiv{1} H$ for the underlying attributed graphs.}
This together with \cref{theo:1WL:iso} immediately yields that $\csp$ is in the Isomorphic-Priors model.

\begin{theorem}\label{theo:csp:wl:inv}
    The Problem $\csp$ is 1-WL invariant.
\end{theorem}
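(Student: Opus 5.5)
We will prove \Cref{theo:csp:wl:inv} through the witness-function framework: we exhibit a witness function $(W,\WLcheck,\WLcost)$ for $\opt$ and show that $\WLcheck$ is 1-WL invariant on $\Graphs$, after which \Cref{lem:witness:inv:implies:inv} gives the claim. The witness space $W$ will be the set of finite sequences $X=((\ell_1,c_1),\dots,(\ell_r,c_r))$ of edge \attributes, and we set $\WLcost(X)\coloneqq\sum_{j} c_j$ if $\sum_j \ell_j\le B$ and $\WLcost(X)\coloneqq\infty$ otherwise. For $\WLcheck(G,X)$ we take a homomorphism count: $\WLcheck(G,X)$ is the number of homomorphisms from the \attributed directed path $P_X$ into $G$. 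Here $P_X$ has vertices $p_0,\dots,p_r$ and edges $(p_{j-1},p_j)$ carrying \attribute $(\ell_j,c_j)$. Its vertex $p_0$ carries \attribute $\mathsf{s}$, $p_r$ carries $\mathsf{t}$, and all inner vertices are unconstrained; to keep this a genuine attributed homomorphism count we let inner vertices map to any vertex, or equivalently sum over the finitely many admissible inner labelings. Thus $\WLcheck(G,X)$ equals the number of $s$-$t$ walks in $G$ whose \attribute sequence is exactly $X$.

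The witness-function axioms then follow by the same arguments as in \Cref{thm:csp-1-wl}. If $\WLcheck(G,X)>0$ and $\WLcost(X)<\infty$, then $G$ contains an $s$-$t$ walk of length at most $B$ and cost $\WLcost(X)$. Shortcutting its cycles yields a simple path with no larger length or cost, because all weights are non-negative, so $\opt(G)\le\WLcost(X)$. Conversely, if $\opt(G)<\infty$, an optimal path $P^*$ gives the witness $X$ equal to its \attribute sequence, which satisfies $\WLcheck(G,X)\ge1$ and $\WLcost(X)=\opt(G)$. The budget $B$ is common to both instances by the convention in the footnote.

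The main step is 1-WL invariance of $\WLcheck$. We will invoke the characterization of \cite{Dvorak10,DBLP:conf/icalp/DellGR18}: two \attributed graphs are 1-WL equivalent iff they have equal homomorphism counts from every \attributed tree. Since $P_X$ is a tree (an oriented path), $G\WLequiv{1}H$ gives $\WLcheck(G,X)=\WLcheck(H,X)$, which is stronger than the required equivalence of positivity.

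The obstacle is that the cited characterization is stated for plain, undirected, vertex-colored graphs. We must therefore justify that it extends to directed graphs with edge \attributes, or we must encode these. We plan to encode them by subdividing each edge $(u,v)$ with \attribute $a$ into two edges $u\to x_e\to v$, where the new vertex $x_e$ is colored by $(a,\text{orientation marker})$, and then verify that 1-WL equivalence of the original graphs is preserved by this encoding. If this becomes messy, we will fall back on a direct proof: by induction on $r$, we will show that the number of walks realizing $X$ that start at a vertex of stable color $C$ depends only on $C$ and on the per-color neighborhood multisets, which 1-WL fixes, using the Neighborhood Preservation of \Cref{thm:1wl_properties} in its multiset form. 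Summing over color classes, whose sizes agree under $\WLequiv{1}$, then yields equality of the counts.
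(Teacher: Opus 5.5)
Your proposal is correct and follows essentially the paper's route. You build a witness function whose $\WLcheck$ counts homomorphisms of attributed $s$-$t$ paths, use invariance of tree homomorphism counts under 1-WL (\Cref{lem:hom:indist}), and conclude with \Cref{lem:witness:inv:implies:inv}. The differences are cosmetic: you index witnesses by attribute sequences rather than multisets summed over orderings, and you put the budget test into $\WLcost$ rather than $\WLcheck$, which is harmless once $B$ is treated as fixed, as the paper's convention allows.

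The paper simply cites \Cref{lem:hom:indist} for directed, edge-attributed graphs, so your direct induction via \Cref{thm:1wl_properties_extended} is the cleaner way to close that step. Note that your subdivision encoding with a single colored vertex per edge would not record orientation in an undirected graph; you would need two differently colored subdivision vertices, one on the tail side and one on the head side.
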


To prove \cref{theo:csp:wl:inv} we define a witness for $\opt$ and show that this witness is 1-WL invariant. To this end, we start by defining graph homomorphisms.

\begin{definition}[Homomorphism of \attributed Graphs]\label{def:hom}
    Let 
\[
G = (V_G, E_G, \att_G^V, \att_G^E)
\quad\text{and}\quad
H = (V_H, E_H, \att_H^V, \att_H^E)
\]
be two graphs, where $V_G, V_H$ are vertex sets, $E_G, E_H$ are edge sets, and $\att^V, \att^E$ assign \attributes from alphabets $\Sigma_V, \Sigma_E$ to the vertices and edges, respectively.

We say that a function $f \colon V_H \to V_G$ is a \emph{homomorphism} from $H$ to $G$ if:
\begin{enumerate}
    \item \textbf{Adjacency is preserved:} For all $(u,v) \in E_H$ it is true that  $(\varphi(u), \varphi(v)) \in E_G$.
    \item \textbf{Vertex \attributes are preserved:} For all $u \in H$ it is true that $\att_H^V(u) = \att_G^V(f(u))$.
    \item \textbf{Edge \attributes are preserved:} For all $(u,v) \in E_H$ it is true that $\att_H^E(u, v) = \att_G^E(f(u), f(v))$.
\end{enumerate}
We write $\homs{H}{G}$ for the number of homomorphism from $H$ to $G$.
\end{definition}

Next, we use homomorphism to define a witness function. Note that each input graph $G$ has a length constraint $B$.

\paragraph*{Definition of a Witness:}
\begin{itemize}
    \item $W = \{X \, | \, X \subseteq \mathbb{N}_{\geq 1} \times \mathbb{R}_{\geq 0} \times \mathbb{R}_{\geq 0} \}$, 
    \item $\WLcost(X) = \sum_{(n, c, \ell) \in X } \, n \cdot c$,
    \item Let $P$ be a graph. We say that $P$ is a \emph{$X$-colored $s,t$-path} if 
    \begin{itemize}
        \item $P$ is a path with colored start-vertex $s$ and colored end-vertex $t$,
        \item each edge in $P$ is colored with a color $(c, \ell)$, and 
        \item for each $(n, c, \ell) \in X$ the path $P$ contains exactly $n$ edges of color $(c, \ell)$.
    \end{itemize}
    Let
    $P_X \coloneqq \{P \, | \, P  \text{ is a $X$-colored $s,t$-path} \}$ be a set of paths then we define
    \[\WLcheck(G, X) = \begin{cases}
        \sum_{P \in P_X} \homs{P}{G} &\text{ if $\sum_{(n, c, \ell) \in X} \, n \cdot \ell \leq B$ } \\
        0 &\text{ otherwise }
    \end{cases},\]
    where $B$ is length constraint of $G$.
\end{itemize}

\begin{lemma}\label{lem:csp:witness}
    The triple $(W, \WLcheck \colon \Graphs \times W \to \mathbb{R}, \WLcost \colon W \to \mathbb{R})$ is a witness function of $\opt$.
\end{lemma}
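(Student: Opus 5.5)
To prove \Cref{lem:csp:witness}, I will verify the two defining conditions of a witness function directly. The key observation is that a homomorphism from an $X$-colored $s,t$-path $P$ into $G$ is exactly an $s$-$t$ walk in $G$. Vertex attributes are preserved, so the colored start and end vertices map to $s$ and $t$. Each edge of $P$ maps to an edge of $G$ carrying the same attribute $(c,\ell)$, so the image walk uses exactly $n$ edges of attribute $(c,\ell)$ for every $(n,c,\ell)\in X$. Conversely, every such walk arises from some $P\in P_X$ together with a homomorphism. Along the way I will note that $P_X$ is finite up to isomorphism, since its members are paths whose edge-color multiset is fixed, and that $\WLcheck$ is nonnegative. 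I will also treat $P_X$ as containing one representative per isomorphism class so that the sum is finite.

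\textbf{First condition.} Suppose $\WLcheck(G,X)>0$. Then the length condition $\sum_{(n,c,\ell)\in X} n\ell \le B$ holds, and there is some $P\in P_X$ with $\homs{P}{G}>0$. The image of such a homomorphism is an $s$-$t$ walk $W$ in $G$. Its total cost is $\sum n c=\WLcost(X)$ and its total length is $\sum n\ell\le B$. As in the proof of \Cref{thm:csp-1-wl}, I will shortcut the cycles of $W$ to obtain a simple $s$-$t$ path. Because costs and lengths are nonnegative, this path has cost at most $\WLcost(X)$ and length at most $B$. It is therefore feasible, so $\opt(G)\le \WLcost(X)$.

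\textbf{Second condition.} Suppose $\opt(G)<\infty$, and take an optimal path $P^*$ with $c(P^*)=\opt(G)$ and $\ell(P^*)\le B$. I will let $X$ record, for every attribute pair $(c,\ell)$ occurring on $P^*$, its multiplicity $n\ge 1$. Then $\WLcost(X)=c(P^*)=\opt(G)$, and the length sum equals $\ell(P^*)\le B$. Next I will take $P$ to be the abstract path isomorphic to $P^*$ with its induced colors. This $P$ lies in $P_X$, and the identity embedding is a homomorphism, so $\WLcheck(G,X)\ge 1>0$.

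\textbf{Main obstacle.} The only subtle point is the bookkeeping in the definition. The paper's first witness condition reads ``$\opt\le \WLcost$'', which is exactly what cycle removal gives. I also need the vertex attributes $\mathsf{s},\mathsf{t}$ of the pattern path to be matched by homomorphisms, so that walks genuinely start at $s$ and end at $t$. I also need that an intermediate vertex of $P$ carrying attribute $\bot$ may map to $s$ or $t$ only if those vertices were also labelled $\bot$, which they are not. This does no harm, however, since the shortcutting step already handles walks that revisit vertices.
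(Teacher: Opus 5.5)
Your proof is correct and follows the same route as the paper. You identify homomorphisms from $X$-colored $s$-$t$ paths with $s$-$t$ walks having the prescribed edge-attribute multiset, shortcut cycles using nonnegativity for the first condition, and use the optimal path $P^*$ itself as the witness for the second. Your handling of vertex attributes is more careful than the paper's exact walk-count identity, though the real reason the restriction is harmless is that you only need the walk-to-homomorphism direction for the simple path $P^*$, whose interior avoids $s$ and $t$.
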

\begin{proof}
    We start by showing that for all $X \in W$ with $\sum_{(n, c, \ell) \in X} \, n \cdot \ell \leq B$ we obtain 
    \begin{align}\label{eq:walk}
    \WLcheck(G, X) = \; \#\{\text{walks in $G$ from $s$ to $t$ s.t. for all $(n, c, \ell) \in X$ the color $(c, \ell)$ is used $n$ times}\}. 
    \end{align}
    Let $P \in P_X$ be a path. Since $P$ is a $X$-colored $s,t$-path, we obtain that $V(P) = \{v_1, \dots, v_t\}$ and $E(P) = \{e_1, \dots e_{t-1}\}$ with
    $\att_P^V(v_1) = s$, $\att_P^V(v_t) = t$ and $\att^E_P(e_i) = (c_i, \ell_i)$. Now,
    $\homs{P}{G}$ is equal to the number of walks in $G$ that start at $s$, end at $t$, and where we used an edge of color $(c_i, \ell_i)$ during the $i$th step of the walk. Thus, equation (\ref{eq:walk}) follows since $\WLcheck(G, X)$ computes the sum over all possible $X$-colored $s,t$-paths.

    Next, we check that $(W, \WLcheck \colon \Graphs \times W \to \mathbb{R}, \WLcost \colon W \to \mathbb{R})$ is a witness function of $\opt$. For the first property, let $G \in \Graphs$, $X \in W$ and assume that $\WLcheck(G, X) > 0$. By (\ref{eq:walk}), we obtain that there is walk from $s$ to $t$ that uses the colors $X$. This walk induces a path $P$ from $s$ to $t$ that is obtained by deleting repeating vertices and edges of the walk. Note that for each $(n, c, \ell) \in X$ the path $P$ uses edges of color $(c, \ell)$ at most $n$ times. Further since $\WLcheck(G, X) > 0$, we obtain 
    \[\ell(P) \leq  \sum_{(n, c, \ell) \in X} n \cdot \ell \leq B \quad \text{ and } \quad c(P) \leq \sum_{(n, c, \ell) \in X} n \cdot c  = \WLcost(X).\]
    Thus, $P$ is a feasible path of length at most $B$. Further, $\opt(G) \leq \WLcost(X)$ since $P$ has cost at most $\WLcost(X)$.

    Next, assume that $\opt(G) < \infty$ then there is a $s, t$-path $P^\ast$ with $\ell(P^\ast) \leq B$ and $\opt(G) = c(P^\ast)$. Now, let $X$ be the set that contains $(n, c, \ell)$ if and only if the edge color $(c, \ell)$ is used $n$ times in $P^\ast$. Observe that $\sum_{(n, c, \ell) \in X} n \cdot \ell = \ell(P^\ast) \leq B$. Further note that $\opt(G) = c(P^\ast) = \sum_{(n, c, \ell) \in X} n \cdot \ell  = \WLcost(X)$. Since $P^\ast$ is path $s,t$-path and therefore a walk, (\ref{eq:walk}) yields that $\WLcheck(G, X) \geq 1 > 0$. Thus, $\opt(G) = \WLcost(X)$ and $\WLcheck(G, X) > 0$.
\end{proof}

The proof of the following lemma was discussed in \cite[Section 1.1]{DBLP:conf/wg/Boker19}. The lemma is a generalization of Theorem 1 in \cite{DBLP:conf/icalp/DellGR18}.

\begin{lemma}\label{lem:hom:indist}
    Let $F$ be a tree and $H, G$ be two graphs. If $H \WLequiv{1}G$ then $\homs{F}{G} = \homs{F}{H}$.
\end{lemma}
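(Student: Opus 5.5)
We prove the statement by expressing homomorphism counts from a tree $F$ as a sum over the stable colors of the image of a chosen root. Then we show, by induction on the height of $F$, that the rooted count depends only on the stable color of that image vertex. This is the standard argument of Dvořák and of Dell, Grohe and Rattan, adapted to \attributed directed graphs.

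\textbf{Setup.} Fix a root $r$ of $F$. For a vertex $v$ of $G$, let $\hom_r(F,G,v)$ denote the number of homomorphisms $\varphi\colon F\to G$ with $\varphi(r)=v$. (This is ad hoc notation; in the paper I would write it out in words.) Then
\[
\homs{F}{G}=\sum_{v\in V(G)} \hom_r(F,G,v).
\]

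\textbf{Core claim.} For every rooted tree $(F,r)$ and all vertices $v\in V(G)$, $v'\in V(H)$ with $\chi_\infty^G(v)=\chi_\infty^H(v')$, we have $\hom_r(F,G,v)=\hom_r(F,H,v')$. Once this holds, group the sum above by stable color. Since $G\WLequiv{1}H$, every color class has the same size in $G$ and in $H$, so the two totals agree.

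\textbf{Proof of the claim by induction on $|V(F)|$.} In the base case $F$ is a single vertex. The count is $1$ if $\att^V_F(r)=\att^V_G(v)$ and $0$ otherwise. By Refinement in \Cref{thm:1wl_properties}, vertices of equal stable color carry equal vertex \attributes, so the counts agree.

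For the inductive step, let $r$ have children $c_1,\dots,c_d$ with subtrees $F_1,\dots,F_d$. Each edge between $r$ and $c_j$ is either incoming or outgoing and carries some \attribute $a_j$. A homomorphism mapping $r$ to $v$ is determined by independently choosing, for each $j$, a neighbour $u$ of $v$ in the correct direction with edge \attribute $a_j$, together with a homomorphism of $(F_j,c_j)$ rooted at $u$. This gives
\[
\hom_r(F,G,v)=[\att^V_F(r)=\att^V_G(v)]\prod_{j=1}^d \sum_{u} \hom_{c_j}(F_j,G,u),
\]
where $u$ ranges over the neighbours just described.

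By the induction hypothesis, each summand depends only on $\chi_\infty^G(u)$. Each inner sum therefore equals $\sum_C N_{a_j,C}(v)\cdot h_j(C)$. Here $N_{a_j,C}(v)$ counts the neighbours of $v$ of stable color $C$ reached via an edge with \attribute $a_j$ in the given direction, and $h_j(C)$ is the common count for color $C$.

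\textbf{Main obstacle.} We need $N_{a,C}(v)=N_{a,C}(v')$ as exact multiplicities, not just existence. \Cref{thm:1wl_properties} as stated only gives existence, so I would go back to its proof. There, $\chi^*$ is the stable coloring of $G\uplus H$, and $v,v'$ have equal color. The refinement step then gives equality of the full multisets $\{\!\{(\chi^*(u),\att^E(u,v))\}\!\}$ for in-neighbours, and likewise for out-neighbours. These multisets are exactly the numbers $N_{a,C}$.

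Two details remain. First, $\chi^*$ agrees with the individual stable colorings $\chi_\infty^G$ and $\chi_\infty^H$; this is where the extra iteration in the definition of $\chi_\infty$ matters. Second, $F$ may contain vertices with no restriction on their vertex \attribute (the default $\bot$); this is handled because $\bot$ is simply one of the \attributes compared. With exact multiplicities in hand, the products coincide for $v$ and $v'$, which completes the induction.
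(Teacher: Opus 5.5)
Your proof is correct. It is the standard rooted-tree induction (Dvo\v{r}\'ak; Dell--Grohe--Rattan) behind the result the paper cites via B\"oker instead of writing out a proof, and the exact-multiplicity fact you re-derive from the proof sketch of Lemma~\ref{thm:1wl_properties} is already stated in the appendix as Lemma~\ref{thm:1wl_properties_extended}. Your reading of ``tree'' as an oriented tree, with exactly one directed edge per tree edge, is necessary and not just convenient: this paper's 1-WL refines in- and out-neighbourhoods separately, so two disjoint bidirected triangles and the circulant digraph $i\to i+1,\,i+2$ on $\mathbb{Z}_6$ are 1-WL equivalent, yet only the former receives a homomorphism from a pair of antiparallel edges.
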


\begin{lemma}\label{lem:csp:witness:inv}
    The function $\WLcheck \colon \Graphs \times W \to \mathbb{R}$ is 1-WL invariant.
\end{lemma}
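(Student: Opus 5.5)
The plan is to reduce the statement directly to \Cref{lem:hom:indist}. Fix $G, H \in \Graphs$ with $G \WLequiv{1} H$ and fix $X \in W$. By the convention in the footnote, 1-WL equivalence of inputs includes that both instances carry the same length constraint $B$. Hence the test ``$\sum_{(n,c,\ell)\in X} n\cdot \ell \le B$'' in the definition of $\WLcheck$ comes out the same for $G$ and $H$. If it fails, then $\WLcheck(G,X) = \WLcheck(H,X) = 0$ and there is nothing to show. So assume it holds.

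In that case $\WLcheck(G,X) = \sum_{P \in P_X} \homs{P}{G}$, and likewise for $H$. The key observation is that every $P \in P_X$ is a path, hence a tree. It is an \attributed tree, with edge colours $(c,\ell)$ and the endpoint attributes $\mathsf{s}$ and $\mathsf{t}$. Applying \Cref{lem:hom:indist} in its \attributed form (homomorphisms as in \Cref{def:hom} preserve vertex and edge \attributes) gives $\homs{P}{G} = \homs{P}{H}$ for each $P \in P_X$. Summing over $P_X$ yields $\WLcheck(G,X) = \WLcheck(H,X)$, and in particular $\WLcheck(G,X) > 0 \iff \WLcheck(H,X) > 0$.

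Two small points need care. First, $P_X$ should be a well-defined set of isomorphism types. It is finite: the number of edges is $\sum n$, and each $P$ is determined up to isomorphism by its ordered sequence of edge colours, so the sum is finite and well-defined. If one prefers to avoid this, it already suffices to argue positivity: $\WLcheck(G,X) > 0$ iff some $P \in P_X$ has $\homs{P}{G} \ge 1$, and this transfers term by term.

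Second, and this is the main obstacle, \Cref{lem:hom:indist} is stated for plain graphs, but here it is needed for directed, vertex- and edge-\attributed graphs. I would first check whether the cited generalisation (B\"oker) covers \attributed directed trees; I expect it does, since the tree-unfolding argument carries labels along. If not, I would prove the needed special case by hand for directed paths, by induction on path length. Define $N_G(c, P)$ as the number of homomorphisms of $P$ into $G$ that send the last vertex to a vertex of stable colour $c$. Extending $P$ by one edge of attribute $\alpha$ gives $N_G(c', P\alpha) = \sum_c N_G(c,P)\cdot d_\alpha(c,c')$. Here $d_\alpha(c,c')$ is the number of out-neighbours of colour $c'$ via $\alpha$-edges of any vertex of colour $c$. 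It is well defined by stability of $\chi_\infty$, and it agrees between $G$ and $H$ by the multiset form of the refinement on $G \uplus H$ used in the proof of \Cref{thm:1wl_properties}. The base case is the colour-class sizes, which agree by 1-WL equivalence, with the vertex attributes $\mathsf{s}$ and $\mathsf{t}$ respected by the Refinement property. Restricting the final colour to the class of $t$ then gives $\homs{P}{G} = \homs{P}{H}$.
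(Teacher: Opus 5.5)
Your proposal is correct and follows the paper's proof: the footnote convention gives equal budgets, which settles the length condition, and then $\homs{P}{G}=\homs{P}{H}$ holds for every $P\in P_X$ by \Cref{lem:hom:indist}, since paths are trees. Your checks that $P_X$ is finite, together with your by-hand induction over directed attributed paths via the colour-class transition counts $d_\alpha(c,c')$, go beyond what the paper writes and usefully cover the point that \Cref{lem:hom:indist} is stated only for plain graphs.
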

\begin{proof}
    Let $G, H \in \Graphs$ be two graphs with $G \WLequiv{1} H$.
    According to \cref{lem:hom:indist}, $\homs{F}{G} = \homs{F}{H}$ for all trees $F$. Since every path is a tree, we obtain that $\WLcheck(G, X) = \WLcheck(H, X)$ for all $X \in W$ that satisfy the length constraint, that is, $\sum_{(n, c, \ell) \in X} n \cdot \ell \leq B$. Note that $G$ and $H$ have the same lenght constraint since $G \WLequiv{1} H$.
\end{proof}

\begin{proof}[Proof of \cref{theo:csp:wl:inv}]
    According to \cref{lem:csp:witness}, the triple $(W, \WLcheck \colon \Graphs \times W \to \mathbb{R}, \WLcost \colon W \to \mathbb{R})$ is a witness function of $\opt$. Further, \cref{lem:csp:witness:inv} shows that $\WLcheck$ is  1-WL invariant. Thus, \cref{lem:witness:inv:implies:inv} implies that $\csp$ is 1-WL invariant. 
\end{proof}

\begin{corollary}\label{cor:CSP}
    There is an algorithm that solves the $\csp$ problem with $k$ prior instances each of size at most $n$ in  $O(k (n+m) \log n)$ preprocessing time and $O((n+m) \log n)$ query time. 
\end{corollary}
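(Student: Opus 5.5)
This corollary follows by combining the invariance result just established with the generic 1-WL algorithm. By \cref{theo:csp:wl:inv}, $\opt$ for $\csp$ is 1-WL invariant: any two input graphs $G,H$ with the same length constraint $B$ and $G \WLequiv{1} H$ satisfy $\opt(G)=\opt(H)$. This is exactly the hypothesis of \cref{theo:1WL:iso}. Applying that theorem gives preprocessing time $O(k(n+m)\log n)$ and query time $O((n+m)\log n)$.

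The one point to check is that $\csp$ instances fit the attributed-graph framework that \cref{theo:1WL:iso} and \cref{thm:1-wl-canon} work with. We encode each instance with edge attributes $\att_E(e)=(\ell(e),c(e))$ and vertex attributes marking $s$ and $t$. The budget $B$ is a global parameter, which the model assumes is fixed across all instances. If one wants to be safe, $B$ can instead be prepended to the string $\operatorname{WL}_1(G)$. With this encoding, the signature used in the trie is well defined and invariant.

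In more detail, the proof runs as follows.

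\textbf{Preprocessing.} For each prior $G_i$, compute $\operatorname{WL}_1(G_i)$ via \cref{thm:1-wl-canon} and insert it, together with $\opt(G_i)$, into a trie. Two priors that collide on the same string are 1-WL equivalent, so by \cref{theo:csp:wl:inv} they carry the same stored value and there is no conflict.

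\textbf{Query.} Compute $\operatorname{WL}_1(H)$ and look it up in the trie. If it is found, return the stored value; this is correct by invariance. If it is absent, report that $H$ is non-isomorphic to every prior. This is correct because isomorphic graphs are 1-WL equivalent.

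I expect no real obstacle. The substantive work, namely homomorphism-count invariance for paths via \cref{lem:hom:indist} and the witness argument in \cref{lem:witness:inv:implies:inv}, has already been done. The only care needed is the bookkeeping of $B$ and of the terminal attributes described above.
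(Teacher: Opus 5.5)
Your proposal is correct and matches the paper's proof, which is the one-line combination of \cref{theo:csp:wl:inv} with \cref{theo:1WL:iso}. Your expanded trie-based preprocessing and query, and your handling of the budget $B$ (fixed globally, or prepended to $\operatorname{WL}_1(G)$), only spell out what the paper leaves implicit in its footnote on 1-WL equivalence of $\csp$ instances.
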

\begin{proof}
    This is a direct consequence of \cref{theo:1WL:iso} combined with \cref{theo:csp:wl:inv}.
\end{proof}

\subsection{$\cst$ is $2$-WL Invariant}\label{sec:WL:CST}

Next, we again turn to the $\cstName$ ($\cst$) problem.
Recall that, in \Cref{cor:constrained_spanning_tree_preprocessing}, we obtained a randomized algorithm in the Isomorphic-Priors model with $\widetilde{O}(kn^\omega)$ preprocessing time and $O(n^\omega)$ query time.
In the following, we obtain a deterministic algorithm via 2-WL.
Write $\opt$ for the optimal value function of $\cst$ and $\Graphs$ for the set of valid input graphs of $\cst$. We show that $\opt$ is 2-WL invariant for all values.\footnote{As before, we say that two input graphs $G$ and $H$ are 2-WL equivalent if $G$ and $H$ have the same budget $B$ value and $G \WLequiv{2} H$ for the underlying attributed graphs.} This together with \cref{theo:WL:iso} this immediately yields a deterministic algorithm for $\cst$  in the Isomorphic-Priors model.

\begin{theorem}\label{theo:cst:wl:inv}
    The problem $\cst$ is 2-WL invariant.
\end{theorem}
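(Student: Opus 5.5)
We follow the template of \cref{sec:WL:CSP}: build a witness function $(W,\WLcheck,\WLcost)$ for $\opt$ of $\cst$, show that $\WLcheck$ is 2-WL invariant, and conclude with \cref{lem:witness:inv:implies:inv}.

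\textbf{Witness function.} Let $W$ be the set of finite multisets $X$ of triples $(n,c,d)$, meaning ``$n$ edges of weight $(c,d)$'', with $\sum n = |V|-1$. Set $\WLcost(X)=\sum_{(n,c,d)\in X} n\cdot c$. Define $\WLcheck(G,X)$ to be the number of spanning trees of $G$ whose weight multiset is exactly $X$ (that is, the coefficient $c_{w_1,\dots,w_{n-1}}(G)$ of $P_{ST}(G)$) when $\sum n\cdot d\le B$, and $0$ otherwise.

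Both witness properties are then immediate.
\begin{itemize}
\item If $\WLcheck(G,X)>0$, some spanning tree of $G$ has cost $\WLcost(X)$ and distance at most $B$, so $\opt(G)\le \WLcost(X)$.
\item If $\opt(G)<\infty$, the weight multiset of an optimal tree is a witness attaining $\opt(G)$.
\end{itemize}
Because 2-WL equivalence already requires equal budgets $B$ (and equal $n$, since the color multisets have the same size), it remains to show that $G\WLequiv{2}H$ implies $c_X(G)=c_X(H)$ for every $X$. This is equivalent to the identity $P_{ST}(G)\equiv P_{ST}(H)$.

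\textbf{Invariance via Kirchhoff.} Substitute an arbitrary numeric value $r_w$ for each weight variable $x_w$. Let $A_{\mathbf r}(G)$ be the matrix with $(u,v)$ entry $r_{w(u,v)}$ on edges and $0$ elsewhere, and let $L_{\mathbf r}(G)=D-A_{\mathbf r}$ be the corresponding weighted Laplacian. By the matrix tree theorem, as used in \cref{sec:CST},
\[
P_{ST}(G)(\mathbf r)=\det \tilde L_{\mathbf r}(G)=\tfrac1n\, e_{n-1}\bigl(\mathrm{spec}\,L_{\mathbf r}(G)\bigr),
\]
the product of the nonzero eigenvalues of $L_{\mathbf r}(G)$ divided by $n$. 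It therefore suffices to show that $G\WLequiv{2}H$ implies that $L_{\mathbf r}(G)$ and $L_{\mathbf r}(H)$ have the same characteristic polynomial for every $\mathbf r$. Two polynomials that agree at every point are identical, so this gives $P_{ST}(G)\equiv P_{ST}(H)$.

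For the spectral claim we compare traces of powers, $\trace(L^j)$ for $j\le n$; equal power traces give equal characteristic polynomials by Newton's identities. We prove this by induction on $j$, showing that the entries of $L_{\mathbf r}^j$ are determined by 2-WL colors.

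\emph{Base case.} The entry $L_{\mathbf r}(u,v)$ depends only on $\WLit{2}{0}{G}(u,v)$. Off the diagonal it is the encoded weight. On the diagonal it is $\sum_w r_{w(u,w)}$, which depends only on the multiset of colors of the pairs $(u,w)$. Stable 2-WL colors determine that multiset via $\mathcal M$ at the diagonal tuple $(u,u)$.

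\emph{Inductive step.} Suppose each entry $L^{j}(u,v)$ equals $g_j(\WL{2}{G}(u,v))$ for a function $g_j$ that is the same for both graphs. Then
\[
L^{j+1}(u,v)=\sum_w L^j(u,w)\,L(w,v).
\]
The multiset $\mathcal M(u,v)=\{\!\{(\chi(w,v),\chi(u,w))\}\!\}$ of stable colors determines this sum. So $L^{j+1}(u,v)$ is again a common function of $\WL{2}{G}(u,v)$.

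\emph{Conclusion.} Summing over diagonal tuples, $\trace(L^j)$ is determined by the multiset of stable colors of the pairs $(u,u)$, and this multiset agrees for $G$ and $H$.

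The main obstacle is exactly this step: making precise that a single function $g_j$ works simultaneously for both graphs. The standard way is to run 2-WL on the disjoint union $G\uplus H$, as in the proof of \Cref{thm:1wl_properties}, so that both graphs share one stable coloring. A further point to check is that the diagonal color of $(u,u)$ indeed encodes the weighted degree. This holds because the initial color records the weighted isomorphism type of $(u,w)$, so the multiset $\mathcal M(u,u)$ already contains every weight $r_{w(u,w)}$.

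Putting everything together, $\WLcheck$ is 2-WL invariant, and \cref{lem:witness:inv:implies:inv} yields \cref{theo:cst:wl:inv}.
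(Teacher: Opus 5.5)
Your proof is correct and follows the paper's route. It uses the same witness function, the same identification of $\WLcheck(G,X)$ with a coefficient of $P_{ST}(G)$, and the same matrix-tree reduction of $P_{ST}$ to the linear coefficient of the characteristic polynomial of the weighted Laplacian (your $e_{n-1}$ is, up to sign, the paper's $d_1$).

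The one genuine difference is the cospectrality step. The paper imports it as a black box from \cite{RattanS23} (Theorem~26 with Example~22), applied to the Laplacian with indeterminate entries. You prove it from scratch:
\begin{itemize}
\item by induction, the entries of $L_{\mathbf r}^j$ are a common function of the 2-WL colors;
\item $\trace(L_{\mathbf r}^j)$ is then read off the diagonal color multiset;
\item Newton's identities recover the characteristic polynomial.
\end{itemize}
You do all of this at numeric points $\mathbf r$, and identity of the polynomials follows from agreement everywhere over an infinite field. The citation is shorter and hides the bookkeeping about a coloring shared by $G$ and $H$. Your argument is self-contained and also avoids the somewhat informal notion of spectra of matrices over a polynomial ring.

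The disjoint-union fix you propose works. It does need the 2-WL analogue of the fact used for 1-WL in the proof of \Cref{thm:1wl_properties}: the diagonal colors of $G$ and of $H$ inside $G \uplus H$ form equal multisets. This is folklore, but slightly more involved than the 1-WL case because of the mixed pairs.

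There are two small slips, neither of which breaks the argument.
\begin{itemize}
\item The diagonal entry of $L_{\mathbf r}$ is a function of the round-1 color of $(u,u)$, not of its initial color. You correct this in the next sentence.
\item $e_{n-1}$ of the spectrum is not ``the product of the nonzero eigenvalues'' when $0$ is a repeated eigenvalue, e.g.\ for disconnected $G$. Since you only use $e_{n-1}$, which is a coefficient of the characteristic polynomial, this does no harm.
\end{itemize}
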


\paragraph*{Definition of a Witness:}
To prove \cref{theo:cst:wl:inv} we define a witness for $\opt$ and show that this witness is 2-WL invariant.

\begin{itemize}
    \item $W = \{X \, | \, X \subseteq \mathbb{N}_{\geq 0} \times \mathbb{R}_{\geq 0} \times \mathbb{R}_{\geq 0} \}$, 
    \item $\WLcost(X) = \sum_{(n, c, d) \in X } \, n \cdot c$,
    \item Let $T$ be a graph. We say that $T$ is a \emph{$X$-colored tree} if 
    \begin{itemize}
        \item $T$ is a tree where each vertex has the same color,
        \item each edge in $T$ is colored with a color $(c, d)$, and 
        \item for each $(n, c, d) \in X$ the tree $T$ contains exactly $n$ edges of color $(c, d)$.
    \end{itemize}
    We define 
    \[\WLcheck(G, X) = \left|\left\{T : T \text{ is a spanning tree of $G$ and $T$ is $X$-colored and $\!\!\sum_{(n, c, d) \in X } \, n \cdot d \leq B$}\right\} \right|,\]
    where $B$ is the budget associated with the input graph $G$. This means that $\WLcheck(G, X) = 0$, whenever $X$ is outside the budget constraint of $B$ 
\end{itemize}

\begin{lemma}\label{lem:cst:witness}
    The triple $(W, \WLcheck \colon \Graphs \times W \to \mathbb{R}, \WLcost \colon W \to \mathbb{R})$ is a witness function of $\opt$.
\end{lemma}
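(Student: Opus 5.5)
We verify the two defining properties of a witness function directly from the definitions of $\WLcheck$ and $\WLcost$, in the same way as the proof of \cref{lem:csp:witness}. The situation is simpler here: $\WLcheck(G,X)$ counts actual spanning trees rather than homomorphisms, so no walk-to-path reduction is needed. Throughout, an edge $e$ has color $(c(e),d(e))$. For a spanning tree $T$ of $G$ we write $X_T$ for the set of triples $(n,c,d)$ such that exactly $n\ge 1$ edges of $T$ have color $(c,d)$. We interpret ``$T$ is $X$-colored'' as requiring that the multiset of edge colors of $T$ is exactly the one described by $X$. With this reading, $T$ is $X$-colored if and only if $X = X_T$, up to triples with $n=0$, which contribute nothing to $\WLcost$ or to the budget sum.

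\textbf{First property.} Let $X \in W$ with $\WLcheck(G,X) > 0$. By definition there is a spanning tree $T$ of $G$ that is $X$-colored, and moreover $\sum_{(n,c,d)\in X} n\cdot d \le B$. Since the color multiset of $T$ is determined by $X$, we get
\[
d(T)=\sum_{(n,c,d)\in X} n\cdot d \le B
\quad\text{and}\quad
c(T)=\sum_{(n,c,d)\in X} n\cdot c=\WLcost(X).
\]
Hence $T$ is budget-feasible, and therefore $\opt(G)\le c(T)=\WLcost(X)$.

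\textbf{Second property.} Suppose $\opt(G)<\infty$. Then there is a budget-feasible spanning tree $T^\ast$ with $c(T^\ast)=\opt(G)$. Set $X \coloneqq X_{T^\ast}$. Then $T^\ast$ is $X$-colored, and $\sum_{(n,c,d)\in X} n\cdot d = d(T^\ast)\le B$. It follows that $T^\ast$ is counted in $\WLcheck(G,X)$, so $\WLcheck(G,X)\ge 1>0$. In addition, $\WLcost(X)=\sum_{(n,c,d)\in X} n\cdot c = c(T^\ast)=\opt(G)$.

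\textbf{Expected obstacle.} The only delicate point is the bookkeeping behind ``$X$-colored''. The definition requires the vertices of $T$ to be uniformly colored; this is automatic, since $\cst$ has no vertex attributes. We must also make sure that $X$ pins down the full color multiset of $T$. If $X$ were read as constraining only the colors it lists, so that other edge colors could appear freely, then $c(T)$ might exceed $\WLcost(X)$, and the first property would fail. We therefore fix the reading that $X$ lists all colors of $T$ with their exact multiplicities. This is the reading under which the identity $c(T)=\WLcost(X)$ above holds.
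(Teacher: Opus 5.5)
Your proof is correct and follows the paper's argument essentially line for line. For the first property you extract a budget-feasible $X$-colored spanning tree whose cost equals the cost sum over $X$, and for the second you take $X = X_{T^\ast}$ for an optimal tree $T^\ast$. Your explicit requirement that $X$ list all edge colors of $T$ with exact multiplicities is the reading the paper relies on implicitly when it writes $c(T)=\sum_{(n,c,d)\in X} n\cdot c$, and you also avoid the paper's slip of writing $n\cdot d$ for $n\cdot c$ in the second part.
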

\begin{proof}
    Let $G \in \Graphs$ and $X \in W$ and assume that $\WLcheck(G, X) > 0$. By the definition of $\WLcheck(G, X)$, we obtain that there is $X$-colored spanning tree $T$ of $G$. For each $(n, c, d) \in X$, the tree $T$ uses edges of color $(c, d)$ exactly $n$ times. Further, since $\WLcheck(G, X) > 0$ we obtain 
    \[\ell(T) =  \sum_{(n, c, d) \in X} n \cdot d \leq B \quad \text{ and } \quad c(T) = \sum_{(n, c, d) \in X} n \cdot c  = \WLcost(X).\]
    Thus, $T$ is a feasible tree of length at most $B$. Additionally, $\opt(G) \leq \WLcost(X)$ since $T$ has cost $\WLcost(X)$.

    Next, assume that $\opt(G) < \infty$. Now, there is a spanning tree $T^\ast$ with $\sum_{(n, c, d) \in X} n \cdot d  \leq B$ and $\opt(G) = c(T^\ast)$. Let $X$ be the set that contains $(n, c, d)$ if and only if the edge color $(c, d)$ is used $n$ times in $T^\ast$. Note that $\opt(G) = c(T^\ast) = \sum_{(n, c, d) \in X} n \cdot d  = \WLcost(X)$. Since $T^\ast$ is by definition a $X$-colored spanning tree, we obtain that $\WLcheck(G, X) > 0$. Thus, $\opt_\cst(G) = \WLcost(X)$ and $\WLcheck(G, X) > 0$.
\end{proof}

We prove the following lemma which together with \cref{lem:cst:witness} immediately yields \cref{theo:cst:wl:inv}.

\begin{lemma}\label{lem:cst:witness:inv}
    The function $\WLcheck \colon \Graphs \times W \to \mathbb{R}$ is 2-WL invariant on $\Graphs$.
\end{lemma}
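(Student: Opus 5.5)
We need to show: if $G \WLequiv{2} H$ (with the same budget $B$), then for every $X \in W$, the number of $X$-colored spanning trees of $G$ equals that of $H$. The budget condition depends only on $X$ and $B$, so it is identical on both sides; we may therefore drop it and prove equality of the counts of spanning trees with a prescribed multiset of edge attributes. The plan is to encode these counts into a polynomial that is computed by matrix operations, and then use the known fact that $2$-WL equivalence implies equality of such spectral or algebraic matrix invariants~\cite{Furer10,RattanS23,ArvindFKV25}.

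Following \Cref{sec:CST}, introduce a formal variable $x_w$ for each edge attribute $w=(c,d)$. Form the weighted Laplacian $L^G(\mathbf{x})$, whose off-diagonal entry at $(u,v)$ is $-x_{w(u,v)}$ on edges and $0$ otherwise, and whose diagonal entries are the corresponding row sums. By Kirchhoff's theorem (the identity used for \cref{eq:CST:poly}), any principal $(n-1)\times(n-1)$ minor of $L^G(\mathbf{x})$ equals $P_{ST}(G)$. Equivalently, $P_{ST}(G)=\frac1n e_{n-1}(L^G(\mathbf{x}))$, the sum of all principal $(n-1)$-minors, which is a coefficient of the characteristic polynomial. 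The coefficient of the monomial $\prod_{(n_j,c,d)\in X}x_{(c,d)}^{n_j}$ is exactly $\WLcheck(G,X)$ when $X$ is budget-feasible. It therefore suffices to show $P_{ST}(G)\equiv P_{ST}(H)$ as polynomials, or equivalently that the two agree at every rational point $\mathbf{r}$.

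Fix a point $\mathbf{r}$. The matrix $L^G(\mathbf{r})$ is a linear combination of the adjacency matrices $A_w$ of the attribute classes and a diagonal matrix. The key step is that $2$-WL equivalence implies equality of $\trace(M_1M_2\cdots M_t)$ for all words in these matrices. Indeed, the $2$-WL stable coloring of pairs refines the initial coloring, which records the edge attribute and equality. So each $A_w$, as well as the degree-type diagonal matrices $D_w=\mathrm{diag}(A_w\mathbf{1})$, is a sum of indicator matrices of $2$-WL color classes. The coherent-configuration property of the stable coloring states that the span of these indicator matrices is closed under matrix product, and that the structure constants are determined by the color multiset. Since $G\WLequiv{2}H$, both graphs yield the same algebra with the same structure constants and the same class sizes, hence the same traces. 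Consequently $\trace(L^G(\mathbf{r})^t)=\trace(L^H(\mathbf{r})^t)$ for all $t$. By Newton's identities over $\mathbb{Q}$, the two matrices have the same characteristic polynomial, and in particular the same $e_{n-1}$. This gives equality of $P_{ST}$ at every $\mathbf{r}$, hence as polynomials, and hence equality of $\WLcheck$.

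The main obstacle is the justification that the stable $2$-WL partition forms a coherent algebra with graph-independent structure constants, and that this algebra contains the degree matrices. This is standard: stability of the refinement gives that the number of $w$ with $\chi(u,w)=a$ and $\chi(w,v)=b$ depends only on $\chi(u,v)$. Equal color multisets across $G$ and $H$ then transfer these numbers, and the traces are linear combinations of sizes of diagonal color classes. I would cite this from \cite{Furer10,ArvindFKV25} rather than reprove it, and spell out only the check that the diagonal matrices $D_w$ lie in the algebra, which holds because $D_w$ is the diagonal part of $A_w\cdot A_w^\top$ for a symmetric $0/1$ matrix $A_w$.
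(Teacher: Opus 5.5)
Your proposal is correct and follows the paper's route: identify $\WLcheck(G,X)$ with a coefficient (strictly, a finite sum of coefficients, since colors not listed in $X$ are unconstrained) of $P_{ST}$; write $P_{ST}$ via the matrix-tree theorem as $\frac1n$ times the sum of the principal $(n-1)$-minors of the attribute-weighted Laplacian, which is a characteristic-polynomial coefficient; and conclude from $2$-WL cospectrality of the two Laplacians. The only difference is in that last step: the paper cites Rattan and Seppelt for cospectrality of the symbolically colored Laplacian, whereas you prove it directly at each rational point from the coherent-algebra structure of the stable $2$-WL coloring (equal traces of $L(\mathbf r)^t$ plus Newton's identities), which also makes explicit how the pointwise equality lifts to an identity of polynomials.
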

\begin{proof}[Proof (sketch).]
    
To prove \cref{lem:cst:witness:inv} we show that $\WLcheck(G, X) = \WLcheck(H, X)$ for all $G, H \in \Graphs$ with $G \WLequiv{2} H$ and $X \in W$. Without loss of generality we assume that $X$ satisfies $\sum_{(n, c, d) \in X } \, n \cdot d \leq B$ since otherwise $0 = \WLcheck(G, X) = \WLcheck(H, X) = 0$. We start with the Kirchhoff polynomial (see \cite{Chung2000})
    \[\mathrm{Kir}(G) \coloneqq \sum_{T} \prod_{e\in T} X_e,\]
    where each edge $e$ in $G$ has its own variable $X_e$ and the sum is over all spanning trees $T$ of $G$. If we substituted each $X_e$ with a new variable $c(e)$, where $c(e)$ is the color of the edge $e$, then we get the \emph{Spanning Tree Polynomial} $P_{ST}(G)$ from \cref{sec:CST}.  Formally, we have $\mathrm{Kir}(G)|_{X_e = c(e)} \equiv P_{ST}(G)$, where $\mathrm{Kir}(G)|_{X_e = c(e)}$ is obtained by substituting $X_e$ with $c(e)$ in $\mathrm{Kir}(G)$. 
    
    Further, note that the function $\WLcheck(G, X) $ computes the coefficients of the \emph{Spanning Tree Polynomial}. Specially, the coefficient $c_{w_1, \dots w_{n-1}}$ of $P_{ST}(G)$ is equal to $\WLcheck(G, X)$ where $X$ is the set that contains a triple $(n, c, d)$ if and only if the color $(c, d)$ appears exactly $n$ times in the multiset $\{w_1, \dots w_{n-1}\}$ (note that each $w_i$ is actually a $(c, d)$-tuple). Thus, if $P_{ST}(G) \equiv P_{ST}(H)$ then $\WLcheck(G, X) = \WLcheck(H, X)$ for all $X \in W$, proving the theorem.
    
    To show $P_{ST}(G) \equiv P_{ST}(H)$, we use a result from \cite{Chung2000}
    which states that for each $i \in [n]$
    \[\mathrm{Kir}(G) \equiv \det(\tilde{L}_i(G)),\]
    where $\tilde{L}_i(G)$ is the Laplacian $L(G)$ of $G$ with its $i$-th row and $i$-th column removed. Note that $\det(\tilde{L}_i(G))$ is a polynomial over the variables $X_e$.
    Now, $\mathrm{Kir}(G)|_{X_e = c(e)} \equiv P_{ST}(G)$ and $\mathrm{Kir}(G) \equiv \det(\tilde{L}_i(G))$ implies that $P_{ST}(G) \equiv \det(\tilde{L}^C_i(G))$ where $\tilde{L}^C_i(G)$ is obtained by substituted each variable $X_e$ in $\tilde{L}_i(G)$ with $c(e)$. 
    
    Thus, it is enough to prove $P_{ST}(G) \equiv \det(\tilde{L}^C_i(G)) \equiv \det(\tilde{L}^C_i(H)) \equiv P_{ST}(H)$ whenever $H \WLequiv{2} G$. By \cite{RattanS23}\footnote{See Theorem 26 combined with Example 22 that shows that Laplacian mapping $G \to L(G)$ is in $\mathfrak{E}$.}, $L^C(G)$ is cospectral to $L^C(H)$. Let $d^G_i$ (respectively $d^H_i$) be the $i$th coefficient of the characteristic polynomial of $L^C(G)$ (respectively $L^C(H)$). Note that $d_i^H, d_i^G \in \mathbb{C}[c(e)]$.\footnote{That is, the ring of polynomials with variables $c(e)$.} Since $L^C(G)$ and $L^C(H)$ are cospectral, it follows that $d^H_i = d^G_i$. Note that $d^H_1$ is the linear coefficient of the characteristic polynomial.\footnote{That is, the coefficient corresponding to $x^1$ in the characteristic polynomial of $L^C(H)$.} A well-known theorem from linear algebra (see e.g. \cite[Theorem 1.8]{matrixBook}) implies
    \[d^H_1 \equiv -\sum_{i = 1}^n \det(\tilde{L}^C_i(H)), \qquad d^G_1 \equiv -\sum_{i = 1}^n \det(\tilde{L}^C_i(G)) . \]
    Now, since $\det(\tilde{L}^C_i(H)) \equiv \det(\tilde{L}^C_j(H))$ and $\det(\tilde{L}^C_i(G))\equiv= \det(\tilde{L}^C_j(G))$ for all $i, j \in [n]$, we obtain
    \[ P_{ST}(G) \equiv \det(\tilde{L}^C_i(G)) \equiv \frac{-d^G_1}{n} \equiv \frac{-d^H_1}{n} \equiv \det(\tilde{L}^C_i(H)) \equiv P_{ST}(H). \]
\end{proof}

\paragraph{}

\begin{proof}[Proof of \cref{theo:cst:wl:inv}]
    Let $B \in \mathbb{R}_{> 0}$. 
    According to \cref{lem:cst:witness}, the triple $(W, \WLcheck \colon \Graphs \times W \to \mathbb{R}, \WLcost \colon W \to \mathbb{R})$ is a witness function of $\opt$. Further, \cref{lem:cst:witness:inv} shows that $\WLcheck$ is  2-WL invariant on $\Graphs$. Thus, \cref{lem:witness:inv:implies:inv} implies that $\csp$ is 2-WL invariant. 
\end{proof}

\begin{corollary}\label{cor:CST:WL}
    There is an algorithm that solves the $\cst$ problem with $k$ prior instances each of size at most $n$ in  $O(k n^{3} \log n)$ preprocessing time and $O( n^{3} \log n)$ query time. 
\end{corollary}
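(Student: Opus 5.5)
This is an immediate instance of the generic 2-WL framework, so the plan is to combine two earlier results. \Cref{theo:cst:wl:inv} already states that $\cst$ is 2-WL invariant. Concretely, it shows that whenever two instances $G,H$ have the same budget $B$ and $G \WLequiv{2} H$, then $\opt(G)=\opt(H)$. We then invoke \Cref{theo:WL:iso} with $\ell=2$. It gives an algorithm with $O(k n^{\ell+1}\log n)=O(kn^3\log n)$ preprocessing time and $O(n^{\ell+1}\log n)=O(n^3\log n)$ query time, which is exactly the claimed bound.

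The steps, in order, are as follows. First, encode a $\cst$ instance as an attributed graph whose edge attribute is the pair $(c(e),d(e))$. All vertices receive the same default attribute.

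Second, handle the budget. Since $B$ is a global parameter, assumed fixed across instances by the model, it does not enter the WL coloring. If one prefers to allow varying budgets, we append $B$ to the signature string $\operatorname{WL}_2(G)$ from \Cref{thm:k-wl-canon}. This keeps the signature a complete invariant for the relation ``same budget and 2-WL equivalent'', which is the relation used in the footnote preceding \Cref{theo:cst:wl:inv}.

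Third, run the preprocessing and query procedure of \Cref{theo:WL:iso}. In preprocessing, compute $\operatorname{WL}_2(G_i)$ for each prior, insert it into a trie, and store $\opt(G_i)$ with it; by 2-WL invariance, colliding keys carry equal values. In a query, compute $\operatorname{WL}_2(H)$ and look it up. If it is found, return the stored value, which is correct by \Cref{theo:cst:wl:inv}. If it is not found, report non-isomorphism, which is correct because isomorphic attributed graphs are 2-WL equivalent.

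There is essentially no obstacle here. The only point needing care is that \Cref{theo:WL:iso} is stated for general optimization problems whose $\opt$ is $\ell$-WL invariant. We must check that the $\cst$ value $\infty$ (infeasible instances) is also preserved. This follows from \Cref{lem:witness:inv:implies:inv}, whose proof covers the case $\opt=\infty$ symmetrically. The substantive work, the cospectrality argument, lies entirely inside \Cref{lem:cst:witness:inv}, which we take as given. The running time is dominated by computing the 2-WL complete invariant; trie insertion and lookup are linear in the string length, $O(n^3\log n)$.
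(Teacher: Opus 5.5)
Your proposal is correct and matches the paper's proof, which is exactly the one-line combination of \Cref{theo:WL:iso} (with $\ell=2$) and the 2-WL invariance of $\cst$ from \Cref{theo:cst:wl:inv}. Your remarks on the fixed budget and on preserving the value $\infty$ via \Cref{lem:witness:inv:implies:inv} are accurate, but they add nothing that those two results do not already supply.
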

\begin{proof}
    This is a direct consequence of \cref{theo:WL:iso} combined with \cref{theo:cst:wl:inv}.
\end{proof}

\subsection{Examples when Weisfeiler-Leman Fails}\label{sec:WL:fails}

Given a graph $G$ of order $n$ and maximum degree $k \geq 3$.
In \cite{DBLP:journals/combinatorica/CaiFI92} Cai, Fürer, and Immerman introduced the idea of CFI graphs. A CFI graph is obtained by taking $G$, a parity $p \in \{\odd, \even\}$ and computing a labeled graph $\cfiGraph{G}{p}$. This construction can be performed in time $O(2^k \cdot n)$.\footnote{See \cite[Section 6]{DBLP:journals/combinatorica/CaiFI92} for a detailed construction. Note that in this paper $X(G) = \cfiGraph{G}{\even}$ and $\tilde{X}(G) = \cfiGraph{G}{\odd}$.} Further, $\cfiGraph{G}{p}$ has a maximum degree at most $2^{k-2} + 1$. 

\begin{lemma}\label{lem:cfi:iso}
    Let $k \geq 3$.
    There is an algorithm that given a $k$-regular graph $G$, a parity $p \in \{\odd, \even\}$, and a labeled graph $H$ checks whether $\cfiGraph{G}{p}$ and $H$ are isomorphic. Further the algorithm runs in time $O(|V(G)|^{f(k)})$ for some computable function $f$.
\end{lemma}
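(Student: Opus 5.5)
The plan is to reduce the test to an isomorphism test on a graph class of bounded degree, and to invoke Luks' algorithm for that class. Bounded-degree isomorphism (\cite{luks1982isomorphism}, and the faster $n^{\mathrm{polylog}(d)}$ variant of \cite{DBLP:journals/siamcomp/GroheNS23}) runs in time $|V|^{g(d)}$ on graphs of maximum degree $d$, also for vertex-colored or edge-labeled graphs; labels can be encoded by small gadgets or handled directly.

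\textbf{Step 1: build the CFI graph.} Given $G$, $p$ and $H$, we first construct $\cfiGraph{G}{p}$ in time $O(2^k \cdot n)$ as stated above. This graph has $O(2^k n)$ vertices, carries its labels, and has maximum degree at most $d \coloneqq 2^{k-2}+1$.

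\textbf{Step 2: cheap rejections.} Next we check whether $H$ has the same number of vertices and edges, the same label multiset, and maximum degree at most $d$. If any of these checks fails, we answer ``not isomorphic''. Each check takes linear time and is sound, because isomorphism preserves all of these quantities.

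\textbf{Step 3: the isomorphism test.} If the checks pass, both graphs have $N = O(2^k n)$ vertices and maximum degree at most $d$. We run the labeled bounded-degree isomorphism algorithm on $\cfiGraph{G}{p}$ and $H$. This takes time $N^{g(d)} = (2^k n)^{g(2^{k-2}+1)}$. Because $k$ is fixed, $(2^k)^{g(\cdot)}$ can be absorbed, giving $O(n^{f(k)})$ for a computable $f$, for example $f(k) = 2g(2^{k-2}+1)+1$.

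\textbf{Main obstacle.} The delicate point is making sure that the labels of CFI graphs are respected by a black-box algorithm stated for plain graphs. One option is to cite a version of Luks' algorithm that handles colored graphs. The other is to encode each label class by attaching distinct pendant gadgets, such as paths of distinct lengths. These gadgets raise the degree by only one, keeping it a function of $k$, and they preserve isomorphism in both directions when the two labelings use the same alphabet. This alphabet condition was already checked in Step 2.
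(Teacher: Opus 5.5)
Your proposal is correct and follows essentially the paper's route: bound the maximum degree of $\cfiGraph{G}{p}$ by $2^{k-2}+1$ and apply the bounded-degree isomorphism test of \cite{DBLP:journals/siamcomp/GroheNS23}, with labels handled separately (the paper cites the labeled/unlabeled equivalence of \cite{booth1979problems}, whereas you use a colored version of the algorithm or pendant gadgets that raise the degree by only one). Your explicit check that $H$ has maximum degree at most $2^{k-2}+1$ before running the test is a detail the paper leaves implicit, but it is needed for the running-time bound.
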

\begin{proof}
    By construction the graph $\cfiGraph{G}{p}$ has a maximum degree of $2^{k-2} + 1$. By \cite{DBLP:journals/siamcomp/GroheNS23}, isomorphism testing on unlabeled graphs with maximum degree $2^{k-2} + 1$ can be done in time $O(n^{\log(2^{k-2} + 1)^c}) \subseteq O(n^{k^c})$ for a fixed constant $c$ that is independent of $k$. By \cite{booth1979problems}, the problem of finding isomorphism in unlabeled graphs is polynomially time equivalent to the problem of finding isomorphism in labeled graphs. Thus, there is an algorithm that runs in time $O(|V(G)|^{f(k)})$, for some computable function $f$, and checks if a labeled graph $H$ is isomorphic to $\cfiGraph{G}{p}$.
\end{proof}

Write $\Graphs$ for the set if of all graphs.

\begin{lemma}\label{lem:cfiFunc}
    There is a function $\cfiFunc \colon \Graphs \to \{0, 1\}$ with the following properties:
    \begin{itemize}
        \item The function $\cfiFunc$ is computable in polynomial time: Given a colored graph $G$ or order $n$, we can compute $\cfiFunc(G)$ in time $O(n^c)$ for some fixed constant $c$.
        \item For all isomorphic graphs $G$ and $H$ we obtain $\cfiFunc(G) = \cfiFunc(H)$.
        \item For all $\ell$ there are graphs $G, H$ of order $\Theta(\ell)$ with $G \WLequiv{\ell} H$ and $\cfiFunc(G) \neq \cfiFunc(H)$. 
    \end{itemize}
\end{lemma}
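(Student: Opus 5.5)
Define $\cfiFunc$ as a CFI-parity detector on a fixed family of $3$-regular base graphs. Fix an explicit family $(B_\ell)_{\ell \ge 1}$ of $3$-regular graphs with treewidth at least $\ell+1$ and $O(\ell)$ vertices, constructible in polynomial time; for instance, expanders or suitable grid-like $3$-regular graphs. For a colored graph $G$, set $\cfiFunc(G)=1$ if and only if there is an $\ell$ with $G \cong \cfiGraph{B_\ell}{\odd}$. Since $\cfiGraph{B_\ell}{p}$ has $\Theta(|V(B_\ell)|)$ vertices (a factor $2^{3-1}+3$ for $k=3$), only $O(1)$ values of $\ell$ are compatible with $|V(G)|=n$. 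We enumerate those values, build $B_\ell$, and run the test from \Cref{lem:cfi:iso} with parity $\odd$. Because $k=3$ is fixed, each test runs in time $O(|V(B_\ell)|^{f(3)})=O(n^{c})$ with $c=f(3)$ a constant independent of $\ell$. This gives the first property.

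The second property holds because $\cfiFunc$ is defined only through the isomorphism type of $G$: if $G\cong H$, then $G\cong \cfiGraph{B_\ell}{\odd}$ exactly when $H$ is.

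For the third property, take $G=\cfiGraph{B_\ell}{\odd}$ and $H=\cfiGraph{B_\ell}{\even}$. Both have order $\Theta(\ell)$, and $\cfiFunc(G)=1$. Two facts from Cai, F\"urer, and Immerman~\cite{DBLP:journals/combinatorica/CaiFI92} are needed. First, $\cfiGraph{B}{\odd}\not\cong\cfiGraph{B}{\even}$ for connected $B$; this gives $\cfiFunc(H)=0$, because $H$ has the same order as $G$, so the only candidate $\ell'$ whose CFI graph matches the order is $\ell$ itself, provided the orders $|V(B_\ell)|$ are chosen distinct. Second, if $B$ has no separator of size $\ell+1$ (for example, treewidth above $\ell$), then $\cfiGraph{B}{\odd}\WLequiv{\ell}\cfiGraph{B}{\even}$; this gives $G\WLequiv{\ell}H$.

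The main obstacle is making sure the base family really has separator size / treewidth $\Omega(\ell)$ while keeping only $O(\ell)$ vertices and degree exactly $3$. This is what forces order $\Theta(\ell)$ instead of $\Theta(\ell^2)$, which is what a grid would give. I would use explicit $3$-regular expanders, whose treewidth is linear in their order, and scale the constant so that the treewidth exceeds $\ell$. The CFI lemma is usually stated for the $k$-pebble game, so I also need to translate it to $\ell$-WL equivalence via the standard correspondence. This shifts the dimension by at most one, which is absorbed by taking treewidth at least $\ell+2$.
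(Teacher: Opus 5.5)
Your proposal is correct and follows essentially the same route as the paper. Both take CFI graphs over an explicit family of bounded-degree expanders, set $\cfiFunc(G)=1$ iff $G$ is isomorphic to the odd CFI graph of a family member (decided by the bounded-degree test of \Cref{lem:cfi:iso}), and use CFI's Theorems 6.2 and 6.4/5.2 for non-isomorphism and $\ell$-WL equivalence. The differences are cosmetic: you use 3-regular base graphs indexed by $\ell$ with distinct orders, while the paper uses 4-regular expanders indexed by their order, so that $n$ determines the single candidate.

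One small caution: CFI's Theorem 6.4 requires the absence of small balanced separators, which treewidth alone does not give. Your parenthetical ``for example, treewidth above $\ell$'' is therefore not an instance of that hypothesis. Rely on the linear separator bound of expanders, as the paper does, or cite the treewidth version of the result (Dawar--Richerby) instead.
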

\begin{proof}
    By \cite{DBLP:journals/combinatorica/Alon21}, for each $n$ we can construct a $4$-regular expander graph $E_n$ with $n$ vertices. To be more specific, there is an algorithm that given $n$ constructs $E_n$ in time $O(n^d)$ for same constant $d$. Further, is well-known that expander graphs have linear separator.\footnote{A separator of a graph $G$  is a subset $S \subseteq V(G)$ such that the induced subgraph with vertex set $V(G) \setminus S$ has no connected component with more than $V(G)/2$ vertices.} Formally, there is a global constant $\alpha > 0$ such that each separator of $E_n$ has at least $\alpha n$ many vertices. We use this to define the following property on graphs: 
    \[\cfiFunc \colon \Graphs \to \{0, 1\}, G \mapsto \begin{cases}
        1 & \text{if there is an $n'$ with $G \cong \cfiGraph{E_{n'}}{\odd}$} \\
        0 & \text{otherwise}
    \end{cases}.\] 
    Next, we check that $\cfiFunc$ has the promised properties
    \begin{itemize}
        \item Let $G$ be an input graph of order $n$. We show how to compute $\cfiFunc(G)$. For each $n'$ the graph $\cfiGraph{E_{n'}}{\odd}$ has $(4 + 4+ 2^3)n' = 16n'$ many vertices. Thus, we compute $n' = n/16$. If $n'$ is not an integer then we return $0$ since $G$ cannot be isomorphic to any $\cfiFunc(E_{m}, \odd)$. Otherwise, we use  \cite{DBLP:journals/combinatorica/Alon21} to compute $E_{n'}$ in time $O(n^d)$ and then construct $\cfiGraph{E_{n'}}{\odd}$ in time $O(n)$. Note that $G$ is either isomorphic to $\cfiGraph{E_{n'}}{\odd}$ or $G$ is non-isomorphic to every $\cfiFunc(E_{m}, \odd)$. Next, since $E_{n'}$ is by construction 4-regular \cref{lem:cfi:iso} yields that we can check $G \cong \cfiGraph{E_{n'}}{\odd}$ in time $O(n^{d'})$, for some fixed integer $d'$. If this is then case then return 1. Otherwise, we return 0.
        By setting $c = \max(d, d')$, we obtain that $\cfiFunc(G)$ can be computed in time $O(n^c)$.
        
        \item If $G$ and $H$ are isomorphic then $G \cong \cfiGraph{E_{n'}}{\odd}$ if and only if $H \cong \cfiGraph{E_{n'}}{\odd}$. Thus, $\cfiFunc(G) = \cfiFunc(H)$.
        \item Let $\ell \in \mathbb{N}$ and $\alpha > 0$ be a constant such that  each separator of $E_n$ has at least $\alpha n$ many vertices. 
        Let $n \coloneqq \lceil (\ell + 2) /\alpha \rceil$. 
        We show that the graphs $\cfiGraph{E_{n}}{\odd}$ and $\cfiGraph{E_{n}}{\even}$ are sufficient. 
        Note that both graphs have $\Theta(\ell)$ many vertices. 
        Further, by construction each separator of $E_n$ has at least 
        $\ell + 1$ many vertices. Thus, \cite[Theorem 6.4]{DBLP:journals/combinatorica/CaiFI92} 
        implies $\cfiGraph{E_{n}}{\odd} \WLequiv{\ell} \cfiGraph{E_{n}}{\even}$.\footnote{By \cite[Theorem 6.4]{DBLP:journals/combinatorica/CaiFI92}, we obtain $\cfiGraph{E_{n}}{\odd} \equiv_{\mathcal{C}_{\ell+1}} \cfiGraph{E_{n}}{\even}$. By \cite[Theorem 5.2]{DBLP:journals/combinatorica/CaiFI92}, 
        this is equivalent to $\cfiGraph{E_{n}}{\odd} \WLequiv{\ell} \cfiGraph{E_{n}}{\even}$.} Further, \cite[Theorem 6.2]{DBLP:journals/combinatorica/CaiFI92} yields that $\cfiGraph{E_{n}}{\odd}$ and $\cfiFunc(E_n, \even)$ are non-isomorphic. 
        Hence, $\cfiFunc(\cfiGraph{E_{n}}{\odd}) = 1$ and $\cfiFunc(\cfiGraph{E_{n}}{\even}) = 0$.
    \end{itemize}
\end{proof}

\begin{theorem}\label{theo:CFI:problem}
    There exist an decision problem $\Pi = (\Graphs, \cfiFunc \colon \Graphs \to \{0, 1\})$ such that  $\Pi$ is polynomial time solvable and
    for all $\ell$ the problem $\Pi$ is not \emph{$\ell$-WL invariant}.
\end{theorem}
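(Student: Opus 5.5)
The plan is to take $\Pi \coloneqq (\Graphs, \cfiFunc)$, where $\cfiFunc$ is the function constructed in \Cref{lem:cfiFunc}, and check the two required properties separately using the three bullet points of that lemma.

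\textbf{Well-defined and polynomial-time solvable.} First I check that $\Pi$ is a legitimate isomorphism-closed decision problem in the sense of \Cref{remark:iso:closed}. This is exactly the second bullet of \Cref{lem:cfiFunc}: if $G \cong H$ then $\cfiFunc(G)=\cfiFunc(H)$.

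Polynomial-time solvability is the first bullet: $\cfiFunc(G)$ can be computed in time $O(n^c)$ for a fixed constant $c$. I will note that $c$ does not depend on anything. The degree bound from \Cref{lem:cfi:iso} is applied only to the fixed $4$-regular expanders $E_{n'}$, so the exponent $f(4)$ is a constant. The same holds for the construction exponent $d$ of the expander family.

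It follows that $\Pi$ is solvable in polynomial time even classically. In particular it is polynomial-time solvable inside the Isomorphic-Priors model, since the preprocessing can do nothing and each query can simply evaluate $\cfiFunc(H)$.

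\textbf{Not $\ell$-WL invariant for any $\ell$.} Fix an arbitrary $\ell$. By the third bullet of \Cref{lem:cfiFunc} there are graphs $G,H$ of order $\Theta(\ell)$ with $G \WLequiv{\ell} H$ but $\cfiFunc(G)\neq\cfiFunc(H)$. Concretely these are $\cfiGraph{E_n}{\odd}$ and $\cfiGraph{E_n}{\even}$ with $n=\lceil(\ell+2)/\alpha\rceil$.

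By the definition of $\ell$-WL invariance in \Cref{sec:ell:WL:iso}, $\opt_\Pi=\cfiFunc$ would have to agree on every pair of $\ell$-WL equivalent graphs. This pair shows it does not, so $\Pi$ is not $\ell$-WL invariant. Since $\ell$ was arbitrary, the claim holds for all $\ell$.

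\textbf{Main obstacle.} The argument is essentially a repackaging of \Cref{lem:cfiFunc}, so the only delicate point is its third bullet. That bullet relies on the CFI theorem: the odd and even CFI graphs over a base graph with no separator of size $\le \ell$ are $\ell$-WL equivalent but non-isomorphic. It also uses the conversion between counting-logic equivalence $\mathcal{C}_{\ell+1}$ and $\ell$-WL, which I would cite from Cai, Fürer, and Immerman.

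A second point to check is that the definition of $\ell$-WL invariance, stated for optimization problems, applies verbatim to a decision problem. It does, because it is phrased in terms of the function $\opt_\Pi$, which here takes values in $\{0,1\}$.
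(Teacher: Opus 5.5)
Your proposal is correct and matches the paper's proof: the paper likewise takes $\Pi$ to be the problem of deciding $\cfiFunc(G)=1$ and reads off polynomial-time solvability and non-invariance for every $\ell$ directly from \Cref{lem:cfiFunc}. Your extra remarks also agree with the lemma's proof. These are the isomorphism closure via its second bullet, the constant exponent coming from the fixed degree of the $4$-regular expanders, and the witness pair $\cfiGraph{E_n}{\odd}$, $\cfiGraph{E_n}{\even}$.
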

\begin{proof}
    Let $\Pi$ be the problem that given a graph $G$ decides if $\cfiFunc(G) = 1$. By \cref{lem:cfiFunc}, $\Pi$ is polynomial time solvable but for all $\ell$ the problem $\Pi$ is not \emph{$\ell$-WL invariant}.
\end{proof}

\cref{theo:WL:iso} shows that if a problem is $\ell$-WL invariant than it is polynomial time solvable inside the Isomorphic-Priors
model. However, \cref{theo:CFI:problem} demonstrates that there are problems
that are in $\cP$ (and that are therefore polynomial time solvable inside the Isomorphic-Priors
model) but for which our Weisfeiler-Leman approach fails. 

\begin{remark}
    The original CFI graph construction can be viewed as a graph-theoretic encoding of parity information. More specifically, the underlying idea of CFI graphs is closely related to the solvability of linear systems over the finite field $\mathbb{F}_2$, with the gadgets enforcing local parity constraints (see \cite{DBLP:journals/corr/abs-1204-3022}). It follows that solving linear equations over $\mathbb{F}_2$ is not invariant under $\ell$-WL for any $\ell$. This therefore yields another polynomial-time solvable problem that cannot be captured by Weisfeiler-Leman.
\end{remark}

We note that, using similar constructions, the same can be proved for several other hard problems, e.g., for counting certain (induced) subgraphs \cite{Neuen24a,LanzingerB24, DBLP:conf/stoc/DoringMW24, CurticapeanN25, DBLP:conf/soda/DoringMW25} or dominating set \cite{GobelGR24}.

\section{Conditional Lower Bounds}\label{sec:Hardness}\label{sec:Lowerbounds}

In this section, we prove different hardness results for the Isomorphic-Priors model. As our source of hardness, we use the $\giName$ problem.

\begin{definition}[$\giName$ ($\GI$)]
    In the $\giName$ problem we are given a pair of graphs $I = (G_0, G_1)$ and have to decide if $G_0 \cong G_1$.

    Two problem instances $I = (G_0, G_1)$ and $J = (H_0, H_1)$ are isomorphic if $G_0 \cong H_0$ and $G_1 \cong H_1$.
\end{definition}

Up to this date, it is not known whether $\GI$ is in $\cP$ despite a lot of effort from many researchers \cite{DBLP:journals/sigact/Kozen78}. We can therefore use $\GI$ as a source of hardness and show that many problems are not polynomial time solvable inside the Isomorphic-Priors model unless $\GI \in P$, which would be a breakthrough result. 

Furthermore, it is not known if $\GI$ is in  $\coma$.\footnote{The complexity class $\mathsf{MA}$ consists of all languages $L$ for which there exist a polynomial-time deterministic Turing machine $M$ and polynomials $p$, $q$ such that for every input string $x$ of length $n \coloneqq |x|$,
\begin{itemize}
    \item if $x \in L$ then there exist a $z \in \{0, 1\}^{q(n)}$ with $\Pr_{y \in \{0, 1\}^{p(n)}}[M(x, y, z) =  1] \geq 2/3$,
    \item if $x \notin L$ then for all $z \in \{0, 1\}^{q(n)}$ we obtain $\Pr_{y \in \{0, 1\}^{p(n)}}[M(x, y, z) =  0] \geq 2/3$.
\end{itemize}
The class $\coma$ consists of all languages whose complement is in $\mathsf{MA}$.} Again, we can use this as a source of hardness. We show in \cref{sec:GI:hard} that $\GI$ is not polynomial time solvable within our model unless $\GI \in \coma$.

Next, we are utilizing the hardness of $\GI$ to obtain further hardness results within our model for ``normal'' decision problems $\Pi$. To this end, we associate each problem $\Pi$ with a specific promise version of the same problem that we denote as $\mathrm{Iso}\text{-}\Pi$. See \cref{sec:iso:prob} for a formal definition. We start by showing that for $\GI$ the new problem $\mathrm{Iso}\text{-}\GI$ is $\GI$-complete (i.e., as hard as $\GI$). Therefore, $\mathrm{Iso}\text{-}\GI$ can be used as a source hardness within our model.

Having a source of hardness, we introduce in \cref{sec:isoreduction} the notion of an \emph{isomorphism-preserving reduction}. This new kind of reduction allows us to propagate the hardness results from $\mathrm{Iso}\text{-}\GI$ to other $\mathrm{Iso}\text{-}\Pi$ problems as long as there exists an isomorphism-preserving reduction between $\GI$ and $\Pi$. We then use isomorphism-preserving reductions to obtain hardness results for each problem $\mathrm{Iso}\text{-}\Pi$ where $\Pi$ is an element of
\begin{align*}
    \ProbCollection \coloneqq  \{&\LPZEROONE,\; \GAP{\kCenter}_{1, 2}, \Clique, \; \CliqueCover, \; \COL,\; \kCOL{3}, \\
    \; 
    &\DMatching, \; 
    \EXA, \; \FeedbackArcSet, \; \FeedbackVertexSet, \;
    \DirHam, \; \\
    &\UndirHam, \; \HIT,\; \IS,\;  \MaxCut,\; \kMedian, \; \SAT,  \; \kSAT{3},\; \MaxkSAT{2}, \;  \\
    &\SetCover, \; \SetPacking, \; \STEINER, \; \GAP{\TSP_{1, c}}, \; \MetricTSP, \;
    \VC\}.
\end{align*}
The problem definitions can be found in \cref{app:hardness:defitions}. The hardness of most problems can be shown by utilizing standard reductions from the literature. However, to obtain hardness of $\kSAT{3}$, we need to take a non-standard route through $\kCOL{3}$. For this, we need a reduction by Lovász. See \cref{sec:3sat} for more details.
Finally, utilizing the hardness of $\mathrm{Iso}\text{-}\GI$ again, we then show in \cref{sec:hard:iso:pro} that each problem in $\ProbCollection$ cannot be solved efficiently in our model. To be more specific, we obtain two different kind of hardness results.

We show that no problem from $\ProbCollection$ can be solved with fast preprocessing and queries, unless $\GI \in \cP$.
\begin{remark}
Note that in the following theorem, we state only the dependence of
the running time on $n$, although it may also depend on $k$.
Here, ``polynomial in $n$'' means polynomial in $n$ with $k$ held fixed.
The same convention applies to space requirements and to the remaining theorems in this section.
\end{remark}
\begin{theorem}\label{cor:hardness-consequence}
Unless $\GI \in \cP$, for every problem $\Pi \in \ProbCollection$,
the problem $\Pi$ cannot be solved inside the Isomorphic-Priors model with $k$ prior instances each of size at most $n$ with preprocessing and query time polynomial in 
$n$, even when $k=1$ prior instance is given.
\end{theorem}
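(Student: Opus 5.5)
The plan is to derive the theorem from the single-prior promise argument of the overview, combined with isomorphism-preserving reductions. Fix $\Pi \in \ProbCollection$ and suppose an algorithm $A$ solves $\Pi$ in the Isomorphic-Priors model with $k=1$, using polynomial preprocessing and query time. From this I will build a polynomial-time decision procedure for $\GI$.

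\textbf{Step 1: hardness of $\GI$ itself.} Let $(X,Y)$ be an arbitrary $\GI$ instance with $|V(X)|=|V(Y)|$. Use the prior $(G,H)=(X,X)$, whose answer is trivially YES, and the query $(G',H')=(X,Y)$. This query satisfies the promise $G'\cong G\cong H$, and under this promise the following are equivalent: $(G',H')\cong(G,H)$, $X\cong Y$, and $\opt(G',H')=\text{YES}$. So any answer of a one-prior algorithm for $\GI$ (or for $\mathrm{Iso}\text{-}\GI$) decides $X\cong Y$ from scratch.

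\textbf{Step 2: transfer via isomorphism-preserving reductions.} For each $\Pi\in\ProbCollection$ I need a polynomial-time map $f$ from $\GI$ instances to $\Pi$ instances with two properties. First, it is a correct many-one reduction: the $\Pi$-answer of $f(G,H)$ encodes whether $G\cong H$, for instance via a threshold value such as $\alpha(Z)=|V(G)|$. Second, it is isomorphism-preserving: $(G,H)\cong(G',H')$ implies $f(G,H)\cong f(G',H')$.

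Given $f$, the procedure is as follows. Preprocess $f(X,X)$ together with its known answer; this answer is computable because it is the YES-image, for example $\alpha = n$. Then query $A$ on $f(X,Y)$.
\begin{itemize}
\item If $X\cong Y$, then $f(X,Y)\cong f(X,X)$, so $A$ must return the correct value, which certifies YES.
\item If $X\not\cong Y$, then $A$ either returns a correct value, which certifies NO, or declares non-isomorphism. The second outcome is only allowed when $f(X,Y)\not\cong f(X,X)$, which forces $X\not\cong Y$ by the first case.
\end{itemize}
So we output NO whenever $A$ declares non-isomorphism, which gives $\GI\in\cP$. For approximation variants ($\GAP{\kCenter}_{1,2}$, $\GAP{\TSP_{1,c}}$) the same argument works with gap reductions whose YES/NO images are separated by the approximation factor.

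\textbf{Step 3: building the reductions.} I would start from Kozen's modular-product construction $\GI\to\Clique/\IS$, which is manifestly isomorphism-preserving. I would then chain standard Karp/Garey--Johnson reductions ($\IS\to\VC\to\SetCover,\HIT,\FeedbackVertexSet$; $\VC\to\DirHam\to\UndirHam\to\TSP$; $\Clique\to\SetPacking$; and so on). For each one I verify that the gadget construction depends only on the isomorphism type of the input, i.e. it uses no vertex ordering, so any isomorphism of inputs lifts to an isomorphism of outputs. Composition of isomorphism-preserving reductions is isomorphism-preserving, and problem-specific global parameters (budgets, $k$) are functions of $n$ and therefore equal on isomorphic inputs.

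\textbf{Main obstacle.} The hard step is reductions that implicitly use an ordering. The standard $\SAT\to\kSAT{3}$ clause splitting chooses an order of literals, and Hamiltonian-cycle gadgets chain gadgets in some order. For $\kSAT{3}$ I would follow the planned route $\SAT\to\COL$ (Karp), then $\COL\to\kCOL{3}$ (Lovász), then $\kCOL{3}\to\kSAT{3}$ via per-vertex and per-edge clauses, which is symmetric. For the ordered gadget chains I would verify that the ordering choices yield isomorphic outputs; failing that, I would replace the ordering by a symmetric construction, for example chaining in all orders or reducing through a problem with a symmetric reduction.
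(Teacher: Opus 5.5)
\paragraph{Steps 1--2 are correct and match the paper.}
They are the paper's argument with the promise-problem packaging unwound. \cref{lem:iso-gi-complete} maps $(X,Y)$ to $((X,X),(X,Y))$, and \cref{lem:iso-hardness-transfer} applies $R$ to both components. \cref{lem:isoPi-hard-model} then runs the model algorithm with prior $R(X,X)$ (answer YES) and query $R(X,Y)$, reading $\bot$ as NO. That is exactly your procedure.

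\paragraph{The gap is in Step 3: nothing reduces into $\SAT$.}
Your chain is rooted only at $\GI\to\Clique/\IS$. In Karp's tree, the reductions you rely on for $\COL$, $\kCOL{3}$, $\kSAT{3}$, $\CliqueCover$, $\EXA$ (hence $\STEINER$, $\HIT$, $\DMatching$) and $\LPZEROONE$ all start from $\SAT$. Nothing in your plan gives $\GI\isored\SAT$. A generic Cook--Levin-style reduction into $\SAT$ hard-codes the input encoding and is not isomorphism-preserving in general. So $\SAT$ itself and this whole branch remain unproved. The missing piece is the paper's \cref{lem:gi-to-sat}. It encodes $\GI$ directly with variables $x_{u,v}$, clauses forcing a bijection, and a clause $(\neg x_{u,v}\vee\neg x_{u',v'})$ for every adjacency-inconsistent pair. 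The renaming $x_{u,v}\mapsto x_{\psi_G(u),\psi_H(v)}$ then witnesses isomorphism preservation. Relatedly, the paper's $\HIT$ is the exact version ($|W\cap S_i|=1$), so $\VC\to\HIT$ is not the trivial map; the paper obtains $\HIT$ from $\EXA$ by duality.

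\paragraph{Your ordering concern is justified, and the paper does not settle it either.}
Karp's $\VC\to\DirHam$ construction threads the edge gadgets at each vertex in the order of the edge numbering. Already for the path on four vertices, two edge numberings give non-isomorphic digraphs. One output has an edge gadget in which exactly one node is entered from the selector vertices and exactly one node leads back to them; the other output has no such gadget. \cref{lem:karp:red} merely asserts that Karp's reductions can be checked to be isomorphism-preserving. Karp's $\EXA\to\DMatching$, which fixes an arbitrary injection and cyclic orders, needs the same scrutiny. Your fallback of chaining in all orders costs $d!$ gadgets at a vertex of degree $d$, so it is not polynomial. Until a genuinely symmetric reduction is supplied, the $\DirHam$/$\UndirHam$/$\TSP$ branch is open in your proposal, and the paper only asserts it.
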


Under the stronger assumption $\GI \notin \coma$, we can also rule out a data structure of polynomial size and query time.
\begin{theorem}\label{thm:ds-lb}
Unless $\GI \in \coma$, for every problem $\Pi \in \ProbCollection$, the problem $\Pi$ cannot be solved inside the Isomorphic-Priors model with $k$ prior instances each of size at most $n$ with a data structure of size polynomial in $n$, and query time polynomial in $n$, even when $k=1$ prior instance is given.
\end{theorem}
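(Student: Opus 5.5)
Assume for contradiction that some $\Pi \in \ProbCollection$ admits an Isomorphic-Priors algorithm with $k=1$ whose data structure has size $\poly(n)$ and whose query time is $\poly(n)$, but whose preprocessing time is unbounded. We will use it to put the complement of $\GI$ into $\mathsf{MA}$, i.e.\ $\GI \in \coma$. The key observation is that the data structure $D$ built for a prior instance is a polynomial-size string. Merlin can therefore send it as an advice/witness string. However, Arthur cannot trust it, because Arthur cannot rerun the (possibly exponential) preprocessing.

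We first handle $\GI$ itself (as in \cref{sec:GI:hard}), using the promise formulation $\mathrm{Iso}\text{-}\GI$. Given a worst-case pair $(X,Y)$ with $|V(X)|=|V(Y)|=n$, the relevant prior is $(X,X)$ with answer YES, and the query is $(X,Y)$. By the correctness guarantee, a correct data structure $D$ for prior $(X,X)$ answers the query $(X,Y)$ with either ``isomorphic'' or ``not isomorphic to the prior''. In both cases the answer equals the truth about $X\cong Y$. To decide non-isomorphism, Merlin sends a claimed $D$ together with the algorithm's query procedure. The difficulty is soundness: a cheating Merlin could send a $D$ that wrongly says ``non-isomorphic''.

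To make the check feasible, Arthur tests $D$ on random isomorphic copies of the prior. Arthur samples a random permutation $\pi$ and one of $X$ or $Y$. He queries $D$ on $(X,\pi(X))$, which must return ``isomorphic'', and on $(X,\pi(Y))$. If $X\cong Y$, then $\pi(X)$ and $\pi(Y)$ are identically distributed (the Goldwasser--Micali--Wigderson / Babai trick). Hence any $D$ that answers ``not isomorphic'' on $(X,\pi(Y))$ with noticeable probability must also err on $(X,\pi(X))$ with the same probability, and there the correct answer is known. Arthur accepts ``$X\not\cong Y$'' only if $D$ answers correctly on all sampled copies of $X$ and answers ``not isomorphic'' on the sampled copies of $Y$. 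Completeness holds because an honest $D$ is always correct. Soundness follows from this indistinguishability, with error amplified by repetition to below $1/3$. This yields $\overline{\GI}\in\mathsf{MA}$.

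For general $\Pi\in\ProbCollection$ we plan to compose this with the isomorphism-preserving reductions $f$ from \cref{sec:isoreduction}. Arthur constructs $f(X,\pi(X))$ and $f(X,\pi(Y))$ and queries Merlin's $D$ for $\Pi$ with prior $f(X,X)$. Isomorphism preservation guarantees two facts: $f(X,\pi(X))\cong f(X,X)$, so the correct answer is forced, and when $X\cong Y$ the two query distributions coincide. We expect the main obstacle to be exactly this symmetrization step. It requires that the random relabeling commutes with $f$ in distribution, that approximate/gap problems (TSP, $\kCenter$) map the ``not isomorphic'' promise correctly, and that the permutation be applied to the $\GI$ instance rather than to $f$'s output. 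We would first try to prove a general lemma that isomorphism-preserving reductions yield the same distributional equality, and then invoke it uniformly for all of $\ProbCollection$.
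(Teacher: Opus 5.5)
Your proposal is correct and follows the paper's route. The paper likewise has Merlin send the data structure for the prior $(X,X)$ and has Arthur query random relabelings whose distribution hides Arthur's choice when $X\cong Y$ (Lemma~\ref{lem:gi-ds-coMA}); it then transfers this to every $\Pi\in\ProbCollection$ by composing with the chain of isomorphism-preserving reductions (Lemma~\ref{lem:Pi-ds-coMA-via-GI}).

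The symmetrization step you flag as the main obstacle is immediate. Arthur applies the fixed deterministic map $f$ to $(X,\pi(X))$ and $(X,\pi(Y))$, which are identically distributed when $X\cong Y$, so the resulting queries are identically distributed too. Isomorphism preservation is needed only to force the honest answer YES on $f(X,\pi(X))\cong f(X,X)$. Correctness of $f$, including for the gap versions, handles the NO case.
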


After showing hardness for decision problems, we move on in \cref{subsec:proofhardnessofapprox} to show different hardness results for approximation problems.
To this end, we first show that
$\mathsf{Max\text{-}}k\mathsf{\text{-}Coverage}$ cannot be efficiently approximated better than $1-1/e$ in our model, unless $\MaxkSAT{3}$ can be approximated arbitrarily well.
The assumption on $\MaxkSAT{3}$ serves as our analogue to the PCP theorem.
We have not been able to formulate the PCP theorem directly, because it is unclear how to define an appropriate notion of isomorphism for probabilistically checkable proofs.

\begin{theorem}\label{thm:fastlb-approx}
Assume that there is an $\eps>0$ such that $\mathsf{Gap\text{-}Max\text{-}3\text{-}SAT}_{1,\,1-\eps}$ cannot be solved in the Isomorphic-Priors model with $k$ prior instances each of size at most $n$ with preprocessing time polynomial in $k$ and $n$
and query time polynomial in $n$, even when $k=1$ prior instance is given.

Then there is an $\eps'>0$ such that $\mathsf{Max\text{-}}k\mathsf{\text{-}Coverage}$ cannot be approximated within 
 $(1-1/e+\eps')$
in the Isomorphic-Priors model with preprocessing and query time polynomial in $n$, even when $k=1$ prior instance is given.
Also, there is an $\eps'' >0$ such that $\kMedian$ cannot be approximated within $(1+2/e-\eps'')$
in the Isomorphic-Priors model with preprocessing and query time polynomial in $n$, even when $k=1$ prior instance is given.
\end{theorem}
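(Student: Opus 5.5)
The plan is to compose the assumed hardness of $\mathsf{Gap\text{-}Max\text{-}3\text{-}SAT}_{1,1-\eps}$ with isomorphism-preserving versions of the standard gap reductions. For Max-$k$-Coverage this is Feige's reduction, or the simpler partitioning-system reduction from a gap label-cover instance. For $\kMedian$ it is the Guha--Khuller / Jain--Mahdian--Saberi reduction from Max-$k$-Coverage, which yields the $1+2/e$ threshold. The framework is the one for exact problems in \Cref{sec:isoreduction,sec:hard:iso:pro}. If $f$ is a polynomial-time map sending YES-gap instances to instances of value at least $c$ and NO-gap instances to value below $s$, and $I\cong J$ implies $f(I)\cong f(J)$, then an Isomorphic-Priors algorithm for the target gap problem yields one for the source. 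We preprocess $f(G_1),\dots,f(G_k)$ together with the values implied by the prior labels. A query $H$ is answered by querying $f(H)$. If the answer is ``not isomorphic'', then $f(H)\not\cong f(G_i)$ for all $i$, so by isomorphism preservation $H\not\cong G_i$ and the declaration is correct. Otherwise the returned approximate value separates the gap. One point needs care: the prior must be labeled with a number for the target problem. The prior comes with the source gap label, and we can pass the completeness bound $c$ as a certified value (or its approximate version), since only the gap matters. I will state this as a gap-version analogue of the transfer lemma for exact problems.

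The main steps are as follows. (1) Reduce $\mathsf{Gap\text{-}Max\text{-}3\text{-}SAT}_{1,1-\eps}$ to gap label cover via parallel repetition. The $\ell$-fold repetition's clause/variable tuples are indexed by ordered tuples of clauses and variables, and a renaming of variables and clauses induces a renaming of the tuples. So the construction is isomorphism-preserving, provided the label cover instance is treated as an attributed bipartite graph whose edges carry the projection maps, with labels encoded relative to the tuple structure. (2) Apply Feige's set-cover/coverage construction with a fixed partition system. This system depends only on parameters and not on the instance, so it can be chosen deterministically and canonically. The sets are indexed by (vertex, label) pairs, which again transform covariantly under isomorphisms. (3) For $\kMedian$, apply the coverage-to-facility-location reduction, which builds a bipartite metric with distances $1$ and $3$ determined purely by incidence. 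It is isomorphism-preserving and transfers the $1-1/e$ gap for coverage into a $1+2/e$ gap.

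The main obstacle is step (1). Parallel repetition and the label cover encoding refer to explicit orderings: the labels are satisfying assignments to a clause's variables, and the projection depends on the position of each variable within a clause. An isomorphism of 3-CNF formulas may permute literals inside a clause. So the label names must be made invariant: a label for a clause should be defined as a function from its \emph{set} of variables to $\{0,1\}$, not as a tuple, and $\ell$-tuples should be replaced by multisets where needed. The symmetry of the repetition then has to be checked so that soundness is unaffected. The partition system in step (2) also must not depend on an arbitrary ordering of labels. I would index the universe by label sets via these invariant functions, and verify that the label ordering used in Feige's analysis matters only for counting, not for the structure of the instance.
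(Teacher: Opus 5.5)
Your outer architecture is the same as the paper's. You argue by contraposition and phrase everything as gap (promise) problems, so that priors carry YES/NO labels. This is forced: for a NO prior you cannot supply an exact coverage value, so passing $c$ only works for YES priors. You then push both prior and query through an isomorphism-preserving Feige-type reduction to $\mathsf{Max\text{-}}k\mathsf{\text{-}Coverage}$, as in \Cref{lem:gap3sat-to-maxkcover}, and through the distance-$1$/$3$ reduction of \Cref{lem:Set:red:Mean} for $\kMedian$.

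The gap is in your step (2), and you have placed the difficulty in the wrong spot. The label naming you worry about is easy. The $\ell$ coordinates are ordered by the random string itself, and a formula isomorphism acts on random strings coordinatewise. Answers can be read as functions on variable sets, and partitions are indexed by the induced string $p_r(a)\in\{0,1\}^\ell$, so no multisets are needed.

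What needs an idea is the following. To reach $1-1/e$, every block of the universe must be shared by $k\to\infty$ participants, each owning one part of each partition. Blocks attached to single edges of a two-prover label cover only give $1-(1/2)^2=3/4$. Moreover, the assignment of parts to participants must itself be isomorphism-invariant. If you form blocks around a vertex of a bipartite label cover, that assignment is an ordering of the vertex's neighbours, which formula isomorphisms need not respect. The paper avoids this by using Feige's $k$-prover system directly. Blocks are indexed by random strings $r$, which also gives them the verifier's weights. Part $B(r,p,i)$ is owned by prover $i$, an index fixed by the verifier and independent of $\Phi$, so invariance is automatic.

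Second, you never address occurrences. Feige's analysis, like the usual regular label cover, starts from 3CNF-5, but regularization steps are not isomorphism-preserving. In fact, under the hypothesis no isomorphism-preserving detour through bounded-occurrence formulas can exist at all. Gap-Max-3-SAT restricted to bounded occurrences is easy in the model, because bounded-degree structures can be canonized in polynomial time.

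The paper supplies the missing ingredient as \Cref{lem:feige-kprover-unbounded}. The weak-acceptance bound $k^2\cdot 2^{-c\ell}$ of the $k$-prover system holds for arbitrary 3-CNF formulas, since Feige's base two-prover soundness argument never uses bounded occurrence. Without the $k$-prover structure and this unbounded-occurrence soundness, your step (2) is not justified.
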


The analogous implication can be shown for approximating $\mathsf{Max\text{-}}k\mathsf{\text{-}Coverage}$ with a small data structure.
\begin{theorem}\label{thm:smalllb-approx}
Assume that there is an $\eps>0$ such that $\mathsf{Gap\text{-}Max\text{-}3\text{-}SAT}_{1,\,1-\eps}$ cannot be solved in the Isomorphic-Priors model with $k$ prior instances each of size at most $n$
and query time polynomial in $n$, even when $k=1$ prior instance is given.

Then there is an $\eps'>0$ such that $\mathsf{Max\text{-}}k\mathsf{\text{-}Coverage}$ cannot be approximated within $(1-1/e+\eps')$
in the Isomorphic-Priors model with data structure of size polynomial in $n$ and query time polynomial in $n$, even when $k=1$ prior instance is given.
Also, there is an $\eps''$ such that $\kMedian$ cannot be approximated better than $(1+2/e-\eps'')$
in the Isomorphic-Priors model with data structure of size polynomial in $n$ and query time polynomial in $n$, even when $k=1$ prior instance is given.
\end{theorem}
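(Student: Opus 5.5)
Assume an Isomorphic-Priors algorithm $\mathcal{A}$ for $\mathsf{Max\text{-}}k\mathsf{\text{-}Coverage}$ with approximation ratio $1-1/e+\eps'$ and small data structure and polynomial query time. From it, build an algorithm for $\mathsf{Gap\text{-}Max\text{-}3\text{-}SAT}_{1,1-\eps}$ in the model with a single prior, again with polynomial-size data structure and polynomial query time. This contradicts the hypothesis. The $\kMedian$ statement then follows by composing one further isomorphism-preserving gap reduction, from coverage to $\kMedian$, in the same way.

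The core tool is Feige's reduction from gap $3$-SAT to $\mathsf{Max\text{-}}k\mathsf{\text{-}Coverage}$. I would use the variant that goes through gap Label Cover, obtained by parallel repetition of the clause--variable game, and then applies the partition-system gadget. Three properties must be checked in order.
\begin{itemize}[nolistsep]
\item[(i)] \emph{Gap preservation.} Satisfiable formulas map to instances where $k$ sets cover everything. Formulas with at most a $(1-\eps)$ fraction of satisfiable clauses map to instances where every $k$ sets cover at most a $(1-1/e+\eps'/2)$ fraction. For constant $\eps$ this is Feige's analysis with a constant number of repetitions, so the instance has polynomial size.
\item[(ii)] \emph{Isomorphism preservation.} If $\Phi\cong\Phi'$, then $f(\Phi)\cong f(\Phi')$.
\item[(iii)] \emph{Efficiency.} The reduction $f$ is computable in polynomial time.
\end{itemize}
Given these, the simulation mirrors the overview argument for $\maxISName$. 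At preprocessing, we receive the prior $\Phi$ with its answer. We feed $f(\Phi)$ to $\mathcal{A}$ together with $\opt(f(\Phi))$: this equals the universe size if $\Phi$ is satisfiable, and otherwise we pass the exact optimum, or the promise-consistent value, whichever the model requires. We keep only $\mathcal{A}$'s data structure, so size stays polynomial.

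At query time, on input $\Phi'$ we query $\mathcal{A}$ on $f(\Phi')$. If it returns a value, we threshold it at $1-1/e+\eps'/2$ of the universe to decide the gap instance. If it abstains, then by (ii) we know $\Phi'\not\cong\Phi$, and we abstain as well. This is a legal answer in the model.

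The main obstacle is (ii), together with the fact that the prior's exact optimum must be available to give to $\mathcal{A}$. Parallel repetition is naturally isomorphism-preserving, since tuples of clauses and variables are permuted coordinatewise. The partition-system gadget is the problem: Feige's construction fixes, for each label-cover vertex, an arbitrary bijection between labels and partitions. That choice can depend on the names of the vertices. I would fix this by using one canonical partition system indexed by the \emph{label alphabet} itself, namely satisfying assignments of a clause and values of a variable. These are intrinsic up to the renaming of variables, so only the set structure matters. Because negation of literals is preserved by formula isomorphism, the labels of each clause transform consistently.

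The second point is the exact optimum. If the model demands $\opt(f(\Phi))$ for an unsatisfiable prior, we may not be able to compute it. I would sidestep this as in the overview: restrict to promise-instances whose prior is satisfiable, or encode the problem so that only the gap decision is required as the prior's answer. This should be permitted because the hypothesis is stated for the gap version.

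For $\kMedian$, I would compose with the standard reduction from coverage to $\kMedian$ used for the $1+2/e$ hardness. In it, facilities are sets, clients are elements, and distance is $1$ if the element lies in the set and $3$ otherwise. This reduction is clearly isomorphism-preserving and maps the coverage gap to a cost gap of $1+2/e-\eps''$. The same simulation argument then applies verbatim.
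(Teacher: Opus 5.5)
Your outer argument matches the paper's: argue by contraposition, run the assumed coverage algorithm on the images of prior and query under an isomorphism-preserving gap reduction, abstain when it abstains, and compose with the Guha--Khuller distance-$1$/$3$ reduction for $k$-Median. The gap is in step (i), and in where you locate the isomorphism problem.

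You build the reduction from two-prover Label Cover and call the gap analysis ``Feige's.'' It is not. Feige's $1-1/e$ bound needs every block of the universe to be hit, in the YES case, by $k$ different sets that together form one full partition into $k$ parts, with $k\to\infty$. With the standard Lund--Yannakakis gadget on a two-prover game, only two sets (one per prover) meet each block. Each partition then has two parts, and the NO-case coverage is about $3/4$, not $1-1/e$.

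To get many parts from a two-prover game, you must hand out distinct parts to the many neighbors of a label-cover vertex, as in Moshkovitz-style reductions. That also needs an additional agreement-type soundness argument. More importantly, numbering the neighbors is exactly a name-dependent choice, and it threatens isomorphism preservation. Your proposed fix addresses the label-to-partition correspondence, which is already canonical. It leaves this step untouched.

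The paper avoids the problem by using Feige's $k$-prover verifier itself (Lemma~\ref{lem:feige-kprover-unbounded}). There is one block $\{r\}\times B_r$ per random string, and $S(q,a,i)$ contributes the part $B(r,p_r(a),i)$. So parts are indexed by the prover number $i$, and partitions by $p_r(a)\in\{0,1\}^\ell$, the induced assignment to the ordered distinguished variables. Nothing arbitrary is chosen.

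The paper also verifies that the base clause--variable error $1-\eps/3$, and hence parallel repetition and the $k$-prover soundness, hold for arbitrary 3CNF formulas. You would need to check the same thing. The usual route to Label Cover goes through bounded-occurrence $3$-SAT or a regularization step, and neither is isomorphism-preserving. Bounded-occurrence formulas also have polynomial-time isomorphism, so the hypothesis cannot be about them.

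Finally, supplying $\opt(f(\Phi))$ for the prior is not an obstacle in this theorem. Preprocessing time is unbounded, so you can compute it by brute force. In any case, the paper works with the gap (promise) version, so only a YES/NO bit is passed. Restricting to satisfiable priors would not be legitimate, because the hypothesis quantifies over arbitrary priors.
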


\subsection{Hardness Results for $\GI$}\label{sec:GI:hard}

We show that GI cannot be solved in our setting with a small data structure, unless $\GI \in \mathsf{coMA}$. The following Lemma is inspired by \cite[Corollary 4.4]{ LozanoToran1992NonUniformGI}. 

\begin{lemma}\label{lem:gi-ds-coMA}
Unless $\GI \in \mathsf{coMA}$, $\GI$ cannot be solved in the Isomorphic-Priors model with $k$ prior instances each having at most $n$ many vertices with a data structure of polynomial size in $n$
and query time polynomial in $n$, even when $k=1$ prior instance is given.

This statement is even true for a randomized data structure such that the query is correct with probability $5/6$.
\end{lemma}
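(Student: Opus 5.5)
We plan to turn a small data structure for $\GI$ in the Isomorphic-Priors model into an $\mathsf{MA}$ protocol for graph non-isomorphism, i.e.\ to show that the complement of $\GI$ lies in $\mathsf{MA}$. This follows the approach of \cite[Corollary 4.4]{LozanoToran1992NonUniformGI}. The key observation is that, as in the overview, a single prior $(X,X)$ together with $\opt(X,X)=\text{YES}$ lets the data structure decide, for every query $(X,Y)$, whether $X\cong Y$. The reason is that such a query is either isomorphic to the prior, which happens exactly when $Y\cong X$, or the algorithm must answer correctly or report non-isomorphism, and reporting non-isomorphism also means $Y\not\cong X$. Hence, for a fixed $X$, the data structure $D_X$ built from the prior $(X,X)$ is a polynomial-size advice string that decides the language $\{Y : Y\cong X\}$ in polynomial time (with error at most $1/6$ in the randomized case).

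\textbf{The MA protocol.} On input $(X,Y)$ with $X\not\cong Y$, Merlin sends a candidate data structure $D$ of polynomial size; the honest choice is $D=D_X$. The difficulty is that Arthur cannot trust $D$. He therefore runs a self-checking test that exploits the random self-reducibility of isomorphism classes. Arthur draws random permutations $\pi$ and $\sigma$ and a random bit $b\in\{0,1\}$, and queries $D$ on $(X,\pi(X))$ and on $(X,\sigma(Z_b))$, where $Z_0=X$ and $Z_1=Y$. He then checks the following.

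First, $D$ must answer ``isomorphic'' on the query $(X,\pi(X))$. Second, $D$ must answer ``isomorphic'' on $(X,\sigma(X))$ when $b=0$. Third, Arthur accepts, meaning he concludes non-isomorphism, only if $D$ answers ``non-isomorphic'' on $(X,\sigma(Y))$ when $b=1$.

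Several repetitions are combined by a majority vote, and the probability $5/6$ is amplified by repeating the queries.

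\textbf{Soundness argument.} This is the main obstacle. Suppose $X\cong Y$. Then $\sigma(X)$ and $\sigma(Y)$ are identically distributed uniformly random members of the isomorphism class of $X$, so no $D$ can tell which $b$ was chosen. Arthur accepts only if $D$ says ``yes'' on these samples when $b=0$ and ``no'' when $b=1$. Since the samples are identically distributed, an answer function that depends only on the graph must say ``no'' on a large fraction of the class, and it is then caught by the check with $b=0$. The probability of passing all the checks and also accepting is therefore bounded away from $1$, say by $1/2$ plus the error terms. I would carefully bound this using the per-query error of $1/6$, and then amplify by independent repetition to reach the usual gap of $2/3$ versus $1/3$.

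\textbf{Completeness.} If $X\not\cong Y$, the honest $D_X$ answers all queries correctly with probability at least $5/6$ each. A Chernoff bound over the repetitions then gives acceptance with probability at least $2/3$. Altogether this places the complement of $\GI$ in $\mathsf{MA}$, so $\GI\in\coma$, which contradicts the assumption of the lemma.
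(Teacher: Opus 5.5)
Your proposal is correct and is essentially the paper's argument: Merlin sends the data structure for the prior $(X,X)$, Arthur queries it on $(X,\sigma(Z_b))$ for a secret bit $b$, and soundness follows because $\sigma(X)$ and $\sigma(Y)$ are identically distributed when $X\cong Y$, so no $D$ can make a round pass with probability more than $1/2$. Your extra check on $(X,\pi(X))$ is redundant. The soundness bound also carries no error terms, since the $5/6$ guarantee says nothing about a dishonest $D$. For amplification, a plain majority vote is not enough because per-round soundness can equal $1/2$; use an acceptance threshold strictly between $1/2$ and the per-round completeness instead (the paper simply requires both of two rounds to pass, getting $25/36$ versus $1/4$).
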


\begin{proof}
Assume that there is an algorithm consisting of the preprocessing phase $\textsf{Pre}$ and query phase $\textsf{Query}$ in the Isomorphic-Priors model
that solves $\GI$ with a data structure of size $\poly_k(n)$\footnote{This means that the size is polynomial in $n$ if $k$ is fixed.} and query time $\poly_k(n)$, and in particular
works for $k=1$. 

We show that this implies $\overline{\GI}\in \mathsf{MA}$, and hence $\GI\in\mathsf{coMA}$.

\paragraph{An $\mathsf{MA}$ protocol for $\overline{\GI}$.}
On a tuple of input graphs $(G_0,G_1)$, Merlin sends Arthur a string $D$ of length $\poly(n)$, claimed to be a
valid data structure for the single prior instance
\[
I^\star := (G_0,G_0).
\]
Arthur repeats the following test twice 
\begin{enumerate}
    \item choose a uniform random bit $b\in\{0,1\}$ and a uniform random permutation $\pi$ of $[n]$,
    \item form the query instance
    \[
        J := (G_0,\ \pi(G_b)),
    \]
    \item compute $z := \textsf{Query}(D,J)$ and define $\widehat b:=0$ iff $z=1$ (otherwise $\widehat b:=1$),
    \item accept this round iff $\widehat b=b$.
\end{enumerate}
Arthur accepts iff the majority of the $t$ rounds accept.

\paragraph{Completeness.}
Assume $G_0\not\cong G_1$. Let Merlin send the actual data structure $D=\textsf{Pre}(\{I^\star\})$.
If $b=0$, then there exists an isomorphism $\varphi:G_0\to \pi(G_0)$, hence
$J=(G_0,\pi(G_0))$ is isomorphic to $I^\star$ via $(\mathrm{id},\varphi)$ and $\GI(J)=1$.
Thus correctness forces $\textsf{Query}(D,J)=1$, so $\widehat b=0=b$ with probability at least $5/6$.
If $b=1$, then $G_0\not\cong \pi(G_1)$, so $\GI(J)=0$ and moreover $J$ is not isomorphic to $I^\star$.
By correctness on non-matching queries, $\textsf{Query}(D,J)$ is either $0$ or $\bot$ with probability at least $5/6$, hence $z\neq 1$ and
therefore $\widehat b=1=b$ with probability at least $5/6$. Thus, each round accepts with probability at least $5/6$. Therefore, Arthur accepts with probability at least $25/36>2/3$.

\paragraph{Soundness.}
Assume $G_0\cong G_1$. Then the distributions of $\pi(G_0)$ and $\pi(G_1)$ are identical, hence the random bit $b$
is independent of the pair $(D,J)$, and therefore also independent of $\widehat b$ (which is a function of $(D,J)$ and
the internal randomness of $\textsf{Query}$).
Consequently, in each round we have $\Pr[\widehat b=b]\le \tfrac12$, and Arthur accepts with probability at most $1/4<2/3$.

Therefore $\overline{\GI}\in \mathsf{MA}$, which implies $\GI\in \mathsf{coMA}$.

By contrapositive, unless $\GI\in\mathsf{coMA}$, no such polynomial-size data structure algorithm for $\GI$
exists in the Isomorphic-Priors model (even for $k=1$).
\end{proof}

\subsection{Iso-Problems and $\GI$-Completeness}\label{sec:iso:prob}
We introduce the promise problem $\mathrm{Iso}\text{-}\Pi$ and show properties that will be used in later sections to show hardness results in our model.

\begin{definition}
For a decision problem $\Pi$, define the promise problem
$\mathrm{Iso}\text{-}\Pi=(\mathrm{Iso}\text{-}\Pi_{\textsc{YES}},\mathrm{Iso}\text{-}\Pi_{\textsc{NO}})$
over pairs of instances $(I,J)$ by
\[
\mathrm{Iso}\text{-}\Pi_{\textsc{YES}}:=\{(I,J)\mid I\in\Pi \ \wedge\ J\cong I\},
\qquad
\mathrm{Iso}\text{-}\Pi_{\textsc{NO}}:=\{(I,J)\mid I\in\Pi \ \wedge\ J\notin\Pi\}.
\]
The input is promised to lie in $\mathrm{Iso}\text{-}\Pi_{\textsc{YES}}\cup \mathrm{Iso}\text{-}\Pi_{\textsc{NO}}$. 
\end{definition}

We recall the definition of polynomial-time many-one reductions.
\begin{definition}
    We say that there is a \emph{polynomial-time many-one reduction} from $\Pi$ to $\Pi'$ if there exist a polynomial-time computable function $f$ such that $X \in \Pi$ if and only if $f(X) \in \Pi$.

    We say that $\Pi$ is $\GI$-hard if there is a polynomial-time many-one reduction from $\GI$ to $\Pi$. We say that $\Pi$ is $\GI$-complete if there $\Pi$ is $\GI$-hard and there is a polynomial-time many-one reduction from $\Pi$ to $\GI$.
\end{definition}

We show that for $\Pi = \GI$, $\mathrm{Iso}\text{-}\Pi$ is just a reformulation of $\GI$.
\begin{lemma}\label{lem:iso-gi-complete}
$\mathrm{Iso}\text{-}\GI$ is $\GI$-complete.
\end{lemma}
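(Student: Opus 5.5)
To show that $\mathrm{Iso}\text{-}\GI$ is $\GI$-complete, we give polynomial-time many-one reductions in both directions. We treat the promise problem in the standard way: a reduction from a promise problem must be correct only on inputs satisfying the promise.

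\textbf{$\GI$-hardness.} Given an arbitrary $\GI$ instance $(X,Y)$, we output the pair of instances $(I,J)$ with $I := (X,X)$ and $J := (X,Y)$. We then verify the three required properties in turn.

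First, $I \in \GI$ always holds, since $X \cong X$.

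Second, if $X \cong Y$, then $J$ is isomorphic to $I$ as a $\GI$ instance, via the pair of isomorphisms $(\mathrm{id}, \varphi)$ with $\varphi \colon X \to Y$. Hence $(I,J) \in \mathrm{Iso}\text{-}\GI_{\textsc{YES}}$.

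Third, if $X \not\cong Y$, then $J \notin \GI$, so $(I,J) \in \mathrm{Iso}\text{-}\GI_{\textsc{NO}}$.

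Thus every image satisfies the promise, and the answer is preserved. The map is clearly computable in polynomial time.

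\textbf{Membership, i.e., reduction to $\GI$.} Given $(I,J)$ with $J = (H_0,H_1)$, the key observation concerns the promise. Under the YES promise we have $J \cong I \in \GI$, so $J \in \GI$. Under the NO promise, $J \notin \GI$ directly. Therefore, on all inputs satisfying the promise, the answer equals whether $H_0 \cong H_1$. The reduction simply outputs $(H_0,H_1)$ and ignores $I$.

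The only subtlety is that one needs $\GI$ to be isomorphism-closed: $J \cong I$ together with $I \in \GI$ implies $J \in \GI$. This holds because $J \cong I$ means $H_0 \cong G_0 \cong G_1 \cong H_1$ for $I = (G_0, G_1)$.

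I expect no real obstacle. The main care is in stating precisely that reductions involving a promise problem need only be correct on promise-satisfying inputs, and that the hardness reduction always lands inside the promise.
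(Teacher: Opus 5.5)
Your proof is correct. The hardness direction is exactly the paper's map $(X,Y)\mapsto((X,X),(X,Y))$. Your membership direction is different.

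\begin{itemize}
\item \textbf{The paper's route.} It decides a promised input $(I,J)$ by testing whether $I\cong J$, i.e.\ whether both $G_0\cong H_0$ and $G_1\cong H_1$ hold.
\item \textbf{Your route.} You observe that under the promise the answer is simply whether $H_0\cong H_1$, so you discard $I$ entirely.
\end{itemize}

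Your route gives a genuine many-one reduction to a single $\GI$ instance. The paper's ``run a $\GI$-test to decide whether $I\cong J$'' is really a conjunction of two isomorphism tests. Turning it into a many-one reduction needs the standard but unstated fact that $\GI$ is closed under conjunction, e.g.\ via a marked disjoint union.

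In exchange, the paper's argument is just the case $\Pi=\GI$ of the general Lemma~\ref{lem:isoPi-in-GI}, which works for every isomorphism-closed $\Pi$. Your trick is specific to $\Pi=\GI$: for a general $\Pi$, checking $J\in\Pi$ reduces $\mathrm{Iso}\text{-}\Pi$ to $\Pi$, not to $\GI$.
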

\begin{proof}
Recall that $\mathrm{Iso}\text{-}\GI$ is the promise problem on pairs $(I,J)$ of $\GI$-instances such that
$I\in \GI$ and either $J\cong I$ (YES) or $J\notin \GI$ (NO).

\paragraph{(1) $\mathrm{Iso}\text{-}\GI \in \GI$.}
Given an input $(I,J)$, where $I=(G_0,G_1)$ and $J=(H_0,H_1)$, run a $\GI$-test to decide whether
$I\cong J$, i.e.\ whether $G_0\cong H_0$ and $G_1\cong H_1$.
If $I\cong J$ output \textsc{YES}, otherwise output \textsc{NO}.
This is correct under the promise: in a \textsc{YES}-instance we have $J\cong I$, while in a promised
instance with $J\not\cong I$ we cannot be in $\mathrm{Iso}\text{-}\GI_{\textsc{YES}}$, hence must be in
$\mathrm{Iso}\text{-}\GI_{\textsc{NO}}$.

\paragraph{(2) $\GI$-hardness.}
We give a polynomial-time many-one reduction from $\GI$ to $\mathrm{Iso}\text{-}\GI$.
Given a $\GI$-instance $I=(G,H)$, output the pair
\[
f(G,H) \;:=\; \bigl( (G,G),\ (G,H) \bigr).
\]
Note that $(G,G)\in \GI$ always. If $G\cong H$, then $(G,H)\cong (G,G)$, so $f(G,H)\in
\mathrm{Iso}\text{-}\GI_{\textsc{YES}}$. If $G\not\cong H$, then $(G,H)\notin \GI$, so
$f(G,H)\in \mathrm{Iso}\text{-}\GI_{\textsc{NO}}$. Hence $f$ is a valid reduction and $\GI \le_m
\mathrm{Iso}\text{-}\GI$.

\paragraph{Conclusion.}
We have shown $\mathrm{Iso}\text{-}\GI\in \GI$ and $\GI\le_m \mathrm{Iso}\text{-}\GI$, therefore
$\mathrm{Iso}\text{-}\GI$ is $\GI$-complete.
\end{proof}

We conclude this section by showing that the promise problem is at most as hard as $\GI$ independent of the problem it is based on.
\begin{lemma}\label{lem:isoPi-in-GI}
Let $\Pi$ be a decision problem. Then $\mathrm{Iso}\text{-}\Pi$ is in $\GI$.
\end{lemma}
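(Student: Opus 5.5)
We exhibit an explicit polynomial-time many-one reduction from $\mathrm{Iso}\text{-}\Pi$ to $\GI$, mirroring part~(1) of the proof of \Cref{lem:iso-gi-complete}. The key observation is that the promise lets us answer by testing isomorphism alone: membership of $J$ in $\Pi$ never has to be decided. For two instances $I,J$ of $\Pi$ (e.g.\ \attributed graphs, CNF formulas, or pairs of graphs), we output a $\GI$-instance $(g(I),g(J))$ of plain unlabeled graphs. Here $g$ is a polynomial-time encoding with $I\cong J \iff g(I)\cong g(J)$, where $\cong$ is the isomorphism notion that \Cref{remark:iso:closed} attaches to $\Pi$.

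The steps are as follows. First, fix such an encoding $g$. For \attributed graphs, we use the standard gadget reduction from labeled to unlabeled isomorphism (as in \cite{booth1979problems}, already invoked in the proof of \Cref{lem:cfi:iso}): subdivide each edge, attach to each vertex and each subdivision vertex a rigid gadget encoding its \attribute, and add direction gadgets for directed edges. Distinct \attributes get pairwise non-isomorphic gadgets, chosen so that they cannot be confused with the original vertices; for instance, attach paths whose lengths exceed $n$ and encode the rank of the \attribute among the finitely many \attributes occurring in $I$ or $J$. For CNF formulas, we use the variable/literal/clause incidence graph, with a gadget joining $x$ and $\neg x$ so that negations are preserved. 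Second, we verify correctness under the promise. If $(I,J)\in\mathrm{Iso}\text{-}\Pi_{\textsc{YES}}$, then $J\cong I$, hence $g(I)\cong g(J)$ and the $\GI$-instance is a yes-instance. If $(I,J)\in\mathrm{Iso}\text{-}\Pi_{\textsc{NO}}$, then $I\in\Pi$ and $J\notin\Pi$. Since $\Pi$ is closed under isomorphism (\Cref{remark:iso:closed}), this forces $I\not\cong J$, hence $g(I)\not\cong g(J)$ and the instance is a no-instance. Finally, $g$ runs in polynomial time, so this is a polynomial-time many-one reduction, i.e., $\mathrm{Iso}\text{-}\Pi\in\GI$.

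The main obstacle is stating precisely and justifying that every instance type in the paper admits an isomorphism-faithful encoding $g$ into plain graphs. Real-valued \attributes and global parameters such as a budget $B$ are handled as follows. Global parameters are compared directly: if they differ, we output a fixed no-instance of $\GI$. Attributes are replaced by their rank in the merged sorted list of attribute values of $I$ and $J$. I would first state the reduction under the assumption that instances are \attributed graphs with a common attribute alphabet, argued via the known polynomial equivalence of colored and uncolored $\GI$. The remaining combinatorial objects (formulas, set systems) would then be handled by their incidence-graph encodings, noting that these preserve and reflect the problem's isomorphism notion.
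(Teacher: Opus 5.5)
Your proposal is correct and follows the paper's proof. The paper likewise maps $(I,J)$ to the $\GI$-question ``is $I\cong J$?'' and derives correctness from the promise together with the closure of $\Pi$ under isomorphism. The one difference is that the paper simply treats $(I,J)$ as a $\GI$-instance without encoding attributed graphs, formulas, or set systems as plain graphs, whereas you make that encoding explicit. For CNFs, make the $x$/$\neg x$ gadget asymmetric, e.g.\ by marking positive literals; otherwise polarity flips would count as isomorphisms, which the paper's notion of formula isomorphism does not allow.
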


\begin{proof}
We give a polynomial-time many-one reduction from $\mathrm{Iso}\text{-}\Pi$ to $\GI$.

On input $(I,J)$, output the pair $(I,J)$ as an instance of $\GI$ (i.e., the question whether $I\cong J$).
This mapping is polynomial time.

We claim it is correct under the promise defining $\mathrm{Iso}\text{-}\Pi$.
If $(I,J)\in \mathrm{Iso}\text{-}\Pi_{\textsc{YES}}$, then $J\cong I$, hence $(I,J)$ is a \textsc{YES}-instance of $\GI$.
If $(I,J)\in \mathrm{Iso}\text{-}\Pi_{\textsc{NO}}$, then $I\in\Pi$ and $J\notin\Pi$.
Since $\Pi$ is isomorphism-invariant, $J\cong I$ would imply $J\in\Pi$, a contradiction; hence $J\not\cong I$,
and $(I,J)$ is a \textsc{NO}-instance of $\GI$.

Therefore, deciding $(I,J)\in \GI$ decides $(I,J)\in \mathrm{Iso}\text{-}\Pi$, and thus
$\mathrm{Iso}\text{-}\Pi \le_m \GI$, i.e.\ $\mathrm{Iso}\text{-}\Pi$ is in $\GI$.
\end{proof}

\subsection{\texorpdfstring{Isomorphism-Preserving Reductions}{Isomorphism-Preserving Reductions}}\label{sec:isoreduction}

A tool we use to show the hardness is isomorphism-preserving reductions
\begin{definition}\label{def:iso:preserving:redution}
We call a function $R$ that maps instances of $\Pi$ to instances of $\Pi'$ an \emph{isomorphism-preserving reduction} if:
\begin{itemize}
    \item $R$ is polynomial time computable.
    \item $R$ is \emph{correct}, that is $I \in \Pi$ is a YES-instance of $\Pi$ if and only if $R(I)$ is a YES-instance of $\Pi'$.
    \item $R$ is \emph{isomorphism-preserving}, that is if $I_1,I_2 \in \Pi$ are isomorphic in $\Pi$ then
    $R(I_1),R(I_2)$ are isomorphic in $\Pi'$.
\end{itemize}
We write $\Pi \isored \Pi'$ if such a reduction exits. Further, we say that a problem $\Pi'$ is $\GI$-hard if $\GI \isored \Pi'$.
\end{definition}

The property of isomorphism-preserving reductions is that they can be used to show $\GI$-hardness of promise problems.
\begin{lemma}\label{lem:iso-hardness-transfer}
Let $\Pi$ and $\Pi'$ be decision problems. If $\mathrm{Iso}\text{-}\Pi$ is $\GI$-hard and
$\Pi \isored \Pi'$, then $\mathrm{Iso}\text{-}\Pi'$ is $\GI$-hard.
\end{lemma}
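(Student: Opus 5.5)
The plan is to compose reductions: given a polynomial-time many-one reduction $f$ from $\GI$ to $\mathrm{Iso}\text{-}\Pi$ (which exists by hypothesis) and the isomorphism-preserving reduction $R$ witnessing $\Pi \isored \Pi'$, I will show that the map
\[
(I,J) \mapsto (R(I),R(J))
\]
is a polynomial-time many-one reduction from $\mathrm{Iso}\text{-}\Pi$ to $\mathrm{Iso}\text{-}\Pi'$ that respects the promise. The desired reduction from $\GI$ to $\mathrm{Iso}\text{-}\Pi'$ is then $x \mapsto g(f(x))$, where $g$ applies $R$ componentwise. Polynomial running time is immediate, because $f$ and $R$ run in polynomial time and instance sizes grow only polynomially.

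The key step is to check that YES and NO instances land in the right places. First take $(I,J)\in \mathrm{Iso}\text{-}\Pi_{\textsc{YES}}$, so $I\in\Pi$ and $J\cong I$. Correctness of $R$ gives $R(I)\in\Pi'$. Because $R$ is isomorphism-preserving, $J\cong I$ implies $R(J)\cong R(I)$. Therefore $(R(I),R(J))\in \mathrm{Iso}\text{-}\Pi'_{\textsc{YES}}$.

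Now take $(I,J)\in \mathrm{Iso}\text{-}\Pi_{\textsc{NO}}$, so $I\in\Pi$ and $J\notin\Pi$. Correctness of $R$ gives $R(I)\in\Pi'$ and $R(J)\notin\Pi'$, so $(R(I),R(J))\in\mathrm{Iso}\text{-}\Pi'_{\textsc{NO}}$. Promise instances thus map to promise instances with the same answer.

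The only subtlety, and the step I would double-check, is how a many-one reduction to a promise problem should be read. Since $f$ sends YES-instances of $\GI$ into $\mathrm{Iso}\text{-}\Pi_{\textsc{YES}}$ and NO-instances into $\mathrm{Iso}\text{-}\Pi_{\textsc{NO}}$, its image lies inside the promise, so $g$ only ever needs to be applied to promise instances. The composition is therefore a valid reduction, and $\mathrm{Iso}\text{-}\Pi'$ is $\GI$-hard.
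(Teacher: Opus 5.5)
Your proof is correct and matches the paper's: it defines the same componentwise map $(I,J)\mapsto(R(I),R(J))$, checks the YES and NO cases using correctness and isomorphism preservation of $R$, and concludes by transitivity $\GI \le_m \mathrm{Iso}\text{-}\Pi \le_m \mathrm{Iso}\text{-}\Pi'$. Your remark that the reduction from $\GI$ lands inside the promise is what justifies transitivity for promise problems, a point the paper leaves implicit.
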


\begin{proof}
Since $\mathrm{Iso}\text{-}\Pi$ is $\GI$-hard, there exists a polynomial-time reduction
$g:\GI \le_m \mathrm{Iso}\text{-}\Pi$.

It therefore suffices to give a polynomial-time reduction
$f:\mathrm{Iso}\text{-}\Pi \le_m \mathrm{Iso}\text{-}\Pi'$.

Let $(I,J)$ be an instance of $\mathrm{Iso}\text{-}\Pi$. Since $\Pi \isored \Pi'$, there exists an
isomorphism-preserving many-one reduction $R$ from $\Pi$ to $\Pi'$.
Define
\[
f(I,J) \;:=\; (R(I),\,R(J)).
\]

We check that $f$ maps promised instances to promised instances and preserves YES/NO.

\paragraph{YES case.}
If $(I,J)\in \mathrm{Iso}\text{-}\Pi_{\textsc{YES}}$, then $I\in \Pi$ and $J\cong I$.
By correctness, $R(I)\in \Pi'$, and by isomorphism preservation, $R(J)\cong R(I)$.
Hence $f(I,J)\in \mathrm{Iso}\text{-}\Pi'_{\textsc{YES}}$.

\paragraph{NO case.}
If $(I,J)\in \mathrm{Iso}\text{-}\Pi_{\textsc{NO}}$, then $I\in \Pi$ and $J\notin \Pi$.
By correctness of $R$, we have $R(I)\in \Pi'$ and $R(J)\notin \Pi'$.
Hence $f(I,J)\in \mathrm{Iso}\text{-}\Pi'_{\textsc{NO}}$.

Thus $f$ is a valid many-one reduction $\mathrm{Iso}\text{-}\Pi \le_m \mathrm{Iso}\text{-}\Pi'$.
By transitivity,
\[
\GI \le_m \mathrm{Iso}\text{-}\Pi \le_m \mathrm{Iso}\text{-}\Pi',
\]
so $\mathrm{Iso}\text{-}\Pi'$ is $\GI$-hard.
\end{proof}

Thus, by using isomorphism-preserving reductions, we can obtain $\GI$-completeness for many different $\mathrm{Iso}\text{-}\Pi$ problems. To this end, we use the reductions shown in \Cref{fig:reductions}. We use this property to show that the promise problems based on the problems in $\ProbCollection$ are $\GI$-complete.

\begin{figure}[h]
    \centering
    \begin{tikzpicture}[
    >={Stealth[length=3mm]},
    node distance=1.65cm and 2.0cm,
    reduction/.style={rectangle, minimum size=6mm, inner sep=2pt, font=\sffamily},
    gi_node/.style={circle, draw, thick, minimum size=8mm, font=\sffamily\bfseries},
    edge_label/.style={font=\scriptsize\sffamily, fill=white, inner sep=1pt}
]

    \node[gi_node] (GI) {$\GI$};
    \node[reduction, right=of GI] (SAT) {$\SAT$};
    \node[reduction, right=of SAT] (Coloring) {$\COL$};
    
    \node[reduction, right=of Coloring] (3Coloring) {$\kCOL{3}$};
    \node[reduction, right=of 3Coloring] (3SAT) {$\kSAT{3}$};
    \node[reduction, right=of Coloring, yshift=10mm] (CliqueCov) {$\CliqueCover$};
    
    \node[reduction, above right=of GI, yshift=8mm] (IS) {$\IS$};
    \node[reduction, left=of IS] (MaxCut) {$\MaxCut$};
    \node[reduction, right=of IS] (VC) {$\VC$};
    \node[reduction, right=of VC, yshift=10mm] (Ham) {$\mathsf{(Un)directed}\text{-}\textsf{Hamilition}$};
    \node[reduction, right=of VC] (SetCover) {$\SetCover$};
    \node[reduction, right=of SetCover] (kMedian) {$\kMedian$};

    \node[reduction, right=of Ham, yshift=10mm, xshift=2mm] (GapTSP) {$\GAP{\TSP_{1, c}}$};
    \node[reduction, right=of Ham, yshift=-10mm, xshift=2mm] (MetricTSP) {$\MetricTSP$};

    \node[reduction, above=of VC] (GapCenter) {$\GAP{\kCenter_{1, 2}}$};

    \node[reduction, below=of MaxCut, yshift=2mm] (MaxSat) {$\MaxkSAT{2}$};

    \node[reduction, below right=of SAT, xshift=6mm] (Clique) {$\Clique$};
    \node[reduction, below=of SAT] (IP) {\footnotesize $\LPZEROONE$\hspace{0.5cm}};
    \node[reduction, below right=of SAT, yshift=-10mm, xshift=35mm] (SetPacking) {$\SetPacking$};

    \node[reduction, below right =of Coloring, xshift=-10mm] (EC) {$\EXA$};
    \node[reduction, right=of EC, yshift=10mm] (Steiner) {$\STEINER$};
    \node[reduction, right=of EC, yshift=-10mm] (Hitting) {$\HIT$};
    \node[reduction, right=of EC] (DMatch) {$\DMatching$};
    \node[reduction, above right=of GI, yshift=18mm, xshift=-27mm] (FVS) {$\FeedbackVertexSet$};
    \node[reduction, right=of IS, yshift=-23mm, xshift=-12mm] (FAS) {$\FeedbackArcSet$};

    \draw[->] (IS) -- (MaxCut) node[midway, sloped, above, edge_label] {\cref{lem:is-to-maxcut}};
    \draw[->] (GI) -- (IS) node[midway, sloped, above, edge_label] {\cref{lem:gi-to-is}};
    \draw[->] (GI) -- (SAT) node[midway, sloped, above, edge_label] {\cref{lem:gi-to-sat}};
    \draw[->] (SAT) -- (Clique) node[midway, sloped, above, edge_label] {\cref{lem:karp:red}};
    \draw[->] (SAT) -- (IP) node[midway, sloped, above, edge_label] {\cref{lem:karp:red}};
    
    \draw[->] (IS) -- (VC) node[midway, sloped, above, edge_label] {\cref{lem:iso:red:VC}};
    \draw[->] (VC) -- ([yshift=-3mm]Ham.north west) node[midway, sloped, above, edge_label] {\cref{lem:karp:red}};
    \draw[->] (VC) -- (SetCover) node[midway, sloped, above, edge_label] {\cref{lem:karp:red}};
    \draw[->] (VC) -- ([yshift=-3mm]FVS.north east) node[midway, sloped, above, edge_label] {\cref{lem:karp:red}};
    \draw[->] (VC) -- (FAS) node[midway, sloped, above, edge_label] {\cref{lem:karp:red}};

    \draw[->] (SetCover) -- (kMedian) node[midway, sloped, above, edge_label] {\cref{lem:Set:red:Mean}};

    \draw[->] (Clique) -- ([yshift=-3mm]SetPacking.north west) node[midway, sloped, above, edge_label] {\cref{lem:karp:red}};
    
    \draw[->] (SAT) -- (Coloring) node[midway, sloped, above, edge_label] {\cref{lem:karp:red}};
    \draw[->] (Coloring) -- (3Coloring) node[midway, sloped, above, edge_label] {\cref{lem:col:isored:3col}};
    \draw[->] (Coloring) -- ([yshift=-3mm]CliqueCov.north west) node[midway, sloped, above, edge_label] {\cref{lem:karp:red}};
    \draw[->] (3Coloring) -- (3SAT) node[midway, sloped, above, edge_label] {\cref{lem:3col:isored:3sat}};
    
    \draw[->] (Coloring) -- (EC) node[midway, sloped, above, edge_label] {\cref{lem:karp:red}};
    \draw[->] (EC) -- ([yshift=-3mm]Steiner.north west) node[midway, sloped, above, edge_label] {\cref{lem:karp:red}};
    \draw[->] (EC) -- ([yshift=-3mm]Hitting.north west) node[midway, sloped, above, edge_label] {\cref{lem:karp:red}};
    \draw[->] (EC) -- (DMatch) node[midway, sloped, above, edge_label] {\cref{lem:karp:red}};

    \draw[->] (MaxCut) -- (MaxSat) node[midway, sloped, above, edge_label] {\cref{lem:maxcut:isored:max2sat}};

    \draw[->] (VC) -- (GapCenter) node[midway, sloped, above, edge_label] {\cref{lem:vc:isored:GapCenter}};

    \draw[->] (Ham) -- ([yshift=-3mm]GapTSP.north west) node[midway, sloped, above, edge_label] {\cref{lem:undirham:isored:gaptsp}};

    \draw[->] (Ham) -- ([yshift=-3mm]MetricTSP.north west) node[midway, sloped, above, edge_label] {\cref{lem:isored:MetricTSP}};

\end{tikzpicture}
    \caption{A tree that shows the different isomorphism-preserving reductions. An edge from $A$ to $B$ means $A \isored B$.} 
    \label{fig:reductions}
\end{figure}

\begin{theorem}\label{thm:iso-pi-gi-complete}
For every $\Pi \in \ProbCollection$, the promise problem $\mathrm{Iso}\text{-}\Pi$
is $\GI$-complete.
\end{theorem}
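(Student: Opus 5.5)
The plan is to establish the two halves of $\GI$-completeness separately. Membership is immediate: by \Cref{lem:isoPi-in-GI}, for every decision problem $\Pi$ that is closed under isomorphism, $\mathrm{Iso}\text{-}\Pi$ many-one reduces to $\GI$. This covers every $\Pi \in \ProbCollection$, including the gap problems $\GAP{\kCenter}_{1,2}$ and $\GAP{\TSP_{1,c}}$, since their YES/NO sets are isomorphism-invariant.

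For hardness, the starting point is \Cref{lem:iso-gi-complete}, which gives that $\mathrm{Iso}\text{-}\GI$ is $\GI$-hard. I would then propagate hardness along the edges of \Cref{fig:reductions} using \Cref{lem:iso-hardness-transfer}. If $\mathrm{Iso}\text{-}\Pi$ is $\GI$-hard and $\Pi \isored \Pi'$, then $\mathrm{Iso}\text{-}\Pi'$ is $\GI$-hard. Since every node of $\ProbCollection$ is reachable from $\GI$ in the figure, an induction along a path from $\GI$ (equivalently, on the depth of the node in the tree) finishes the argument. So the theorem reduces to verifying each edge as an isomorphism-preserving reduction in the sense of \Cref{def:iso:preserving:redution}.

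For the individual edges, I would proceed as follows.
\begin{itemize}
\item $\GI \to \IS$: use the compatibility-graph construction from the overview. Vertices are $z_{u,v}$, and edges encode conflicts; the target size is $|V(G)|$.
\item $\GI \to \SAT$: use the analogous CNF with variables $x_{u,v}$. Clauses say each $u$ is mapped somewhere, the map is injective, and conflicting pairs are forbidden.
\item Remaining edges: use the standard Karp-type reductions (\Cref{lem:karp:red} and the lemmas cited on the arrows).
\end{itemize}
In each case the key check is the same. The construction must be defined purely in terms of the instance's vertices, edges and clauses, with no arbitrary ordering or choice. Then any isomorphism $\varphi$ of source instances lifts canonically to an isomorphism of the gadget instances, for example $z_{u,v} \mapsto z_{\varphi_0(u),\varphi_1(v)}$.

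The main obstacle is the reduction chain to $\kSAT{3}$. Karp's clause-splitting reduction introduces fresh variables in an order-dependent way, so I would not use it. Instead I would route through $\SAT \to \COL \to \kCOL{3} \to \kSAT{3}$, in three steps.
\begin{itemize}
\item $\SAT \to \COL$: Karp's gadget attaches a fixed gadget per variable and per clause, so it is order-free.
\item $\COL \to \kCOL{3}$: use Lovász's reduction. Here I must check that it is built uniformly from the vertex and edge sets of the input graph together with the number $k$. If it uses an ordering, such as a chain of OR-gadgets, I would replace that piece by a symmetric gadget, for instance one copy per vertex or per unordered pair.
\item $\kCOL{3} \to \kSAT{3}$: introduce variables $x_{v,c}$ for $c\in\{1,2,3\}$, with clauses $(x_{v,1}\lor x_{v,2}\lor x_{v,3})$ and $(\lnot x_{u,c}\lor \lnot x_{v,c})$ for each edge and each $c$. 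This is clearly isomorphism-preserving.
\end{itemize}

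The gap problems ($\GAP{\TSP}$ and $\GAP{\kCenter}$) also need the reduction to map YES instances to the YES side and NO instances to the NO side of the gap. With the standard weight-$1$/weight-$c$ and distance-$1$/distance-$2$ constructions, this follows directly.
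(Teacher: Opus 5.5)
Your skeleton is the paper's. Membership comes from \Cref{lem:isoPi-in-GI}. Hardness starts from the $\GI$-hardness of $\mathrm{Iso}\text{-}\GI$ (\Cref{lem:iso-gi-complete}) and is pushed along the edges of \Cref{fig:reductions} with \Cref{lem:iso-hardness-transfer}. The paper first composes the reductions into $\GI\isored\Pi$ and applies the transfer lemma once, which is equivalent.

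\textbf{Gap in the $\kCOL{3}\to\kSAT{3}$ step.} The problem is in the edge-level constructions you supply yourself. Your map produces the $2$-literal clauses $(\lnot x_{u,c}\lor\lnot x_{v,c})$. Its output is therefore not a $\kSAT{3}$ instance in the paper's sense, which requires exactly three literals per clause. The naive padding by repeating a literal singles out one endpoint of the undirected edge $\{u,v\}$. That is exactly the kind of non-canonical choice that breaks isomorphism preservation. The paper (\Cref{lem:3col:isored:3sat}) instead introduces per-edge variables $d^1_e,d^2_e,d^3_e$. The four clauses $\lnot d^1_e\lor \ell_2\lor \ell_3$ with $\ell_2\in\{d^2_e,\lnot d^2_e\}$ and $\ell_3\in\{d^3_e,\lnot d^3_e\}$ force $d^1_e=0$. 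The edge constraints then become the symmetric clauses $(\lnot x_{u,c}\lor\lnot x_{v,c}\lor d^1_e)$. One global triple of forced-false variables would also work.

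\textbf{Other edges you treat as routine.}
\begin{itemize}
\item \emph{$\GAP{\kCenter}_{1,2}$.} With unit weights, a set of at most $k$ centers within distance $1$ of every vertex is exactly a dominating set, not a vertex cover. A triangle with $k=1$ has no vertex cover of size $1$ but is dominated by one vertex. An isolated vertex breaks the converse direction. So the identity map out of $\VC$ drawn in \Cref{fig:reductions}, and used in \Cref{lem:vc:isored:GapCenter}, is not a correct reduction. You need a canonical intermediate step. For example, add for each edge $\{u,v\}$ a new vertex adjacent to $u$ and $v$, and add the number of isolated vertices to $k$. This turns vertex cover into dominating set and is isomorphism-preserving.
\item \emph{$\GAP{\TSP_{1,c}}$.} Weight $1$ on edges and $c$ on non-edges gives optimal tours of cost $n$ versus at least $n-1+c$, which is not a factor-$c$ gap. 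You must scale, e.g.\ weight $1/|V|$ versus $c$ as in \Cref{lem:undirham:isored:gaptsp}.
\item \emph{Lovász's reduction.} No symmetrization of the gadget is needed, since its copies are indexed by colors $i\in[k]$, not by an ordering of $V(G)$. However, it only works for odd $k$, because the cycle $c_v^{(1)},\dots,c_v^{(k)}$ must be odd. The paper handles even $k$ by first adding a universal vertex and using $k+1$, again a canonical step (\Cref{lem:col:isored:3col}).
\end{itemize}
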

\begin{proof}
Fix $\Pi \in \ProbCollection$. By Lemma~\ref{lem:isoPi-in-GI}, $\mathrm{Iso}\text{-}\Pi \in \GI$. Next, we use the fact that isomorphism-preserving reductions are transitive together with \cref{lem:gi-to-is}, \cref{lem:gi-to-sat}, \cref{lem:is-to-maxcut}, \cref{lem:col:isored:3col}, \cref{lem:3col:isored:3sat}, \cref{lem:karp:red}, \cref{lem:iso:red:VC}, \cref{lem:maxcut:isored:max2sat}, \cref{lem:isored:MetricTSP}, \cref{lem:Set:red:Mean}, \cref{lem:vc:isored:GapCenter} and \cref{lem:undirham:isored:gaptsp} . Thus, we obtain $\GI \isored \Pi$ for each $\Pi \in \ProbCollection$ (see \cref{fig:reductions}). By \cref{lem:iso-gi-complete}, $\mathrm{Iso}\text{-}\GI$ is $\GI$-hard. Hence, the theorem follows from \cref{lem:iso-hardness-transfer}.
\end{proof}

The proof of \cref{thm:iso-pi-gi-complete} relies on having all necessary isomorphism-preserving reductions. We moved to proofs of most of our reductions to \cref{app:hard} 
since these reductions are standard reductions form the literature. However, we exemplify how our isomorphism-preserving reductions work by present three different ones in this section.

\begin{lemma}\label{lem:gi-to-is}
$\GI \isored \IS$.
\end{lemma}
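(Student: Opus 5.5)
We will use the modular-product construction already sketched in the overview (due to \cite{DBLP:journals/sigact/Kozen78}), in its complement form. Given a $\GI$-instance $I=(G,H)$, we first check whether $|V(G)|=|V(H)|$ and $|E(G)|=|E(H)|$. If not, we output a fixed NO-instance of $\IS$, for instance $(K_2,2)$. This is harmless for isomorphism preservation, because isomorphic pairs $(G,H)\cong(G',H')$ have identical size profiles and are therefore both mapped to this same fixed instance. Otherwise, let $n=|V(G)|$. We build the graph $Z=Z(G,H)$ with vertex set $\{z_{u,v}\mid u\in V(G),v\in V(H)\}$. Two distinct vertices $z_{u,v},z_{u',v'}$ are adjacent iff $(u=u')\oplus(v=v')$, or $[\{u,u'\}\in E(G)]\neq[\{v,v'\}\in E(H)]$. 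The output is the $\IS$-instance $R(I)=(Z,n)$, which asks whether $\alpha(Z)\ge n$. The construction takes $O(n^4)$ time.

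\textbf{Correctness.} Condition (i) forces every independent set to use each $u$ at most once and each $v$ at most once, so $\alpha(Z)\le n$ and an independent set of size $n$ is the graph of a bijection $\varphi\colon V(G)\to V(H)$. For any two distinct pairs $(u,\varphi(u)),(u',\varphi(u'))$ we have $u\neq u'$ and $\varphi(u)\neq\varphi(u')$, so condition (i) does not fire. Non-adjacency then says exactly that $\{u,u'\}\in E(G)\iff\{\varphi(u),\varphi(u')\}\in E(H)$, that is, $\varphi$ is an isomorphism. Conversely, for an isomorphism $\varphi$ the set $\{z_{u,\varphi(u)}\}$ is independent. Hence $\alpha(Z)\ge n$ iff $G\cong H$.

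\textbf{Isomorphism preservation.} This is the step that actually needs care, though it is routine. Suppose $(G,H)\cong(G',H')$, which by the definition of $\GI$-instance isomorphism means there are isomorphisms $\sigma\colon G\to G'$ and $\tau\colon H\to H'$. We claim that $\psi(z_{u,v}):=z'_{\sigma(u),\tau(v)}$ is an isomorphism $Z\to Z'$. It is a bijection because $\sigma$ and $\tau$ are. Each adjacency condition is preserved: $u=u'\iff\sigma(u)=\sigma(u')$ and $v=v'\iff\tau(v)=\tau(v')$ handle condition (i), and $\{u,u'\}\in E(G)\iff\{\sigma u,\sigma u'\}\in E(G')$ together with the analogous statement for $\tau$ handles condition (ii). The budget $n$ is unchanged, so $R(I)\cong R(I')$ as $\IS$-instances. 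Here we take $\IS$-instance isomorphism to be graph isomorphism with equal target value, in the same way the budget is treated for $\csp$ and $\cst$.

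The main point to watch is therefore not any computation but consistency with the definitions. $R$ must depend only on the pair $(G,H)$ through its structure, with no arbitrary choices such as vertex orderings that could break invariance. The degenerate-size branch must map to one fixed instance. Both properties hold for the construction above, so all three requirements of \Cref{def:iso:preserving:redution} are met and $\GI\isored\IS$.
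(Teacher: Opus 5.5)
Your proposal is correct and follows essentially the same route as the paper: the conflict graph on $V(G)\times V(H)$ with ``function'' edges and adjacency-inconsistency edges, threshold $n$, and the product map $(u,v)\mapsto(\sigma(u),\tau(v))$ for isomorphism preservation. The only differences are cosmetic. Your XOR in condition (i) coincides with the paper's ``or'' on distinct pairs. Your extra edge-count check in the degenerate branch is harmless, and your explicit remark that isomorphic pairs land on the same fixed NO-instance fills in a detail the paper leaves implicit.
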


\begin{proof}
We give an isomorphism-preserving polynomial-time reduction from $\GI$ to the decision problem
$\IS=\{(K,k)\mid K\text{ has an independent set of size at least }k\}$.

Let $(G,H)$ be an instance of $\GI$. If $|V(G)|\neq |V(H)|$, output a fixed NO-instance of $\IS$
(e.g.\ $(K,2)$ where $K$ is a single edge). Hence assume $|V(G)|=|V(H)|=n$.

\paragraph{Reduction $R$.}
We construct the conflict graph $K=K(G,H)$ with
\[
V(K):=V(G)\times V(H),
\]
and for distinct $(u,v),(u',v')\in V(K)$ put an edge between them iff at least one of the following holds:
\begin{enumerate}
    \item (\emph{Function constraint}) $u=u'$ or $v=v'$.
    \item (\emph{Adjacency inconsistency})
    $\mathbf{1}[(u,u')\in E(G)] \oplus \mathbf{1}[(v,v')\in E(H)] = 1$.
\end{enumerate}
Define
\[
R(G,H) := (K(G,H),\, n).
\]
This is computable in polynomial time.

\paragraph{Correctness.}
We claim that
\[
(G,H)\in \GI \iff (K(G,H),n)\in \IS.
\]
If $\varphi:V(G)\to V(H)$ is an isomorphism, then
$S:=\{(u,\varphi(u)) : u\in V(G)\}$ is an independent set of size $n$ in $K(G,H)$:
type-(1) edges are avoided because $\varphi$ is a bijection, and type-(2) edges are avoided because
$\varphi$ preserves adjacency. Hence $(K(G,H),n)\in \IS$.

Conversely, if $K(G,H)$ has an independent set $S$ of size $n$, then constraint (1) implies that $S$
defines a bijection $\varphi:V(G)\to V(H)$ by $\varphi(u)=v$ iff $(u,v)\in S$.
Independence of $S$ implies no type-(2) edge occurs between $(u,\varphi(u))$ and $(u',\varphi(u'))$,
hence $(u,u')\in E(G)$ iff $(\varphi(u),\varphi(u'))\in E(H)$ for all $u\neq u'$.
Therefore $\varphi$ is an isomorphism and $(G,H)\in\GI$.

\paragraph{Isomorphism preservation.}
View $\GI$ instances as attributed graphs $\langle G,H\rangle$ (disjoint union with L/R vertex colors).
If $(G,H)\cong (G',H')$, there exist isomorphisms $\psi_G:G\to G'$ and $\psi_H:H\to H'$.
Define $\Psi:V(K(G,H))\to V(K(G',H'))$ by
\[
\Psi(u,v):=(\psi_G(u),\psi_H(v)).
\]
Then $\Psi$ preserves edges of type (1) (it preserves equality of coordinates) and of type (2)
(since $\psi_G,\psi_H$ preserve adjacency), hence $K(G,H)\cong K(G',H')$.
Also $n=|V(G)|=|V(G')|$, so the parameter $k=n$ is preserved.
Thus $R(G,H)\cong R(G',H')$ as $\IS$ instances.

Therefore $R$ is an isomorphism-preserving reduction and $\GI \isored \IS$.
\end{proof}

\begin{lemma}\label{lem:gi-to-sat}
$\GI \isored \SAT$.
\end{lemma}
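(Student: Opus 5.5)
We encode the existence of an isomorphism between $G$ and $H$ directly as a CNF formula whose variables are indexed by vertex pairs, mirroring the conflict-graph construction of \cref{lem:gi-to-is}. Let $(G,H)$ be a $\GI$ instance. If $|V(G)|\neq|V(H)|$ we output a fixed unsatisfiable formula, say $x\wedge\neg x$. This is harmless, since isomorphic instances have equal sizes and therefore both receive this same formula. Otherwise, for every $u\in V(G)$ and $v\in V(H)$ we introduce a variable $x_{u,v}$, intended to mean ``$u\mapsto v$''. The formula $\Phi(G,H)$ consists of the following clauses:
\begin{enumerate}
\item for each $u$, the clause $\bigvee_{v} x_{u,v}$ (every vertex of $G$ is mapped somewhere);
\item for each $v$, the clause $\bigvee_{u} x_{u,v}$ (every vertex of $H$ is hit);
\item for all distinct $(u,v),(u',v')$ with $(u=u')\oplus(v=v')$, the clause $(\neg x_{u,v}\vee\neg x_{u',v'})$;
\item for all $u\neq u'$ and $v\neq v'$ with $\mathbf{1}[(u,u')\in E(G)]\neq\mathbf{1}[(v,v')\in E(H)]$, the clause $(\neg x_{u,v}\vee\neg x_{u',v'})$.
\end{enumerate}
The construction has $O(n^4)$ clauses and is computable in polynomial time.

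For correctness we argue both directions. If $\varphi$ is an isomorphism, set $x_{u,v}=1$ iff $\varphi(u)=v$; then clauses (1)–(4) are satisfied because $\varphi$ is a bijection that preserves adjacency. Conversely, take a satisfying assignment. Clauses (1) and (3) imply that each $u$ has exactly one $v$ with $x_{u,v}=1$, which defines a map $\varphi$. Clause (3) also makes $\varphi$ injective, and clause (2) makes it surjective (surjectivity also follows from equal cardinalities). Clause (4) then forces $\varphi$ to preserve both edges and non-edges, so $\varphi$ is an isomorphism.

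For isomorphism preservation, suppose $(G,H)\cong(G',H')$ via $\psi_G$ and $\psi_H$. We rename the variables by $x_{u,v}\mapsto x_{\psi_G(u),\psi_H(v)}$. This renaming is a bijection on variables and preserves negations. It maps the clause families (1)–(4) of $\Phi(G,H)$ exactly onto those of $\Phi(G',H')$, because the defining conditions depend only on equality of coordinates and on adjacency, both of which $\psi_G$ and $\psi_H$ preserve. Under the CNF isomorphism notion of \Cref{remark:iso:closed}, where clauses and literals within clauses may be permuted freely, this renaming witnesses $\Phi(G,H)\cong\Phi(G',H')$.

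The main obstacle is precisely this last step. The construction must not depend on any vertex ordering, for example by fixing an ordering to break symmetric constraints or by introducing auxiliary variables in a canonically ordered way. We avoid this by indexing everything by unordered data, namely vertex pairs and pairs of pairs, and by emitting each clause as a set rather than a sequence. This is also why we stop at general $\SAT$ and do not attempt to split long clauses into 3-clauses: doing so would require an ordering-dependent choice of auxiliary variables.
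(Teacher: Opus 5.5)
Your proposal is correct and follows essentially the same construction as the paper: the same variables $x_{u,v}$, the same fixed unsatisfiable output when the vertex counts differ, bijection clauses plus adjacency-consistency clauses, and the renaming $x_{u,v}\mapsto x_{\psi_G(u),\psi_H(v)}$ to show isomorphism preservation. The differences are cosmetic. You merge the paper's ``at most one image'' and ``injectivity'' clause families into a single XOR-indexed family, and you add a surjectivity clause that is redundant, since surjectivity already follows from $|V(G)|=|V(H)|$.
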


\begin{proof}
We define a polynomial-time, isomorphism-preserving many-one reduction from $\GI$ to $\SAT$.

\paragraph{Reduction.}
Let $(G,H)$ be an instance of $\GI$. If $|V(G)|\neq |V(H)|$, output a fixed unsatisfiable CNF
(e.g.\ $(x)\wedge(\neg x)$). Hence assume $|V(G)|=|V(H)|=n$.

Introduce Boolean variables
\[
x_{u,v} \quad\text{for each } u\in V(G),\ v\in V(H),
\]
with the intended meaning “$\varphi(u)=v$”.

We construct a CNF formula $\Phi(G,H)$ consisting of the following clauses.

\paragraph{Bijection constraints.}
\begin{enumerate}
\item \emph{Each $u$ maps somewhere:}
\[
\bigvee_{v\in V(H)} x_{u,v} \qquad\text{for all } u\in V(G).
\]
\item \emph{Each $u$ maps to at most one $v$:}
\[
(\neg x_{u,v}\ \vee\ \neg x_{u,v'}) \qquad\text{for all } u\in V(G),\ v\neq v'\in V(H).
\]
\item \emph{Injectivity (each $v$ is used by at most one $u$):}
\[
(\neg x_{u,v}\ \vee\ \neg x_{u',v}) \qquad\text{for all } v\in V(H),\ u\neq u'\in V(G).
\]
\end{enumerate}

\paragraph{Adjacency consistency constraints.}
For every pair of distinct vertices $u,u'\in V(G)$ and distinct vertices $v,v'\in V(H)$ such that
\[
\mathbf{1}[(u,u')\in E(G)] \oplus \mathbf{1}[(v,v')\in E(H)] = 1,
\]
add the clause
\[
(\neg x_{u,v}\ \vee\ \neg x_{u',v'}).
\]

This construction is polynomial-time and yields a CNF formula of size polynomial in $n$.

\paragraph{Correctness.}
We show that $\Phi(G,H)$ is satisfiable if and only if $G\cong H$.

\smallskip
\noindent\emph{($\Rightarrow$)} Suppose $\varphi:V(G)\to V(H)$ is an isomorphism. Set
$x_{u,\varphi(u)}=\text{true}$ for all $u\in V(G)$ and all other variables to false.
Then the bijection constraints are satisfied. For adjacency consistency, for any $u\neq u'$ let
$v=\varphi(u)$ and $v'=\varphi(u')$. Since $\varphi$ is an isomorphism,
\[
(u,u')\in E(G) \iff (v,v')\in E(H),
\]
so the XOR condition is false and no adjacency-consistency clause is violated. Hence the assignment
satisfies $\Phi(G,H)$.

\smallskip
\noindent\emph{($\Leftarrow$)} Conversely, let $A$ satisfy $\Phi(G,H)$. By constraints (1) and (2), for each
$u\in V(G)$ there is exactly one $v\in V(H)$ with $A(x_{u,v})=\text{true}$; define $\varphi(u)=v$.
Constraint (3) implies that $\varphi$ is injective, and since $|V(G)|=|V(H)|$ it is a bijection.
Now fix distinct $u,u'\in V(G)$ and let $v=\varphi(u)$, $v'=\varphi(u')$.
If exactly one of $(u,u')\in E(G)$ and $(v,v')\in E(H)$ holds, then the corresponding clause
$(\neg x_{u,v}\vee \neg x_{u',v'})$ would be falsified (both literals false), contradicting satisfaction.
Therefore
\[
(u,u')\in E(G) \iff (\varphi(u),\varphi(u'))\in E(H),
\]
so $\varphi$ is an isomorphism $G\cong H$.

\paragraph{Isomorphism preservation.}
We take ``isomorphism'' between $\SAT$ instances to mean: there is a variable renaming (and clause permutation)
mapping one CNF to the other.
If $(G,H)\cong (G',H')$ as $\GI$ instances, there exist isomorphisms $\psi_G:V(G)\to V(G')$ and
$\psi_H:V(H)\to V(H')$. Define the variable renaming $\rho$ by
\[
\rho(x_{u,v}) := x_{\psi_G(u),\,\psi_H(v)}.
\]
Then $\rho$ maps the three families of bijection constraints for $(G,H)$ to the corresponding families for
$(G',H')$, and it maps each adjacency-consistency clause to the corresponding clause in $\Phi(G',H')$.
Hence $\Phi(G,H)\cong \Phi(G',H')$ as $\SAT$ instances, so the reduction is isomorphism-preserving.

Therefore $\GI \isored \SAT$.
\end{proof}

\begin{lemma}\label{lem:is-to-maxcut}
$\IS \isored \MaxCut$.
\end{lemma}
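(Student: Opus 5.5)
The plan is to use the classical reduction from $\IS$ to $\MaxCut$ (as in Garey--Johnson--Stockmeyer, or the simpler gadget used in textbook treatments) and check that every step is defined only in terms of the graph structure. Then the third requirement of \cref{def:iso:preserving:redution} follows automatically. Concretely, given an $\IS$ instance $(K,k)$ with $n=|V(K)|$ and $m=|E(K)|$, we build a graph $M=M(K)$ on vertex set $V(K)\cup\{x\}\cup\{a_e,b_e : e\in E(K)\}$. Here $x$ is one fresh vertex. For each vertex $v$ of $K$ we add the edge $\{x,v\}$. For each edge $e=\{u,w\}$ of $K$ we add a fixed constant-size gadget, namely a small cycle-type structure on $u,w,a_e,b_e$, and possibly also $x$. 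The gadget is chosen so that its maximum cut contribution is some constant $\gamma$ when at most one of $u,w$ lies on the side opposite to $x$, and only $\gamma-1$ when both do. The target is then $k' := k + \gamma m$, and we output $(M,k')$. If the intended gadget needs multi-edges or weights, I would instead use the weighted $\MaxCut$ variant, or subdivide the edges to simulate weights. Subdivision is a deterministic, label-free operation and therefore does not harm isomorphism preservation.

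\textbf{Correctness.} Assume without loss of generality that $x$ is on the left. Let $S$ be the set of vertices of $K$ placed on the right. In the best extension to the gadget vertices, the cut value is
\[
|S| + \gamma m - |E(K[S])|.
\]
If $S$ is independent, this equals $|S|+\gamma m$. Conversely, take any $S$ and, for each edge inside $S$, move one endpoint back to the left. Each such move costs at most $1$ in the first term and gains at least $1$ in the last term. So an optimal cut may be assumed to come from an independent set. Hence $\mathrm{maxcut}(M)\ge k'$ if and only if $\alpha(K)\ge k$.

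\textbf{Isomorphism preservation.} This step is routine. An isomorphism $\psi\colon K\to K'$ extends to $M(K)\to M(K')$ by fixing $x$, acting as $\psi$ on $V(K)$, and sending the gadget of each edge $e$ to the gadget of $\psi(e)$. For this to work, the gadget must be symmetric in $u$ and $w$, or the extension must swap $a_e,b_e$ accordingly. The parameter $k'$ depends only on $k$ and $m$, both of which are preserved.

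\textbf{Main obstacle.} The main difficulty is choosing an edge gadget that is exactly symmetric under swapping $u$ and $w$. An asymmetric gadget would force an orientation of the edges of $K$, and that orientation need not be preserved by isomorphisms. The gadget must also give the precise "penalty $1$ iff both endpoints are selected" behaviour. My first try would be the triangle-plus-$x$ gadget: add the vertices $a_e,b_e$ with edges $\{u,a_e\},\{w,b_e\},\{a_e,b_e\},\{a_e,x\},\{b_e,x\}$. I would then verify its cut values by a short case analysis over the four placements of $u$ and $w$. If this gadget does not give the right values, I would fall back on the weighted formulation, where the penalty can be tuned directly with symmetric weights.
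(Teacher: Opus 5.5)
Your proposal is correct and is essentially the paper's proof. Your first-try gadget is exactly the textbook gadget from Barak that the paper uses, with $x=s^\star$, $a_e=e_u$, $b_e=e_w$. It does give $\gamma=4$: the triangle on $x,a_e,b_e$ caps each gadget's contribution at $4$, this is attained unless both $u,w$ lie opposite $x$, and in that case the maximum is $3$. Hence your threshold $k+4m$ and your isomorphism extension coincide with the paper's.

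The weighted or subdivision fallback is therefore unnecessary, and it would not target the paper's unweighted simple-graph $\MaxCut$ anyway. Your $|S|+4m-|E(K[S])|$ argument usefully proves the correctness that the paper only cites from Barak.
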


\begin{proof}
We use the standard gadget reduction from $\isName$ to $\MaxCut$ from Barak's online textbook~\cite{Barak-IntroTCS-v095}
and verify that it is also isomorphism-preserving.

\paragraph{Decision versions.}
We view $\IS$ as the decision problem on instances $(G,k)$ asking whether $G$ has an independent set of size
at least $k$, and $\MaxCut$ as the decision problem on instances $(H,T)$ asking whether $H$ has a cut of size
at least $T$.

\paragraph{Reduction.}
Given an instance $(G,k)$ of $\IS$, let $G=(V,E)$ and $m:=|E|$.
We construct a graph $H=R(G)$ as follows (cf.~\cite{Barak-IntroTCS-v095}):
add a distinguished vertex $s^\star$ adjacent to every $v\in V$; and for each edge $e=\{u,v\}\in E$,
attach a constant-size gadget by adding two new vertices $e_u,e_v$ and the five edges
\[
\{e_u,e_v\},\quad \{s^\star,e_u\},\quad \{s^\star,e_v\},\quad \{u,e_u\},\quad \{v,e_v\}.
\]
(No other edges are added; in particular, we do not keep the original edges of $G$ in $H$.)
We set the $\MaxCut$ threshold to
\[
T := k + 4m,
\]
and output the instance $(H,T)$. Barak proves that $G$ has an independent set of size at least $k$ if and only if
$H$ has a cut of size at least $T$, so this is a valid polynomial-time many-one reduction~\cite{Barak-IntroTCS-v095}.

\paragraph{Isomorphism preservation.}
Let $\varphi:G\to G'$ be a graph isomorphism and note that it preserves the parameter $k$ and the number of edges $m$.
We extend $\varphi$ to an isomorphism $\psi:R(G)\to R(G')$ by mapping $s^\star\mapsto s^\star$,
each original vertex $u\mapsto \varphi(u)$, and for each edge $e=\{u,v\}\in E(G)$ mapping the gadget vertices
$e_u\mapsto (\varphi(e))_{\varphi(u)}$ and $e_v\mapsto (\varphi(e))_{\varphi(v)}$, where
$\varphi(e)=\{\varphi(u),\varphi(v)\}\in E(G')$.
Since the gadget adjacencies are defined purely in terms of incidences to $s^\star$, $u$, $v$, and the internal
gadget edge $\{e_u,e_v\}$, this mapping preserves all edges. Hence $R(G)\cong R(G')$ and the threshold $T$ is preserved.

Therefore the reduction is isomorphism-preserving, i.e.\ $\IS \isored \MaxCut$.
\end{proof}

\subsection{The Hardness of $\kSAT{3}$}\label{sec:3sat}

To establish the hardness of $\kSAT{3}$, we need a reduction other than the one introduced by Karp in \cite{DBLP:conf/coco/Karp72}, as Karp’s reduction is not isomorphism-preserving. Instead, we prove hardness through the following sequence of reductions: $\COL \isored \kCOL{3} \isored \kSAT{3}$

\subsubsection{Lovász's Reduction for Colorability}

The following construction goes back to László Lovász (see \cite{lovasz_graph}). We give a full description of Lovász's construction here since the paper is hard to find.\footnote{We refer the interested reader to the following website for further information: \url{https://users.encs.concordia.ca/~chvatal/notes/color.html}}  See \cref{fig:lov} for an example of the construction. 

\begin{definition}[See \cite{lovasz_graph}]
    Given a graph $G$ with vertex set $V(G)$
    and edge set $E(G)$. Further, let $k \geq 3$ be an integer. We construct a new graph $\lovG{G}{k}$ with:
    \begin{itemize}
        \item $V(\lovG{G}{k}) \coloneqq V \uplus C \uplus W \uplus \{G_g, G_b\}$ with 
        \begin{itemize}
            \item $V \coloneq \{v^{(i)} \, | \, v \in V(G), i \in [k]\}$
            \item $C \coloneq \{c_v^{(i)} \, | \, v \in V(G), i \in [k]\}$
            \item $W \coloneq \{e_u^{(i)}, e_v^{(i)}, e_g^{(i)} \, | \, e = \{u, v\} \in E(G), i \in [k] \}$
        \end{itemize}
        \item $E(\lovG{G}{k}) \coloneqq E_b \uplus  E_g  \uplus E_C \uplus E_W $ with 
        \begin{itemize}
            \item $E_b \coloneq \{G_b, G_g\} \; \cup \; \{\{v^{(i)}, G_b\} \, | \, v \in V(G), i \in [k]\}$
            \item $E_g \coloneq  \{\{e_g^{(i)}, G_g\}  \, | \, e \in E(G), i \in [k]\}$
            \item $E_C \coloneq \{\{c_v^{(i)}, c_v^{(i+1 \mod k)} \} \, | \, \, v \in V(G), i \in [k]\} \; \cup \;  \{ \{v^{(i)}, c_v^{(i)} \} \, | \, v \in V(G), i \in [k]   \}$
            \item \[E_W = \bigcup_{\substack{e \in E(G) \\ e = \{u, v\}}} \bigcup_{i = 1}^k \left\{\{e_u^{(i)}, e_v^{(i)}\}, \{e_v^{(i)}, e_g^{(i)}\}, \{e_g^{(i)}, e_u^{(i)}\}, \{e_u^{(i)}, u^{(i)}\}, \{e_v^{(i)}, v^{(i)}\}\right\} \]
        \end{itemize}
    \end{itemize}
\end{definition}

\begin{figure}
    \centering
    
    \begin{subfigure}[b]{0.8\textwidth}
        \centering

        \includegraphics[]{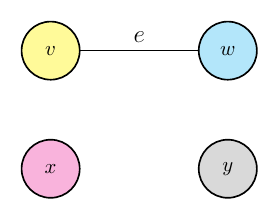} 
        \caption{A undirected graph $G$ with $4$ vertices. Note that the vertices are actually uncolored.}
        \label{fig:lov:a}
    \end{subfigure}
    
    \vspace{1em}

    \begin{subfigure}[b]{1.1\textwidth}
        \centering
        \includegraphics[width=\textwidth]{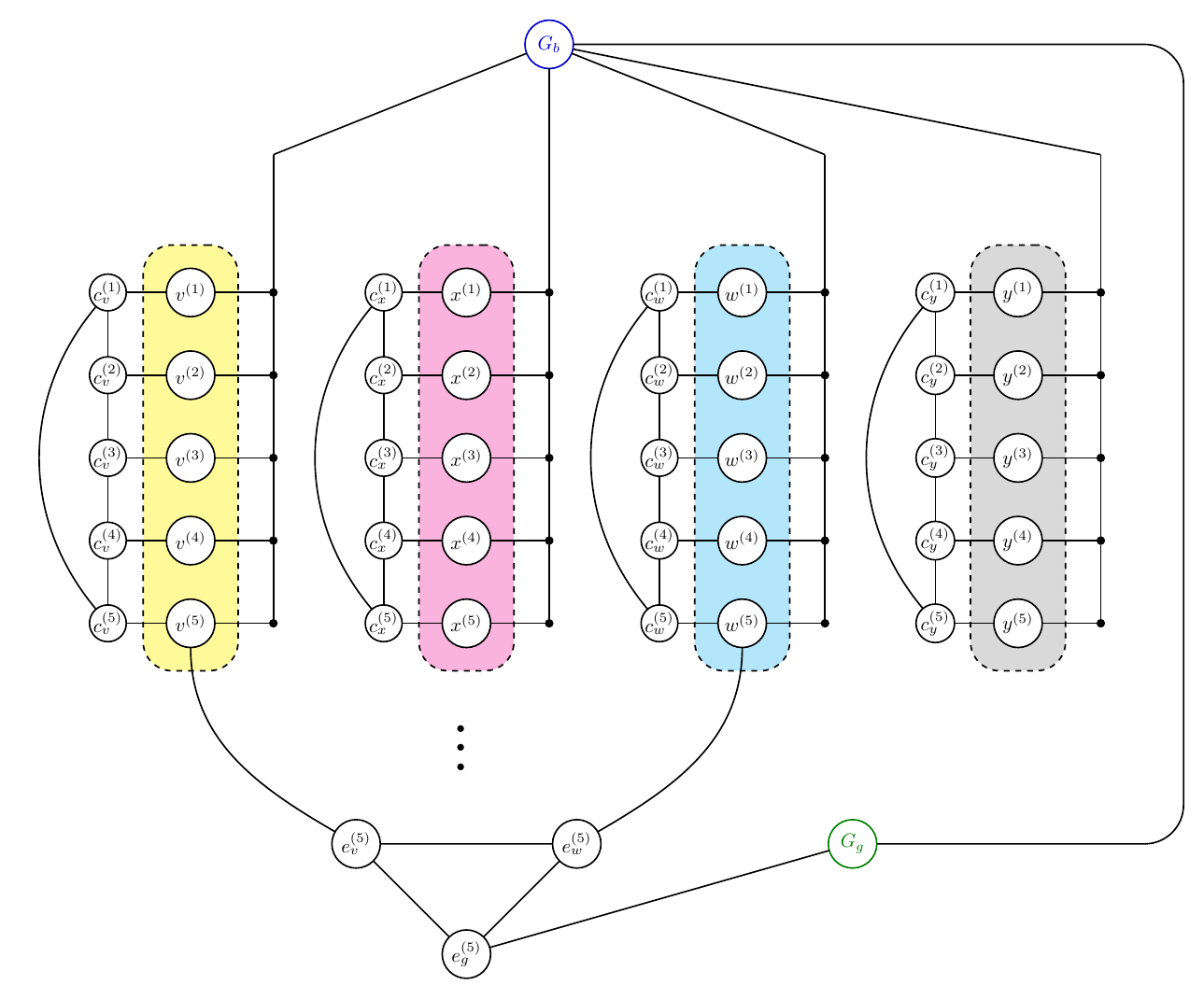}
        \caption{The graph $\lovG{G}{5}$. Note that we only show one of the five edge gadgets for the edge $e \in E(G)$.}
        \label{fig:lov:b}
    \end{subfigure}
    \caption{Example of $G$ and $\lovG{G}{5}$}\label{fig:lov}
\end{figure}

We show that $\lovG{\cdot}{\cdot}$ describes a correct polynomial-time many-one reduction.
\begin{lemma}\label{lem:lovG}
    Given a graph $G$ and an odd integer $k \geq 3$, then $\lovG{G}{k}$ satisfies the following properties:
    \begin{itemize}
        \item $\lovG{G}{k}$ has $O(k \cdot (|V(G)| + |E(G)|))$ many vertices and edges.
        \item $\lovG{G}{k}$ can be constructed in time $O(k \cdot (|V(G)| + |E(G)|))$.
        \item $\lovG{G}{k}$ is $3$-colorable if and only if $G$ is $k$-colorable.
    \end{itemize}
\end{lemma}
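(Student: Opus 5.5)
The first two items are pure bookkeeping, so I would dispatch them quickly. The graph $\lovG{G}{k}$ has $2k|V(G)| + 3k|E(G)| + 2$ vertices. Its edge set consists of $1 + k|V(G)|$ edges in $E_b$, $k|E(G)|$ in $E_g$, $2k|V(G)|$ in $E_C$ and $5k|E(G)|$ in $E_W$. Each set is written down by a single pass over $V(G)$ or $E(G)$ for each $i\in[k]$, which gives both the size bound and the construction time. The substance is the equivalence in the third item, and I would prove its two directions separately. Throughout, I use the colours $\{b,g,r\}$ for a $3$-colouring of $\lovG{G}{k}$ and, without loss of generality, normalise so that $G_b$ gets colour $b$ and $G_g$ gets colour $g$.

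\emph{From a $3$-colouring to a $k$-colouring.} Every $v^{(i)}$ is adjacent to $G_b$, so it is coloured $g$ or $r$; interpret $r$ as ``true''.

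The first step analyses the edge gadget $\{e_u^{(i)},e_v^{(i)},e_g^{(i)}\}$, which is a triangle. Its vertex $e_g^{(i)}$ is adjacent to $G_g$ and so is not $g$. Hence one of $e_u^{(i)},e_v^{(i)}$ is coloured $g$, and the corresponding endpoint $u^{(i)}$ or $v^{(i)}$ must then be $r$. So for each edge $\{u,v\}$ and each $i$, at least one of $u^{(i)},v^{(i)}$ is $r$; in other words, the vertices coloured $g$ on level $i$ form an independent set of $G$.

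The second step handles the odd cycle $c_v^{(1)},\dots,c_v^{(k)}$. An odd cycle cannot be $2$-coloured, so it cannot avoid $b$ and is not coloured entirely from $\{g,r\}$. I want to conclude that some $v^{(i)}$ is coloured $g$. This is the delicate part, and I would argue it by contradiction. If all $v^{(i)}$ were $r$, then all $c_v^{(i)}$ would be coloured from $\{b,g\}$, which is a proper $2$-colouring of an odd cycle, a contradiction. So each $v$ has some level $i$ with $v^{(i)}$ coloured $g$.

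Now define $\kappa(v)$ as any such $i$. By the first step, two adjacent vertices $u,v$ cannot share a level on which both are $g$, so $\kappa$ is a proper $k$-colouring of $G$.

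\emph{From a $k$-colouring to a $3$-colouring.} Conversely, given a proper $k$-colouring $\kappa$, I would colour $v^{(i)}$ with $g$ iff $\kappa(v)=i$ and with $r$ otherwise, and set $G_b\mapsto b$, $G_g\mapsto g$. The remaining vertices are then coloured as follows.

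For the cycle of $v$, colour $c_v^{(\kappa(v))}$ with $r$ (its neighbour $v^{(\kappa(v))}$ is $g$). Going around the rest of the cycle, which is a path of even length $k-1$, alternate $b,g$. These cycle vertices are adjacent only to their own $v^{(i)}$, which is $r$, so there is no conflict. The two neighbours of $c_v^{(\kappa(v))}$ both get $b$ under the alternation, which is fine since they are distinct from it and coloured differently from $r$. This is where oddness of $k$ is used.

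For an edge gadget on level $i$, at most one endpoint is $g$ because $\kappa$ is proper. If $u^{(i)}$ is $g$, colour $e_u^{(i)}\mapsto r$, $e_v^{(i)}\mapsto g$, $e_g^{(i)}\mapsto b$. This works because $v^{(i)}$ is $r$, and $e_g^{(i)}$ avoids $g$. If both endpoints are $r$, colour $e_u^{(i)}\mapsto g$, $e_v^{(i)}\mapsto b$, $e_g^{(i)}\mapsto r$.

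I expect the main obstacle to be this cycle gadget, in both directions. Getting the parity argument right in the forward direction is the crux: one must see that ``all $v^{(i)}$ are $r$'' forces a $2$-colouring of the odd cycle. In the converse direction, care is needed to fit the alternating pattern around the cycle consistently with the single $r$ vertex.
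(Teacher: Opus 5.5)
Your proof is correct and follows the paper's argument. The ingredients match: the same normalization $G_b\mapsto b$, $G_g\mapsto g$; the odd-cycle argument forcing some $v^{(i)}$ to be green; the triangle argument showing that green vertices on one level are independent; and an extension of the coloring for the converse, which you spell out more explicitly than the paper does. One small slip: the remaining path around the cycle has an even number $k-1$ of vertices, so the alternation gives the two neighbours of $c_v^{(\kappa(v))}$ the colours $b$ and $g$, not both $b$. This is still proper, and in fact oddness of $k$ is only needed in the forward direction.
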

\begin{proof}
    Write $n \coloneq |V(G)|$ and $m \coloneqq |E(G)|$.
    It is easy to check that $\lovG{G}{k}$ has $O(k \cdot (n + m))$ many vertices and edges and that we can construct $\lovG{G}{k}$ in time $O(k \cdot (n + m))$. In the following, we assume that $\lovG{G}{k}$ is $3$-colorable with a function $\tilde{c} \colon V(\lovG{G}{k}) \to \{r, g, b\}$ (i.e., the colors red, green, blue). We show that $G$ is $k$-colorable.

    Without loss of generality, we assume that the vertex $G_g \in \lovG{G}{k}$ is colored $g$ and $G_b \in \lovG{G}{k}$ is colored $b$ (we can do this since $G_g$ and $G_b$ are connected). Next, observe that $\tilde{c}(v^{(i)}) \in \{r, g\}$ since each $v^{(i)}$ is connected to $G_b$. Next, we show the following claim.
    \begin{claim}\label{claim:v:g}
        For every $v\in V(G)$ there exists an index $i \in [k]$ such that $\tilde{c}(v^{(i)}) = g$.
    \end{claim}
    \begin{proof}
        Assume otherwise, then $\tilde{c}(v^{(i)}) = r$ for all $i \in [k]$.
        Since $k$ is odd, we obtain that the vertices $\{c_v^{(i)} \, | \, i \in [k]\}$ induce an odd cycle of size at least $3$. Now, each odd cycle of size $3$ is 3-colorable but not two colorable. Thus there is an index $i \in [k]$ with $\tilde{c}(c^{(i)}) = r$. However, this is not possible since $v^{(i)}$ is adjacent to $c^{(i)}$. Thus, our assumption that $\tilde{c}(v^{(i)}) = r$,  for all $i \in [k]$, is wrong and therefore there exist an $i \in [k]$ with $\tilde{c}(v^{(i)}) = g$.
    \end{proof}
    Now, \cref{claim:v:g} yields a function $f \colon V(G) \to [k]$ that maps each vertex $v$ to the smallest value $i \in [k]$ with $\tilde{c}(v^{(i)}) = g$. We show that $f$ is a $k$-coloring for $G$. Assume that $f$ is not a proper $3$-coloring then there is an edge
    $e = \{u, v\} \in E(G)$ with $f(u) = f(v) = i$. By construction, we obtain $\tilde{c}(v^{(i)}) = g = \tilde{c}(u^{(i)})$. Further, note that $\tilde{c}(G_g) = g$. Thus, each vertex in the triangle $\{e^{(i)}_u, e^{(i)}_v, e_g^{(i)}\}$ is adjacent to a vertex with color $g$. Hence, we can only use the colors $\{r, b\}$ to color $\{e^{(i)}_u, e^{(i)}_v, e_g^{(i)}\}$ which is not possible since the vertices form a triangle in $\lovG{G}{k}$. Therefore, $\tilde{c}$ is not a proper 3-coloring which is a contradiction. 

    \paragraph{}
    Next, let $f \colon V(G) \to [k]$ be a $k$-coloring for $G$. We define the the following mapping on the subset $V' \coloneqq \{G_g, G_b \} \cup \{v^{(i)} \, | \, v\in V(G), i \in  [k]\}$ of vertices of $\lovG{G}{k}$.
    \[c' \colon V'  \,\to  \, \{r, g, b\},  \, x \mapsto \begin{cases}
       g & \text{ if $x = G_g$ } \\
       b & \text{ if $x = G_b$ } \\
       g & \text{ if $x = v^{(i)}$ and $f(v) = i$} \\
       r & \text{ if $x = v^{(i)}$ and $f(v) \neq i$} \\
    \end{cases} \]
    We show that $c'$ can be extended to a proper $3$-coloring that is defined on all vertices of $\lovG{G}{k}$. First note that under $c'$ no two adjacent vertices in $V'$ have the same color (the only relevant case is between $G_g$ and $v^{(i)}$ with $f(v) = i$ which are non-adjacent by construction). 
    
    Let $v \in V(G)$ then $\{c_v^{(i)} \, | \, \in [k]\}$ induces an odd cycle in $\lovG{G}{k}$. By construction there are vertices $v^{(s)}$ that are colored $r$ and a vertex $v^{(t)}$ that is colored $g$. Thus, we have three colors available to color the odd cycle $\{c_v^{(i)} \, | \, \in [k]\}$. 

    Now, let $e = \{u, v\} \in E(G)$ and $i \in [k]$ then $\{e^{(i)}_u, e^{(i)}_v, e^{(i)}_g\}$ induces a triangle in $\lovG{G}{k}$. Now, by construction not every node in $\{e^{(i)}_u, e^{(i)}_v, e^{(i)}_g\}$ can be adjacent to a vertex that is colored $g$. Otherwise, $f(v) = i = f(u)$ which is not possible. Thus, we have three colors available to color the triangle $\{e^{(i)}_u, e^{(i)}_v, e^{(i)}_g\}$. Meaning that we can extend $c'$.

\end{proof}

The reduction $\lovG{\cdot}{\cdot}$ is isomorphism-preserving.
\begin{lemma}\label{lem:lovG:iso}
    Let $G$ and $H$ be two graphs and $k \geq 3$ be an odd positive integer. If $G \cong H$ then $\lovG{G}{k} \cong \lovG{H}{k}$.
\end{lemma}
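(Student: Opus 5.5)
Given an isomorphism $\varphi\colon V(G)\to V(H)$, I extend it to an explicit map $\Phi\colon V(\lovG{G}{k})\to V(\lovG{H}{k})$ and check that it is a bijection preserving adjacency. Every vertex of $\lovG{G}{k}$ is named by data (a vertex of $G$, an edge of $G$, an index $i\in[k]$) that $\varphi$ transports to $H$. The construction treats that data uniformly, so the extension should be forced.

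\textbf{Defining $\Phi$.} Set $\Phi(G_g)=H_g$ and $\Phi(G_b)=H_b$ (the two special vertices of $\lovG{H}{k}$). Set $\Phi(v^{(i)})=\varphi(v)^{(i)}$ and $\Phi(c_v^{(i)})=c_{\varphi(v)}^{(i)}$. For an edge $e=\{u,v\}\in E(G)$, put $\varphi(e)\coloneqq\{\varphi(u),\varphi(v)\}\in E(H)$ and define
\[
\Phi(e_u^{(i)})=\varphi(e)_{\varphi(u)}^{(i)},\qquad \Phi(e_v^{(i)})=\varphi(e)_{\varphi(v)}^{(i)},\qquad \Phi(e_g^{(i)})=\varphi(e)_g^{(i)}.
\]
$\Phi$ respects the partition $V\uplus C\uplus W\uplus\{G_g,G_b\}$ and preserves the index $i$. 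On each part it is a bijection, because $\varphi$ is a bijection on vertices and induces a bijection $E(G)\to E(H)$. Hence $\Phi$ is a bijection.

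\textbf{Edge preservation, class by class.} For $E_b$: $\{G_b,G_g\}\mapsto\{H_b,H_g\}$ and $\{v^{(i)},G_b\}\mapsto\{\varphi(v)^{(i)},H_b\}$. For $E_g$: $\{e_g^{(i)},G_g\}\mapsto\{\varphi(e)_g^{(i)},H_g\}$. For $E_C$: the cycle edge $\{c_v^{(i)},c_v^{(i+1 \bmod k)}\}$ maps to $\{c_{\varphi(v)}^{(i)},c_{\varphi(v)}^{(i+1\bmod k)}\}$, and the edge $\{v^{(i)},c_v^{(i)}\}$ maps to $\{\varphi(v)^{(i)},c_{\varphi(v)}^{(i)}\}$. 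For $E_W$: the five gadget edges of $(e,i)$ map to the five gadget edges of $(\varphi(e),i)$, with $\{e_u^{(i)},u^{(i)}\}\mapsto\{\varphi(e)^{(i)}_{\varphi(u)},\varphi(u)^{(i)}\}$. In each case the image is an edge of $\lovG{H}{k}$. Since $\Phi$ is a bijection and $|E(\lovG{G}{k})|=|E(\lovG{H}{k})|$ (because $|V(G)|=|V(H)|$ and $|E(G)|=|E(H)|$), injectivity on edges gives that non-edges map to non-edges. Alternatively, the same case check can be run with $\Phi^{-1}$, which is built from $\varphi^{-1}$.

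\textbf{Main obstacle.} The only subtle point is the labelling of edge-gadget vertices. An edge is an unordered pair, so $e_u^{(i)}$ is indexed by the endpoint $u$ and not by an orientation of $e$. I must check that $\Phi$ sends the gadget vertex attached to endpoint $u$ to the gadget vertex of $\varphi(e)$ attached to $\varphi(u)$. This holds because $\varphi$ maps endpoints to endpoints, which is exactly why the attachment edges $\{e_u^{(i)},u^{(i)}\}$ are preserved. Once that indexing is pinned down, the rest is routine and $\lovG{G}{k}\cong\lovG{H}{k}$ follows.
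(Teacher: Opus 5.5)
Your proposal is correct and takes essentially the same route as the paper. You use the same extension of $\varphi$ to every vertex type of $\lovG{G}{k}$, observe that it is a bijection, and use $|E(\lovG{G}{k})|=|E(\lovG{H}{k})|$ to reduce the task to checking that edges map to edges; the only difference is that you run that check class by class over $E_b, E_g, E_C, E_W$ rather than by pairs of vertex types, which is if anything cleaner.
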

\begin{proof}
    Let $\varphi \colon V(G) \to V(H)$ be an isomorphism from $G$ to $H$. Write $V(\lovG{G}{k}) \coloneqq V \uplus C \uplus W \uplus \{G_g, G_b\}$. For $v \in V(G)$, we write $\tilde{v} \coloneqq \varphi(v)$. For $e \in \{u, v\}$, we write $\tilde{e} \coloneqq \{\varphi(u), \varphi(v)\} \in E(H)$. Note that $v\to\tilde{v}$ and $e\to\tilde{e}$ both define one-to-one mappings between $V(G)$ and $V(H)$, and between $E(G)$ and $E(H)$, respectively. We define 
    \[\psi \colon V(\lovG{G}{k}) \to V(\lovG{H}{k}) , x \mapsto \begin{cases}
       H_g & \text{ if $x = G_g$ } \\
       H_b & \text{ if $x = G_b$ } \\
       \tilde{v}^{(i)} & \text{ if $x = v^{(i)}$} \\
       c_{\tilde{v}}^{(i)} & \text{ if $x = c_v^{(i)}$} \\
       \tilde{e}_{\tilde{v}}^{(i)} & \text{ if $x = e_v^{(i)}$} \\
       \tilde{e}_{b}^{(i)} & \text{ if $x = e_b^{(i)}$} \\
    \end{cases}. \]
    We show that $\psi$ defines an isomorphism. First, note that $\psi$ is a bijection. Let us show that $\{x, y\} \in E(\lovG{G}{k})$ if and only if $\{\psi(x), \psi(y)\} \in E(\lovG{H}{k})$. By construction we already know that $|E(\lovG{G}{k})| = |E(\lovG{H}{k})|$, thus it is actually enough to show that $\{x, y\} \in E(\lovG{G}{k})$ implies $\{\psi(x), \psi(y)\} \in E(\lovG{H}{k})$.\footnote{This implies that the mapping $\psi_E \colon E(\lovG{G}{k}) \to E(\lovG{H}{k}), \{x, y\} \mapsto \{\psi(x), \psi(y)\}$ is well-defined. Further, $\psi_E$ is injective since $\psi$ is. Lastly,  $|E(\lovG{G}{k})| = |E(\lovG{H}{k})|$ yields that $\psi_E$ is a bijection and thus $\psi$ is an isomorphism.}
    
    We consider a case distinction.
    \begin{itemize}
        \item Case $x = G_b$ or $y = G_b$: Without loss of generality, we assume $x = G_b$. If $\{G_b, y\} \in \lovG{G}{k}$ then either $y = G_g$ or $y = v^{(i)}$ thus $\{H_g, \psi(y)\}  \in E(\lovG{H}{k})$.

        \item Case ($x = G_g$ or $y = G_g$) and $x \neq G_b$ and $y \neq G_b$: Without loss of generality, we assume  $x = G_g$. If $\{G_g, y\} \in \lovG{G}{k}$ then $y = e_g^{(i)}$ thus $\{\psi(G_g), \psi(y)\}  = \{H_g, \tilde{e}_g^{(i)}\} \in E(\lovG{H}{k})$.

        \item Case $x, y \in V \uplus C \uplus W$ and ($x \in V$, $y \in C$ or $x \in C$, $y \in Y$): Without loss of generality, we assume $x = v^{(i)}$ and $y = c_u^{(j)}$ for some $u, v \in V$ and $i, j \in [k]$. If $\{v^{(i)}, c_u^{(j)}\} \in \lovG{G}{k}$ then $v = u$ and $i = j$. Thus, $\{\psi(x), \psi(y)\} = \{\tilde{v}^{(i)}, c_{\tilde{v}}^{(i)}\} \in \lovG{H}{k}$.

        \item Case $x, y \in V \uplus C \uplus W$ and ($x \in V$, $y \in W$  or $x \in W$, $y \in V$): Without loss of generality, we assume $x = v^{(i)}$ and $y = e_c^{(j)}$ for some $v \in V$, $e = \{a, b\} \in E(G)$ and $i, j \in [k]$. If $\{v^{(i)}, e_c^{(j)}\} \in \lovG{G}{k}$ then $v = c$, $i = j$ and either $a = v$ or $b = v$. Thus, $\{\psi(x), \psi(y)\} = \{\tilde{v}^{(i)}, \tilde{e}_{\tilde{v}}^{(i)}\} \in \lovG{H}{k}$.

        \item Case $x, y \in V \uplus C \uplus W$ and $x \in C$ and $y \in C$:  Without loss of generality, we assume $x = c_v^{(i)}$ and $y = c_u^{(j)}$ for $u, v \in V$ and $i, j \in [k]$. If $\{c_v^{(i)}, c_u^{(j)}\} \in \lovG{G}{k} $ then $u = v$ and either $j = i+1$ or $i = j+1$. Without loss of generality, assume $j = i+1$, now 
        $\{\psi(x), \psi(y)\} = \{c_{\tilde{v}}^{(i)}, c_{\tilde{v}}^{(i+1)}\} \in \lovG{H}{k}$.

        \item Case $x, y \in V \uplus C \uplus W$ and $x \in E$ and $y \in E$:  Without loss of generality, we assume $x = e_{c}^{(i)}$ and $y = {d}_{c'}^{(j)}$ for some $e, d \in E(G)$, $i, j \in [k]$. If 
        $\{e_{c}^{(i)}, {d}_{c'}^{(j)} \} \in \lovG{G}{k}$ then $e = d$ and $i = j$ and $c \neq c'$. Thus, $\{\psi(x), \psi(y)\} = \{\tilde{e}_{c}^{(i)}, \tilde{e}_{c'}^{(i)}\} \in \lovG{H}{k}$.

        \item Otherwise: In all remaining cases we have $x, y \in V \uplus C \uplus W$ with $\{x, y \} \notin E(\lovG{G}{k})$. For example, if $x, y \in V$ then $\psi(x) = \tilde{v}^{(i)}$ and $\psi(x) = \tilde{u}^{(j)}$ for $v, u \in V$ and $i, j \in [k]$. Note that $\{\tilde{v}^{(i)}, \tilde{u}^{(j)}\} \notin E(\lovG{H}{k})$.
    \end{itemize}
    This shows that $\psi$ is an isomorphism which proves that lemma.
\end{proof}

We extend Lovász's construction by handling even values of $k$ and conclude a isomorphism-preserving reduction to $\kCOL{3}$.
\begin{lemma}\label{lem:col:isored:3col}
    $\COL \isored \kCOL{3}$.
\end{lemma}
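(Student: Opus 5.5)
The reduction will be built directly on Lovász's construction. Lemma~\ref{lem:lovG} and Lemma~\ref{lem:lovG:iso} already give a correct, isomorphism-preserving reduction whenever the number of colors is odd. The only gap is an even number of colors, so the plan is to pad to an odd one.

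Given a $\COL$ instance $(G,k)$, I will first dispose of trivial cases with fixed outputs. If $k\le 2$ (or $k\ge |V(G)|$, where the answer is decided by simple checks), I decide the instance in polynomial time: $2$-colorability is bipartiteness, and $k\ge |V(G)|$ is always YES. I then output a fixed YES or NO instance of $\kCOL{3}$, for example a triangle or $K_4$. This map is a function of the decided answer only, and isomorphic instances have the same answer, so isomorphic inputs produce identical outputs. Isomorphism preservation is therefore automatic here.

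Now assume $3\le k$. If $k$ is odd, set $(G',k'):=(G,k)$. If $k$ is even, let $G'$ be $G$ joined with one new universal vertex $z$ (adjacent to every vertex of $G$), and set $k':=k+1$. Then $G'$ is $(k+1)$-colorable iff $G$ is $k$-colorable: $z$ must take a color used nowhere else, and conversely a $k$-coloring of $G$ extends by giving $z$ a fresh color. The output is $\lovG{G'}{k'}$. Its correctness follows from Lemma~\ref{lem:lovG}, since $k'$ is odd and $k'\ge 3$. Its size is $O(k'(|V(G)|+|E(G)|+|V(G)|))$, which is polynomial because $k\le |V(G)|$ after the trivial cases are removed.

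For isomorphism preservation, note that two $\COL$ instances $(G_1,k)$ and $(G_2,k)$ are isomorphic iff $G_1\cong G_2$ and the parameter $k$ agrees. In both cases the parity of $k$ is the same, so the same branch is taken. An isomorphism $\varphi\colon G_1\to G_2$ extends to $G_1'\to G_2'$ by sending $z\mapsto z$, because the universal vertex is attached canonically. Lemma~\ref{lem:lovG:iso} then gives $\lovG{G_1'}{k'}\cong\lovG{G_2'}{k'}$.

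The main obstacle is ensuring every branch depends only on isomorphism-invariant data. The parity of $k$ and the trivial-case tests qualify, since all of them are invariant, so no arbitrary choices enter the construction. I expect no real difficulty beyond this check.
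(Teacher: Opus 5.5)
Your proposal is correct and matches the paper's proof. For even $k$ you add a universal vertex and pass to $k+1$, then output $\lovG{G'}{k'}$ and invoke \Cref{lem:lovG} for correctness and \Cref{lem:lovG:iso} for isomorphism preservation; this is exactly what the paper does with its helper map $A$. Your extra treatment of $k\le 2$ and $k\ge |V(G)|$ is a sound refinement the paper omits, since its map is defined only for $k\ge 3$ and tacitly assumes $k$ is polynomially bounded, and your branch choice depends only on the isomorphism-invariant data $k$ and $|V(G)|$.
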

\begin{proof}
    We first define a helper function $A \colon \Graphs \to \Graphs$ that takes as input an graph $G$ and outputs a graph $G'$ that is obtained by adding a new vertex $x$ to $V(G)$ and connecting $x$ to all vertices in $G$. Observe that $A(G)$ is computable in time $O(|V(G)| + |E(G)|)$. Further, note that $A(G)$ is $(k+1)$-colorable if and only if $G$ is $k$-colorable. Also note that $G \cong H$ if and only if $A(G) \cong A(H)$.  Next, we define 
    \[f \colon \Graphs \times \mathbb{N}_{\geq 3}  \to \Graphs, (G, k) \mapsto \begin{cases}
        \lovG{G}{k} &\text{ if $k$ is odd} \\ 
        \lovG{A(G)}{k+1} &\text{ if $k$ is even}
    \end{cases}. \]
    We show that $f$ defines a polynomial time many-one reduction from $\COL$ to $\kCOL{3}$ is isomorphism-preserving. First, observe that $f$ is polynomial time computable due to \cref{lem:lovG}. Next, let $G$ be a graph and $k$ be a number, we show $(G, k) \in \COL$ if and only if $f(G, k) \in \kCOL{3}$. If $k$ is odd then this directly follows from \cref{lem:lovG}. If $k$ is even then $G$ is $k$-colorable if and only if $A(G)$ is  $(k+1)$-colorable. Now, \cref{lem:lovG} yields that $A(G)$ is $(k+1)$-colorable if and only if $\lovG{A(G)}{k+1}$ is 3-colorable, proving the claim.

    Lastly, we show that $f$ is isomorphism-preserving. To this end, let $k \geq 3$ be an integer and $G, H$ be graphs with $G \cong H$. If $k$ is odd then \cref{lem:lovG:iso} implies $f(G, k) \cong f(H, k)$. If $k$ is even then we obtain $A(G) \cong A(H)$. Again, \cref{lem:lovG:iso} yields $f(G, k) \cong f(H, k)$, proving the lemma.
\end{proof}

\subsubsection{Reduction from 3-Coloring to $\kSAT{3}$}
We give an isomorphism-preserving reduction from $\kCOL{3}$ to $\kSAT{3}$.
\begin{lemma}\label{lem:3col:isored:3sat}
    $\kCOL{3} \isored \kSAT{3}$.
\end{lemma}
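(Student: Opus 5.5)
We use the standard encoding of $3$-colorability as a CNF formula with three Boolean variables per vertex. The only thing to verify beyond correctness is that this encoding is defined purely in terms of the vertex and edge structure of $G$, so that graph isomorphisms lift to variable renamings.

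\textbf{The reduction.} Given a graph $G$, introduce variables $x_{v,c}$ for every $v \in V(G)$ and every color $c \in \{1,2,3\}$. Let $\Phi(G)$ consist of the following clauses:
\begin{itemize}
    \item for every $v$, the clause $(x_{v,1} \vee x_{v,2} \vee x_{v,3})$;
    \item for every $v$ and every pair $c \neq c'$, the clause $(\neg x_{v,c} \vee \neg x_{v,c'})$;
    \item for every edge $\{u,v\} \in E(G)$ and every $c$, the clause $(\neg x_{u,c} \vee \neg x_{v,c})$.
\end{itemize}
Every clause has at most three literals. If the paper's definition of $\kSAT{3}$ demands exactly three literals, I would pad each two-literal clause by repeating a literal, e.g.\ $(\neg x_{u,c} \vee \neg x_{v,c} \vee \neg x_{v,c})$. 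Such padding is a fixed syntactic rule applied uniformly, so it cannot break isomorphism preservation. The construction is clearly polynomial: it has $3n$ variables and $O(n+m)$ clauses.

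\textbf{Correctness.} A proper $3$-coloring $\chi$ yields the satisfying assignment that sets $x_{v,c}$ true iff $\chi(v)=c$. Conversely, the first two clause families force exactly one true $x_{v,c}$ for each $v$, which defines a coloring $\chi$. The edge clauses then guarantee that $\chi$ is proper.

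\textbf{Isomorphism preservation.} This is the step that must be checked with care; it is why Karp's chain was avoided. Let $\varphi\colon G \to H$ be an isomorphism. Define the renaming $\rho(x_{v,c}) := x_{\varphi(v),c}$, which preserves negations. Check the three families in turn:
\begin{itemize}
    \item $\rho$ maps the vertex clauses of $v$ bijectively onto those of $\varphi(v)$;
    \item $\rho$ maps the edge clauses for $\{u,v\}$ onto those for $\{\varphi(u),\varphi(v)\} \in E(H)$;
    \item since $\varphi$ is a bijection on both vertices and edges, the clause multisets correspond exactly (up to clause and literal permutation).
\end{itemize}
Hence $\Phi(G) \cong \Phi(H)$ under the notion of CNF isomorphism from \Cref{remark:iso:closed}.

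There is one subtlety to guard against: the clauses must not be indexed by an arbitrary vertex ordering or by chosen new variables in a non-canonical way. That is exactly what fails in Karp's clause-splitting, which introduces chain variables depending on literal order. Our construction involves no such choices, because every clause is determined by a vertex, an unordered pair of colors, or an edge together with a color. I therefore expect no real obstacle; the main thing to state carefully is that the padding convention, if needed, is canonical.
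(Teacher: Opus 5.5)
Your encoding is the paper's encoding up to the padding step, and your correctness argument is fine. The gap is in the step you treat as routine.

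The paper's $\kSAT{3}$ requires exactly three literals per clause, so padding is needed. Your rule is not canonical. For an unordered edge $\{u,v\}$, writing $(\neg x_{u,c}\vee\neg x_{v,c}\vee\neg x_{v,c})$ means choosing which endpoint's literal to repeat. Any such choice, e.g.\ ``the endpoint with the larger name'', is an orientation of $E(G)$ that an isomorphism $\varphi$ need not respect. Then $\rho$ sends the clause for $\{u,v\}$ to $(\neg x_{\varphi(u),c}\vee\neg x_{\varphi(v),c}\vee\neg x_{\varphi(v),c})$. But $\Phi(H)$ may instead contain $(\neg x_{\varphi(u),c}\vee\neg x_{\varphi(u),c}\vee\neg x_{\varphi(v),c})$. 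These are different multisets of literals, so permuting literals inside a clause does not help.

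The failure is not just a defect of the particular map $\rho$. Label the path on three vertices once with edges $\{1,2\},\{2,3\}$ and once with edges $\{1,2\},\{1,3\}$, and repeat the larger endpoint. In the first formula the center's variables are repeated in one of their two edge clauses; in the second, in neither. The vertex-level clauses contribute the same occurrence profile to every vertex of a given color. Hence the statistic $\sum_x \#\{\text{clauses containing } x\}\cdot\#\{\text{clauses in which } x \text{ is repeated}\}$ differs between the two formulas (by one per color), so they are non-isomorphic. Your claim that a uniformly applied syntactic rule ``cannot break isomorphism preservation'' is therefore false: the rule must not depend on vertex names. For the at-most-one clauses, repeating the literal of the smaller color is fine, since $\rho$ fixes colors.

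The repair is easy but necessary. One option is to symmetrize: for each edge and color, add both $(\neg x_{u,c}\vee\neg x_{u,c}\vee\neg x_{v,c})$ and $(\neg x_{u,c}\vee\neg x_{v,c}\vee\neg x_{v,c})$.

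The other option is what the paper does:
\begin{itemize}
\item introduce per-edge variables $d^1_e,d^2_e,d^3_e$;
\item add the four clauses $(\neg d^1_e\vee \pm d^2_e\vee \pm d^3_e)$, which force $d^1_e$ to be false;
\item use the edge clauses $(\neg v_c\vee\neg u_c\vee d^1_e)$, which are symmetric in $u$ and $v$.
\end{itemize}
The renaming $d^i_e\mapsto d^i_{\varphi(e)}$ then extends $\rho$. This route also gives three distinct variables per clause. That matters if ``exactly three literals'' is read as excluding repetitions, a reading under which your padded clauses would not be $3$-CNF clauses at all. The paper also drops your at-most-one clauses. In the converse direction one can pick any color $c$ with $v_c$ true, and the edge clauses still guarantee a proper coloring.
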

\begin{proof}
    We define a polynomial-time, isomorphism-preserving many-one reduction from $\kCOL{3} \isored \kSAT{3}$.
    
    \paragraph{Reduction.}
    let $G$ be a graph, we define a $3$-CNF $f(G)$ in the following way:
    \begin{itemize}
        \item For each $v \in V(G)$ we define variables $v_r, v_g, v_b$. For each edge $e \in E(G)$, we define variables  $d^1_e, d_e^2, d^3_e$. Thus, $f(G)$ has $3(|V(G)|  + |E(G)|) $ many variables.
        \item For each $v \in V(G)$, we define the clause $(v_r \lor v_g \lor v_b)$. 

        For each $e = \{u, v\} \in V(G)$ and $c \in \{r, g, b\}$, we define the clause $(\lnot v_c  \lor \lnot u_c \lor d^1_e)$. Further, we define the clauses 
        \[(\lnot d^1_e \lor d^2_e \lor d_e^3 ), \;\; (\lnot d^1_e \lor d^2_e \lor \lnot d_e^3 ), \;\; (\lnot d^1_e \lor \lnot d^2_e \lor d_e^3 ), \;\; (\lnot d^1_e \lor \lnot d^2_e \lor \lnot d_e^3).\]
        Note that theses clauses are constructed in a way that ensures $d^1_e = 0$. This is done to reduce $(\lnot v_c  \lor \lnot u_c \lor d^1_e)$ to $(\lnot v_c  \lor \lnot u_c)$.
        In total $f(G)$ has $|V(G)| + 7|E(G)|$ many clauses.
    \end{itemize} 

    Note that $f(G)$ is a $3$-CNF that is obviously computable in time $O(|V(G)| + |E(G)|)$.

    \paragraph{Correctness.}
    We show that $G$ is 3-colorable if and only if $f(G)$ is solvable. 
    
    Let $\tilde{c} \colon V(G) \to \{r, g, b\}$ be a valid $3$-coloring of $G$. We use $\tilde{c}$ to construct a satisfying assignment for $f(g)$. We define an assignment $\mathbf{x}$ via
    \[x \mapsto \begin{cases}
        1 & \text{ if $x = v_c$ for $c \in \{r, g, b\}$ and $\tilde{c}(v) = c$} \\ 
        0 &\text{ otherwise} \\ 
    \end{cases}.\]
    Now, let $e = \{u, v\} \in V(G)$ and assume that $\mathbf{x}$ does not satisfy $(\lnot v_c \lor \lnot u_c \lnot d^1_e)$. Since $d^1_e = 0$, this is only possible if $v_c = 1$ and $u_c = 1$. However, this would imply that $\tilde{c}(v) = \tilde{c}(u)$ which is not possible because $\tilde{c}$ is a proper 3-coloring. Thus,  $\mathbf{x}$ satisfies $(\lnot v_c \lor \lnot u_c \lnot d^1_e)$. Further, it is easy to check that $\mathbf{x}$ satisfies all other constraints. Hence, $f(g)$ is satisfiable. 

    Next, let $\mathbf{x}$ be a satisfying assignment for $f(g)$. Note that $\mathbf{x}$ can only satisfy the constraints 
    \[(\lnot d^1_e \lor d^2_e \lor d_e^3 ), \;\; (\lnot d^1_e \lor d^2_e \lor \lnot d_e^3 ), \;\; (\lnot d^1_e \lor \lnot d^2_e \lor d_e^3 ), \;\; (\lnot d^1_e \lor \lnot d^2_e \lor \lnot d_e^3),\]
    by setting $d^1_e = 0$ for each $e \in E(G)$.
    We show how to construct a $3$-coloring for $G$ using $\mathbf{x}$. Due to the constraint $(v_r \lor v_g \lor v_b)$, for each $v \in V(G)$ there is a $c \in \{r, g, b\}$ with $v_c = 1$ with respect to $\mathbf{x}$. We define $\tilde{c}(v) = c$ (if $v_c = 1$ is true for multiple $c \in  \{r, g, b\}$ then we choose arbitrarily). We show that $\tilde{c}$ defines a proper $3$-coloring on $V(G)$. Assume otherwise, then there is a edge $e = \{u, v\} \in E(G)$ with $\tilde{c}(v) = c = \tilde{c}(u)$. However, since $d_e^1 = 0$, this would yield that the constraint $(\lnot v_c \lor \lnot u_c \lor d^1_e)$ is false which is a contradiction. Thus, $\tilde{c}$ defines a proper 3-coloring.
 
    \paragraph{Isomorphism preservation.}
    We show that if $G, H$ are two graphs with $G \cong H$ then $f(G) \cong f(H)$. 
    
    Let $\varphi \colon V(G) \to V(H)$ be an isomorphisms between $G$ and $H$. For $v \in V(G)$, we write $\tilde{v} \coloneqq \varphi(v)$. For $e \in \{u, v\}$, we write $\tilde{e} \coloneqq \{\varphi(u), \varphi(v)\} \in E(H)$. Note that $v\to\tilde{v}$ and $e\to\tilde{e}$ both define one-to-one mappings between $V(G)$ and $V(H)$, and between $E(G)$ and $E(H)$, respectively. We write $V(f(G))$ for the set of variables in $f(G)$.  Let
    \[\psi \colon V(f(G)) \to V(f(H)), x \mapsto \begin{cases}
        \tilde{v}_c &\text{ if $x = v_c$} \\
        d_{\tilde{e}}^i &\text{ if $x = d^i_e$} 
    \end{cases}.\]
    It is now easy to check that $\psi$ is a isomorphism between $E(G)$ and $E(H)$.
    
    Thus, $f$ defines a polynomial time many-one reduction from $\kCOL{3}$ to $\kSAT{3}$ is isomorphism-preserving. Hence, $\kCOL{3} \isored \kSAT{3}$.
\end{proof}

\subsection{Hardness in the Isomorphism-Priors Model}\label{sec:hard:iso:pro}

We show that the $\GI$-hardness of $\mathrm{Iso}\text{-}\Pi$ implies that $\Pi$ cannot be solved in our model with fast preprocessing, unless $\GI\in \cP$.

\begin{lemma}\label{lem:isoPi-hard-model}
Let $\Pi$ be a decision problem such that $\mathrm{Iso}\text{-}\Pi$ is $\GI$-hard. Unless $\GI \in \cP$, $\Pi$ cannot be solved inside the Isomorphic-Priors model with $k$ prior instances each of size at most $n$ with preprocessing and query time polynomial in $n$, even when $k=1$ prior instance is given.
\end{lemma}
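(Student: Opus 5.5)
We prove the contrapositive: given an algorithm $\mathcal{A}=(\textsf{Pre},\textsf{Query})$ that solves $\Pi$ in the Isomorphic-Priors model with $k=1$ and with preprocessing and query time polynomial in $n$, we construct a polynomial-time algorithm for $\mathrm{Iso}\text{-}\Pi$. Since $\mathrm{Iso}\text{-}\Pi$ is $\GI$-hard, there is a polynomial-time many-one reduction $g\colon \GI \le_m \mathrm{Iso}\text{-}\Pi$, and composing $g$ with this algorithm decides $\GI$ in polynomial time, giving $\GI\in\cP$.

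\textbf{Algorithm for $\mathrm{Iso}\text{-}\Pi$.} On input $(I,J)$, run $\textsf{Pre}$ on the single prior $I$ together with the answer $\opt(I)=\text{YES}$. This use of the value is legitimate because every promised instance satisfies $I\in\Pi$, in both the YES and the NO case, so no heavy computation is needed to supply it. Then run $\textsf{Query}$ on $J$. If the query returns YES, we output YES. If it returns NO, or if it reports that $J\not\cong I$, we output NO.

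\textbf{Correctness and running time.} Suppose $(I,J)\in\mathrm{Iso}\text{-}\Pi_{\textsc{YES}}$. Then $J\cong I$, so the model's guarantee forces $\textsf{Query}$ to output $\opt(J)=\opt(I)=\text{YES}$. Suppose instead $(I,J)\in\mathrm{Iso}\text{-}\Pi_{\textsc{NO}}$. Then $J\notin\Pi$, so $J\not\cong I$ because $\Pi$ is closed under isomorphism (\Cref{remark:iso:closed}). The query may therefore either return the correct answer NO or report non-isomorphism, and both lead to output NO. A YES output is impossible here, since the query never returns a wrong value. The running time is polynomial because $|I|$ and $|J|$, and hence $n$, are polynomial in the input size, and $k=1$ is fixed.

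\textbf{Main obstacle.} The subtle points concern the interface between the promise problem and the model, not the argument itself. First, $I$ and $J$ may have different sizes. In that case $J\not\cong I$ trivially, so we should either pad or note that the model's guarantee (abstain or be correct) still holds for queries of a different size. Second, $\opt(I)$ must be available without computation, and the promise supplies exactly this. Third, if the target problem $\Pi$ is an optimization or gap problem, the answer must be translated into the decision form used to define $\mathrm{Iso}\text{-}\Pi$; this is routine.
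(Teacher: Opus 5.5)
Your proposal is correct and follows essentially the same argument as the paper. Both run preprocessing on the single prior $I$ with its YES answer, query on $J$, and accept iff the query returns YES, relying on the model's guarantee that a query never returns a wrong value, only the correct answer or an abstention. Your explicit remark that $\opt(I)=\text{YES}$ comes for free from the promise, rather than having to be computed as the paper's phrasing ``$L=\Pi(G)$'' might suggest, is a small but welcome clarification.
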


\begin{proof}
Assume that there exists an algorithm for $\Pi$ in the Isomorphic-Priors model with
preprocessing time $\poly_k(n)$ and query time $\poly_k(n)$, and that it is correct on all valid queries.
In particular, consider the case $k=1$.

We show that this implies $\mathrm{Iso}\text{-}\Pi \in \cP$.
Given an input $(G,H)$ to $\mathrm{Iso}\text{-}\Pi$, run the model's preprocessing phase on the single
prior instance $G$ together with $L=\Pi(G)$ (i.e., $L = 1$ if any only if $G$ is a YES-instances), obtaining a data structure $D$.
Then query the model on $H$ and obtain an output
\[
z := \textsf{Query}(D,H)\in\{0,1,\bot\}.
\]
We decide $\mathrm{Iso}\text{-}\Pi$ by outputting \textsc{YES} iff $z=1$, and \textsc{NO} otherwise
(in particular, treat $z=\bot$ as \textsc{NO}).

Correctness holds under the promise defining $\mathrm{Iso}\text{-}\Pi$:
\begin{itemize}
\item If $(G,H)\in \mathrm{Iso}\text{-}\Pi_{\textsc{YES}}$, then $G\in\Pi$ and $H\cong G$.
Since $\Pi$ is invariant under isomorphism, $\Pi(H)=\Pi(G)=1$, so the model must output $z=\Pi(H)=1$,
and we accept.
\item If $(G,H)\in \mathrm{Iso}\text{-}\Pi_{\textsc{NO}}$, then $G\in\Pi$ and $H\notin\Pi$, i.e.\ $\Pi(H)=0$.
By correctness of the model algorithm on non-matching queries, on input $H$ it outputs either $z=\Pi(H)=0$
or the special symbol $\bot$. In both cases we reject.
\end{itemize}
The running time is polynomial, since preprocessing on one instance and one query are polynomial in $n$.

Thus $\mathrm{Iso}\text{-}\Pi \in \cP$. Since by assumption $\mathrm{Iso}\text{-}\Pi$ is $\GI$-hard,
this implies $\GI \in \cP$.
Therefore, unless $\GI\in\cP$, no such algorithm exists for $\Pi$ in the Isomorphic-Priors model (even for $k=1$).
\end{proof}

We show that a problem, that can be reduced to $\GI$ with an isomorphism-preserving reduction, cannot be solved in our model with a small data structure, unless $\GI \in \coma$.
\begin{lemma}\label{lem:Pi-ds-coMA-via-GI}
Let $\Pi$ be a decision problem such that $\GI \isored \Pi$.
Assume $\GI \notin \coma$. Then $\Pi$ cannot be solved inside the Isomorphic-Priors model with $k$ prior instances each of size at most $n$ with a data structure of size polynomial in $n$
and query time polynomial in $n$, even when $k=1$ prior instance is given.
\end{lemma}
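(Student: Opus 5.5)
We argue by contraposition, simulating the hypothetical data structure for $\Pi$ to obtain one for $\GI$ and then invoking \cref{lem:gi-ds-coMA}. Let $R$ be an isomorphism-preserving reduction witnessing $\GI \isored \Pi$. Suppose $\Pi$ admits, for $k=1$, a data structure $D_\Pi$ of size polynomial in $n$ with polynomial query time $\textsf{Query}_\Pi$, correct (possibly with success probability at least $5/6$) in the Isomorphic-Priors model. We build a data structure for $\GI$ with the same guarantees. Its preprocessing receives a prior $\GI$-instance $I=(G_0,G_1)$ with answer $\GI(I)$. It computes $R(I)$ and runs the $\Pi$-preprocessing on the single prior $R(I)$ with answer $\Pi(R(I))=\GI(I)$, which holds by correctness of $R$. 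It stores the resulting $D_\Pi$.

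\textbf{Size and query time.} Since $R$ is polynomial-time computable, $|R(I)|\le \poly(n)$. Hence $D_\Pi$ has size polynomial in $|R(I)|$, and therefore polynomial in $n$. This is also why unbounded preprocessing time is harmless here: only the size of the stored structure matters.

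\textbf{Answering a query.} On a query $J$, we compute $R(J)$ in polynomial time and evaluate $z:=\textsf{Query}_\Pi(D_\Pi,R(J))$. If $z\neq\bot$ we output $z$; otherwise we output $\bot$. The query time is polynomial because $|R(J)|\le \poly(n)$.

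\textbf{Correctness.} Suppose first that $J\cong I$. Isomorphism preservation gives $R(J)\cong R(I)$, so the $\Pi$-structure must return $\Pi(R(J))$. By correctness of $R$ this equals $\GI(J)$, with the same success probability. Now suppose $J\not\cong I$. The $\Pi$-structure returns either $\Pi(R(J))=\GI(J)$ or $\bot$, and both are legal outputs in the model. One caveat: when $J\not\cong I$ it may still be that $R(J)\cong R(I)$. In that case the structure must answer $\Pi(R(J))$, which is still correct, so nothing breaks.

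\textbf{Conclusion.} We obtain a polynomial-size, polynomial-query-time (possibly randomized, with success probability at least $5/6$) Isomorphic-Priors data structure for $\GI$ with $k=1$. By \cref{lem:gi-ds-coMA}, this implies $\GI\in\coma$, which contradicts the assumption.

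\textbf{Main obstacle.} The one step needing care is the size bookkeeping. The size parameter for $\Pi$ is $|R(I)|$ rather than $n$, and polynomiality must compose through $R$. Monotone polynomial bounds compose, so this goes through. The same applies to the remark that $k$ is held fixed, since $k=1$ is preserved.
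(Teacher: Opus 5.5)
Your proposal is correct and follows the paper's proof essentially step for step: you run the $\Pi$-preprocessing on $R(I)$ with the transferred answer, answer a query $J$ by querying $R(J)$ and returning the output unchanged, argue correctness via isomorphism preservation and correctness of $R$, and conclude with the $\mathsf{coMA}$ lemma for $\GI$. Your extra bookkeeping (composing the polynomial size bounds through $R$, and the remark that $R(J)\cong R(I)$ can hold even when $J\not\cong I$ without causing harm) is sound and only makes explicit what the paper leaves implicit.
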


\begin{proof}
Assume for contradiction that $\Pi$ can be solved in the Isomorphic-Priors model with preprocessing and query
time polynomial and with a data structure of size $\poly_k(n)$.

Since $\GI \isored \Pi$, there exists an isomorphism-preserving many-one reduction $R$ from $\GI$-instances
to $\Pi$-instances such that for every $\GI$-instance $I$,
\[
I\in \GI \iff R(I)\in \Pi.
\]

We build an algorithm that solves $\GI$ in the Isomorphic-Priors model with a polynomial-size data structure (already for $k=1$),
contradicting Lemma~\ref{lem:gi-ds-coMA} unless $\GI \in \coma$.

\paragraph{GI preprocessing (one prior instance).}
Given a single prior $\GI$-instance $I=(G_1,G_2)$ together with $L=\GI(I)$,\footnote{That is, $L = 1$ if $G_1 \cong G_2$ and zero, otherwise.}
compute the corresponding $\Pi$-instance $X := R(I)$.
Since $R$ is a many-one reduction, we have $\Pi(X) = \GI(I) = L$,\footnote{Here, $\Pi(x) = 1$ if $x \in \Pi$ and zero otherwise.} so we can use the same value $L$ for $X$.
Run $\Pi$-preprocessing on the singleton database $\{(X,L)\}$ to obtain a data structure $D$.

\paragraph{GI query.}
Given a query $\GI$-instance $J$, compute the $\Pi$-instance $Y:=R(J)$ and query the $\Pi$-algorithm to get
\[
z := \textsf{Query}_{\Pi}(D,Y)\in\{0,1,\bot\}.
\]
Output $z$.

\paragraph{Correctness.}
If $J\cong I$, then by isomorphism preservation $Y\cong X$, hence the query is matching and the
$\Pi$-algorithm must output $z=\Pi(Y)=\GI(J)$, so we output the correct answer.
If $J\not\cong I$, then either the $\Pi$-algorithm outputs $\Pi(Y)$ or $\bot$; in either case, the
above rule outputs the correct value for $\GI(J)$ or $\bot$, both are correct outputs for $J$ in the Isomorphic-Priors model.

Thus we obtain a GI solver in the Isomorphic-Priors model with a polynomial-size data structure, contradicting
Lemma~\ref{lem:gi-ds-coMA} unless $\GI\in\mathsf{coMA}$.
\end{proof}

\begin{proof}[Proof of \Cref{cor:hardness-consequence}]
Fix $\Pi \in \ProbCollection$ and assume for contradiction that $\Pi$ can be solved
in the Isomorphic-Priors model with preprocessing time polynomial in $k$ and $n$ and query time polynomial in $n$,
even for $k=1$.

By \Cref{thm:iso-pi-gi-complete}, the promise problem $\mathrm{Iso}\text{-}\Pi$ is $\GI$-hard.
Applying Lemma \Cref{lem:isoPi-hard-model}, we conclude that
$\GI\in\cP$.

Taking the contrapositive, unless $\GI\in\cP$ no such Isomorphic-Priors model algorithm for $\Pi$ exists
(even for $k=1$). Since $\Pi$ was arbitrary in the set $\ProbCollection$, the claim
follows for all three problems.
\end{proof}

\begin{proof}[Proof of \Cref{thm:ds-lb}]
Fix $\Pi\in \ProbCollection$ and assume for contradiction that $\Pi$ can be solved
in the Isomorphic-Priors model with a data structure of size $\poly(k,n)$.

By Lemma~\ref{lem:Pi-ds-coMA-via-GI}, it suffices to show that $\GI \isored \Pi$.
For $\Pi=\IS$ this holds by \Cref{lem:gi-to-is}, and for $\Pi=\mathsf{SAT}$ it holds by
Lemma~\ref{lem:gi-to-sat}.
For $\Pi = \MaxCut$, we have $\GI \isored \IS$ by \Cref{lem:gi-to-is} and
$\IS \isored \MaxCut$ by \Cref{lem:is-to-maxcut}; by composition of
isomorphism-preserving reductions, $\GI \isored \MaxCut$.

Hence in all three cases $\GI \isored \Pi$, and \Cref{lem:Pi-ds-coMA-via-GI} implies $\GI\in\mathsf{coMA}$,
contradicting the assumption. Therefore, unless $\GI\in\mathsf{coMA}$, no such polynomial-size data structure algorithm
for $\Pi$ exists (even for $k=1$).
\end{proof}

\subsection{Proof of Hardness for $\mathsf{Gap\text{-}Max\text{-}}k\mathsf{\text{-}Coverage}$}\label{subsec:proofhardnessofapprox}
In this section, we show that $\mathsf{Max\text{-}}k\mathsf{\text{-}Coverage}$ cannot be approximated better than $(1-1/e)$ in our model, unless $\MaxkSAT{3}$ can be approximated arbitrarily well in our model.
We achieve this by first observing that Feige's $k$-prover system works for general $\MaxkSAT{3}$ and is not limited to bounded-occurrences.
This is necessary because bounded occurrences would imply that the underlying isomorphism problem can be solved fast.
We then go on and show that this implies a isomorphism-preserving reduction.
\begin{lemma}[Feige's $k$-prover system for general $\MaxkSAT{3}$]\label{lem:feige-kprover-unbounded}
Consider the $k$-prover proof system of Section~2.3 in~\cite{Feige1998Threshold} applied to an arbitrary
3CNF formula $\phi$ (no bounded-occurrence assumption). Let $\eps\in(0,1)$.
\begin{enumerate}
\item If $\phi$ is satisfiable, then there is a prover strategy that causes the verifier to always
\emph{strongly} accept.
\item If $\opt(\phi)\le 1-\eps$ (i.e., no assignment satisfies more than a $(1-\eps)$-fraction of clauses),
then the verifier \emph{weakly} accepts with probability at most $k^2\cdot 2^{-c\ell}$, where $c>0$ depends only on
$\eps$ (and on the constant answer length), and $\ell$ is the repetition parameter of the verifier.
\end{enumerate}
\end{lemma}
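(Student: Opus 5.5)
The plan is to follow Feige's original analysis of the $k$-prover system line by line, and to check exactly where the bounded-occurrence assumption is used. Recall the protocol. The verifier chooses $\ell$ clauses uniformly and independently. For each clause it chooses one distinguished variable. Using a binary code of $k$ codewords of length $\ell$, prover $i$ receives, in each coordinate $j$, either the $j$-th clause or the $j$-th distinguished variable, depending on the $j$-th bit of codeword $i$. Each prover answers with an assignment to the $3$ or $1$ variables it was asked about. The verifier \emph{strongly} accepts if all $k$ answers are consistent (and satisfy the clauses), and \emph{weakly} accepts if some pair is consistent.

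\textbf{Completeness (item 1).} This is immediate. If $\phi$ is satisfiable, fix a satisfying assignment $\sigma$ and let every prover answer according to $\sigma$. Then every clause queried is satisfied and every pair of answers agrees on the distinguished variables, so the verifier strongly accepts with probability $1$. No occurrence bound is involved.

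\textbf{Soundness (item 2).} The argument reduces to the two-prover case. Weak acceptance requires some pair $(i,i')$ of provers to be consistent. By a union bound over the at most $k^2$ pairs, it suffices to bound the acceptance probability of a single pair. For a fixed pair, the codewords differ in a constant fraction of coordinates, say $\Omega(\ell)$. On those coordinates one prover sees the clause and the other sees the distinguished variable. The remaining coordinates, where both see the same object, can be fixed and given to both provers as shared information. What remains is a parallel repetition, over $\Omega(\ell)$ coordinates, of the basic clause–variable two-prover game for $\phi$.

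\textbf{Basic game value and the main obstacle.} I need the value of the single-shot game when $\opt(\phi)\le 1-\eps$, and this is where I expect the real difficulty. Feige, following Arora et al., uses bounded occurrences only to argue that a uniformly random distinguished variable of a random clause looks like a reasonable distribution over variables. I would avoid that by arguing directly. Fix the variable-prover's strategy; it is an assignment $\tau$. With probability at least $\eps$ the chosen clause is violated by $\tau$. Then the clause-prover must change at least one variable, and it is caught with probability at least $1/3$, since the verifier picks the distinguished variable uniformly among the clause's three variables. This gives game value at most $1-\eps/3$, with no dependence on occurrences.

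\textbf{Conclusion.} Raz's parallel repetition theorem then gives value at most $2^{-c\ell}$, where $c$ depends only on $\eps$ and the answer-length bound. It is a general statement about two-prover games with a product-free, arbitrary question distribution, so it applies unchanged. The conditioning on shared coordinates only averages these bounds, so the per-pair bound survives. Combining with the $k^2$ union bound yields item 2.
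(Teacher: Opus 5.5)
Your proposal is correct and follows the paper's proof essentially step for step. Completeness uses a common satisfying assignment. Soundness passes to a single pair of provers (your union bound is the paper's averaging over the $\binom{k}{2}$ pairs), fixes the coordinates outside that pair's clause/variable positions, bounds the base game by $1-\eps/3$ with no occurrence bound, and finishes with Raz's parallel repetition theorem.

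One small precision: on coordinates where the two codewords differ, the roles can go either way. You should therefore keep only the coordinates where, say, the first prover gets the clause and the second gets the variable, and fix the rest. With Feige's balanced code there are at least $\ell/6$ such coordinates, which is exactly the count the paper uses.
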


\begin{proof}
The verifier samples $\ell$ clauses uniformly and independently and chooses one distinguished variable per clause,
exactly as in~\cite[Sec.~2.3]{Feige1998Threshold}. The weak predicate is “some pair of provers is consistent”
(on the induced assignment to the $\ell$ distinguished-variable positions), and the strong predicate is
“every pair is consistent”~\cite[Sec.~2.3]{Feige1998Threshold}.

\paragraph{Completeness.}
If $\phi$ is satisfiable, fix any satisfying assignment and let every prover answer consistently with it.
Then the induced assignments to the distinguished-variable positions are identical for all provers, so the verifier
always strongly accepts (identical to Feige’s argument)~\cite[Lem.~2.3.1]{Feige1998Threshold}.

\paragraph{Soundness: reducing to two provers and parallel repetition.}
Assume $\opt(\phi)\le 1-\eps$ and suppose the verifier weakly accepts with probability at least $\delta$.
Then there exist two provers whose answers are consistent with probability at least $\delta/k^2$ (by averaging over the
$\binom{k}{2}$ pairs), exactly as in Feige~\cite[Lem.~2.3.1]{Feige1998Threshold}.
By the Hamming-distance property of the code used to distribute queries, there are at least $\ell/6$ coordinates in which
one of these provers receives a clause and the other receives the distinguished variable from that clause
(again exactly as in Feige)~\cite[Lem.~2.3.1]{Feige1998Threshold}.
Fix the questions on the remaining $5\ell/6$ coordinates to maximize acceptance; the acceptance probability remains at least
$\delta/k^2$. Now ignore those fixed coordinates. This yields a strategy for a two-prover system that succeeds with probability
at least $\delta/k^2$ on $\ell/6$ independent coordinates, i.e., on $\ell/6$ parallel repetitions of the underlying two-prover test.

\paragraph{Base two-prover error does not need bounded occurrence.}
Feige’s Proposition~2.2.1 (stated for 3CNF-5) bounds the optimal acceptance probability of the base two-prover test by
$1-\eps/3$ when at most a $(1-\eps)$-fraction of clauses are simultaneously satisfiable~\cite[Prop.~2.2.1]{Feige1998Threshold}.
Its proof uses only: (i) the second prover’s strategy induces a global assignment $x$, (ii) a random clause is unsatisfied by $x$
with probability at least $\eps$, and (iii) conditioned on such a clause, if the first prover makes it satisfied by changing at least
one literal, then a uniformly random distinguished variable from that clause disagrees with probability at least $1/3$.
This argument does not depend on how many times a variable appears in the formula (variables may occur unboundedly often).

Hence the base two-prover system has constant error at most $1-\eps/3$, independent of $n$ and of occurrence bounds.

\paragraph{Parallel repetition.}
By Raz’s parallel repetition theorem (Feige’s Theorem~2.2.2), repeating a one-round two-prover system $t$ times independently in
parallel reduces the acceptance probability to at most $2^{-c t}$ for some constant $c>0$ that depends only on the base error and
the constant answer length~\cite[Thm.~2.2.2]{Feige1998Threshold}.
Taking $t=\ell/6$, we obtain
\[
\delta/k^2 \;\le\; 2^{-c\ell},
\]
after renaming constants, and thus $\delta \le k^2\cdot 2^{-c\ell}$.

This proves the claimed soundness bound for arbitrary 3CNF formulas, without any bounded-occurrence assumption.
\end{proof}

This $k$-prover systems can be used for an isomorphism-preserving reduction.
\begin{lemma}\label{lem:gap3sat-to-maxkcover}
For every $\eps,\eps'\in(0,1/3)$, there is an isomorphism-preserving reduction
\[
\mathsf{Gap\text{-}Max\text{-}3\text{-}SAT}_{1,\,1-\eps}\ \isored\ \mathsf{Gap\text{-}Max\text{-}}k\mathsf{\text{-}Coverage}_{1,\,1-1/e+\eps'}.
\]
\end{lemma}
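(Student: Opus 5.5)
The plan is to take Feige's reduction from the $k$-prover system to $\mathsf{Max\text{-}}k\mathsf{\text{-}Coverage}$ \cite{Feige1998Threshold}, now driven by the occurrence-free soundness of \Cref{lem:feige-kprover-unbounded}. We then check that every ingredient of the construction depends only on the formula $\phi$ and on fixed, formula-independent objects.

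\textbf{Construction.} Given $\phi$, fix $k$ (number of provers) and the repetition parameter $\ell$ as constants depending only on $\eps,\eps'$. Also fix the binary code $C\subseteq\{0,1\}^\ell$ with $k$ codewords and the partition systems $B(m,L,k,d)$. These are all chosen once and for all, independent of $\phi$, so isomorphism preservation is not affected by them.

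The ground set is indexed by pairs $(r,\text{point of the partition system})$ over random strings $r$ of the verifier, where $r$ is a tuple of $\ell$ clauses and $\ell$ distinguished variable positions. For each prover $i$, each question $q$ and each answer $a$, we create a set $S_{(q,a,i)}$ following Feige's rule. The budget is the number of (prover, question) pairs. Because $\ell$ is constant, the instance size is $\poly(n)$.

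\textbf{Correctness.} This is Feige's argument with \Cref{lem:feige-kprover-unbounded} as the only change. If $\phi$ is satisfiable, the strong-accept strategy gives a perfect cover. If $\opt(\phi)\le 1-\eps$, then weak acceptance has probability at most $k^2 2^{-c\ell}$, and choosing $\ell$ large makes the coverage at most $1-1/e+\eps'$. The partition-system analysis is purely combinatorial and needs no bounded occurrence.

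\textbf{Isomorphism preservation.} Let $\rho$ be a variable renaming with polarity and a clause permutation mapping $\phi$ to $\phi'$. It induces a bijection on random strings that maps clauses via the clause permutation and keeps positions fixed. It also induces a bijection on prover questions, namely clause tuples and variable tuples, and on answers. Answers are assignments to the variables of the queried clauses, transported by $\rho$; if $\rho$ flips the polarity of a variable, the assignment bits flip too.

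The main obstacle is this answer encoding. Membership in $S_{(q,a,i)}$ depends on the induced assignment to the distinguished variables, and this must commute with $\rho$. I would fix it by encoding answers as truth values of \emph{literals occurring in the clause}, or equivalently by normalizing the prescribed variable value through $\rho$. Consistency then compares literal occurrences, which $\rho$ preserves.

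With that encoding, the map on elements and the map on sets are bijections that preserve membership. The budget and the thresholds are unchanged. Hence $R(\phi)\cong R(\phi')$.

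The remaining point to verify is that Feige's uniform sampling, including multiplicities of identical clause tuples, is respected. This holds because the bijection on random strings is measure preserving.
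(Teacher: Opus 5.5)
Your route is the paper's: Feige's construction verbatim, \Cref{lem:feige-kprover-unbounded} as the only change, and isomorphism preservation by transporting random strings, questions and answers along the renaming. Your budget $kQ$ is the correct one; the paper's $k'=k^Q$ is a slip.

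There is, however, a genuine gap, and the paper's informal NO case has it too: the claim that the coverage analysis ``needs no bounded occurrence'' is false. Feige's NO-case argument uses that each prover's question is \emph{uniformly} distributed over its $Q$ possible questions. Only then does a family of $kQ$ sets place on average $k$ parts into a random block $B_r$, which is what gives the $1-(1-1/k)^k\approx 1-1/e$ bound. Feige obtains this uniformity from the regularity of 3CNF-5, where every variable lies in exactly $5$ clauses. For an arbitrary 3CNF, a variable is asked with probability proportional to its number of occurrences. The sets $S(q,a,i)$ then have very unequal sizes, and the adversary can concentrate the budget on heavy questions.

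Concretely, let $\phi$ consist of all $8\binom{t}{3}$ clauses (every sign pattern) on the triples of $t\ge 10$ core variables, plus $5t$ positive clauses on $15t$ fresh variables. Let $M$ be the number of clauses, so $N=16t$ and $Q=M^{\ell/2}N^{\ell/2}$.
\begin{itemize}
\item $\opt(\phi)\le 1-1/9$, and a random distinguished variable is a core variable with probability $1-O(1/t^2)$.
\item For each prover, the questions whose $\ell/2$ variable coordinates are all core number $M^{\ell/2}t^{\ell/2}$, each with at most $16^{\ell/2}$ answers. Taking all their answers for all $k$ provers therefore costs at most $k(16t)^{\ell/2}M^{\ell/2}=kQ$.
\item For fixed $\ell$, such a question is asked with probability $1-O(\ell/t^2)$. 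Its answers' projections range over all of $\{0,1\}^\ell$, so $B_r$ receives parts from all $2^\ell$ partitions.
\item In Feige's partition systems, parts of distinct partitions behave like independent sets of density $1/k$. These $2^\ell$ parts therefore cover all but roughly a $(1-1/k)^{2^\ell}$ fraction of $B_r$, and $2^\ell\gg k$ is forced by the soundness bound $k^2 2^{-c\ell}$.
\end{itemize}
So this NO instance is mapped to an instance whose optimal coverage is close to $1$, not at most $1-1/e+\eps'$.

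What is missing is an isomorphism-invariant way to make every prover's question distribution uniform. That is the role of the non-canonical regularization to 3CNF-5, which both you and the paper drop. Without it, neither your argument nor the paper's establishes the NO case.

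Two smaller points. First, CNF isomorphisms in the paper preserve negation, so the polarity issue you single out as the main obstacle does not arise. Second, they may permute the literals inside a clause, so the induced map on random strings must permute the distinguished positions accordingly rather than keep them fixed.
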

\begin{proof}
Fix $\eps\in(0,1/3)$. We construct an isomorphism-preserving mapping from 3CNF formulas to
$\mathsf{Max\text{-}}k\mathsf{\text{-}Coverage}$ instances, following Feige~\cite{Feige1998Threshold}, and use
Lemma~\ref{lem:feige-kprover-unbounded} (Feige’s Lemma~2.3.1 without bounded-occurrence) as the only
nontrivial black box about the underlying $k$-prover system.

\paragraph{No regularization.}
We do \emph{not} apply any bounded-occurrence / regularization preprocessing: such steps typically require
non-canonical choices (e.g., ordering occurrences), and are therefore not isomorphism-preserving in general.

\paragraph{Step 1: The reduction $\Phi \mapsto (U(\Phi),\mathcal{S}(\Phi),k')$.}
Let $\Phi$ be a 3CNF instance. Consider the $k$-prover verifier from~\cite[Sec.~2.3]{Feige1998Threshold}
with repetition parameter $\ell$ (to be chosen later as a function of $\eps$).
Let $\mathcal{R}$ be the set of random strings of the verifier; write $R:=|\mathcal{R}|$.
For each $r\in\mathcal{R}$, the verifier determines (for each prover $i\in[k]$) a question
$q_i(r)$ (a tuple of clause/variable identifiers as in~\cite[Sec.~2.3]{Feige1998Threshold}).
Let $Q$ be the number of possible questions a prover can receive (as in Feige).

\medskip
\noindent\emph{Partition systems.}
Fix parameters $(m,L,k,d)$ as in Feige’s partition-system lemma (Lemma~3.2 in~\cite{Feige1998Threshold}),
with $L=2^\ell$. For each $r\in\mathcal{R}$, create an independent copy of the partition system
\[
B_r := B(m,L,k,d)
\]
on a ground set of size $m$, equipped with $L$ partitions indexed by strings $p\in\{0,1\}^\ell$.
Each such partition consists of $k$ disjoint parts, which we denote
\[
\mathcal{P}_{r,p}=\bigl\{B(r,p,1),\dots,B(r,p,k)\bigr\},
\qquad
\bigcup_{i=1}^k B(r,p,i)=B_r.
\]

\medskip
\noindent\emph{Universe.}
Define the universe as the disjoint union of all these ground sets:
\[
U(\Phi)\;:=\;\biguplus_{r\in\mathcal{R}}\bigl(\{r\}\times B_r\bigr),
\]
so $|U(\Phi)|=mR$.

\medskip
\noindent\emph{Sets.}
For each prover $i\in[k]$, each question $q$ that prover $i$ may receive, and each possible answer $a$ to that
question, define a set $S(q,a,i)\subseteq U(\Phi)$ by
\[
S(q,a,i)
\;:=\;
\bigcup_{\substack{r\in\mathcal{R}:\\ q_i(r)=q}}
\bigl(\{r\}\times B(r,p_r(a),i)\bigr),
\]
where $p_r(a)\in\{0,1\}^\ell$ is the induced assignment to the $\ell$ distinguished variables selected under
randomness $r$ (exactly as in Feige’s reduction).
Let
\[
\mathcal{S}(\Phi)\;:=\;\{\,S(q,a,i): i\in[k],\, q\in\mathcal{Q}_i,\, a\in\mathcal{A}_{i,q}\,\}.
\]

\medskip
\noindent\emph{Budget.}
Set
\[
k'\;:=\;k^Q,
\]
corresponding to choosing, for each prover, one answer for each of its $Q$ possible questions
(as in Feige’s Max-$k$-Cover hardness, Section~5 of~\cite{Feige1998Threshold}).

We output the $\mathsf{Max\text{-}}k\mathsf{\text{-}Coverage}$ instance
\[
R(\Phi)\;:=\;\bigl(U(\Phi),\mathcal{S}(\Phi),k'\bigr).
\]

\paragraph{Step 2: Gap correctness (black-box).}
We argue informally, following Feige.

\smallskip
\noindent\emph{YES case.}
If $\opt(\Phi)=1$, then by Lemma~\ref{lem:feige-kprover-unbounded} there is a prover strategy that
causes the verifier to \emph{strongly} accept on every random string. Fix that strategy and choose the $k'$
sets that correspond to the provers’ answers on all possible questions (one answer per question per prover,
hence $k^Q$ choices total). For each $r$, strong acceptance implies that the induced strings $p_r(\cdot)$ are
consistent across provers, and therefore the chosen sets cover, inside the block $\{r\}\times B_r$, all $m$
points (because the $k$ parts of $\mathcal{P}_{r,p}$ partition $B_r$). Hence the chosen $k'$ sets cover all of
$U(\Phi)$, so $\opt(R(\Phi))=1$.

\smallskip
\noindent\emph{NO case.}
If $\opt(\Phi)\le 1-\eps$, then by Lemma~\ref{lem:feige-kprover-unbounded} the verifier weakly accepts
with probability at most $\delta$, where $\delta$ can be made exponentially small in $\ell$.
Feige’s analysis of the reduction (using the partition-system lemma) shows that any family of $k'$ sets
corresponds to a (possibly inconsistent) prover strategy, and that the fraction of universe elements covered is
at most $1-1/e+O(\delta)+O(1/d)$ (cf.~\cite[Sec.~4--5]{Feige1998Threshold}). Choosing $\ell$ and $d$ large enough
as functions of $\eps$ makes the additive terms $O(\delta)+O(1/d)$ arbitrarily small; in particular, we can
choose parameters so that the covered fraction is at most $1-1/e+\eps'$.
Therefore $\opt(R(\Phi))\le 1-1/e+\eps'$ in the NO case.

Altogether, $\Phi\in \mathsf{Gap\text{-}Max\text{-}3\text{-}SAT}_{1,\,1-\eps'}$ implies
$R(\Phi)\in \mathsf{Gap\text{-}Max\text{-}}k\mathsf{\text{-}Coverage}_{1,\,1-1/e+\eps'}$.

\paragraph{Step 3: Isomorphism preservation.}
Let $\alpha:\Phi\to\Phi'$ be an isomorphism of 3CNF instances (renaming variables and permuting clauses,
preserving literal signs). We build an isomorphism between $R(\Phi)=(U,\mathcal{S},k')$ and
$R(\Phi')=(U',\mathcal{S}',k')$.

The verifier’s random string $r\in\mathcal{R}$ specifies a pattern of queries (which clause indices are chosen,
which variables are distinguished, etc.). Under $\alpha$, clause/variable identifiers are renamed, and hence
each question $q$ and answer $a$ is renamed canonically to $\alpha(q)$ and $\alpha(a)$.
Define a bijection $\beta_U:U\to U'$ by
\[
\beta_U(r,x):=(r,\;x)\qquad\text{for }x\in B_r,
\]
i.e.\ we keep the partition-system point and randomness index, but interpret the clause/variable identifiers
inside the indexing data via $\alpha$ (equivalently: we identify $\mathcal{R}$ across $\Phi$ and $\Phi'$ by the
same randomness and query pattern, differing only by renamed identifiers).
Define a bijection $\beta_{\mathcal{S}}:\mathcal{S}\to\mathcal{S}'$ by
\[
\beta_{\mathcal{S}}\bigl(S(q,a,i)\bigr)\;:=\;S(\alpha(q),\alpha(a),i).
\]
By construction, for every universe element $(r,x)$ and every set $S(q,a,i)$, we have
\[
(r,x)\in S(q,a,i)\quad\Longleftrightarrow\quad
(r,x)\in S(\alpha(q),\alpha(a),i),
\]
because membership is defined solely by (i) whether $q_i(r)=q$ and (ii) which part $B(r,p_r(a),i)$ is selected,
and both are invariant under renaming clause/variable identifiers. Hence $(\beta_U,\beta_{\mathcal{S}})$ preserves
incidence, so it is an isomorphism of set systems. Therefore $R(\Phi)\cong R(\Phi')$.

\paragraph{Conclusion.}
The mapping $\Phi\mapsto R(\Phi)$ is polynomial-time, isomorphism-preserving, and maps YES-instances of
$\mathsf{Gap\text{-}Max\text{-}3\text{-}SAT}_{1,\,1-\eps}$ to YES-instances of
$\mathsf{Gap\text{-}Max\text{-}}k\mathsf{\text{-}Coverage}_{1,\,1-1/e+\eps'}$ and NO-instances to NO-instances of
$\mathsf{Gap\text{-}Max\text{-}}k\mathsf{\text{-}Coverage}_{1,\,1-1/e+\eps'}$. This proves the lemma.
\qedhere
\end{proof}

We use this isomorphism-preserving reduction to conclude the proofs of \Cref{thm:fastlb-approx,thm:smalllb-approx}.
\begin{proof}[Proof of \Cref{thm:fastlb-approx,thm:smalllb-approx}]
    We show the contraposition.
    Assume there is an $\eps>0$ such that $\mathsf{Gap\text{-}Max\text{-}}k\mathsf{\text{-}Coverage}_{1,\,1-1/e+\eps}$ can be solved in the Isomorphic-Priors model with preprocessing and query time polynomial in $n$ even when $k=1$ prior instance is given.

    Then for every $\eps'>0$, $\mathsf{Gap\text{-}Max\text{-}3\text{-}SAT}_{1,\,1-\eps'}$ can be solved in the Isomorphic-Priors model with preprocessing and query time polynomial in $n$ even when $k=1$ prior instance is given.

    This algorithm just computes for all formulas $\Phi$ that are given in the preprocessing and query phase the $\mathsf{Gap\text{-}}k\mathsf{\text{-}Coverage}$ instance $R(\Phi)$, using the reduction from \Cref{lem:gap3sat-to-maxkcover}, and applies the preprocessing and query algorithm for $\mathsf{Gap\text{-}Max\text{-}}k\mathsf{\text{-}Coverage}_{1,\,1-1/e+\eps}$ to them.

    The same argument can be made for data structures of polynomial size.

    Because in \cite{DBLP:journals/jal/GuhaK99} it is argued that the reduction in \Cref{lem:Set:red:Mean} is also a reduction from $\mathsf{Gap\text{-}Max\text{-}}k\mathsf{\text{-}Coverage}_{1,\,1-1/e+\eps}$ and $\mathrm{Gap\text{-}}\kMedian_{1,1+2/e+2\eps}$. 
    We can repeat both arguments above for $\kMedian$.
\end{proof}

\section{Conclusion}\label{sec:hypothesis}\label{sec:conclusion}

We conclude by presenting the proofs of our main theorems, referring to the relevant results established throughout the paper, and by discussing some hypotheses within our model and their consequences.

\subsection{Proof of the Main Theorems}

\begin{proof}[Proof of \cref{thm:main:NP-hard Upper Bound}]
    By \cref{cor:csp_with_preprocessing}, we can solve $\csp$ inside the Isomorphic-Priors model with $\widetilde{O}(km)$ preprocessing time and $\widetilde{O}(m)$ query time. By \cref{cor:constrained_spanning_tree_preprocessing}, we can solve $\cst$ randomized inside the Isomorphic-Priors model with $\widetilde{O}(kn^\omega)$ preprocessing time and $\widetilde{O}(n^\omega)$ query time. 

    The extensions from \emph{$\cspName$} to \emph{$\cspName$ and $\lspName{p}$}, and from \emph{$\cstName$} to \emph{$\cstName$ and $\lstName{p}$}, are both straightforward generalizations of the presented algorithm. To this end, note that the technique of walk lifting (see \cref{lem:walk_lifting}) also works in this setting.
\end{proof}

\begin{proof}[Proof of \cref{thm:main:Triangle Upper Bound}]
    By \cref{thm:replacement_path_with_preprocessing} there is an algorithm that solves $\repPathName$ inside the Isomorphic-Priors model using $\widetilde{O}(km)$ preprocessing time and $\widetilde{O}(m)$ query time.
    By \cref{cor:neg_triangle_preprocessing}, there is an algorithm that solves $\negName$ inside the Isomorphic-Priors model randomized using $\widetilde{O}(k n^\omega)$ preprocessing time and $\widetilde{O}(n^\omega)$ query time. 
\end{proof}

\begin{proof}[Proof of \cref{thm:main:Infinite Games Upper Bounds}]
    By \cref{cor:energy_preprocessing}, we can solve $\energyName$ inside the Isomorphic-Priors model with $\widetilde{O}(km)$ preprocessing time and $\widetilde{O}(m)$ query time. Our Weisfeiler-Leman techniques also yield the same result for $\parityName$, $\meanName$, and $\stochasticName$. To be more specific, we show in \cref{thm:value_determined_game_with_preprocessing} that this can be extended to value-determined games (a general concept including the above games).
\end{proof}

\begin{proof}[Proof of \cref{thm:intro:hardness}]
    The proof is a direct consequence of \cref{cor:hardness-consequence} and \cref{thm:ds-lb}. The approximation results follow from the corresponding gap results. 
\end{proof}

\subsection{ $\kClique{k}$ Hypothesis and Consequence}

One of the foundational hypotheses in fine-grained complexity \cite{Williams2018ICM} states that, for every fixed integer $k\geq 3$, no randomized algorithm can detect a $k$-clique in an $n$-vertex graph in $O(n^{\frac{\omega k}{3}- \epsilon})$ time for any constant $\epsilon > 0$. It is natural to ask whether this fundamental computational barrier translates to the Isomorphic-Priors model. We observe that if this hypothesis extends to our setting,  then $\giName$ is not polynomial-time solvable. 

Recall our hypothesis from earlier:

\kclique*

\begin{observation}
Under \Cref{conj:iso-k-clique}, $\giName$  cannot be solved in polynomial time (i.e. $\GI\not\in \cP$).
\end{observation}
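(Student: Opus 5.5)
We prove the contrapositive: assuming $\GI \in \cP$, we build an isomorphic-priors algorithm for $\kClique{k}$ that beats the bound in \Cref{conj:iso-k-clique} for some fixed $k$ and $\epsilon$. The algorithm is the trivial one from the discussion of $\GI$ in \Cref{sec:related}. Suppose $\GI$ is decided by a deterministic algorithm running in $O(n^{c})$ time on $n$-vertex graphs, for some fixed constant $c$. Note that $\kClique{k}$ is isomorphism-closed in the sense of \Cref{remark:iso:closed}: isomorphic graphs contain a $k$-clique simultaneously or not at all.

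\textbf{Algorithm.} In the preprocessing phase we only store the $k'=O(1)$ prior graphs $G_1,\dots,G_{k'}$ together with their answers $\opt(G_i)$; this takes $O(n^2)$ time. On a query $H$ we run the $\GI$ algorithm on each pair $(G_i,H)$. If some $G_i \cong H$, we return $\opt(G_i)$, which equals $\opt(H)$ by isomorphism-closedness. If no pair is isomorphic, we report that $H$ is isomorphic to no prior. Both outputs are correct in the model, so this is a valid isomorphic-priors algorithm. Its query time is $O(k' n^{c}) = O(n^{c})$, since the number of priors is constant.

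\textbf{Choice of parameters.} Fix $\epsilon = 1$ and pick a fixed integer $k \ge 3$ with $\omega k/3 - 1 \ge \max(c,2)$. Such a $k$ exists because $\omega \ge 2$, so $\omega k/3 \ge 2k/3 \to \infty$; for example $k = \lceil 3(c+3)/2 \rceil$ suffices. For this $k$, both preprocessing and query times are $O(n^{\omega k/3-\epsilon})$ with $O(1)$ priors, contradicting \Cref{conj:iso-k-clique}. Hence $\GI \notin \cP$.

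The argument is routine, and the only step that needs care is the bookkeeping of which quantities the hypothesis fixes. The clique size $k$ is chosen depending on the exponent $c$ of the hypothetical $\GI$ algorithm, while $c$ itself does not depend on $k$. This dependency is legitimate because the hypothesis quantifies over every fixed $k$. We also need the number of priors to be $O(1)$, so that the linear overhead in the number of priors is absorbed into the constant.
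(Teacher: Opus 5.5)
Your proof is correct and follows essentially the same argument as the paper: use the hypothetical polynomial-time $\GI$ algorithm to compare the query against each of the $O(1)$ priors, and choose the clique size large relative to the $\GI$ exponent so that $n^{c}$ falls below $n^{\omega k/3-\epsilon}$, using $\omega\ge 2$. The paper takes $k=\lceil 10b\rceil$ and $\epsilon=0.001$, whereas you take $k=\lceil 3(c+3)/2\rceil$ and $\epsilon=1$. Your version also accounts for the $O(n^2)$ cost of storing the priors, via the $\max(c,2)$ term, which is slightly more careful than the paper's ``$O(1)$ preprocessing''.
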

\begin{proof}[Proof sketch]
Assume for contradiction that $\giName$ can be solved in $O(n^b)$ time for some constant $b \ge 1$. Let $k = \lceil 10b \rceil$. We construct an isomorphic-priors algorithm for  $\kClique{k}$ to handle $O(1)$ prior graphs that requires $O(1)$ preprocessing time and $O(n^b)$ query time. 

During preprocessing, the algorithm does nothing. Upon receiving a query graph $H$, it simply runs the $O(n^b)$-time $\GI$ algorithm to test whether $H \cong G_i$ for any $i$. If such $G_i$ exists, it outputs the precomputed answer $\opt(G_i)$; otherwise, it correctly declares $H \not\cong G_i$ for any $i$. This satisfies the correctness guarantees of the Isomorphic-Priors model.

Since we chose $k = \lceil 10b \rceil$, our query time is $O(n^b) = O(n^{k/10})$. Because the matrix multiplication exponent $\omega \ge 2$, we have $\frac{\omega k}{3} \ge \frac{2k}{3}$. Since $k = \lceil 10b \rceil\geq 10$, $n^{k/10}$ is asymptotically faster than $O(n^{\frac{\omega k}{3} - 0.001})$. This implies we can solve  $\kClique{k}$ in our model in $O(n^{\frac{\omega k}{3} - 0.001})$ time, which  contradicts \Cref{conj:iso-k-clique}.
\end{proof}

Note that the implication $\GI \notin \cP$ in the observation above remains true under weaker variants of the hypothesis, specifically as long as  $\kClique{k}$ requires $\Omega(n^{f(k)})$ time in our model for any monotonically increasing, unbounded function $f(k)$. 
Consequently, refuting \Cref{conj:iso-k-clique} would serve as an intermediate step toward a polynomial-time algorithm for $\giName$. Conversely, resolving the hypothesis, or any of its weaker variants, in the affirmative would prove that no such algorithm exists (i.e., $\GI \notin \mathsf{P}$).

\subsection{Acknowledgments and AI Disclosure}

The authors thank Maximilian Thiessen for helpful comments on an
earlier draft of this paper.

The authors used Gemini and ChatGPT as interactive writing and proofreading assistants, including for figure generation.
While preparing the camera-ready version for FOCS 2026, ChatGPT was also used to suggest references, discuss 
technical formulations and possible extensions, and assist with
the formulation and proof of the observation relating the
Isomorphic-Priors model to invariant graph predictors (\Cref{sec:related}). These
interactions informed revisions to the abstract, introduction,
related-work and open-problems discussions, and the observation
mentioned above. The authors substantially revised the generated
material and take full responsibility for the manuscript,
including all mathematical claims, proofs, and references.

Separately, after the FOCS 2026 submission, the authors used the
ChatGPT models \texttt{5.6-Sol} and \texttt{6-Pro} to explore
alternative proofs of results in this paper and extensions to
other problems. No additional proofs or results from these
post-submission experiments are included in this paper.

\bibliography{References}

\appendix

\section{Further Problems that are Invariant under 1-WL}
\label{app:1wl}

We first provide the deferred proof of Lemma~\ref{lem:color-consistent-strategy} concerning color-consistent strategies for $\energyName$. We then give further examples of problems that are invariant under 1-WL and can therefore be solved in near-linear time in the Isomorphic-Priors model.

\subsection{Color-Consistent Strategies for $\energyName$}
\label{app:color-consistent-strategies}

\begin{proof}[Proof of Lemma~\ref{lem:color-consistent-strategy}]
Let $h(v)$ be the minimum initial energy required to win from $v$. By assumption,~$\sigma$ wins from every starting vertex $v$ with initial energy $h(v)$. The shadow-play argument using Neighborhood Preservation in the proof of Theorem~\ref{thm:energy-game-1-wl} shows that vertices of the same stable 1-WL color have the same minimum initial energy. Thus $h$ is constant on each color class.

For each color class $C$, choose an arbitrary representative $r_C\in C$ and define $H(C):=h(r_C)$.
For each stable color class $C\subseteq V_1$ and each $u\in C$, choose
\[
\widehat{\sigma}(u)\in\left\{v\in V \;\middle|\; (u,v)\in E,\ w(u,v)=w(r_C,\sigma(r_C)),\ \chi_\infty^G(v)=\chi_\infty^G(\sigma(r_C))\right\}.
\]
This set is nonempty by Neighborhood Preservation.

Let $(u,v)\in E$ be an edge consistent with $\widehat{\sigma}$, meaning that either $u\in V_2$ or $v=\widehat{\sigma}(u)$. Denote $C=\chi_\infty^G(u)$ and $D=\chi_\infty^G(v)$. There exists a $\sigma$-consistent edge $(r_C,z)$ with $w(r_C,z)=w(u,v)$ and $\chi_\infty^G(z)=D$: for $u\in V_1$, take $z=\sigma(r_C)$; for $u\in V_2$, use Neighborhood Preservation. Since $\sigma$ wins from $r_C$ with energy $H(C)$, we have $h(r_C) + w(u,v) \ge h(z)$ and therefore $w(u,v)\ge H(D)-H(C)$.

Now fix any initial vertex $v_0$ and energy $E_0\ge h(v_0)$, and consider any resulting infinite walk $\pi=v_0v_1v_2\dots$ consistent with $\widehat{\sigma}$. Applying the preceding inequality to each edge $(v_k,v_{k+1})$ yields a telescoping sum
\[
E_k=E_0+\sum_{j=0}^{k-1}w(v_j,v_{j+1})\ge H(\chi_\infty^G(v_k))\ge0
\qquad\text{for all }k\ge0.
\]
Hence $\widehat{\sigma}$ is optimal from every starting vertex and has the required consistency property.
The construction of $\widehat{\sigma}$ takes linear time.
\end{proof}

\subsection{More Games on Graphs}
\label{app:hardness-graphs}
We have shown in \cref{sec:algorihtms via WL} that $\energyName$ can be solved efficiently in the Isomorphic-Priors model using the 1-WL algorithm. In the following, we extend them to more general games.

\begin{definition}[Value-determined Game]\label{def:value-determined-game}
    An instance of Value-determined Game is defined by a tuple $G = (V, E, V_1, V_2, V_r, \delta, w, v_0, W, p)$, where 
    \begin{itemize}
        \item $(V, E)$ is a finite directed graph with no terminal states, 
        \item $(V_1, V_2, V_r)$ is a partition of $V$, where Player 1 controls $V_1$, Player 2 controls $V_2$, and $V_r$ are random nodes,
        \item $\delta$ is a transition probability function for random nodes: for each $u\in V_r$, $\delta(u)$ is a probability distribution over $Succ(u) = \{v \mid (u,v) \in E \}$.
        \item $w: V\rightarrow \Sigma$ is a label function.
        \item $v_0$ is the designated initial vertex.
        \item $W \subseteq \Sigma^\omega$ is the Borel objective for Player 1.
        \item $p \in [0,1]$ is a probability threshold. 
    \end{itemize}
    A play $\pi = v_0v_1v_2\dots$ is formed by moving a token: at step $k\geq 0$, if $v_k \in V_i, i\in \{1,2\}$, Player $i$ chooses an outgoing edge $(v_k, v_{k+1}) \in E$ and moves the token to $v_{k+1}$; if $v_k \in V_r$, then the arena samples a node $v_{k+1}$ from $\delta(v_k)$, and moves the token to $v_{k+1}$. We slightly abuse the label function by denoting $w(\pi) := w(v_0)w(v_1)w(v_2)\dots$. We denote $\pi(\sigma, \tau)$ as the (randomized) play where Player 1 uses strategy $\sigma$ and Player 2 uses strategy $\tau$. We say 
    \begin{itemize}
        \item Player 1 wins the game from $v_0$ if $\sup_\sigma \inf_\tau \Pr(w(\pi(\sigma, \tau))\in W) \geq p$,
        \item Player 2 wins the game from $v_0$ if $\inf_\tau \sup_\sigma \Pr(w(\pi(\sigma, \tau))\in W) < p$.
    \end{itemize}
    The game is value-determined, if $\sup_\sigma \inf_\tau \Pr(w(\pi(\sigma, \tau))\in W) = \inf_\tau \sup_\sigma \Pr(w(\pi(\sigma, \tau))\in W)$, which means that exactly one of the Players wins. For a value-determined game $G$, we define $\opt(G)=\text{YES}$ if Player~1 wins from $v_0$, and $\opt(G)=\text{NO}$ otherwise.
    
\end{definition}

We view a value-determined game as an attributed directed graph in the sense of Section~\ref{sec:model}. All instance-specific information can be encoded using vertex and edge attributes. In particular, the threshold $p$, which is a global parameter of the instance, can be encoded by including $p$ in the attribute of every vertex. We regard the Borel objective $W$ as a fixed global parameter. Thus, Value-determined Game fits the model of~\ref{sec:model}.

\begin{theorem}[Value-Determined Game in the Isomorphic-Priors Model]
\label{thm:value_determined_game_with_preprocessing}
    There is an algorithm that solves the Value-determined Game problem with $k$ prior instances each having at most $n$ vertices and $m$ edges in $O(k(n+m)\log n)$ preprocessing time and $O((n+m)\log n)$ query time.

\end{theorem}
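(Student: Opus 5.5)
The plan follows the template of \Cref{cor:energy_preprocessing}. By \Cref{theo:1WL:iso} it suffices to show that Value-determined Game is 1-WL invariant. Concretely, we must show that if $G \WLequiv{1} G'$, then $\opt(G)=\opt(G')$. The encoding is as follows. Each vertex receives the attribute $(\text{owner}\in\{1,2,r\},\, w(v),\, p,\, [v=v_0])$. Each edge $(u,v)$ with $u\in V_r$ receives the attribute $\delta(u)(v)$; all other edges receive a default attribute. Given invariance, preprocessing and querying are exactly as in the proof of \Cref{cor:csp_with_preprocessing}, via \Cref{thm:1-wl-canon}, within the stated $O(k(n+m)\log n)$ and $O((n+m)\log n)$ bounds.

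For invariance, we would not use a shadow play with a single matched vertex, since random nodes break that argument. Instead we pass to the quotient game. Let $\chi=\chi_\infty^{G\uplus G'}$ be the stable coloring of the disjoint union; by the argument in the proof sketch of \Cref{thm:1wl_properties}, it restricts to the stable colorings of $G$ and $G'$. Define the quotient arena $Q$. Its vertices are the stable colors. The color class of $v$ inherits the owner, label and initiality of $v$, which are well defined by Refinement. There is an edge $C\to D$ iff some (equivalently, every) $v\in C$ has a successor in $D$. For a random class, set $\delta_Q(C)(D)=\sum_{u\in D,(v,u)\in E}\delta(v)(u)$. This sum is independent of the choice of $v\in C$, because stability gives equal multisets of pairs (successor color, edge probability). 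The same $Q$ arises from $G$ and from $G'$, since the two graphs realize the same set of colors.

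The key step is to show that the color map $\pi\colon V\to V(Q)$ is a ``probabilistic bisimulation''. Concretely, from every $v$ the value of the game in $G$ equals the value from $\pi(v)$ in $Q$; the objective is the label sequence, and labels are preserved by $\pi$. We would prove the two inequalities by lifting strategies in each direction.

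Lifting from $G$ to $Q$: given a Player~1 strategy $\sigma_Q$ in $Q$, we define $\sigma$ in $G$ by projecting the play history to $Q$ and moving to any successor in the color class chosen by $\sigma_Q$; such a successor exists by Neighborhood Preservation. Then for any $\tau$, the projected play in $G$ is distributed as a play in $Q$ under $\sigma_Q$ against some projected opponent strategy $\tau_Q$. The reason is that Player~2's moves project to legal edges of $Q$, and random moves project with exactly the probabilities $\delta_Q$. Formally, we would check that the finite-dimensional distributions of the label sequences agree, by induction on prefixes, and then extend to Borel $W$ by uniqueness of measure extension. Hence $\sup\inf$ in $G$ is at least $\sup\inf$ in $Q$.

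Lifting from $Q$ to $G$: Player~1 strategies in $G$ project to randomized, history-dependent strategies in $Q$. We handle this by conditioning on the history of the $G$-play. Then, for each Player~2 strategy $\tau_Q$ in $Q$, the lifted $\tau$ in $G$ gives the same label distribution. Hence $\sup\inf$ in $G$ is at most $\sup\inf$ in $Q$. Applying the symmetric arguments to $\inf\sup$, together with value-determinacy of $G$ and $G'$, both values equal the value of $Q$ from the class of $v_0$. That class is shared by $v_0$ and $v_0'$, so $\opt(G)=\opt(G')$.

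The main obstacle is the measure-theoretic step. Projecting a strategy of the original game yields a behavioural strategy in $Q$ whose choice may depend on the hidden $G$-history. Showing that the induced measure on label sequences coincides with a measure generated by a legitimate $Q$-strategy profile requires care. I would handle it via cylinder sets, defining the projected strategy as the conditional distribution of the next class given the projected history. An alternative is to avoid $Q$ entirely and directly compare $G$ and $G'$ through the same conditional-distribution transfer.
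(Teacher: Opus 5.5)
Your plan is right: 1-WL invariance plus \Cref{theo:1WL:iso}, with the same encoding the paper uses. The invariance argument, however, takes a different route from the paper, and your reason for leaving the direct route is mistaken.

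The paper's proof of \Cref{thm:value-determined-game-1wl} is exactly a shadow play with a single matched vertex. For each equally colored pair $u,u'$ it fixes a bijection $b_{u,u'}\colon\operatorname{Succ}(u)\to\operatorname{Succ}(u')$ that preserves stable colors and, at random nodes, transition probabilities. Such a bijection exists because probabilities are edge labels and stable colors fix the multiset of (successor color, label) pairs (\Cref{thm:1wl_properties_extended}); mere existence of matching edges would not suffice. A random move $v_k'\to v_{k+1}'$ in $G'$ is mirrored by $v_{k+1}=b_{v_k,v_k'}^{-1}(v_{k+1}')$, which has law exactly $\delta(v_k)$. Hence finite histories of $G$ and $G'$ are in probability-preserving bijection and strategies of both players transport along it. No conditioning or cylinder-set bookkeeping is needed. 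Random nodes do not break the single-vertex shadow play; they only force the counting version of neighborhood preservation.

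Your quotient arena $Q$ is well defined, since $\delta_Q(C)(D)$ is class-invariant by stability. The direction ``the $\sup\inf$ value of $G$ is at least that of $Q$'' is fine, because there the projected $\tau_Q$ may depend on $\tau$.

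The converse direction needs care about quantifier order: the projected $\sigma_Q$ must be fixed before $\tau_Q$. Suppose you define it as the conditional law of the next class given the projected history under $(\sigma,\tau)$ for an arbitrary $\tau$. Then it depends on how $\tau$ splits mass among vertices of one class, and you only get $\forall\tau_Q\,\exists\sigma_Q$, which bounds $\inf\sup$ rather than $\sup\inf$.

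The fix is to lift every $\tau_Q$ through one fixed selection rule $s(v,D)\in\operatorname{Succ}(v)$ with $\chi(s(v,D))=D$. Then the weight of a $G$-history factors as a $\tau_Q$-free part times a product of $\tau_Q$-factors that depends only on its projection. Those factors cancel in the posterior, so $\sigma_Q$ is determined by $\sigma$, $\delta$ and $s$ alone. Histories with zero $\tau_Q$-free weight have probability zero under every $\tau_Q$, so the choice there is harmless.

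With this made explicit, your proof is complete. It buys a more structural statement: the value from $v$ equals that of a canonical quotient game from $\chi(v)$. The price is measure-theoretic bookkeeping that the paper's local bijections avoid. Also, since $G$ and $G'$ are value-determined, the $\sup\inf$ equalities alone decide the winner, so the $\inf\sup$ part of your argument is unnecessary.
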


Similar to solving $\energyName$ in the Isomorphic-Priors setting, we only need to show that if two games are 1-WL equivalent, then the winning player is the same. 

We initialize 1-WL with the vertex types and labels, and use
transition probabilities as labels on edges leaving random nodes.
All other edges receive a distinguished dummy label.
Colors are compared using a common refinement scheme.
Strategies may be history-dependent and randomized.

\begin{theorem}
    \label{thm:value-determined-game-1wl}
    Let $G$ and $G'$ be two instances of Value-determined Games
    with the same Borel objective $W$.
    If $G \WLequiv{1} G'$ and
    $\chi_\infty^G(v_0)=\chi_\infty^{G'}(v_0')$,
    then the winners of $G$ and $G'$ are the same.
\end{theorem}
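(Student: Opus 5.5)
The plan is to show that the value $\mathrm{val}_G(\sigma,\tau)$-supremum-infimum is the same at $v_0$ and $v_0'$. Since both games are value-determined and share the threshold $p$ (encoded in vertex attributes, hence preserved by color equality), the winners then coincide. By symmetry it suffices to show
\[
\sup_{\sigma'}\inf_{\tau'}\Pr\bigl(w(\pi'(\sigma',\tau'))\in W\bigr)\;\ge\;\sup_{\sigma}\inf_{\tau}\Pr\bigl(w(\pi(\sigma,\tau))\in W\bigr).
\]
For this, it suffices to translate every Player~1 strategy $\sigma$ in $G$ into a strategy $\sigma'$ in $G'$, and every Player~2 strategy $\tau'$ in $G'$ into a strategy $\tau$ in $G$, such that the label sequences $w(\pi(\sigma,\tau))$ and $w(\pi'(\sigma',\tau'))$ have the same distribution on $\Sigma^\omega$.

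The translation is the shadow-play construction from \cref{thm:energy-game-1-wl}, but phrased on stable colors rather than on individual vertices. This is needed because of randomness. Work on the disjoint union and consider the quotient: a history in $G'$ determines a color sequence $c_0c_1\dots c_k$. By Neighborhood Preservation (\cref{thm:1wl_properties}), each vertex of color $c$ has, for each color $d$ and edge label, successors of the same kind. For random nodes I will need the stronger multiset statement used in the proof sketch of \cref{thm:1wl_properties}: the multiset of (successor color, transition probability) pairs is a function of the color alone. Consequently the total probability $P(c,d)$ of moving from a random node of color $c$ into color class $d$ is well defined and identical in $G$ and $G'$. Refinement gives that the owner type and the label $w$ are functions of the color.

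I then define the strategies via color histories. Given a $G'$-history with colors $c_0\dots c_k$ and $c_k$ a Player~1 color, $\sigma'$ samples a shadow $G$-history consistent with these colors and with $\sigma$. It plays $\sigma$ there to obtain a distribution over successor colors, and moves to some successor of the current $G'$ vertex of each chosen color. Symmetrically, $\tau$ is defined from $\tau'$. To keep the shadow coupling exact, I would instead define everything on the level of color sequences. I would show by induction on $k$ that the induced distribution of color prefixes $c_0\dots c_k$ is equal in both games. At random nodes both games move to color $d$ with probability $P(c_k,d)$. At player nodes the translated strategy reproduces the color-transition probabilities of the original one conditional on the color prefix, which is possible because strategies may be randomized and history-dependent. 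By Kolmogorov extension / uniqueness of measures on cylinder sets, the laws of the color sequences agree. Since $w$ factors through colors, the laws of label sequences agree, and so $\Pr(\cdot\in W)$ agrees for the Borel set $W$.

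The main obstacle is the Player~1 translation. The original $\sigma$ depends on the actual $G$-history, not just the color history, so the translated $\sigma'$ must use the conditional distribution over $G$-histories given the color prefix (under $\sigma,\tau$). This conditional distribution itself depends on $\tau$, which is defined from $\tau'$, creating a circular definition. I would first try to break this circularity by maintaining an explicit shadow $G$-vertex sampled jointly. At Player~2 and random steps in $G'$, I pick the shadow successor uniformly among $G$-successors of the current shadow vertex of the realized color, weighted by transition probability among that color for random nodes. At Player~1 steps, I follow $\sigma$ on the shadow history. I would then verify that the joint shadow/real process has matching color marginals at each step, which again reduces to the multiset neighborhood property.
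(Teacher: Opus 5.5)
Your final plan is essentially the paper's shadow-play proof, with one step left open. In that plan, Player~1 in $G'$ carries an explicit shadow play in $G$, follows $\sigma$ on the shadow history, and copies Player~2 and random moves into the shadow via equally colored successors. You are also right that random nodes need the multiset (color, probability) form of Neighborhood Preservation. The paper's appeal to Neighborhood Preservation for its ``probability-preserving bijection'' silently relies on exactly that multiset form.

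The paper avoids your circularity more cleanly. For each equally colored pair $u,u'$ it fixes a bijection $b_{u,u'}\colon \operatorname{Succ}(u)\to\operatorname{Succ}(u')$ that preserves colors and, at random nodes, transition probabilities. The shadow history is then a deterministic function of the real history. So $\tau$ is just $\tau'$ transported through the inverse bijections, and corresponding histories have equal probability.

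If you keep randomized successor selection instead, you still owe a definition of $\tau$, which your proposal never gives for the explicit-shadow version. Take $\tau$ at a Player~2 history to be the conditional law of the shadow's next vertex given the shadow history.

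Your stated check, ``matching color marginals,'' is not what needs verifying, since colors agree pathwise by construction. What must be checked is the shadow's one-step conditional law: it should be $\sigma(h)$ at Player~1 nodes and exactly $\delta(v_k)$ at random nodes. The random-node case is precisely where you use that $P(c,d)$ is a function of the color alone.

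You should also note that $\sigma'$, which now uses private randomness to maintain the shadow, is still a legitimate behavioral strategy. One way to see this: its conditional law given the real history does not depend on $\tau'$.
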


\begin{proof}
    By symmetry, it suffices to show that the value of $G'$
    is at least the value of $G$.
    Fix an arbitrary strategy $\sigma$ for Player~1 in $G$.
    For simplicity, we use $w$ for the label functions of both games.

    For every pair of equally colored vertices $u,u'$, fix a
    color-preserving bijection
    $b_{u,u'}:\operatorname{Succ}(u)\to\operatorname{Succ}(u')$.
    At random nodes, additionally require
    $
        \delta(u)(v)
        =
        \delta'(u')\bigl(b_{u,u'}(v)\bigr).
    $
    Such bijections exist by Neighborhood Preservation,
    since the refinement also includes transition probabilities
    as edge labels.

    Fix a common infinite random tape, viewed as a sequence
    $(U_k)_{k\geq 0}$ of independent uniform random variables
    on $[0,1]$, independent of the players' randomization.
    At corresponding random nodes $u,u'$, order the successors
    of $u$ arbitrarily and order those of $u'$ according to
    $b_{u,u'}$.
    At step $k$, both games use $U_k$ to sample from the
    corresponding cumulative probability intervals.
    The two transitions then have the correct marginal
    distributions and lead to successors matched by $b_{u,u'}$.
    This random tape is used only for the coupling and is
    not observed by the players.

    We define $\sigma'$ dynamically by maintaining a simulated
    ``shadow play'' in $G$.
    Let the current finite prefix of a play in $G'$ be
    $\pi'=(v_0',v_1',\dots,v_k')$ and the shadow play in $G$
    be $\pi=(v_0,v_1,\dots,v_k)$.
    We show by induction on $k$ that
    $\chi_\infty^G(v_i)=\chi_\infty^{G'}(v_i')$
    for all $i\leq k$.

    \emph{Base case.}
    For $k=0$, the claim holds by assumption.

    \emph{Inductive step.}
    Assume the claim holds for all $i\leq k$.
    Since $v_k'$ and $v_k$ have the same color,
    they have the same vertex type.
    \begin{itemize}
        \item
        If $v_k$ and $v_k'$ are Player~1 nodes, we first move
        the token in $G$: the strategy $\sigma$ selects a
        successor $v_{k+1}$ according to $\sigma(\pi)$.
        Then we move the token in $G'$ to
        $
            v_{k+1}'=b_{v_k,v_k'}(v_{k+1}).
        $
        By the choice of the bijection,
        $\chi_\infty^{G'}(v_{k+1}')=\chi_\infty^G(v_{k+1})$.

        \item
        If $v_k$ and $v_k'$ are Player~2 nodes, we first move
        the token in $G'$: Player~2 selects an arbitrary
        successor $v_{k+1}'$ of $v_k'$.
        We then move the token in $G$ to
        $
            v_{k+1}=b_{v_k,v_k'}^{-1}(v_{k+1}').
        $
        Again,
        $\chi_\infty^G(v_{k+1})=\chi_\infty^{G'}(v_{k+1}')$.

        \item
        If $v_k$ and $v_k'$ are random nodes, the arena samples $v_{k+1}'\sim\delta'(v_k')$. Upon observing $v_{k+1}'$, the strategy $\sigma'$ appends to the shadow play the vertex $v_{k+1}:=b_{v_k,v_k'}^{-1}(v_{k+1}')$. This implements the coupling above: since $\delta'(v_k')(b_{v_k,v_k'}(u))=\delta(v_k)(u)$ for every $u\in\operatorname{Succ}(v_k)$, the induced shadow transition has distribution $\delta(v_k)$. Moreover, color preservation gives $\chi_\infty^G(v_{k+1})=\chi_\infty^{G'}(v_{k+1}')$.
        Thus $\sigma'$ reconstructs the shadow move from the observed move in $G'$, without observing the random tape.
    \end{itemize}

    The fixed local bijections give a bijection between finite
    histories in the two games.
    Both directions of this correspondence depend only on
    the histories already observed.
    Therefore, $\sigma'$ is a legal history-dependent strategy.

    Now fix an arbitrary strategy $\tau'$ for Player~2 in $G'$.
    Using the history correspondence, define a strategy $\tau$
    in $G$ by transporting the choices of $\tau'$ through
    the inverse local bijections.
    This is also a legal history-dependent strategy.

    Under these strategy pairs, corresponding finite histories
    have equal probabilities.
    At player nodes, this follows from the construction of
    the strategies; at random nodes, it follows from the
    probability-preserving coupling.
    Recall that colors determine the labels of the nodes
    along a play.
    Thus the two games induce the same distribution on label
    sequences, and
    \[
        \Pr\bigl(w(\pi'(\sigma',\tau'))\in W\bigr)
        =
        \Pr\bigl(w(\pi(\sigma,\tau))\in W\bigr).
    \]
    Here $\pi(\sigma,\tau)$ and $\pi'(\sigma',\tau')$ denote
    the resulting infinite plays in $G$ and $G'$, respectively.

    Since this holds for every $\tau'$, we have
    \[
        \inf_{\tau'}
        \Pr\bigl(w(\pi'(\sigma',\tau'))\in W\bigr)
        \ge
        \inf_\tau
        \Pr\bigl(w(\pi(\sigma,\tau))\in W\bigr).
    \]
    Since $\sigma$ was arbitrary, it follows that
    \[
        \sup_{\sigma'}\inf_{\tau'}
        \Pr\bigl(w(\pi'(\sigma',\tau'))\in W\bigr)
        \ge
        \sup_\sigma\inf_\tau
        \Pr\bigl(w(\pi(\sigma,\tau))\in W\bigr).
    \]
    By symmetry, the values of $G$ and $G'$ are equal.
    Since both games are value-determined and have the same
    threshold $p$, their winners are the same.
\end{proof}

\begin{proof}[Proof of \Cref{thm:value_determined_game_with_preprocessing}]
    The first part follows directly from \Cref{theo:1WL:iso,thm:value-determined-game-1wl}.
    For the second part, observe that \Cref{thm:value-determined-game-1wl} provides an explicit method for constructing a winning strategy on $H$ given a strategy for a prior graph $G_i$ such that $G_i \WLequiv{1} H$.
\end{proof}

\paragraph{Generalized B\"uchi games.}
A generalized B\"uchi game is equipped with a family of sets $F_1,\ldots,F_r \subseteq V$, and Player~1 wins if every set $F_i$ is visited infinitely often. Generalized B\"uchi games can be viewed as a special case of value-determined games. Indeed, if the sets $F_1,\ldots,F_r$ come with fixed identifiers, we may label every vertex
$v$ by the set
\[
    w(v) := \{i \in [r] \mid v \in F_i\},
\]
and choose the Borel objective that requires every index $i \in [r]$ to occur infinitely often. Hence, Theorem~\ref{thm:value_determined_game_with_preprocessing} immediately applies to this representation.

There is, however, a subtlety in this formulation. Encoding the indices $i \in [r]$ as part of the vertex labels makes the identities of the B\"uchi sets part of the input attributes. Consequently, an isomorphism is required to preserve every set $F_i$ individually. This is stronger than the arguably more natural notion of isomorphism for generalized B\"uchi games, under which the B\"uchi sets themselves may be permuted. Namely, one would only require the existence of a graph isomorphism~$\varphi$ and a permutation $\pi$ of $[r]$ such that
\[
    v \in F_i
    \quad\Longleftrightarrow\quad
    \varphi(v) \in F'_{\pi(i)}
\]
for every $v \in V$ and $i \in [r]$.

A natural attempt to model this stronger symmetry is to introduce one unlabeled auxiliary vertex for each B\"uchi set and encode membership by incidence edges. In such a representation, graph isomorphisms may freely permute the auxiliary vertices, exactly as desired. However, our current 1-WL argument does not immediately extend to this representation: the shadow-play construction preserves local colors and transition behavior, but does not provide a single globally consistent matching 
between the B\"uchi sets of the two games. Such a consistent matching appears necessary in order to conclude that visiting every B\"uchi set infinitely often is preserved.

\paragraph{Rabin games.}
A similar distinction arises for Rabin games. A Rabin game is equipped
with a family of pairs $(E_1,F_1),\ldots,(E_r,F_r)$, where
$E_i,F_i \subseteq V$, and Player~1 wins if there exists an index
$i \in [r]$ such that $E_i$ is visited only finitely often and $F_i$
is visited infinitely often. If the pairs come with fixed identifiers,
we may label every vertex $v$ by
\[
    w(v) :=
    \bigl(\{i \in [r] \mid v \in E_i\},
          \{i \in [r] \mid v \in F_i\}\bigr).
\]
The Rabin condition is then a Borel objective on the resulting
label sequence, so
Theorem~\ref{thm:value_determined_game_with_preprocessing}
also applies to this representation.

As in the generalized B\"uchi case, however, this encoding requires
isomorphisms to preserve every set $E_i$ and $F_i$ individually.
Under the more natural notion of isomorphism, the Rabin pairs may
be permuted, while the pairing and the roles of their two components
are preserved. More precisely, one requires a game-graph
isomorphism~$\varphi$ and a permutation $\pi$ of $[r]$ such that
\[
    \varphi(E_i)=E'_{\pi(i)}
    \qquad\text{and}\qquad
    \varphi(F_i)=F'_{\pi(i)}
\]
for every $i \in [r]$.
Encoding the pairs and their two types of membership using auxiliary
vertices and incidence edges allows this symmetry to be represented.
However, the same obstacle remains: our current shadow-play
construction does not provide a single globally consistent
correspondence between the Rabin pairs, and therefore does not
establish preservation of the Rabin objective in this representation.

\paragraph{Open questions.}
Under the natural notions of isomorphism described above, we leave
open whether the Isomorphic-Priors model admits algorithms with
$O(\poly(k,n))$ preprocessing time and
\begin{enumerate}
    \item $O(n^{2.99})$ query time for generalized B\"uchi games,
    when the B\"uchi sets may be arbitrarily permuted; and
    \item $O(\poly(n))$ query time for Rabin games,
    when the Rabin pairs may be arbitrarily permuted.
\end{enumerate}

\subsection{The $\repPathName$ Problem}

The work of Vassilevska Williams and Williams~\cite{DBLP:conf/focs/WilliamsW10} is widely regarded as an early landmark in fine-grained complexity. In particular, they showed that a number of fundamental problems with best known subcubic algorithms are tightly related via subcubic reductions, yielding an equivalence web among path, matrix, and triangle problems.

In this appendix, we show that, in our model, $\repPathName$ can be solved in $O((n+m)\log n)$ time.

\begin{definition}[$\repPathName$]
    An instance of $\repPathName$ is defined by a tuple $G = (V, E,c,P)$, where 
    
    \begin{itemize}
        \item $(V, E)$ is a finite directed graph, 
        \item $c\colon E \to \mathbb{R}$ is a cost function assigning a non-negative cost to each edge,
        \item $P = v_1 v_2 \ldots v_k$ is a shortest path.
    \end{itemize}
    The solution for $\repPathName$ is a vector $\opt = (\opt_1,\ldots, \opt_{k-1})$, where for each $i\in \{1,\ldots,k-1\}$, $\opt_i$ denotes the cost of the shortest $v_1,v_k$-path in $G_i = (V, E\setminus \{(v_i,v_{i+1}) \})$.
\end{definition}

\begin{remark}\label{rem:replacement_path_no_neg_cycles}
We may assume without loss of generality that $G$ contains no negative cycles reachable from $v_1$ and to $v_k$. Indeed, since $P$ is a shortest path, the distance between its endpoints is finite.
\end{remark}

We view an instance $G = (V, E,c,P)$, where $P = v_1 v_2 \ldots v_k$, of $\repPathName$ as an \attributed graph with edge attributes $\att_E(e) = c(e)$ for all $e \in E$ and vertex attributes $\att_V(v_i) = i$ for all $i \in \{1,\dots,k\}$, and $\alpha_V(w) = 0$ for all other vertices $w \in V \setminus \{v_1,\dots,v_k\}$.

\begin{theorem}[$\repPathName$ in the Isomorphic-Priors Model]
    \label{thm:replacement_path_with_preprocessing}
    There is an algorithm that solves $\repPathName$ with $k$ prior instances each having at most $n$ vertices and $m$ edges in $O(k(n+m)\log n)$ preprocessing time and $O((n+m)\log n)$ query time. 
    Additionally, if each prior $G_i$ is equipped with a set of replacement paths, then the algorithm can also compute a set of replacement paths for $H$ in $O((n+m)\log n)$ time.
\end{theorem}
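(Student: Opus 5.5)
The plan is to follow the same template as for \Cref{cor:csp_with_preprocessing}. By \Cref{theo:1WL:iso}, it suffices to show that $\repPathName$ is 1-WL invariant. Here an instance is encoded as an \attributed graph whose edge attributes are the costs and whose vertex attributes are the path positions $i$ (and $0$ off the path). Concretely, we must show the following: if $G \WLequiv{1} G'$, then for every $i$ the replacement value $\opt_i(G)$ equals $\opt_i(G')$. Since the path vertices carry distinct attributes $1,\dots,k$, each $v_j$ forms a singleton color class in both graphs, and by Refinement (\Cref{thm:1wl_properties}) $v_j$ and $v'_j$ receive the same stable color. The path edge $(v_i,v_{i+1})$ is then the unique edge between two specific singleton classes, so it is identified canonically in both graphs.

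The key step is a variant of Walk Lifting (\Cref{lem:walk_lifting}) that avoids the deleted edge. Take an optimal $v_1$--$v_k$ path $Q$ in $G-(v_i,v_{i+1})$ and lift it step by step from $v'_1$, using Neighborhood Preservation to match edge costs and colors. Whenever the current vertex of $Q$ is some $v_j$, its stable color is a singleton class, so the lifted vertex is forced to be $v'_j$. A step of $Q$ goes from a vertex colored like $v_i$ to one colored like $v_{i+1}$ exactly when it uses the edge $(v_i,v_{i+1})$ itself. Since $Q$ never uses that edge, and the lifted vertices coincide with $v'_i$, $v'_{i+1}$ only at those singleton colors, the lifted walk $W'$ never traverses $(v'_i,v'_{i+1})$. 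The walk $W'$ ends at $v'_k$ and has the same total cost as $Q$. By symmetry the reverse inequality also holds, giving equal values.

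The main obstacle is that costs may be negative, so we cannot simply shortcut $W'$ to a simple path by removing cycles, as was done in \Cref{thm:csp-1-wl}. I would handle this with \Cref{rem:replacement_path_no_neg_cycles}: $G$ has no negative cycles on relevant $v_1$--$v_k$ walks. The first thing to verify is that the same holds in $G'$. One route is to argue that any negative closed walk in $G'$ reachable from $v'_1$ and reaching $v'_k$ lifts back (by symmetry of the lifting) to a negative closed walk in $G$ with the same color sequence. That walk may not close exactly at the same vertex, but iterating the lifting along a cycle of colors in a finite graph eventually yields a closed walk of negative total cost, a contradiction. Granting this, removing cycles from $W'$ does not increase its cost, and removing cycles from a walk avoiding $(v'_i,v'_{i+1})$ still avoids it. Hence $\opt_i(G') \le \opt_i(G)$. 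If a replacement value is infinite, the same lifting argument shows it is infinite in both graphs.

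Finally, the algorithm follows the proof of \Cref{cor:csp_with_preprocessing} verbatim. We compute $\operatorname{WL}_1$ invariants via \Cref{thm:1-wl-canon} and store them in a trie together with the vectors $\opt$, giving $O(k(n+m)\log n)$ preprocessing time and $O((n+m)\log n)$ query time. For the reconstruction part, we transfer each stored replacement path by the token-tracking lift and then remove cycles. This needs some care with the total time across all $i$: I would argue that consecutive replacement paths can be represented compactly, or else accept the output-size term.
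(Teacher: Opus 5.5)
Your proposal is correct and is essentially the paper's proof. \Cref{thm:replacement_path_problem} establishes 1-WL invariance by lifting an optimal $Q$ from $v'_1$, using the singleton classes of the path vertices to force the endpoints and to exclude the deleted edge, and then applies \Cref{theo:1WL:iso}. The paper does not need your lift-back argument for negative cycles: it applies \Cref{rem:replacement_path_no_neg_cycles} directly to $G'$, which is itself a valid instance whose designated path is shortest.

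Your hesitation about reconstruction is justified, because the paper only notes that the lifting is constructive and never accounts for the running time. Suppose every replacement path must traverse a common detour of length $\Theta(n)$ and $P$ has $\Theta(n)$ edges. Then the explicit output already has size $\Theta(n^2)$ while $m=O(n)$. So the stated $O((n+m)\log n)$ bound can only hold for a compact representation. Otherwise you get $O((n+m)\log n)$ plus output size: for each vertex of $H$, precompute a dictionary from (cost, target color) to an outgoing edge, so that each lifting step costs $O(\log n)$.
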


In the following theorem, we prove that $\repPathName$ is invariant under 1-WL equivalence.

\begin{theorem}\label{thm:replacement_path_problem}
Let $G = (V,E,c\colon E\to \mathbb{R}, P)$ and $G' = (V',E',c'\colon E'\to \mathbb{R}, P')$ be two instances of $\repPathName$.
If $G \WLequiv{1} G'$, then $G$ and $G'$ have the same optimal replacement path vector.
\end{theorem}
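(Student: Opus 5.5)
By symmetry of $\WLequiv{1}$, it suffices to show $\opt'_i \le \opt_i$ for every $i \in \{1,\dots,k-1\}$. The argument mirrors \Cref{thm:csp-1-wl}, and the new ingredient is that the deleted edge must never be used. Since the vertex attributes $\att_V(v_j)=j$ are distinct for $j\in\{1,\dots,k\}$, \Cref{thm:1wl_properties} (Refinement) shows that each $v_j$ forms a singleton stable color class. The same holds for $v'_j$, and $\chi_\infty^G(v_j)=\chi_\infty^{G'}(v'_j)$. In particular, both paths have the same number $k$ of vertices. By Neighborhood Preservation applied to $v_j,v'_j$ with target color class $\{v_{j+1}\}$, the edge $(v_j,v_{j+1})$ exists with cost $c$ if and only if $(v'_j,v'_{j+1})$ exists with the same cost.

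Fix $i$ and assume $\opt_i<\infty$. Let $Q=(v_1=u_0,e_1,u_1,\dots,e_r,u_r=v_k)$ be a shortest $v_1$-$v_k$ path in $G$ that avoids the edge $(v_i,v_{i+1})$. By \Cref{rem:replacement_path_no_neg_cycles}, the walk distance equals the path distance, so I will work with walks throughout. I then apply \Cref{lem:walk_lifting} to lift $Q$ to a walk $Q'$ in $G'$ starting at $v'_1$. This walk has the same sequence of edge costs and the same stable colors, so it ends at $v'_k$ by the singleton property. Its total cost is $c(Q)=\opt_i$.

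The key check is that $Q'$ never uses $(v'_i,v'_{i+1})$. Suppose some step $e'_j=(u'_{j-1},u'_j)$ equals $(v'_i,v'_{i+1})$. Then $u'_{j-1}$ has the singleton color of $v'_i$, so $u_{j-1}=v_i$. Likewise $u_j=v_{i+1}$, which gives $e_j=(v_i,v_{i+1})$ and contradicts the choice of $Q$. (For simple directed graphs the edge is determined by its endpoints. With parallel edges I would refine the edge attribute to mark the path edges, which is harmless.) So $Q'$ is a $v'_1$-$v'_k$ walk in $G'$ minus that edge, with cost $\opt_i$.

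The main obstacle is that costs may be negative, so removing cycles from the walk $Q'$ could in principle lower its cost below $\opt_i$. That only helps the inequality, but I need $\opt'_i$ to be a well-defined finite value, which requires that $G'$ minus the edge has no negative cycle on a $v'_1$-$v'_k$ walk. I would handle this by lifting in both directions. If $G'$ minus the edge had a negative cycle reachable from $v'_1$ and co-reachable to $v'_k$, then the closed walk from $v'_1$ around the cycle to $v'_k$ would lift back to $G$ minus $(v_i,v_{i+1})$ by the same avoidance argument. Repeating that cycle would make the walk costs in $G$ unbounded below, contradicting finiteness of $\opt_i$, and this finiteness is what I use to interpret $\opt_i$ as a walk distance. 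With this in place, $\opt'_i\le\opt_i$ holds, the case $\opt_i=\infty$ is trivial, and symmetry gives equality for all $i$.
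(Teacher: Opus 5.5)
Your proposal is correct and follows the paper's proof essentially step for step. You lift a shortest replacement path $Q$ via Walk Lifting, use the singleton stable colors of the path vertices to fix the endpoints at $v'_1,v'_k$ and to rule out use of $(v'_i,v'_{i+1})$, and finish by symmetry. Your back-lifting argument excluding negative cycles in $G'$ minus the edge is valid but unnecessary: the paper simply applies \Cref{rem:replacement_path_no_neg_cycles} to $G'$ itself, since $P'$ is also a shortest path there.
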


\begin{proof}
    Let $P=v_1,\ldots,v_k$ and $P'=v'_1,\ldots,v'_k$ be the designated shortest paths in $G$ and $G'$, respectively.
    Observe that the initial coloring $\chi_0$ assigns color $i$ to $v_i$ and to $v'_i$ for every $i\in\{1,\ldots,k\}$, and assigns color $0$ to all remaining vertices. Let $\opt$ and $\opt'$ be the optimal replacement vectors for $G$ and $G'$, respectively.  We show that for every $i\in\{1,\ldots,k-1\}$, $\opt_i = \opt'_i$.

    Fix $i \in \{1,\ldots,k-1\}$. Let $Q=u_1,\ldots,u_m$ be a shortest $v_1,v_k$-path in $G_i=(V,E\setminus\{(v_i,v_{i+1})\})$. We construct a $v'_1,v'_k$-walk $Q'=u'_1,\ldots,u'_m$ in $G'_i=(V',E'\setminus\{(v'_i,v'_{i+1})\})$ with the same sequence of stable colors as $Q$, and hence with the same total cost.

    We define $Q'$ inductively. Since $u_1=v_1$, its initial color is $1$. Because $G \WLequiv{1} G'$, there exists a vertex $u'_1 \in V'$ with $\chi_\infty^G(u_1)=\chi_\infty^{G'}(u'_1)$. By the Refinement property of 1-WL (see Lemma~\ref{thm:1wl_properties}), $u'_1$ has the same initial color as $u_1$, and therefore $\chi^{G'}_0(u'_1)=1$, so $u'_1=v'_1$.

    Now suppose that $u'_j$ has already been chosen so that $\chi_\infty^G(u_j)=\chi_\infty^{G'}(u'_j)$. Since $(u_j,u_{j+1}) \in E$, by Neighborhood Preservation (see Lemma~\ref{thm:1wl_properties}), there exists an outgoing edge $(u'_j,u'_{j+1}) \in E'$ such that $\chi_\infty^G(u_{j+1})=\chi_\infty^{G'}(u'_{j+1})$ and $(u_j,u_{j+1})$ and $(u'_j,u'_{j+1})$ have the same edge label, i.e., the same cost under $c$ and $c'$, respectively. Continuing in this way, we obtain a walk $Q'$ in $G'$ from $v'_1$ to some vertex $u'_m$ with $\chi_\infty^G(v_k)=\chi_\infty^{G'}(u'_m)$. By Refinement, $\chi^{G'}_0(u'_m)=\chi^G_0(v_k)=k$, and hence $u'_m=v'_k$. 

    Thus, $Q'$ is a $v'_1,v'_k$-walk in $G'$ and $c'(Q')=c(Q)$. We claim that $Q'$ does not use the edge $(v'_i,v'_{i+1})$. Suppose otherwise. Then there exists $j$ such that $(u'_j,u'_{j+1})=(v'_i,v'_{i+1})$. Since $v'_i$ and $v'_{i+1}$ are the unique vertices of initial colors $\chi^{G'}_0(v'_i) = i$ and $\chi^{G'}_0(v'_{i+1}) = i+1$, it follows from $\chi_\infty^G(u_j)=\chi_\infty^{G'}(u'_j)$ and $\chi_\infty^G(u_{j+1})=\chi_\infty^{G'}(u'_{j+1})$, together with another application of Refinement, that $u_j=v_i$ and $u_{j+1}=v_{i+1}$. Hence $(u_j,u_{j+1})=(v_i,v_{i+1})$, contradicting the fact that $Q$ is a path in $G_i$.

    Therefore, $Q'$ is a valid $v'_1,v'_k$-walk in $G'_i$, and so, by Remark~\ref{rem:replacement_path_no_neg_cycles}, $\opt'_i \le c'(Q') = c(Q) = \opt_i$. By symmetry, $\opt_i \le \opt'_i$. Hence $\opt_i=\opt'_i$, as required.
\end{proof}

\begin{proof}[Proof of \Cref{thm:replacement_path_with_preprocessing}]
    The first part follows directly from \Cref{theo:1WL:iso,thm:replacement_path_problem}.
    For the second part, observe that \Cref{thm:replacement_path_problem} provides an explicit method for constructing replacement paths on $H$ given replacement paths for a prior graph $G_i$ such that $G_i \WLequiv{1} H$.
\end{proof}

\subsection{Maximum Flow}

Next, we show that $\maxFlowName$ can be solved in near-linear time in our model, slightly improving over existing worst-case algorithms \cite{DBLP:conf/focs/ChenKLPGS22,DBLP:conf/focs/Brand0PKLGSS23,DBLP:conf/stoc/BrandLLSS0W21,DBLP:conf/focs/BrandLNPSS0W20}.
Towards this end, we again argue that 1-WL equivalent graphs admit the same maximum flow.
We note that this also improves over a result from \cite{DBLP:journals/siamcomp/AtseriasM13} where the same is shown for 2-WL.

\begin{definition}
    An instance of $\maxFlowName$ is a tuple $G = (V,E,c,s,t)$ where $(V,E)$ is a directed graph, $c\colon E(G) \to \mathbb{R}_{\geq 0}$ are edge capacities, and $s,t \in V$ are the source and sink vertex.
    A \emph{flow} is a mapping $f\colon E(G) \to \mathbb{R}_{\geq 0}$ such that
    \begin{itemize}
       \item $f(e) \leq c(e)$ for all $e \in E(G)$, and
        \item $\sum_{u \in N_{in}(v)} f(u,v) = \sum_{w \in N_{out}(v)} f(v,w)$ for all $v \in V\setminus \{s,t\}$.
    \end{itemize}
    The objective is to find a flow $f$ that maximizes $\operatorname{val}(f) \coloneqq \sum_{w \in N_{out}(s)} f(s,w)$.
\end{definition}

\begin{theorem}[$\maxFlowName$ in the Isomorphic-Priors Model]
    \label{thm:max_flow_with_preprocessing}
    There is an algorithm that solves $\maxFlowName$ with $k$ prior instances each having at most $n$ vertices and $m$ edges in  $O(k(n+m)\log n)$ preprocessing time and $O((n+m)\log n)$ query time. 
    Additionally, if each prior $G_i$ is equipped with a maximum flow, then the maximum flow for the query graph can be found in $O((n+m)\log n)$ query time.
\end{theorem}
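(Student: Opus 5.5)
The plan is to reduce the statement to \Cref{theo:1WL:iso}: I will show that $\maxFlowName$ is 1-WL invariant, meaning that two instances $G \WLequiv{1} G'$ have the same maximum flow value. As usual, instances are encoded as \attributed graphs with edge attribute $c(e)$ and vertex attributes $\mathsf{s}$, $\mathsf{t}$, $\bot$. Preprocessing and query then proceed exactly as in the proof of \Cref{cor:csp_with_preprocessing}. We compute $\operatorname{WL}_1$ via \Cref{thm:1-wl-canon}, store the strings in a trie together with the optimal values (and with optimal flows, when these are provided), and look up the query. This gives the claimed $O(k(n+m)\log n)$ and $O((n+m)\log n)$ bounds.

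To prove invariance, I will go through max-flow/min-cut duality, or equivalently LP averaging, and argue via \emph{color-averaged flows}. Take a maximum flow $f$ in $G$. First, symmetrize it within $G$: for each triple $(C,D,\ell)$, where $C,D$ are stable colors and $\ell$ is a capacity value, replace $f(e)$ on every edge from class $C$ to class $D$ with capacity $\ell$ by the average of $f$ over all such edges. Capacities are still respected, because every edge in the group has capacity exactly $\ell$ and the average of values $\le \ell$ is $\le \ell$.

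Conservation needs more care. By the 1-WL stability (\Cref{thm:1wl_properties}, together with the multiset form of the refinement), every vertex $v \in C$ has the same number $d^{out}_{C,D,\ell}$ of out-edges of type $(D,\ell)$. Write $N_{C,D,\ell}$ for the number of edges of type $(C,D,\ell)$. Summing the old conservation constraints over all $v$ in a class $C \notin \{\{s\},\{t\}\}$, and using $N_{C,D,\ell} = |C| \, d^{out}_{C,D,\ell}$, shows that for each vertex the averaged inflow equals the class total inflow divided by $|C|$, and the same holds for outflow. Hence the averaged flow is conserved at every vertex. The value is unchanged, since $s$ forms a singleton class and its out-edges are averaged only among themselves.

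The averaged flow is described entirely by numbers $g(C,D,\ell)$. Next, transfer it to $G'$ by setting $f'(e') = g(C,D,\ell)$ for every edge $e'$ of type $(C,D,\ell)$ in $G'$. Since $G \WLequiv{1} G'$, corresponding classes have identical in/out degree profiles (again from the refinement multisets applied to the disjoint union, as in the proof sketch of \Cref{thm:1wl_properties}). So the per-vertex conservation computation carries over verbatim, capacities match, and the value at $s'$ equals that at $s$. Thus $\opt(G') \ge \opt(G)$, and symmetry gives equality.

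For reconstruction, given a stored max flow $f_i$ for $G_i$, we compute the group averages $g$ in $O(n+m)$ time using the color identifiers and write them onto $H$. The main obstacle is justifying the degree-profile claim. \Cref{thm:1wl_properties} as stated only gives existence of neighbors, not counts. I will therefore appeal directly to the definition of $\chi_{i+1}$ with multisets, which forces exact counts $d^{out}_{C,D,\ell}$ and $d^{in}_{C,D,\ell}$ that are equal across equally colored vertices in $G \uplus G'$.
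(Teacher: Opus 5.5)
Your proposal is correct and is essentially the paper's argument. The paper aggregates a flow onto the quotient network $G/\chi_\infty$, summing capacities over all edges between two color classes, and spreads it back proportionally to capacity; you instead average within each $(C,D,\ell)$ edge type and copy the averages onto $G'$. Both rest on the same count-level neighborhood preservation that the paper records as \Cref{thm:1wl_properties_extended}, which you correctly identify as needed and derive from the multiset definition.
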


For the proof, we require some additional basic tools.
First, we generalize the neighborhood preservation property from \Cref{thm:1wl_properties}.

\begin{lemma}[Folklore]
\label{thm:1wl_properties_extended}
For every two (\attributed) graphs $G = (V_G,E_G,\alpha_G^V,\alpha_G^E)$ and $H = (V_H,E_H,\alpha_H^V,\alpha_H^E)$ such that $G \WLequiv{1} H$, the following holds:
\begin{enumerate}
    \item \textbf{Neighborhood Multiset Preservation:} For every $v \in V_G$ and $v' \in V_H$, if $\chi_\infty^G(v) = \chi_\infty^{H}(v')$, then for every stable color $C$ and edge \attribute $\ell$, it holds that
    \[\left|\left\{u \in V_G \mid \att_G^E(u,v) = \ell, \chi_\infty^G(u) = C\right\}\right| = \left|\left\{u' \in V_H \mid \att_H^E(u',v') = \ell, \chi_\infty^H(u') = C\right\}\right|.\]
    The analogous statement holds for outgoing edges.
\end{enumerate}
\end{lemma}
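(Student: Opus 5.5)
The plan is to reduce the statement to the definition of the stable coloring on the disjoint union, as in the proof sketch of \Cref{thm:1wl_properties}. First, consider the coloring $\chi^*$ on $V(G)\uplus V(H)$ given by $\chi^*(w)=\chi_\infty^G(w)$ for $w\in V_G$ and $\chi^*(w')=\chi_\infty^H(w')$ for $w'\in V_H$. Using $G\WLequiv{1}H$, I will argue, exactly as in that sketch, that $\chi^*\equiv\chi_\infty^{G\uplus H}$. The reason is that refinement steps on $G\uplus H$ act independently on each component: colors are computed from each vertex's own neighborhood, so $\chi_i^{G\uplus H}$ restricted to $V_G$ coincides with $\chi_i^G$, and likewise for $H$. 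Equality of the color multisets then ensures that both components stabilize at the same round.

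The key step is to exploit the definition of $\chi_{i+1}$. By construction, $\chi_\infty^G=\chi_{i_\infty+1}^G$, and this is exactly why the extra iteration after stabilization was included. Therefore $\chi_\infty^G(v)$ literally contains, as a component, the multiset
\[\Big\{\!\!\Big\{\big(\chi_{i_\infty}^G(u),\alpha_G^E(u,v)\big)\mid u\in N_{in}(v)\Big\}\!\!\Big\},\]
and similarly for $v'$ in $H$. By the argument above, both graphs share the same stabilization index, and since $\chi_{i_\infty}\equiv\chi_\infty$ across the union, the two colorings induce the same partition. Hence equality $\chi_\infty^G(v)=\chi_\infty^H(v')$ forces equality of these two multisets.

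Next, I translate from $\chi_{i_\infty}$-colors to stable colors $C$. Because $\chi_{i_\infty}^{G\uplus H}\equiv\chi_\infty^{G\uplus H}$, each $\chi_{i_\infty}$-class corresponds bijectively to a stable class. Consequently, the multiplicity of the pair (class of $C$, $\ell$) in the multiset equals the count $|\{u: \alpha^E(u,v)=\ell,\ \chi_\infty(u)=C\}|$. Counts in $G$ and $H$ therefore agree. Outgoing edges are handled identically using the third component of the refinement tuple.

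The main obstacle is the bookkeeping that makes the disjoint-union argument precise: I must justify that $G$ and $H$ reach stability at the same index, or else avoid this by working directly with $\chi_\infty^{G\uplus H}$ and showing that its restriction to each component is equivalent to the individual stable coloring. I would handle this by induction on rounds, showing $\chi_i^{G\uplus H}|_{V_G}=\chi_i^G$ literally, since the colors are defined as nested tuples and not merely up to renaming. With that in hand, the nested-tuple definition makes the cross-graph comparison of colors meaningful.
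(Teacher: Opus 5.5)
Your route is the same disjoint-union idea as the paper's sketch of \Cref{thm:1wl_properties}; the paper states this folklore lemma without proof. The componentwise parts are fine: $\chi_i^{G\uplus H}$ restricts literally to $\chi_i^G$ and $\chi_i^H$, nesting depth forces a common index $i_\infty$, and the tuple equality $\chi_{i_\infty+1}^G(v)=\chi_{i_\infty+1}^H(v')$ makes the two in-neighbor multisets of pairs (round-$i_\infty$ color, edge attribute) equal.

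The gap is in your translation step. That step needs the following: for $u\in V_G$ and $u'\in V_H$, $\chi_{i_\infty}^G(u)=\chi_{i_\infty}^H(u')$ implies $\chi_\infty^G(u)=\chi_\infty^H(u')$. In other words, no round-$i_\infty$ class meeting both graphs may split at round $i_\infty+1$ into a $G$-part and an $H$-part. You assert this (``$\chi_{i_\infty}\equiv\chi_\infty$ across the union''), but your two reasons do not imply it. These are that refinement acts independently on each component, and that both graphs stabilize at the same round. Counterexample: a directed $2$-cycle and a graph consisting of two isolated vertices both stabilize at round $0$. Yet their common round-$0$ class splits at round $1$ on the disjoint union. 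Inside one graph, $\chi_{i_\infty}^G\equiv\chi_\infty^G$ is just the definition of $i_\infty$. Across the two graphs, this is exactly where $G\WLequiv{1}H$ must do real work, whereas your proposal uses the hypothesis only to fix the common round.

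The missing argument is short. Suppose $\chi_{i_\infty}^G(u)=\chi_{i_\infty}^H(u')$.
\begin{itemize}
\item Equality of the stable color multisets yields some $w'\in V_H$ with $\chi_\infty^H(w')=\chi_\infty^G(u)$.
\item Comparing first tuple components gives $\chi_{i_\infty}^H(w')=\chi_{i_\infty}^G(u)=\chi_{i_\infty}^H(u')$.
\item Stability inside $H$ then gives $\chi_\infty^H(u')=\chi_\infty^H(w')=\chi_\infty^G(u)$.
\end{itemize}
The converse is immediate from the first component. Hence $\chi_{i_\infty}^{G\uplus H}\equiv\chi_{i_\infty+1}^{G\uplus H}$. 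So each stable color $C$ occurring in $G\uplus H$ is the only stable color occurring there whose first component is $c$. Consequently, in both graphs, the number of in-neighbors $u$ with $\chi_\infty(u)=C$ and edge attribute $\ell$ equals the multiplicity of $(c,\ell)$ in the (equal) multisets. With this inserted your proof goes through, and outgoing edges are symmetric. The paper's sketch passes over the same point with ``it can be verified''.
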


Now let $G = (V,E,c,s,t)$ be an instance of $\maxFlowName$, and let $\chi_{\infty}^G$ denote the coloring computed by 1-WL.
We define an auxiliary instance $G/\chi_\infty = (V^*,E^*,c^*,s^*,t^*)$ via
\begin{itemize}
    \item $V^* = \{\chi^G_\infty(v) \mid v \in V\}$,
    \item $E^* = \{(\chi^G_\infty(v),\chi^G_\infty(w)) \mid (v,w) \in E\}$,
    \item $c^*(a,b) = \sum_{v \in \chi^{-1}(a),w \in \chi^{-1}(b), (v,w) \in E} c(v,w)$,
    \item $s^* = \chi^G_\infty(s)$ and $t^* = \chi^G_\infty(t)$.
\end{itemize}

The next two lemmas prove that the value of the maximum flow is identical in $G$ and $G/\chi_\infty$.

\begin{lemma}
    \label{lem:max_flow_factor_1}
    Let $G = (V,E,c,s,t)$ be an instance of $\maxFlowName$ and let $f$ be a flow in $G$.
    Then there exists a flow $f^*$ in $G/\chi_\infty$ such that $\operatorname{val}(f^*) \geq \operatorname{val}(f)$.
\end{lemma}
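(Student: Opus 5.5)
The plan is to push the flow forward along the quotient map. Given a flow $f$ in $G$, define for every $(a,b)\in E^*$
\[
f^*(a,b) \coloneqq \sum_{v \in \chi^{-1}(a),\, w \in \chi^{-1}(b),\, (v,w)\in E} f(v,w).
\]
Capacity feasibility is immediate, since $f \le c$ edgewise and $c^*$ is defined by the same sum with $c$ in place of $f$; nonnegativity is also clear. Self-loops $(a,a)$ may occur in $E^*$. They carry flow in and out of $a$ equally, so they affect neither conservation nor value.

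The key step is conservation at every class $a \notin \{s^*,t^*\}$. Since $s$ and $t$ carry distinguished initial colors (or at least are colored so that $\chi^{-1}(s^*)=\{s\}$ and $\chi^{-1}(t^*)=\{t\}$, by Refinement in \Cref{thm:1wl_properties}), a class $a\neq s^*,t^*$ consists only of internal vertices. Summing the conservation equations of $f$ over all $v\in\chi^{-1}(a)$ gives
\[
\sum_{v\in\chi^{-1}(a)}\sum_{u\in N_{in}(v)} f(u,v)=\sum_{v\in\chi^{-1}(a)}\sum_{w\in N_{out}(v)} f(v,w).
\]
Grouping the left side by the class $b$ of $u$ yields $\sum_b f^*(b,a)$, and grouping the right side by the class of $w$ yields $\sum_b f^*(a,b)$. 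This is exactly conservation for $f^*$ at $a$, with self-loop terms appearing on both sides. Note that this step does not even use stability of the coloring; any partition that isolates $s$ and $t$ works.

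For the value, $\operatorname{val}(f^*)=\sum_b f^*(s^*,b)=\sum_{w\in N_{out}(s)} f(s,w)=\operatorname{val}(f)$, because $\chi^{-1}(s^*)=\{s\}$. The only subtlety is the self-loop $(s^*,s^*)$ if $s$ had a loop, and with a singleton class that loop is just the original loop, contributing identically. This gives equality, hence $\ge$.

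The main obstacle is ensuring that $s$ and $t$ are singleton classes, i.e., that the 1-WL initialization for $\maxFlowName$ uses vertex attributes marking the source and the sink, as done for $\csp$. If $s$ or $t$ shared a class with other vertices, both conservation and the value computation would break. So I would first state explicitly that $\att_V(s)=\mathsf{s}$, $\att_V(t)=\mathsf{t}$, and $\bot$ otherwise, and then conclude from Refinement.
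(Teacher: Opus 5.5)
Your proof is correct and is essentially the paper's argument: the same push-forward definition of $f^*$, capacity by comparison with the defining sum for $c^*$, conservation by summing the conservation equations over a class and regrouping by neighbor classes, and the value via the source class. One small remark: the inequality does not actually need $\chi^{-1}(s^*)=\{s\}$, and neither conservation nor the inequality breaks without it. A class $a\neq s^*,t^*$ automatically contains neither $s$ nor $t$, and nonnegativity of $f$ gives that the outflow of the class $s^*$ is at least the outflow of $s$; the singleton property is only what upgrades this inequality to equality.
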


\begin{proof}
    For $(a,b) \in E(G/\chi_\infty)$, we define
    \[f^*(a,b) \coloneqq \sum_{v \in \chi^{-1}(a),w \in \chi^{-1}(b), (v,w) \in E} f(v,w).\]
    By definition, it immediately follows that $f^*(a,b) \leq c^*(a,b)$.
    Moreover, for $a \in V^* \setminus \{s^*,t^*\}$, it holds that
    \begin{align*}
        \sum_{b \in N_{in}(a)} f^*(b,a) &= \sum_{b \in N_{in}(a)} \sum_{v \in \chi^{-1}(a),u \in \chi^{-1}(b), (u,v) \in E} f(u,v)\\
        &= \sum_{v \in \chi^{-1}(a)} \sum_{u \in N_{in}(v)} f(u,v)\\
        &= \sum_{v \in \chi^{-1}(a)} \sum_{w \in N_{out}(v)} f(v,w)\\
        &= \sum_{b \in N_{out}(a)} \sum_{v \in \chi^{-1}(a),w \in \chi^{-1}(b), (v,w) \in E} f(v,w)\\
        &= \sum_{b \in N_{out}(a)} f^*(a,b).
    \end{align*}
    Hence, $f^*$ is a flow in $G/\chi_\infty$.
    Finally,
    \[\operatorname{val}(f^*) = \sum_{a \in N_{out}(s^*)} f(s^*,a) = \sum_{a \in N_{out}(s^*)} \sum_{w \in \chi^{-1}(a), (s,w) \in E} f(s,w) = \sum_{w \in N_{out}(s)} f(s,w) = \operatorname{val}(f).\]
\end{proof}

\begin{lemma}
    \label{lem:max_flow_factor_2}
    Let $G = (V,E,c,s,t)$ be an instance of $\maxFlowName$, and let $f^*$ be a flow in $G/\chi_\infty$.
    Then there is a flow $f$ in $G$ such that $\operatorname{val}(f) \geq \operatorname{val}(f^*)$.
\end{lemma}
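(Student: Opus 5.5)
The plan is to lift $f^*$ from the quotient back to $G$ by spreading the flow on each quotient edge uniformly over the original edges it aggregates. For $(a,b)\in E^*$ and every $(v,w)\in E$ with $\chi^G_\infty(v)=a$, $\chi^G_\infty(w)=b$, set
\[
f(v,w) \coloneqq \frac{f^*(a,b)}{c^*(a,b)}\, c(v,w)
\]
when $c^*(a,b)>0$, and $f(v,w)\coloneqq 0$ otherwise. So every original edge receives the same fraction $\lambda_{a,b}\coloneqq f^*(a,b)/c^*(a,b)\in[0,1]$ of its capacity. The capacity constraint $f(v,w)\le c(v,w)$ is then immediate.

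Flow conservation is the central step. Fix $v$ of color $a\notin\{s^*,t^*\}$. The singleton color classes of $s$ and $t$ are guaranteed by the vertex attributes, so $v\neq s,t$. The inflow at $v$ is
\[
\sum_b \lambda_{b,a}\sum_{u\in\chi^{-1}(b),\,(u,v)\in E} c(u,v).
\]
The inner sum is the total capacity of edges from class $b$ into $v$. I want to show it is independent of the choice of $v$ in class $a$, i.e.\ that it equals $c^*(b,a)/|\chi^{-1}(a)|$. This follows from \Cref{thm:1wl_properties_extended} (Neighborhood Multiset Preservation) applied with $G=H$: capacities are edge attributes, so for each capacity value $\ell$ the number of in-neighbors of color $b$ joined by an edge of attribute $\ell$ is the same for all $v$ in class $a$. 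Summing $\ell$ times these counts gives the same total capacity for all such $v$, and summing over the class yields $c^*(b,a)$. The analogous argument gives outflow at $v$ equal to $\sum_b \lambda_{a,b}\,c^*(a,b)/|\chi^{-1}(a)|$. Multiplying both by $|\chi^{-1}(a)|$, they become $\sum_b f^*(b,a)$ and $\sum_b f^*(a,b)$, which are equal by conservation of $f^*$ at $a$.

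For the value, since the class of $s$ is $\{s\}$, we get
\[
\operatorname{val}(f)=\sum_b \lambda_{s^*,b}\, c^*(s^*,b)=\sum_b f^*(s^*,b)=\operatorname{val}(f^*).
\]

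The main obstacle I expect is justifying that $s$ and $t$ form singleton classes, together with carefully stating the per-vertex capacity uniformity from the multiset version of neighborhood preservation. One degenerate point also needs checking: the case $s^*=t^*$ cannot occur, since $s$ and $t$ carry distinct attributes. Zero-capacity quotient edges are handled trivially because then $f^*=0$ on them.
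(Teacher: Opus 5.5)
Your proposal is correct and follows the paper's proof. It uses the same proportional lifting $f(v,w)=\frac{f^*(a,b)}{c^*(a,b)}\,c(v,w)$, the same conservation argument (Neighborhood Multiset Preservation shows each vertex of class $a$ receives a $1/|\chi^{-1}(a)|$ share of the class's in- and outflow), and the same value identity from $s$ forming a singleton class. Your explicit handling of $c^*(a,b)=0$, where the paper silently divides by $c^*(a,b)$, is a small but welcome addition.
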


\begin{proof}
    For $(a,b) \in E(G/\chi_\infty)$, we define $E_{a,b} \coloneqq \{(v,w) \in E \mid \chi^G_\infty(v) = a, \chi^G_\infty(w) = b\}$.
    We define
    \[f(v,w) \coloneqq \frac{f^*(a,b)}{c^*(a,b)} \cdot c(v,w),\]
    for all $(v,w) \in E_{a,b}$.
    Clearly, $f(v,w) \leq c(v,w)$ for all $(v,w) \in E_{a,b}$, because $f^*(a,b) \leq c^*(a,b)$ due to $f^*$ being a flow.

    Now let $v \in \chi^{-1}(a)$ where $a$ is an arbitrary color in the image of $\chi^G_\infty$. We claim that
    \begin{equation}
        \sum_{u \in N_{in}(v)} f(u,v) = \frac{1}{|\chi^{-1}(a)|}\sum_{b \in N_{in}(a)} f^*(b,a).
    \end{equation}
    Indeed, we have that
    \[\sum_{u \in N_{in}(v)} f(u,v) = \sum_{u' \in N_{in}(v')} f(u',v')\]
    for all $v,v' \in \chi^{-1}(a)$ by \Cref{thm:1wl_properties_extended}.
    It follows that
    \begin{align*}
        |\chi^{-1}(a)| \cdot \sum_{u \in N_{in}(v)} f(u,v) &= \sum_{v' \in \chi^{-1}(a)}\sum_{u' \in N_{in}(v')} f(u',v')\\
        &= \sum_{v' \in \chi^{-1}(a)} \sum_{b \in N_{in}(a)} \sum_{u' \in N_{in}(v'), \chi(u') = b} f(u',v')\\
        &= \sum_{v' \in \chi^{-1}(a)} \sum_{b \in N_{in}(a)} \sum_{u' \in N_{in}(v'), \chi(u') = b} \frac{f^*(b,a)}{c^*(b,a)} \cdot c(u',v')\\
        &= \sum_{b \in N_{in}(a)} \sum_{v' \in \chi^{-1}(a), u' \in \chi^{-1}(b), (u',v') \in E} \frac{f^*(b,a)}{c^*(b,a)} \cdot c(u',v')\\
        &= \sum_{b \in N_{in}(a)} f^*(b,a).
    \end{align*}
    Similarly, we obtain that
    \begin{equation}
        \sum_{w \in N_{out}(v)} f(v,w) = \frac{1}{|\chi^{-1}(a)|}\sum_{b \in N_{out}(a)} f^*(a,b).
    \end{equation}
    Together, we conclude that
    \[\sum_{u \in N_{in}(v)} f(u,v) = \frac{1}{|\chi^{-1}(a)|}\sum_{b \in N_{in}(a)} f^*(b,a) = \frac{1}{|\chi^{-1}(a)|}\sum_{b \in N_{out}(c)} f^*(a,b) = \sum_{w \in N_{out}(v)} f(v,w)\]
    where the second equality holds since $f^*$ is a flow.
    Also, we obtain that
    \[\operatorname{val}(f) = \sum_{w \in N_{out}(s)} f(s,w) = \sum_{b \in N_{out}(\chi(s))} f^*(\chi(s),b) = \operatorname{val}(f^*)\]
    since the color class containing $s$ has size one.
\end{proof}

\begin{corollary}\label{cor:max_flow_1wl}
    Let $G = (V,E,c\colon E\to \mathbb{R},s,t)$ and $G' = (V',E',c'\colon E'\to \mathbb{R}, s',t')$ be two instances of $\maxFlowName$. If $G \WLequiv{1} G'$, then $G$ and $G'$ have the same maximum flow value.
\end{corollary}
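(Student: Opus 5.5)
The plan is to transfer flows through the quotient instance $G/\chi_\infty$. \Cref{lem:max_flow_factor_1,lem:max_flow_factor_2} together show that the maximum flow value of $G$ equals that of $G/\chi_\infty$, and likewise for $G'$ and $G'/\chi_\infty$. It therefore suffices to show that $G \WLequiv{1} G'$ implies that the two quotient instances are identical flow networks once colors are identified.

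To make this identification, I first compute colors with a common refinement, i.e.\ with the stable coloring of the disjoint union $G \uplus G'$, exactly as in the proof sketch of \Cref{thm:1wl_properties}. Since $G \WLequiv{1} G'$, this coloring restricts to the stable colorings of $G$ and of $G'$ up to renaming, and every color class appears with the same size $|\chi^{-1}(a)| = |\chi'^{-1}(a)|$ in both graphs. So the vertex sets $V^*$ of the two quotients coincide. The source and sink carry unique initial attributes, so by Refinement $s$ and $s'$ get the same (singleton) color, and the same holds for $t$ and $t'$. Hence $s^*=s'^*$ and $t^*=t'^*$.

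Next I check that the capacities agree. I treat capacities as edge attributes $\ell$. For colors $a,b$,
\[
c^*(a,b)=\sum_{v\in\chi^{-1}(a)}\sum_{\ell}\ell\cdot\bigl|\{w:\chi(w)=b,\ (v,w)\in E,\ c(v,w)=\ell\}\bigr|.
\]
By \Cref{thm:1wl_properties_extended} (Neighborhood Multiset Preservation), the inner count depends only on $(a,b,\ell)$ and is the same in $G'$. The number of vertices of color $a$ also agrees, so $c^*(a,b)=c'^*(a,b)$. It follows that the edge sets $E^*$ agree as well, since an edge is present exactly when its capacity sum is positive, or equivalently when the counts are nonzero. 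Thus $G/\chi_\infty = G'/\chi_\infty$, and the maximum flow values agree.

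The main obstacle is a technicality in \Cref{lem:max_flow_factor_2}. It divides by $c^*(a,b)$, which requires $c^*(a,b)>0$; for zero-capacity edges I set $f=0$. It also silently uses the per-vertex neighborhood regularity of $G$ itself. Both points are harmless, and the comparison of the two quotients is the only genuinely new step.
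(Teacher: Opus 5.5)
Your proposal is correct and follows the paper's proof, which simply asserts $G/\chi_\infty = G'/\chi_\infty$ and invokes \Cref{lem:max_flow_factor_1,lem:max_flow_factor_2}. Beyond that, you supply the omitted verification of this equality via the common refinement on $G \uplus G'$ and \Cref{thm:1wl_properties_extended}; the one slip is that, with zero capacities allowed, membership in $E^*$ is not equivalent to a positive capacity sum, but your count-based criterion is the correct one and suffices.
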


\begin{proof}
    If $G \WLequiv{1} G'$, then we obtain that $G/\chi_\infty = G'/\chi_\infty$.
    Now, the statement follows from \Cref{lem:max_flow_factor_1,lem:max_flow_factor_2}.
\end{proof}

\begin{proof}[Proof of \Cref{thm:max_flow_with_preprocessing}]
    The first part follows directly from \Cref{theo:1WL:iso,cor:max_flow_1wl}.
    For the second part, observe that \Cref{lem:max_flow_factor_1,lem:max_flow_factor_2} provide an explicit method for obtaining a flow on $H$ given the flow for a prior graph $G_i$ such that $G_i \WLequiv{1} H$.
\end{proof}

\section{More Conditional Lower Bounds}\label{app:hard}

\subsection{Additional Problem Definitions}\label{app:hardness:defitions}

\begin{definition}[$\LPZEROONE$]
    In the $\LPZEROONE$ we are given an integer matrix $C$ and an integer $d$ and. We have to decide if there exist a 0/1-vector $x$ with $C x = d$.

    Two problem instances $(C, d)$ and $(C', d')$ are isomorphic if there are permutation matrices $P$ and $Q$ with $P C Q = C'$ and $P d = d'$.
\end{definition}

Note that $\LPZEROONE$ is closed under isomorphism since if $x$ is a solution to $C x = d$ if and only if $x' = Q^{-1}x$ is a solution to $C' x' = d'$. For the remaining problem checking that they are closed under isomorphism is easy to see.

\begin{definition}[$\kCenter$]\label{def:kcenter}
    In the $\kCenter$ problem we are given a simple graph $G=(V,E,w)$ with edge weights $w\colon E\rightarrow \mathbb{N}$, a number $k\in \mathbb{N}$, and a value $c\in \mathbb{N}$. We have to decide if there is a set $X\subseteq V$ of size at most $k$ such that every $v\in V$ has distance at most $c$ to the closest point in $X$.

Two problem instances $(G=(V,E,w),k,c)$ and $(G'=(V',E',w'),k',c')$ are isomorphic if $k=k'$, $c=c'$ and $G$ is isomorphic to $G'$.
\end{definition}

\begin{definition}[$\Clique$]
    In the $\Clique$ problem we are given a simple graph $G$ and a value $k$. We have to decide if $G$ contains $k$-clique (i.e., a subgraph with $k$ vertices such that each vertex is connected to each other vertex)

    Two problem instances $(G, k)$ and $(G', k')$ are isomorphic if $G$ and $G'$ are isomorphic as graphs and $k = k'$.
\end{definition}

\begin{definition}[$\CliqueCover$]
    In the $\CliqueCover$ problem we are given a simple graph $G$ and a value $k$. We have to decide if $G$ contains clique cover of size at most $k$ (i.e., there is a collection of at most $k$ many cliques that are all subgraphs of $G$ and each edge of $G$ is covered by at least one clique)

    Two problem instances $(G, k)$ and $(G', k')$ are isomorphic if $G$ and $G'$ are isomorphic as graphs and $k = k'$. 
\end{definition}

\begin{definition}[$\COL$]
    In the $\COL$ problem we are given a simple graph $G$ and a value $k$. We have to decide if $G$ is $k$-colorable.

    Two problem instances $(G, k)$ and $(G', k')$ are isomorphic if $G$ and $G'$ are isomorphic as graphs and $k = k'$.
\end{definition}

\begin{definition}[$\kCOL{k}$]
    In the $\kCOL{k}$ problem we are given a simple graph $G$. We have to decide if $G$ is $k$-colorable.

    Two problem instances $G$ and $G'$ are isomorphic if $G$ and $G'$ are isomorphic as graphs.
\end{definition}

\begin{definition}[$\DMatching$]
    In the $\DMatching$ problem we are given a set $U \subseteq [n]^3$. We have to decide if there exist a set $W \subseteq U$ with $|W| = n$ and no two elements of $W$ agree in any coordinate.  

    Two problem instances $(U, n)$ and $(U', n')$ are isomorphic if there exist a bijection $f \colon [n] \to [n']$ with $U' = \{(f(x), f(y), f(z)) \, | \, (x, y, z) \in [n]^3\}$.
\end{definition}

\begin{definition}[$\EXA$]
    In the $\EXA$ problem we are given a number $n$ and a values of subsets $S_1, \dots, S_m \subseteq [n]$. We have to decide if there are indices $j_1, \dots j_s$ such that $S_{j_1}, \dots, S_{j_s}$ are disjoint and $\bigcup_{t = 1}^s S_{j_t} = \bigcup_{i = 1}^m S_i$.

    Two problem instances $(S_1, \dots, S_m, n)$ and $(S'_1, \dots, S'_{m'}, n')$ are isomorphic if $m = m'$, $n = n'$ and there are bijective functions $f \colon [n] \to [n']$ and $g \colon [m] \to [m']$ such that for each $i \in [m]$ we obtain $f(S_i) = \{f(s) \, | \, s \in S_i\} = S'_{g(j)}$
\end{definition}

\begin{definition}[$\FeedbackArcSet$]
    In the $\FeedbackArcSet$ problem we are given a directed graph $H$ and a number $k$. We have to decide if there exist a set of edges $S \subseteq E(G)$ with $|S| \leq k$ such that every cycle in $G$ contains at least on edge in $S$.

    Two problem instances $(H, k)$ and $(H', k')$ are isomorphic if $k = k'$ and $H \cong H'$ are isomorphic as graphs.
\end{definition}

\begin{definition}[$\FeedbackVertexSet$]
    In the $\FeedbackVertexSet$ problem we are given a directed graph $H$ and a number $k$. We have to decide if there exist a set of vertices $R \subseteq V(G)$ with $|R| \leq k$ such that every cycle in $G$ contains at least on vertex in $R$.

    Two problem instances $(H, k)$ and $(H', k')$ are isomorphic if $k = k'$ and $H \cong H'$ are isomorphic as graphs.
\end{definition}

\begin{definition}[$\DirHam$]
    In the $\DirHam$ problem we are given a directed unweighted graph $G$. We have to decide if $G$ contains a directed cycle which includes each node exactly once.

    Two problem instances $G$ and $G'$ are isomorphic if $G$ and $G'$ are isomorphic as graphs. 
\end{definition}

\begin{definition}[$\UndirHam$]
    In the $\UndirHam$ problem we are given a undirected unweighted graph $G$. We have to decide if $G$ contains a directed cycle which includes each node exactly once.

    Two problem instances $G$ and $G'$ are isomorphic if $G$ and $G'$ are isomorphic as graphs. 
\end{definition}

\begin{definition}[$\HIT$]
    In the $\HIT$ problem we are given a number $n$ and a values of subsets $S_1, \dots, S_k \subseteq [n]$. We have to decide if there is a set $W \subseteq [n]$ such that $|W \cap S_i| = 1$.

    Two problem instances $(S_1, \dots, S_k, n)$ and $(S'_1, \dots, S'_{k'}, n')$ are isomorphic if $k = k'$, $n = n'$ and there are bijective functions $f \colon [n] \to [n']$ and $g \colon [k] \to [k']$ such that for each $i \in [k]$ we obtain $f(S_i) = \{f(s) \, | \, s \in S_i\} = S'_{g(j)}$
\end{definition}

\begin{definition}[$\isName$ ($\IS$)]
    In the $\Clique$ problem we are given a simple graph $G$ and a value $k$. We have to decide if $G$ contains an independent set of size $k$ (i.e., a subgraph with $k$ vertices such that no vertex is connected to any other vertex)

    Two problem instances $(G, k)$ and $(G', k')$ are isomorphic if $G$ and $G'$ are isomorphic as graphs and $k = k'$.
\end{definition}

\begin{definition}[$\MaxCut$]
    In the $\MaxCut$ problem we are given a simple graph $G$ and a value $k$. We have to decide if there exist a subset of vertices $S \subseteq V(G)$ such that there are at least $k$ many edges between $S$ and $V(G) \setminus S$.

    Two problem instances $(G, k)$ and $(G', k')$ are isomorphic if $G$ and $G'$ are isomorphic as graphs and $k = k'$.
\end{definition}

\begin{definition}[$\kMedian$]
    In the $\kMedian$ problem we are given a set of facilities $F = \{f_1, \dots, f_t\}$, a set of clients $C = \{c_1, \dots ,c_s\}$, a function $d \colon F \times C \to \mathbb{R}_{\geq 0}$, that satisfies the triangle inequality, a number $k$ and a number $c$. We have to decide if there are exist a set $F' \subseteq F$ of size $k$ such that 
    \[\sum_{i = 1}^s \min_{f \in F'} d(f, c_i) \le c. \]

    Two problem instances $(F, C, d, k, c)$ and $(F', C', d', k', c')$ are isomorphic if $c = c'$, $k=k'$, and there are bijections $f \colon F \to F'$ and $g \colon C \to C'$ such that $d(x, y) = d'(f(x), g(y))$.
\end{definition}

\begin{definition}[Satisfiability ($\SAT$)]
    In the Satisfiability problem we are given a formula $\Phi$ in conjunctive normal form (CNF). We have to decide if $\Phi$ is decidable.

    Two CNF formulas $\Phi$, $\Phi'$  are isomorphic if $\Phi'$ can be obtained from $\Phi$ by a renaming of variables. Note that this renaming is consistent with negation (i.e., if we rename $x$ to $y$ then $\lnot x$ is mapped to $\lnot y$).
\end{definition}

\begin{definition}[3-Satisfiability ($\kSAT{3}$)]
    In the 3-Satisfiability problem we are given a formula $\Phi$ in conjunctive normal form (CNF) such that each clause contains exactly three literals. We have to decide if $\Phi$ is decidable.

    Two CNF formulas $\Phi$, $\Phi'$  are isomorphic if $\Phi'$ can be obtained from $\Phi$ by a renaming of variables. Note that this renaming is consistent with negation (i.e., if we rename $x$ to $y$ then $\lnot x$ is mapped to $\lnot y$).
\end{definition}

\begin{definition}[$\MaxkSAT{k}$]
    In the $\MaxkSAT{k}$ we are given a formula $\Phi$ in conjunctive normal form (CNF) such that each clause contains exactly $k$ literals. Given an assignment $x$ of variables we write $|\Phi(x)|$ for the number of clauses that are satisfied by $x$. We have to return $\max_{x} |\Phi(x)|$.

    Two CNF formulas $\Phi$, $\Phi'$  are isomorphic if $\Phi'$ can be obtained from $\Phi$ by a renaming of variables. Note that this renaming is consistent with negation (i.e., if we rename $x$ to $y$ then $\lnot x$ is mapped to $\lnot y$).
\end{definition}

\begin{definition}[$\SetCover$]
    In the $\SetCover$ problem we are given a number $n$, a list of subsets $S_1, \dots, S_m \subseteq [n]$, and a number $k$. We have to decide if there are indices $j_1, \dots j_s$ such that $s \leq k$ and $\bigcup_{t = 1}^s S_{j_t} = \bigcup_{i = 1}^m S_i$.

    Two problem instances $(S_1, \dots, S_m, n, k)$ and $(S'_1, \dots, S'_{m'}, n', k')$ are isomorphic if $m = m'$, $k = k'$, $n = n'$ and there are bijective functions $f \colon [n] \to [n']$ and $g \colon [k] \to [k']$ such that for each $i \in [m]$ we obtain $f(S_i) = \{f(s) \, | \, s \in S_i\} = S'_{g(i)}$.
\end{definition}

\begin{definition}[$\SetPacking$]
    In the $\SetPacking$ problem we are given a family of sets $\mathcal{S} = \{S_1, \dots, S_m\}$ with $S_i \subseteq [n]$ and a number $k$. We have to decide if there exist $k$ many mutually disjoint sets in $\mathcal{S}$. 

    Two problem instances $(\mathcal{S} = (S_1, \dots, S_m), n, k)$ and $(\mathcal{S} = (S_1, \dots, S_m), n, k)$ are isomorphic if $m = m'$, $k = k'$, $n = n'$ and there are bijective functions $f \colon [n] \to [n']$ and $g \colon [k] \to [k']$ such that for each $i \in [m]$ we obtain $f(S_i) = \{f(s) \, | \, s \in S_i\} = S'_{g(i)}$.
\end{definition}

\begin{definition}[$\STEINER$]
    In the $\STEINER$ problem we are given a weighted undirected graph $G$, set of vertices $R$, and a value $k$. We have to decide if $G$ contains a subtree that contains each vertex in $R$ and has cost (i.e. sum of edges) at most $k$.

    Two problem instances $(G, R, k)$ and $(G', R', k')$ are isomorphic if $k = k'$ and there is an isomorphism $\varphi \colon V(G) \to V(G')$ from $G$ to $G'$ that respects the edge weights and $\varphi$ maps $R$ to $R'$.
\end{definition}

\begin{definition}[$\tspName$ ($\TSP$)]
    In the $\tspName$ we are given a weighted clique with $n$ vertices where each edge-weight is non-negative. We have to return the length of the humiliation cycle with the smallest length.

    Two problem instances $G$ and $G'$ are isomorphic if the underlying graphs are isomorphic.
\end{definition}

\begin{definition}[$\metrictspName$ ($\MetricTSP$)]
    \phantom{ssssssssssssssssssssssssssssssss}  In the $\metrictspName$ we are given a weighted clique with $n$ vertices where each edge-weight is non-negative. Further the edge-weight satisfy the triangle inequality. We have to return the length of the humiliation cycle with the smallest length.

    Two problem instances $G$ and $G'$ are isomorphic if the underlying graphs are isomorphic.
\end{definition}

\begin{definition}[Vertex-Cover ($\VC$)]
    In the Vertex-Cover problem we are given a simple graph $G$ and a value $k$. We have to decide if $G$ contains a vertex cover of size at most $k$.

    Two problem instances $(G, k)$ and $(G', k')$ are isomorphic if $G$ and $G'$ are isomorphic as graphs and $k = k'$.
\end{definition}

\subsection{Additional Isomorphism-Preserving Reductions}

In the following we show many different isomorphism-preserving reductions. However, many of them are simple modifications of already known reductions.

\begin{lemma}\label{lem:karp:red}
    All of the following reductions do hold:
    \begin{multicols}{2}
    \begin{itemize}
        \item $\SAT \isored \COL$
        \item $\VC \isored \DirHam$
        \item $\VC \isored \SetCover$
        \item $\VC \isored \FeedbackVertexSet$
        \item $\VC \isored \FeedbackArcSet$
        \item $\SAT \isored \LPZEROONE$
        \item $\SAT \isored \Clique$
        \item $\COL \isored \EXA$
        \item $\EXA \isored \STEINER$
        \item $\EXA \isored \HIT$
        \item $\EXA \isored \DMatching$
        \item $\COL \isored \CliqueCover$
        \item $\Clique \isored \SetPacking$
    \end{itemize}
    \end{multicols}
    and $\DirHam \isored \UndirHam$.
\end{lemma}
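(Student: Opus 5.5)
The plan is to take each of the listed reductions to be the standard Karp reduction from \cite{DBLP:conf/coco/Karp72}, or a minor variant of it, and in every case check only the third condition of \cref{def:iso:preserving:redution}. Polynomial-time computability and correctness are inherited directly from the classical proofs, so nothing new is needed there.

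The isomorphism-preservation argument follows one template, the same one used in \cref{lem:gi-to-is}, \cref{lem:is-to-maxcut} and \cref{lem:lovG:iso}. Call a reduction \emph{canonical} if every object of the output instance is indexed by a tuple built from input objects and fixed constants, with adjacency, membership and weights determined by incidence and equality relations among those indices. For a canonical reduction, an isomorphism $\varphi$ of the input lifts coordinatewise to a bijection $\psi$ of the output that preserves all these relations. It therefore suffices to write out, for each reduction, the index set of the output and the defining relations.

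The intended lists are as follows.
\begin{itemize}
\item $\SAT\to\Clique$: vertices are (literal, clause) occurrences.
\item $\SAT\to\LPZEROONE$: columns are variables and their negations, rows are clauses plus consistency rows; $P$ and $Q$ come from the variable and clause permutations.
\item $\SAT\to\COL$: vertices indexed by variables, literals and clauses, plus $n+1$ auxiliary vertices $y_j$. I will choose the variant where these auxiliary vertices form a clique and are treated symmetrically.
\item $\COL\to\CliqueCover$: take the complement graph.
\item $\COL\to\EXA$: elements are vertices, (edge, color) pairs and similar, sets are indexed by (vertex, color) and similar; colors are fixed constants.
\item $\EXA\to\HIT$, $\EXA\to\STEINER$, $\EXA\to\DMatching$: elements and sets map canonically.
\item $\VC\to\SetCover$: universe is the edges, sets are vertex stars.
\item $\VC\to\FeedbackVertexSet$: replace each edge by two opposite arcs.
\item $\VC\to\FeedbackArcSet$: split each vertex $v$ into an arc $(v_0,v_1)$.
\item $\Clique\to\SetPacking$: sets are indexed by vertices, elements by non-edges.
\item $\DirHam\to\UndirHam$: split each vertex into three vertices.
\end{itemize}

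The main obstacle is $\VC\to\DirHam$. Karp's construction, and the Garey--Johnson edge-gadget version, orders the edges incident to each vertex into a chain, and this ordering is a non-canonical choice. My first attempt would be to check whether Karp's original construction avoids per-vertex orderings: nodes indexed by $(e,u,\text{side})$, arcs for each vertex linking all its incident edge gadgets, and $k$ selector nodes $a_1,\dots,a_k$ that are fully symmetric. If a chain ordering turns out to be unavoidable, I would replace it by a construction that depends only on the multiset structure. For example, I would route through $\SAT$ instead, or substitute a symmetric gadget in which each vertex's incident edges are traversed via a complete bipartite connector. The correctness of such a gadget would then have to be re-verified. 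A similar concern applies to any reduction that introduces fixed-size indexed copies ($1,\dots,k$ selectors or color labels). These are harmless, because their indices are constants that $\psi$ fixes, in the same way the color indices are fixed in \cref{lem:lovG:iso}.
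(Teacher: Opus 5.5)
Your plan is the paper's own: its proof is the single sentence ``take Karp's reduction and check that it is isomorphism-preserving.'' You are also right that this check fails for $\VC\isored\DirHam$ as Karp states it. He fixes $A=\mathbb{Z}_m$ and chains the gadgets of the edges at each vertex along that order, so the output depends on the labeling. But your proposal then leaves this bullet unproved. Routing through $\SAT$ does not help, because the standard $\SAT$-to-Hamiltonian-cycle constructions also lay out variable gadgets and clause positions in a fixed order.

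The complete-connector idea does work, but checking it is the entire content of this bullet:
\begin{itemize}
\item Join each exit $\langle u,e,1\rangle$ to every entry $\langle u,e',0\rangle$ with $e'\neq e$ and to every selector, and join every selector to every entry.
\item A Hamiltonian cycle must enter and leave Karp's four-node edge gadget on the same side. Hence every stretch between two consecutive selectors stays on one vertex's side, and the owners of the $k$ stretches form a vertex cover of size at most $k$.
\item Conversely, a minimal cover can be padded to exactly $k$ stretches by splitting traversals. This needs $k$ first replaced by the canonical value $\min(k,\#\text{non-isolated vertices})$, and the cases $k=0$ or $E=\emptyset$ sent to fixed instances.
\end{itemize}

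There is a second gap you did not notice: $\EXA\isored\DMatching$ is not canonical in Karp's version. His construction fixes an arbitrary injection $\alpha$ from the elements into $T=\{\langle i,j\rangle : u_i\in S_j\}$. It also fixes an arbitrary permutation $\pi$ of $T$ whose cycles are the incidence sets of the $S_j$. Both choices are visible in the output:
\begin{itemize}
\item the $\alpha$-images are exactly the first coordinates occurring only in triples with equal second and third coordinates;
\item the pairs $(\sigma,\pi(\sigma))$ are read off from the remaining triples.
\end{itemize}
So, for instance, which $\pi$-cycles contain $\alpha$-images is an isomorphism invariant. Two labelings of the same exact-cover instance can therefore give non-isomorphic 3DM instances. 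The naive symmetrization, allowing every pair $(\sigma,\tau)$ inside a set, destroys the all-or-nothing behaviour: a transposition plus fixed points encodes a partial use of a set. This bullet needs a genuinely new construction, not a check.

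Finally, Karp's chromatic-number reduction starts from $\SAT$ with at most three literals per clause. Its correctness uses that every clause misses some variable, which fails for arbitrary CNFs. That matters here, since this step is exactly what the paper uses to avoid $\SAT\to\kSAT{3}$. The canonical fix is to adjoin one fresh variable occurring in no clause. Also, the clique vertices must be $v_i$ indexed by variables, with $v_i$ adjacent to every literal except $x_i,\bar x_i$. They cannot be symmetric constants as in your proposed variant.

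The paper's one-line proof addresses none of these three points either.
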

\begin{proof}
    All of these reductions can be obtained by using the corresponding reduction in \cite{DBLP:conf/coco/Karp72} and checking that the reduction is isomorphism-preserving. 
\end{proof}

\begin{lemma}\label{lem:iso:red:VC}
    $\IS \isored \VC$
\end{lemma}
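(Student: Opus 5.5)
I would use the textbook complement reduction $R(G,k) := (G,\ |V(G)|-k)$ and check the three conditions of \Cref{def:iso:preserving:redution} in turn. Polynomial-time computability is immediate: we copy the graph and subtract $k$ from $|V(G)|$.

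\emph{Correctness.} I would use the standard fact that $S\subseteq V(G)$ is independent if and only if $V(G)\setminus S$ is a vertex cover. The forward direction: if every edge of $G$ had both endpoints in $S$, then $S$ would not be independent; so each edge meets $V(G)\setminus S$. The converse is symmetric. Hence $G$ has an independent set of size at least $k$ if and only if $G$ has a vertex cover of size at most $|V(G)|-k$, which matches the decision versions of $\IS$ and $\VC$ defined in \cref{app:hardness:defitions}.

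Two edge cases need attention.
\begin{itemize}
\item If $k>|V(G)|$, the target parameter is negative. I would either allow negative parameters, which are trivially NO-instances on both sides, or map the instance to a fixed NO-instance of $\VC$ chosen as a function of $(|V(G)|,k)$ only.
\item The mismatch between ``at least $k$'' and ``size $k$'' in the $\IS$ definition is harmless, since subsets of independent sets are independent.
\end{itemize}

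\emph{Isomorphism preservation.} By the definitions of $\IS$ and $\VC$, two instances are isomorphic exactly when the graphs are isomorphic and the parameters agree. Suppose $(G,k)\cong(G',k')$ via $\varphi$. Then $G\cong G'$ via the same $\varphi$, $k=k'$, and $|V(G)|=|V(G')|$, so the parameters $|V(G)|-k$ and $|V(G')|-k'$ coincide and $R(G,k)\cong R(G',k')$. If the degenerate NO-instance fallback is used, it depends only on $(|V(G)|,k)$, so isomorphic inputs map to the same output.

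There is no substantive obstacle. The only point requiring care is that every special-case branch of the reduction depends solely on isomorphism-invariant quantities, so that isomorphism preservation is not broken.
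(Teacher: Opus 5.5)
Your proposal is correct and matches the paper's proof: the complement map $(G,k)\mapsto(G,|V(G)|-k)$, justified by the independent-set/vertex-cover complementarity, with isomorphism preservation immediate because the graph is unchanged and $|V(G)|$ is an isomorphism invariant. Your handling of $k>|V(G)|$ is a reasonable refinement that the paper omits; the one slip is the quantifier in your forward direction, which should read ``if \emph{some} edge had both endpoints in $S$,'' not ``every edge.''
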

\begin{proof}
    A folklore result states that $S$ is an independent set of a graph $G$ if and only if $\bar{S} \coloneqq V(G) \setminus S$ is a vertex cover of $G$. Thus, a function $R$ that maps $(G, k)$ to $(G, |V(G)| - k)$ is an isomorphism-preserving reduction from $\IS$ to $\VC$.
\end{proof}

\begin{lemma}\label{lem:maxcut:isored:max2sat}
    $\MaxCut \isored \MaxkSAT{2}$
\end{lemma}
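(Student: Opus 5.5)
We plan to use the classical Garey–Johnson–Stockmeyer reduction from $\MaxCut$ to $\MaxkSAT{2}$ and then check, as in \Cref{lem:is-to-maxcut}, that it is isomorphism-preserving. Given a $\MaxCut$ instance $(G,k)$ with $G=(V,E)$, we introduce one Boolean variable $x_v$ for each $v \in V$. The intended reading is $x_v=1$ iff $v \in S$. For each edge $e=\{u,v\}\in E$ we add the two clauses $(x_u \lor x_v)$ and $(\lnot x_u \lor \lnot x_v)$. If $u,v$ lie on different sides of the cut, both clauses are satisfied; otherwise exactly one of them is. So an assignment corresponding to $S$ satisfies exactly $|E| + \mathrm{cut}(S)$ clauses. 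We output the formula $\Phi(G)$ together with the threshold $k+|E|$. The map is clearly computable in linear time. Correctness follows because assignments and vertex subsets are in bijection, and a cut of size at least $k$ exists iff at least $k+|E|$ clauses can be satisfied.

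The definition of $\MaxkSAT{2}$ requires exactly two literals per clause. The clauses above already have two distinct literals, so this is fine. If the decision version is phrased with a threshold, we carry $k+|E|$ along as a parameter; we treat it as part of the instance that must coincide under isomorphism, exactly as $k$ is for $\MaxCut$.

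For isomorphism preservation, let $\varphi\colon G\to G'$ be an isomorphism with $k=k'$. Then $|E|=|E'|$, so the thresholds agree. Define the variable renaming $x_v\mapsto x_{\varphi(v)}$, which is consistent with negation. For every edge $\{u,v\}$, the edge $\{\varphi(u),\varphi(v)\}$ lies in $E'$, so each clause pair of $\Phi(G)$ is mapped onto the clause pair of $\Phi(G')$ for the image edge. Because $\varphi$ is a bijection on edges, this induces a bijection between the clause multisets, which is the permutation of clauses required by the definition of isomorphic CNF formulas.

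The only mild obstacle is definitional. One point is whether the optimization version of $\MaxkSAT{2}$ is used directly, in which case we simply note that $\opt(\Phi(G)) = \opt_{\MaxCut}(G)+|E|$. The other is ensuring that no non-canonical choices (such as orderings) enter the construction; here there are none, since each clause is determined solely by an unordered edge.
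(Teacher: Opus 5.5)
Your proposal is correct and matches the paper's reduction exactly: one variable $x_v$ per vertex, the clauses $(x_u\lor x_v)$ and $(\lnot x_u\lor\lnot x_v)$ per edge, threshold $|E|+k$, and isomorphism preservation via the renaming $x_v\mapsto x_{\varphi(v)}$. One quibble on attribution: this two-clauses-per-edge encoding is folklore rather than the Garey--Johnson--Stockmeyer gadget, but that does not affect the argument.
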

\begin{proof}
    Let $(G=(V,E),k)$ be an instance of $\MaxCut$. Let $R$ be the function that maps $(G,k)$ to $\Phi$ where $\{x_v\mid v\in V\}$ are the variables of $(\Phi,|E|+k)$ and $\bigcup_{\{v,u\}\in E}\{(x_v\vee  x_u),(\neg x_v \vee \neg x_u)\}$.

    If there is a set $S\subseteq V$ that cuts at least $k$ edges, then the assignment 
    \begin{align*}
        x_v = \begin{cases}
            1&\text{, if } v\in S\\
            0&\text{, if } v\notin S
        \end{cases}
    \end{align*}
    satisfies both clauses for a cut edge and one clause for every other edge. This implies that $|E|+k$ clauses are satisfied for this assignment.

    If there is an assignment $x_v$ that satisfies $|E|+k$ clauses, then the set $S=\{v\in V\mid x_v=1\}$ cuts at least $k$ edges because there are $k$ edges for which both constraints are satisfied. These edges are cut by $S$.

    Let $(G_1=(V_1,E_1),k_1)$ and $(G_1=(V_1,E_1),k_2)$ be isomorphic instances of $\MaxCut$ with isomorphism $\phi\colon V\rightarrow V$. Then, the formulas in  $R(G_1,|E_1|+k_1)$ and $R(G_2,|E_2|+k_2)$ are isomorphic to each other due to renaming the variables using $\phi$. Also it holds that
    \begin{align*}
        |E_1|+k_1 = |E_1|+k_2=|E_2|+k_2.
    \end{align*}
\end{proof}

\begin{lemma}\label{lem:isored:MetricTSP}
    $\UndirHam \isored \MetricTSP$
\end{lemma}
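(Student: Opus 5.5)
The plan is to use the standard reduction from Hamiltonian cycle to TSP with distances in $\{1,2\}$ and to check that it is isomorphism-preserving. Given an undirected graph $G=(V,E)$ with $n=|V|$, build the complete weighted graph $R(G)$ on the same vertex set $V$. Set $w(u,v)=1$ if $\{u,v\}\in E$ and $w(u,v)=2$ otherwise. Since $\MetricTSP$ is stated as an optimization problem, I will use its decision version: is there a tour of length at most $n$? The output instance is therefore $(R(G),n)$. If one wants to avoid degenerate cases, $n\le 2$ can be sent to a fixed instance of the correct answer; any $n<3$ instance has no Hamiltonian cycle, so a fixed NO-instance works.

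The argument proceeds in three checks.

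\textbf{Metric.} Any triangle inequality $w(u,v)\le w(u,x)+w(x,v)$ holds because the left side is at most $2$ and the right side is at least $2$.

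\textbf{Correctness.} A Hamiltonian cycle in $G$ uses $n$ edges of weight $1$, so it is a tour of length $n$. Conversely, every tour has exactly $n$ edges, each of weight at least $1$. A tour of length at most $n$ therefore consists only of weight-$1$ edges, and these are edges of $G$, so the tour is a Hamiltonian cycle of $G$.

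\textbf{Isomorphism preservation.} Let $\varphi\colon G\to G'$ be an isomorphism. Use the same bijection $\varphi$ on the vertex sets of $R(G)$ and $R(G')$. Because $\varphi$ preserves both adjacency and non-adjacency, $w(u,v)=w'(\varphi(u),\varphi(v))$ for every pair, so $\varphi$ is a weight-preserving isomorphism of the weighted cliques in the sense of \Cref{def:att:iso}. The threshold $n$ is the same on both sides because $|V(G)|=|V(G')|$.

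Polynomial time is immediate, since the construction takes $O(n^2)$ time. No step is a serious obstacle. The only point that needs care is matching the problem formulation: if isomorphism of $\MetricTSP$ instances is meant to include the numeric threshold, the threshold must depend only on $n$, and that is why I chose the threshold $n$.
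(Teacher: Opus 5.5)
Your proposal is correct and follows essentially the same route as the paper: the complete graph with weight $1$ on edges of $G$ and $2$ on non-edges, threshold $|V|$, and the same three checks (triangle inequality, correctness via counting the $n$ tour edges, and reuse of the same bijection for isomorphism preservation). Your separate treatment of $n\le 2$ is a harmless extra precaution that the paper does not include.
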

\begin{proof}
    Let $G=(V,E)$ be an instance of $\UndirHam$.
    Let $R$ be the function that maps $G$ to $(G'=(V,\binom{V}{2},w),|V|)$ such that
    \begin{align*}
        w(e)=\begin{cases}
            1&\text{, if } e\in E\\
            2 &\text{, if }e \notin E
        \end{cases}
    \end{align*}
    for all $e\in \binom{V}{2}$.
    Note that this weight functions satisfies the triangle inequality because it has only values from $\{1,2\}$.

    If $G$ has a Hamiltonian Cycle, then it has $G'$ has a Hamiltonian Cycle of cost at most $|V|$.

    If $G$ has no Hamiltonian Cycle, then every Hamiltonian Cycle in $G'$ uses an edge $e\notin E$ and thus is of cost $w(e)=2$. Hence, every Hamiltonian Cycle in $G'$ has cost at least $|V|+1$.

    Let $G_1$ and $G_2$ be isomorphic instances of $\UndirHam$ with isomorphism $\phi\colon V\rightarrow V$. Then, $\phi$ is also an isomorphism for $R(G_1)$ and $R(G_2)$.
\end{proof}

This Lemma uses the reduction from \cite{DBLP:journals/jal/GuhaK99}.
\begin{lemma}\label{lem:Set:red:Mean}
    $\SetCover \isored \kMedian$.
\end{lemma}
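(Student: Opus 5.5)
The plan is to use the Guha--Khuller reduction from $\SetCover$ to $\kMedian$ and check the three conditions of \Cref{def:iso:preserving:redution}. Given $(S_1,\dots,S_m,n,k)$, the construction has one facility $f_j$ for each set $S_j$ and one client $c_x$ for each element $x\in\bigcup_i S_i$. Distances are $d(f_j,c_x)=1$ if $x\in S_j$ and $d(f_j,c_x)=3$ otherwise. These values are the shortest-path distances in the bipartite incidence graph with unit edges, truncated at $3$. I would define $d$ directly by this rule, since it refers only to incidence.

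The output instance is $(F,C,d,k,c)$ with budget $c:=|\bigcup_i S_i|$. The first step is to check the metric condition. Distances take values in $\{1,3\}$, and a client--facility--client--facility detour always has length at least $3$. So the relaxed triangle inequality in the bipartite facility/client sense holds; if needed, I would verify it by treating $d$ as the restriction of a graph metric.

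The second step is correctness. If a cover of size at most $k$ exists, pad it to exactly $k$ facilities. Every client then has a facility at distance $1$, so the total cost is $c$. Conversely, suppose the cost is at most $c$. Every client pays at least $1$, so every client pays exactly $1$. Hence every element lies in one of the chosen sets, and these give a cover of size at most $k$.

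The third step is isomorphism preservation, and this is the routine core of the argument. Let $(f,g)$ be an isomorphism between set-cover instances: $f$ is a bijection of $[n]$ and $g$ is a bijection of the set indices with $f(S_i)=S'_{g(i)}$. Map $f_j\mapsto f'_{g(j)}$ and $c_x\mapsto c'_{f(x)}$. Then $x\in S_j$ iff $f(x)\in S'_{g(j)}$, so $d$ is preserved. The budgets agree because $f$ maps $\bigcup S_i$ bijectively onto $\bigcup S'_i$, and $k$ is unchanged. Polynomial time is clear.

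The main obstacle is not combinatorial. It is making sure the construction makes no ordering-dependent choices, such as padding facilities or dummy clients, that could break isomorphism. The incidence-defined distances above avoid this. I would also note that the same reduction preserves the gap structure, which is what \Cref{thm:fastlb-approx} uses later.
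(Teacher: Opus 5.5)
Your reduction and isomorphism argument are the paper's. Both use the Guha--Khuller distances $1$ and $3$ and transport the set-cover isomorphism $(f,g)$ to facilities and clients. You also prove correctness and the metric condition yourself, where the paper only cites Guha--Khuller.

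On one point your version is the right one. The paper takes all of $[n]$ as clients with budget $n$. Its own $\SetCover$ definition only requires covering $\bigcup_i S_i$. So if some element of $[n]$ lies in no $S_i$, that client pays $3$, and the $\kMedian$ instance is NO even when the set-cover instance is YES. Your choice of clients $\bigcup_i S_i$ with budget $|\bigcup_i S_i|$ avoids this. It is still isomorphism-invariant, because $f(\bigcup_i S_i)=\bigcup_i S'_i$.

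One edge case remains, and the paper shares it. $\kMedian$ asks for $|F'|=k$ exactly, so your padding step fails when $k>m$. In that case the set-cover instance is trivially YES, but no $k$-subset of $F$ exists. Output $\min(k,m)$ as the $\kMedian$ parameter instead. This is still an isomorphism invariant, since isomorphic set-cover instances have the same $m$ and $k$.
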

\begin{proof}
    Let $(S_1, \dots, S_m,n,k)$ be an instance of $\SetCover$. Let $R$ be the function that maps $(S_1, \dots, S_m,n,k)$ to
    $([m],[n],d,k,n)$ and 
    \begin{align*}
        d(i,j) =\begin{cases}
            1 &\text{, if } j\in S_i\\
            3 &\text{, otherwise}
        \end{cases}
    \end{align*}
    for all $i\in [m]$ and $j\in[n]$.
    In \cite{DBLP:journals/jal/GuhaK99}, they show that $R(S_1, \dots, S_m,n,k)$ is a YES-Instance of $\kMedian$ if and only if $(S_1, \dots, S_m,n,k)$ is a YES-Instance of $\SetCover$.

    It remains to show that the reduction $R$ is isomorphism-preserving. To this end,
    we start by considering two isomorphic instances $(S_1, \dots, S_m,n,k)$ and $(S_1', \dots, S_{m'}',n',k')$. By definition, there are two bijective functions $f\colon [n]\rightarrow [n']$ and $g\colon [m]\rightarrow[m']$ with $S'_{g(i)} = f(S_i)$.
    Then $R(S_1, \dots, S_m,n,k)=([m],[n],d,k,n)$ and $R(S_1', \dots, S_{m'}',n',k')=([m'],[n'],d',k',n')$ are isomorphic due to the same bijective functions $f$ and $g$ because for all $i\in [m]$ and $j\in[n]$ it holds that,
    \begin{align*}
        d(i,j)=1\Leftrightarrow j\in S_i \Leftrightarrow f(j)\in S_{g(i)}'\Leftrightarrow d(f(j),g(i)) = 1.
    \end{align*}
\end{proof}

\begin{lemma}\label{lem:vc:isored:GapCenter}
    $\VC \isored \mathsf{Gap\text{-}k\text{-}Center}_{1,2}$
\end{lemma}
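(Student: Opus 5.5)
The plan is to use the classical reduction from $\VC$ (or dominating set) to $\kCenter$, in which a gap of $1$ versus $2$ comes from a $\{1,2\}$-valued metric. Given a $\VC$ instance $(G,k)$ with $G=(V,E)$, I first dispose of isolated vertices. These never need to be covered. To keep the map canonical I delete all of them, which is isomorphism-invariant because isomorphisms map isolated vertices to isolated vertices. Edge cases, such as $k\ge$ the number of remaining vertices or an empty edge set, are sent to a fixed YES-instance.

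The construction builds a weighted graph $G'$ on the vertex set $V\cup E$, with one node per original vertex and one node per edge. All pairs of vertex-nodes are joined with weight $1$. An edge-node $e=\{u,v\}$ is joined to $u$ and to $v$ with weight $1$. The metric is the shortest-path metric; alternatively, one can complete the graph with weight-$2$ edges on all remaining pairs, which gives a $\{1,2\}$ metric. The output is $(G',k,c)$ with $c=1$. The gap version $\mathsf{Gap\text{-}k\text{-}Center}_{1,2}$ asks to distinguish optimal radius $\le 1$ from radius $\ge 2$.

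For correctness, suppose $C$ is a vertex cover of size at most $k$. Then every edge-node is at distance $1$ from an endpoint in $C$, and every vertex-node is at distance at most $1$ from any center, so the radius is $1$. Conversely, take centers $X$ with $|X|\le k$ and radius $1$. I replace each edge-node center $e=\{u,v\}$ by $u$. The radius stays $1$: $u$ is within distance $1$ of all vertex-nodes and of every edge-node that $e$ covered (only $e$ itself, as the metric shows). Every edge-node is then at distance $1$ from some center among its endpoints, so $X$ is a vertex cover. In a NO instance the radius is therefore at least $2$, since all distances are integers and every distance is $\le 2$. This yields the gap.

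Isomorphism preservation is straightforward. An isomorphism $\varphi:G\to H$ induces the bijection $v\mapsto\varphi(v)$, $\{u,v\}\mapsto\{\varphi(u),\varphi(v)\}$ on $V\cup E$, and this bijection preserves every weight because the weights are defined only via incidence. The parameters $k$ and $c$ are unchanged. The main step to check carefully is the center-replacement argument: I must verify that the distance between two distinct edge-nodes is $2$, so that they go through a vertex-node and never through each other, and that the preprocessing for isolated vertices and degenerate $k$ is itself canonical.
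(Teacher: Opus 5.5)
Your reduction is correct, but it takes a different route from the paper. The paper keeps $G$ unchanged, puts weight $1$ on every edge and keeps $k$. That map is one line and obviously isomorphism-preserving, but radius $\le 1$ with $k$ centers in that graph means a \emph{dominating} set of size $k$, not a vertex cover, so both of its correctness claims fail:
\begin{itemize}
\item $K_3$ with $k=1$ has no vertex cover of size $1$, yet one center achieves radius $1$.
\item A single edge plus an isolated vertex with $k=1$ is a YES-instance of $\VC$, but its image has infinite radius for every single center.
\end{itemize}
Your construction adds exactly the missing ingredient, the standard vertex-cover-to-dominating-set gadget. Edge-nodes are pairwise at distance at least $2$, so a radius-$1$ center set can be moved onto vertex-nodes, and there it must contain an endpoint of every edge. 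The unit-weight clique puts every vertex-node within distance $1$ of any vertex center. Because the gadget is defined purely by incidence, the map $v\mapsto\varphi(v)$, $\{u,v\}\mapsto\{\varphi(u),\varphi(v)\}$ gives isomorphism preservation. The cost is only a quadratic blow-up in the number of edges, in exchange for a reduction that is actually correct.

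Two minor points. First, deleting isolated vertices and special-casing large $k$ are harmless but unnecessary. Once $E\neq\emptyset$, every vertex cover is nonempty, and the clique already covers isolated vertex-nodes. Second, your remark that every distance is at most $2$ is false for the shortest-path metric: edge-nodes of disjoint edges are at distance $3$. You only need integrality of distances to conclude radius at least $2$ in the NO case, so the argument is unaffected.
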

\begin{proof}
    Let $((V,E),k)$ be an instance of $\VC$.
    Let $R$ be the function that maps $((V,E),k)$ to $((V,E,w),k)$ such that $w(e)=1$ for all $e\in E$.
    
    If there is a vertex cover $X$ of size $k$ in $V$, then it holds that
    \begin{align*}
        \max_{v\in V}d_{(V,E),w} (v,X) \le 1.
    \end{align*}
    If $(V,E)$ does not contain a vertex cover of size $k$, then for all subsets $X\subseteq V$ of size $k$ it holds that 
    \begin{align*}
        \max_{v\in V}d_{(V,E),w} (v,X) \ge 2.
    \end{align*}

    This is an isomorphism-preserving reduction because the graph is not changed when $R$ is applied.
\end{proof}

\begin{lemma}\label{lem:undirham:isored:gaptsp}
    $\UndirHam \isored \GAP{\TSP_{1, c}}$ For all $c > 1$.
\end{lemma}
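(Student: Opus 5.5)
We follow the same template as \Cref{lem:isored:MetricTSP}. Fix $c>1$ and let $G=(V,E)$ be an instance of $\UndirHam$ with $n:=|V|$. The reduction $R$ maps $G$ to the complete weighted graph $G'=(V,\binom{V}{2},w)$, where
\[
w(e)=\begin{cases} 1 & \text{if } e\in E,\\ \lceil c\,n\rceil+1 & \text{if } e\notin E.\end{cases}
\]
The gap threshold is $n$ for YES-instances and $c\cdot n$ for NO-instances. The weights are polynomially bounded integers for fixed $c$, so $R$ is computable in polynomial time.

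\textbf{Gap correctness.} Suppose first that $G$ has a Hamiltonian cycle. Its $n$ edges all have weight $1$, so $\opt(G')\le n$. Since every tour of $G'$ has exactly $n$ edges of weight at least $1$, in fact $\opt(G')=n$, and $G'$ is a YES-instance of $\GAP{\TSP_{1,c}}$.

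Suppose instead that $G$ has no Hamiltonian cycle. Every Hamiltonian cycle of the complete graph $G'$ must then use at least one non-edge of $G$. Its cost is therefore at least $(n-1)+\lceil cn\rceil+1>cn$, so $G'$ is a NO-instance. We should fix the convention matching $\GAP{\TSP_{1,c}}$, namely that YES means $\opt\le n$ and NO means $\opt>c\cdot n$, and adjust the heavy weight if the definition uses a non-strict inequality. Making the heavy weight exceed $cn$ already handles either convention.

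\textbf{Isomorphism preservation.} Let $\varphi\colon V(G_1)\to V(G_2)$ be an isomorphism. Then $n$ is the same for both instances, so the heavy weight $\lceil cn\rceil+1$ is the same. Moreover, $\{u,v\}\in E_1$ if and only if $\{\varphi(u),\varphi(v)\}\in E_2$, so $w_1(\{u,v\})=w_2(\{\varphi(u),\varphi(v)\})$. Hence $\varphi$ is a weight-preserving isomorphism $R(G_1)\cong R(G_2)$.

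The only delicate step is choosing the heavy weight so that it depends only on $n$ and $c$, which are isomorphism invariants, rather than on any vertex labelling. This keeps the reduction isomorphism-preserving while still guaranteeing the multiplicative gap $c$. Note that metricity is not required for general $\TSP$, which is why arbitrarily large gaps are possible here, unlike in \Cref{lem:isored:MetricTSP}.
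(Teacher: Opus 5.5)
Your proof is correct and is essentially the paper's construction: the paper also maps $G$ to the complete graph with one weight on edges of $G$ and a larger weight on non-edges, but it normalizes the weights to $1/|V|$ and $C$ so that the thresholds are literally $1$ and $c$, as the subscript in $\GAP{\TSP_{1,c}}$ indicates. Rather than redefining the gap problem to have thresholds $n$ and $cn$, you should divide all your weights by $n$. This is harmless because $n$ is an isomorphism invariant, and afterwards your reduction coincides with the paper's up to the choice of heavy weight.
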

\begin{proof}
    Let $G=(V,E)$ be an instance of $\UndirHam$.
    Let $R$ be the function that maps $G$ to $G'=(V,\binom{V}{2},w)$ such that
    \begin{align*}
        w(e)=\begin{cases}
            1/|V|&\text{, if } e\in E\\
            C &\text{, if }e \notin E
        \end{cases}
    \end{align*}
    for all $e\in \binom{V}{2}$.

    If $G$ has a Hamiltonian Cycle, then it has $G'$ has a Hamiltonian Cycle of cost at most $1$.

    If $G$ has no Hamiltonian Cycle, then every Hamiltonian Cycle in $G'$ uses an edge $e\notin E$ and thus is of cost $w(e)=C$. Hence, every Hamiltonian Cycle in $G'$ has cost at least $C$.

    Let $G_1$ and $G_2$ be isomorphic instances of $\UndirHam$ with isomorphism $\phi\colon V\rightarrow V$. Then, $\phi$ is also an isomorphism for $R(G_1)$ and $R(G_2)$.
\end{proof}

\end{document}